\newcommand\numberthis{\addtocounter{equation}{1}\tag{\theequation}}
\crefname{assumption}{Assumption}{Assumptions}
\DeclareMathOperator*{\argmax}{arg\,max}
\DeclareMathOperator*{\argmin}{arg\,min}
\newcommand{\manish}[1]{\textcolor{red}{Manish: #1}}
\newcommand{\red}[1]{\textcolor{black}{#1}}
\newcommand{\green}[1]{{\color{Black} #1}}
\newcommand{\brown}[1]{{\color{Black} #1}}
\newcommand{\blue}[1]{{\color{Black} #1}}
\newcommand{\yellow}[1]{{\color{Black} #1}}
\newcommand{\revdisp}[1]{}
\newcommand{\noisevar}{\varsigma}
\newcommand{\coninput}{a}
\newcommand{\prob}{p}
\newcommand{\tsample}{t'}
\newcommand{\inputSpace}{\mathcal{A}}
\newcommand{\pwcinput}{\mathcal{A}_{\textsc{\tiny{PC}}\xspace}}
\newcommand{\horizon}{H}
\newcommand{\horizonmid}{H^\prime}
\newcommand{\timeX}{T_c}
\newcommand{\xpos}{x_p}
\newcommand{\ypos}{y_p}
\newcommand{\dxpos}{\dot{x}_p}
\newcommand{\dypos}{\dot{y}_p}
\newcommand{\dyn}{f}
\newcommand{\identity}{\mathbb{I}}
\newcommand{\mpc}{\textsc{\small{MPC}}\xspace}
\newcommand{\sempc}{\textsc{\small{SageMPC}}\xspace}
\newcommand{\sempcL}{\textsc{\small{SageMPC-L}}_\constrain\xspace}
\newcommand{\dist}{\textsc{\small{DIST}}\xspace}
\newcommand{\nlp}{\textsc{\small{NLP}}\xspace}
\newcommand{\AvR}{\textsc{\small{AvR}}\xspace}
\newcommand{\seocp}{\textsc{\small{SageOC}}\xspace}
\newcommand{\SQP}{\textsc{\small{SQP}}\xspace}
\newcommand{\expansion}[1][]{\mathcal{G}_{#1}}
\newcommand{\ball}[1]{\mathcal{B}_{#1}}
\newcommand{\GP}{\textsc{\small{GP}}\xspace}
\newcommand{\GPs}{\textsc{\small{GP}}s\xspace}
\newcommand{\BO}{\textsc{\small{BO}}\xspace}
\newcommand{\gpucb}{\textsc{\small{GP-UCB}}\xspace}
\newcommand{\Nat}{\mathbb{N}}
\newcommand{\calP}{\mathcal P}
\newcommand{\A}{\mathcal A}
\newcommand{\X}{\mathcal X}
\renewcommand{\S}{\mathcal S}
\newcommand{\E}{\mathbb E}
\newcommand{\N}{\mathcal N}
\newcommand{\R}{\mathbb{R}}
\newtheorem{lemma}{Lemma}
\newtheorem{assumption}{Assumption}
\newtheorem{definition}{Definition}
\newtheorem{theorem*}{Theorem}
\newtheorem{corollary}{Corollary}
\newtheorem{proposition}{Proposition}
\newtheorem{theorem}{Theorem}
\newtheorem{remark}{Remark}
\newtheorem{objective}{Objective}
\newcommand{\LocAgents}{X}
\newcommand{\state}{x}
\newcommand{\condLocAgent}[3][]{x_{#2|#3}^{#1}}
\newcommand{\condInput}[3][]{a_{#2|#3}^{#1}}
\newcommand{\Domain}{\mathcal{X}}
\newcommand{\PtInDomain}{x}
\newcommand{\constrain}{q}
\newcommand{\utility}{\rho}
\newcommand{\noise}{\eta}
\newcommand{\LipConst}{L_\constrain}
\newcommand{\Rcontoper}[2][]{\mathcal{R}_{T}({{#1}},{#2})}
\newcommand{\Rpathoper}[2][]{\mathcal{P}({{#1}},{#2})}
\newcommand{\ubutility}[1][]{u^{\utility}_{#1}}
\newcommand{\ubconst}[1][]{u_{#1}}
\newcommand{\lbconst}[1][]{l_{#1}}
\newcommand{\muconst}[1][]{\mu_{#1}}
\newcommand{\sigconst}[1][]{\sigma_{#1}}
\newcommand{\epsconst}{\epsilon}
\newcommand{\epsutility}{\epsilon_{\utility}}
\newcommand{\noiseconst}{\noisevar^{-2}}
\newcommand{\betaconst}[1][]{\beta_{#1}}
\newcommand{\gammaconst}[1]{\gamma_{#1}}
\newcommand{\LpessiSet}[2][]{\mathcal{S}_{#2}^{ {p\!}_{L}} #1}
\newcommand{\pessiSet}[2][]{\mathcal{S}_{#2}^{ p #1}}
\newcommand{\optiSet}[2][]{\mathcal{S}_{#2}^{ o, \epsconst #1}}
\newcommand{\LconstSet}[2][]{\mathcal{S}_{#2}^{{\constrain\!}_L #1}}
\newcommand{\constSet}[2][]{\mathcal{S}_{#2}^{ \constrain #1}}
\newcommand{\sumMaxwidth}[2][]{w^{#1}_{#2}}
\newcommand{\kernelfunc}{k}
\newcommand{\safeInit}[1]{\mathbb{\LocAgents}_{#1}}
\newcommand{\n}{n}
\newcommand{\nfin}{{n^\prime}}
\newcommand{\Bq}{B_q}
\newcommand{\closure}[1]{\overline{#1}}
\newcommand{\interior}[1]{{#1}^\circ}
\newcommand{\gostateOpti}[1][]{\state^{g,o}_{#1}}
\newcommand{\gostatePessi}[1][]{\state^{g,p}_{#1}}
\newcommand{\sdim}{p}
\begin{document}
\bstctlcite{IEEEexample:BSTcontrol}

\title{Safe Guaranteed Exploration for Non-linear Systems}
\author{Authors}

\author{Manish Prajapat, Johannes K\"ohler, Matteo Turchetta, Andreas Krause$^\dagger$, Melanie N. Zeilinger$^\dagger$ \thanks{\looseness -1 $\dagger$ Joint supervision. All authors are from ETH Zurich. [manishp, jkoehle, matteotu, krausea, mzeilinger]@ethz.ch. Manish Prajapat is supported by ETH AI center, Johannes K\"ohler and Matteo Turchetta by the Swiss National Science Foundation under NCCR Automation, grant agreement 51NF40 180545.}}
% \thanks{\copyright 2025 IEEE. All rights reserved, including rights for text and data mining and training of artificial intelligence and similar technologies. Personal use is permitted, but republication/redistribution requires IEEE permission. See https://www.ieee.org/publications/rights/index.html for more information}
% }

% \IEEEpubid{\begin{minipage}{\textwidth}
% \centering %\vspace{0.8cm}
% \footnotesize \copyright~2022 IEEE. Personal use of this material is permitted. Permission from IEEE must be obtained for all other uses, in any current or future media, including reprinting/republishing this material for advertising or promotional purposes, creating new collective works, for resale or redistribution to servers or lists, or reuse of any copyrighted component of this work in other works.
% \end{minipage}}
\maketitle
% \IEEEpubidadjcol
\AddToShipoutPicture*{%
  \put(50,0){%
    \parbox[b][\paperheight]{\textwidth}{%
      \vfill
      \centering
      \footnotesize
      \copyright~2025 IEEE. All rights reserved, including rights for text and data mining and training of artificial intelligence and similar technologies. Personal use is permitted, but republication/redistribution requires IEEE permission. See \href{https://www.ieee.org/publications/rights/index.html}{https://www.ieee.org/publications/rights/index.html} for more information.
      % \copyright~2022 IEEE. Personal use of this material is permitted. Permission from IEEE must be obtained for all other uses, in any current or future media, including reprinting/republishing this material for advertising or promotional purposes, creating new collective works, for resale or redistribution to servers or lists, or reuse of any copyrighted component of this work in other works.
      \vspace{10pt} % optional spacing tweak
    }
  }
}
\AddToShipoutPicture*{%
  \put(50,\dimexpr\paperheight-30pt\relax){%
    \parbox{\textwidth}{%
      \centering
      \footnotesize
      % Accepted to IEEE Transactions on Automatic Control. © 2025 IEEE.
      This article has been accepted for publication in IEEE Transactions on Automatic Control. Citation information: DOI 10.1109/TAC.2025.3541577
    }
  }
}

\thispagestyle{empty}
\begin{abstract}
    \looseness 0 Safely exploring environments with \emph{a-priori} unknown constraints is a fundamental challenge that restricts the autonomy of robots. While safety is paramount, guarantees on sufficient exploration are also crucial for ensuring autonomous task completion. To address these challenges, we propose a novel {\em safe guaranteed exploration} framework using optimal control, which achieves first-of-its-kind results: guaranteed exploration for non-linear systems with finite time sample complexity bounds, while being provably safe with arbitrarily high probability.
    The framework is general and applicable to many real-world scenarios with complex non-linear dynamics and unknown domains.
    % Based on this framework we propose an efficient algorithm, \sempc, SAfe Guaranteed Exploration using Model Predictive Control. \sempc improves efficiency \rev{of the framework} by incorporating three techniques: 
    We improve the efficiency of this general framework by proposing an algorithm, \sempc, SAfe Guaranteed Exploration using Model Predictive Control. \sempc leverages three key techniques: 
    \textit{i)} exploiting a Lipschitz bound, \textit{ii)} goal-directed exploration, and \textit{iii)} receding horizon style re-planning, all while maintaining the desired sample complexity, safety and exploration guarantees of the framework. Lastly, we demonstrate safe efficient exploration in challenging unknown environments using \sempc with a car model.
\end{abstract}
% Lastly, we demonstrate safe efficient exploration using \sempc with a car model in challenging unknown environments.
% abstract: max 1200 characters (rest of text will be cut during submission process);
    % Beyond full exploration of safely reachable states, we also consider a goal-directed formulation, that only explores to the extent necessary to reach a specified goal. While our general formulation makes use of arbitrary solvers for nonlinear optimal control, we provide a specific, efficient instantiation based on model-predictive control, \sempc, which maintains our safety and exploration guarantees. 
%We propose features for \seocp that make safe exploration more efficient. In particular, based on tasks maximum domain exploration can be sometimes wasteful, for this, we propose a      %\seocp,     %to solve the task while only exploring the domain essentially required for task completion. 
%    Finally, we propose our core algorithm \sempc, an efficient algorithm that replans after every new information in an MPC-style fashion. \sempc is recursively feasible and carries our core results, i.e., guarantees exploration in finite time while being provably safe for non-linear systems. 
\begin{IEEEkeywords}
Non-linear predictive control, Gaussian processes,  Statistical learning, Optimal control, Machine learning.
\end{IEEEkeywords}

\vspace{-0.5em}
\section{Introduction}
\looseness -1 \emph{Motivation:} 
A core challenge limiting the complete autonomy of robots is safe exploration. If robots are to be deployed in the wild, they must learn through interaction with their environment, while guaranteeing safe operation.
%In real-world scenarios, robots are often deployed in the wild where it is required to interact with and learn about the environment while ensuring safe operation is crucial. 
%Almost always, the deployed 
 % Such safe exploration is essential in the real world where crashes cannot be afforded but the robot still needs to explore; for example, mapping in an unknown environment or executing a task when obstacles are not known a-priori [1], search and rescue [2], [3], industrial inspection [4], surveillance [5], etc.
Safe exploration is essential in real-world tasks such as mapping or navigating unknown environments with a priori unknown hazards 
%or executing tasks when obstacles are not known a-priori 
\cite{kulkarni2022autonomous}, search and rescue \cite{delmerico2019current,tomic2012toward}, industrial inspection \cite{hutter2018towards}, surveillance \cite{grocholsky2006cooperative}, to name a few. 
Most robotic systems of interest, e.g., walking or driving robots or flying quadrotors, have highly non-linear dynamics, which makes the safe exploration process much harder both theoretically and practically.

%For example, a robot has to solve a task, however, the locations of the obstacles are not known \textit{a-priori}.
\looseness -1 Safety can often be addressed by taking highly conservative actions (such as remaining stationary), which however may not be sufficient for completing the task. Latter may require {\em exploration} -- gathering observations to learn about which states may be safely reachable to complete the task.
%If safety were the only concern, a trivial approach is staying stationary, however, this is undesirable since the task cannot be completed. Indeed, while being provably safe, the key challenge that arises is \emph{guaranteeing} exploration. This poses the question of 
The key question is thus how one can effectively gather information about a-priori unknown constraints to 
%ensures we will obtain sufficiently informative data to 
{\em guarantee exploration}, all while maintaining the robot's safety throughout the learning process. 
The guaranteed exploration can either represent maximum domain exploration or sufficient exploration for task completion (e.g., reach a point). 
%The guarantees on exploration can either represent guaranteeing maximum domain exploration or sufficient exploration to guarantee task completion (e.g., reach a point). %Based on the application, the exploration could be of complete domain or only the space essential for the task completion.
% Moreover, in many applications, complete domain exploration is also wasteful and we desire to explore only the region essential for the task completion. This further complicates the original process as the robot needs to also consider the task objective while exploring.
To this end, our goal is to develop a control framework that guarantees exploration while ensuring safety at all times in \textit{a-priori} unknown environments for general non-linear systems.

% the problem of guaranteeing exploration while being safe for non-linear systems is still an open problem in the literature.

% This translates to how to efficiently gathering information about the environmental constraints such that we can guarantee full exploration while ensuring that the robot is safe throughout the learning.

% n. This manifests itself as a task of how to efficiently gather information about the environmental constraints while ensuring that the robot is safe throughout the learning.

% Naturally, the robot faces an exploration-exploitation trade-off: how to efficiently gather information about the environmental constraints vs. how to ensure that the robot is safe throughout the learning. Moreover, in many applications, complete domain exploration is also wasteful and we desire to explore only the region essential for the task completion. This further complicates the original exploration-exploitation trade-off as the robot needs to also consider the task objective while exploring.

% Hence, the problem of guaranteeing exploration while being safe for non-linear systems is still an open problem in the literature. However, given the relevance of the problem many works have tried to address parts of it.
% Environmental constraints can be modelled with 
% Safety is considered with BO 
% dynamics discrete 
% 
\begin{figure}
\setlength{\abovecaptionskip}{5.5pt}
    % \centering
    \hspace{-0.90em}\scalebox{1}{\tikzset{every picture/.style={line width=0.75pt}} %set default line width to 0.75pt        

\begin{tikzpicture}[x=0.75pt,y=0.75pt,yscale=-1,xscale=1]
%uncomment if require: \path (0,532); %set diagram left start at 0, and has height of 532

%Flowchart: Alternative Process [id:dp44349621066264944] 
\draw  [line width=0.75]  (61.94,57.15) .. controls (61.94,52.98) and (65.32,49.6) .. (69.5,49.6) -- (203.68,49.6) .. controls (207.85,49.6) and (211.24,52.98) .. (211.24,57.15) -- (211.24,85.2) .. controls (211.24,89.37) and (207.85,92.76) .. (203.68,92.76) -- (69.5,92.76) .. controls (65.32,92.76) and (61.94,89.37) .. (61.94,85.2) -- cycle ;
%Flowchart: Alternative Process [id:dp2488559455692021] 
\draw  [line width=0.75]  (235.03,57.48) .. controls (235.03,53.35) and (238.38,50) .. (242.51,50) -- (377.52,50) .. controls (381.65,50) and (385,53.35) .. (385,57.48) -- (385,85.27) .. controls (385,89.41) and (381.65,92.76) .. (377.52,92.76) -- (242.51,92.76) .. controls (238.38,92.76) and (235.03,89.41) .. (235.03,85.27) -- cycle ;
%Rounded Rect [id:dp5332878697394765] 
\draw  [draw opacity=0][fill={rgb, 255:red, 155; green, 155; blue, 155 }  ,fill opacity=0.18 ][line width=0.75]  (64,129.96) .. controls (64,121.84) and (70.59,115.25) .. (78.71,115.25) -- (366.49,115.25) .. controls (374.61,115.25) and (381.2,121.84) .. (381.2,129.96) -- (381.2,174.09) .. controls (381.2,182.21) and (374.61,188.8) .. (366.49,188.8) -- (78.71,188.8) .. controls (70.59,188.8) and (64,182.21) .. (64,174.09) -- cycle ;
%Rounded Rect [id:dp6808190064873691] 
\draw [draw opacity=0] [fill={rgb, 255:red, 155; green, 155; blue, 155 }  ,fill opacity=0.18 ][line width=0.75]  (79,283.87) .. controls (79,278.22) and (83.58,273.64) .. (89.23,273.64) -- (358.77,273.64) .. controls (364.42,273.64) and (369,278.22) .. (369,283.87) -- (369,314.57) .. controls (369,320.22) and (364.42,324.8) .. (358.77,324.8) -- (89.23,324.8) .. controls (83.58,324.8) and (79,320.22) .. (79,314.57) -- cycle ;
% %Rounded Rect [id:dp43070079245559656] 
% \draw  [line width=0.75]  (253,217.44) .. controls (253,212.78) and (256.78,209) .. (261.44,209) -- (379.47,209) .. controls (384.13,209) and (387.91,212.78) .. (387.91,217.44) -- (387.91,242.76) .. controls (387.91,247.42) and (384.13,251.2) .. (379.47,251.2) -- (261.44,251.2) .. controls (256.78,251.2) and (253,247.42) .. (253,242.76) -- cycle ;
%Straight Lines [id:da43304235971266825] 
\draw [line width=0.75]    (137.9,93) -- (137.9,111.8) ;
\draw [shift={(137.8,114.8)}, rotate = 270.26] [fill={rgb, 255:red, 0; green, 0; blue, 0 }  ][line width=0.08]  [draw opacity=0] (8.93,-4.29) -- (0,0) -- (8.93,4.29) -- cycle    ;
%Straight Lines [id:da316445061575098] 
\draw [line width=0.75]    (313,93) -- (313,111.8) ;
\draw [shift={(313,114.8)}, rotate = 269.93] [fill={rgb, 255:red, 0; green, 0; blue, 0 }  ][line width=0.08]  [draw opacity=0] (8.93,-4.29) -- (0,0) -- (8.93,4.29) -- cycle    ;
%Straight Lines [id:da5244267605910624] 
\draw [line width=0.75]    (223,188.8) -- (223,270.6) ;
\draw [shift={(223.15,274.2)}, rotate = 270.17] [fill={rgb, 255:red, 0; green, 0; blue, 0 }  ][line width=0.08]  [draw opacity=0] (8.93,-4.29) -- (0,0) -- (8.93,4.29) -- cycle    ;
% %Rounded Rect [id:dp7860311927007633] 
% \draw  [line width=0.75]  (58.25,217.47) .. controls (58.25,212.79) and (62.04,209) .. (66.72,209) -- (184.03,209) .. controls (188.71,209) and (192.5,212.79) .. (192.5,217.47) -- (192.5,242.88) .. controls (192.5,247.56) and (188.71,251.35) .. (184.03,251.35) -- (66.72,251.35) .. controls (62.04,251.35) and (58.25,247.56) .. (58.25,242.88) -- cycle ;
% %Rounded Rect [id:dp43070079245559656] 
% \draw  [line width=0.75]  (251,217.44) .. controls (251,212.78) and (254.78,209) .. (259.44,209) -- (379.56,209) .. controls (384.22,209) and (388,212.78) .. (388,217.44) -- (388,242.76) .. controls (388,247.42) and (384.22,251.2) .. (379.56,251.2) -- (259.44,251.2) .. controls (254.78,251.2) and (251,247.42) .. (251,242.76) -- cycle ;
%Rounded Rect [id:dp43070079245559656] 
\draw  [line width=0.75]  (251,217.44) .. controls (251,212.78) and (254.78,209) .. (259.44,209) -- (380.56,209) .. controls (385.22,209) and (389,212.78) .. (389,217.44) -- (389,242.76) .. controls (389,247.42) and (385.22,251.2) .. (380.56,251.2) -- (259.44,251.2) .. controls (254.78,251.2) and (251,247.42) .. (251,242.76) -- cycle ;
%Rounded Rect [id:dp7860311927007633] 
\draw  [line width=0.75]  (58.25,217.47) .. controls (58.25,212.79) and (62.04,209) .. (66.72,209) -- (186.28,209) .. controls (190.96,209) and (194.75,212.79) .. (194.75,217.47) -- (194.75,242.88) .. controls (194.75,247.56) and (190.96,251.35) .. (186.28,251.35) -- (66.72,251.35) .. controls (62.04,251.35) and (58.25,247.56) .. (58.25,242.88) -- cycle ;
%Straight Lines [id:da39499037920081737] 
\draw [line width=0.75]    (122,252) -- (122,270.6) ;
\draw [shift={(122.2,273.6)}, rotate = 270.26] [fill={rgb, 255:red, 0; green, 0; blue, 0 }  ][line width=0.08]  [draw opacity=0] (8.93,-4.29) -- (0,0) -- (8.93,4.29) -- cycle    ;
%Straight Lines [id:da30171411173321205] 
\draw [line width=0.75]    (323,252) -- (323,270.6) ;
\draw [shift={(323,273.2)}, rotate = 269.74] [fill={rgb, 255:red, 0; green, 0; blue, 0 }  ][line width=0.08]  [draw opacity=0] (8.93,-4.29) -- (0,0) -- (8.93,4.29) -- cycle    ;
% Text Node
\draw (137.55,71.23) node  [font=\normalsize] [align=left] {\begin{minipage}[lt]{101.66pt}\setlength\topsep{0pt}
\begin{center}
Safe exploration with Gaussian process
\end{center}
\end{minipage}};

% Text Node
\draw (310.47,71.23) node  [font=\normalsize] [align=left] {\begin{minipage}[lt]{102.74pt}\setlength\topsep{0pt}
\begin{center}
% Non-linear reachability using optimal control
Non-linear predictive control
\end{center}
\end{minipage}};
% Text Node

\draw (223.49,153) node  [font=\normalsize] [align=left] {\begin{minipage}[lt]{256.66pt}\setlength\topsep{0pt}
\begin{center}
\textbf{Safe guaranteed exploration framework} (\cref{sec:SE_theory})\\[0.5em]
\begin{varwidth}{\textwidth}
\begin{itemize}
    \item Sample complexity (\cref{thm:sample_complexity})
    \item Safety (\cref{coro:sample_complexity})
    \item Guaranteed exploration (\cref{thm:SE}) 
\end{itemize}
\end{varwidth}
\end{center}
\end{minipage}};

% Text Node
\draw (225,300) node  [font=\normalsize] [align=left] {\begin{minipage}[lt]{250.78pt}\setlength\topsep{0pt}
\begin{center}
{\Large S}{AGE}{\Large MPC}\\[0.3em]
Efficient algorithm using re-planning (\cref{sec:MPC})\\
% Recursive feasibility: \cref{prop:recursive_feas}\\
% All guarantees: \cref{coro:sempc}
\end{center}

\end{minipage}};
% Text Node
\draw (319.75,230) node  [font=\normalsize] [align=left] {\begin{minipage}[lt]{99.66pt}\setlength\topsep{0pt}
\begin{center}
Goal-directed exploration \!(\cref{sec:goalSE})
\end{center}

\end{minipage}};
% Text Node
\draw (127,230) node  [font=\normalsize] [align=left] {\begin{minipage}[lt]{99.66pt}\setlength\topsep{0pt}
\begin{center}
Exploiting Lipschitz bound (\cref{sec:SE_expanders})
\end{center}

\end{minipage}};

\end{tikzpicture}} 
    % \caption{ %Presents a pathway to the \sempc algorithm. 
    % Leveraging tools from optimal control and Gaussian process, we introduce a novel safe guaranteed exploration framework. Utilising the framework, we propose \sempc -- an efficient algorithm by integrating techniques of exploiting Lipschitz bound, goal-directed approach and re-planning with every new information; all while keeping the desired guarantees of the framework.}
\caption{ \looseness -1 We introduce a novel safe guaranteed exploration framework by leveraging tools from optimal control and Gaussian processes. Utilizing the framework, we propose \textsc{SageMPC}\! --\! an algorithm that incorporates Lipschitz bound, goal-directed exploration and re-planning with new information to enhance efficiency; all while keeping the theoretical guarantees of the framework.}
    \label{fig:flowchart} \vspace{-1em}
\end{figure} %in a receding horizon style

\looseness -1 \emph{Related work:} Given the importance of safety, many control methods have been developed to ensure safe operation for non-linear systems under uncertainty, e.g., using model predictive control (\mpc) \cite{rawlings2017model}, safety filters \cite{wabersich2021predictive,wabersich2023data}, control barrier functions \cite{ames2016control}, and Hamilton-Jacobi reachability \cite{bansal2017hamilton}. While most of the existing work considers known constraints, \cite{tordesillas2021faster,saccani2022multitrajectory,soloperto2023safe,batkovic2022safe} ensure safety in uncertain environments. To incentivise exploration, \cite{tordesillas2021faster,saccani2022multitrajectory,soloperto2023safe} plan optimistically while maintaining a backup plan to ensure safety. Although the strategy is sufficient to guarantee safety, it does not provide a principled approach to guarantee domain exploration. %is enough to achieve the objective.

\looseness -1 A systematic way to model an uncertain environment, i.e., uncertain constraints, is through a Gaussian process (\GP) \cite{gp-Rasmussen} which is a popular tool to model unknown functions e.g., in Bayesian optimization (\BO) \cite{beta-srinivas,beta_chowdhury17a}, \mpc \cite{hewing2019cautious}, or experiment design \cite{prajapat2023submodular}.
% The safety formation used in our work is motivated by BO literature (\cite{beta-srinivas}) where constraints are modelled using . 
% For stateless problems (no dynamics), e.g., constrained \BO, \cite{safe-bo-sui15,sui2018stagewise,berkenkamp2023bayesian} model safety constraints with \GP's and propose algorithms for safe guaranteed exploration. 
For stateless problems (no dynamics), safe guaranteed exploration is studied in constrained BO with uncertain safety constraints modelled with \GPs~\cite{safe-bo-sui15,sui2018stagewise,berkenkamp2023bayesian}.
% \cite{safe-bo-sui15,sui2018stagewise,berkenkamp2023bayesian} models safety constraints with \GP's and optimizes for a safe optimal decision. 
In \cite{turchetta2016safemdp}, these approaches are extended to safe domain exploration with a dynamical system. This approach was generalized to a goal-directed setting in \cite{turchetta2019safe}
 and further to coverage control in a multi-agent setting~\cite{prajapat2022near}.
% \cite{prajapat2022near} extends a goal-directed safe exploration \cite{turchetta2019safe} to a multi-agent setting and demonstrates it for coverage control applications. 
Interestingly, \cite{turchetta2016safemdp,turchetta2019safe,prajapat2022near} guarantee exploration in finite time while ensuring safety. However, they consider discretized domains and simplistic dynamics (e.g., moving left or right) captured using underlying graph-based transitions. Applying these methods to the continuous domain (with appropriate discretization) is computationally intense, which imposes a bottleneck for practical application. 

\looseness -1 \emph{Contributions:} %In order to address the problem of safe exploration for non-linear systems, 
We introduce a novel framework of \textit{safe guaranteed exploration} using optimal control for non-linear systems. In the framework, we recursively determine a sufficiently informative sampling location in order to explore, while maintaining the safety of the non-linear dynamical system throughout the transient operation.
\begin{tcolorbox}[colframe=white!, top=2pt,left=2pt,right=2pt,bottom=2pt]
\centering
\emph{The framework guarantees exploration in finite time while being provably safe at all times for non-linear systems.}
\end{tcolorbox}
\looseness -1 In particular, to ensure the safety of dynamical systems, we steer the system to those safely \emph{reachable} locations from where it can safely \emph{return} back to a known safe invariant set. To ensure exploration in finite time, we provide a sample complexity result that bounds the maximum number of informative measurements that can be obtained. We next show that this number of measurements is sufficient to guarantee domain exploration for non-linear systems. 
% Within these measurements, we guarantee domain exploration for non-linear systems. % (\cref{sec:SE_theory})
The framework only assumes regularity of the constraint function (\cref{assump:q_RKHS,assump:sublinear}), a terminal set (\cref{assump: safe-init}), and regularity of reachable sets (\cref{assump: regularconnected,assump: equal_boundary}). Thus, our framework is fairly general and applicable to a wide range of real-world scenarios with complicated non-linear dynamics and unknown environments. 

\looseness 0 Our sample complexity analysis improves prior work addressing discrete domains \cite{safe-bo-sui15,sui2018stagewise,berkenkamp2023bayesian,turchetta2016safemdp,turchetta2019safe,prajapat2022near} by obtaining bounds that are independent of the discretization size. 
% improves the bound by a large factor (by removing explicit dependence that scales with discretization size). 
% In order to establish guarentees on exploration for non-linear systems, we need a different set of tools e.g., optimal control problems and boundaries of continuous domains, which makes our algorithms and guarantees fundamentally different to existing work \cite{safe-bo-sui15,berkenkamp2023bayesian,sui2018stagewise,turchetta2016safemdp,turchetta2019safe,prajapat2022near}.
We establish guarantees on exploration using a different set of tools, i.e., optimal control problems and boundaries of continuous domains, which makes our algorithm and guarantees fundamentally different to existing work \cite{safe-bo-sui15,sui2018stagewise,berkenkamp2023bayesian,turchetta2016safemdp,turchetta2019safe,prajapat2022near}. Although our motivation stems from the non-linear dynamics of robots, our algorithm is also useful in other \BO-type settings, where parameters cannot be arbitrarily chosen but follow a dynamics, e.g., for tuning of free electron laser \cite{kirschner2019adaptive}.

\looseness -1 Building on this framework, we propose an efficient algorithm \sempc, {\em SAfe Guaranteed Exploration using Model Predictive Control} by integrating the following techniques:
\begin{enumerate}[label=\textit{\roman*})]
\item Exploiting Lipschitz bound of the constraints to enlarge the known safe region;
\item \looseness -1 Performing a goal-directed safe exploration, i.e., only exploring the region required for optimizing a function (e.g., reaching a goal) instead of exploring the full domain;
\item Re-planning in a receding horizon style as soon as new information is obtained.
\end{enumerate}
These techniques significantly enhance the efficiency of \sempc while keeping the desired theoretical guarantees of the framework, i.e., guaranteed exploration under finite time while being provably safe for non-linear systems.

\looseness -1 Lastly, we compare our provable algorithm using the non-linear unicycle model in simulations and further demonstrate it with a car model in a-priori unknown environments with complicated obstacle constraints as shown in the 
% \href{https://github.com/manish-pra/sagempc?tab=readme-ov-file#simulation-videos}{video}\footnote{\scriptsize{{https://github.com/manish-pra/sagempc/\#simulation-videos}}}.
\href{https://github.com/manish-pra/sagempc?tab=readme-ov-file#simulation-videos}{video}.

\looseness -1 \emph{Notation:}  
For a set $S\subseteq\mathbb{R}^\sdim$ we denote the closure by $\closure{S} \coloneqq \{ x\in \R^\sdim | \forall r  > 0, \ball{r}(x) \cap S \neq \emptyset \}$, where $\ball{r}(x)=\{z\in\mathbb{R}^p|~\|x-z\|\leq r\}$ denotes a ball of radius $r$ around $x$.
% The closure $\closure{S}$ of set S, $\closure{S} \coloneqq \{ x\in \Domain | \forall r  > 0, \ball{r}(x) \cap S \neq \emptyset \}$. $\ball{r}(x)$ is a ball of radius $r$ around $x$.  sd
The boundary of a set $S\subseteq\R^\sdim$ is denoted by $\partial S \coloneqq \{ x\in \R^\sdim | \forall r  > 0, \ball{r}(x) \cap S \neq \emptyset , \ball{r}(x) \backslash S \neq \emptyset\}$.
The interior of a set $S \subseteq \R^\sdim$ is denoted by $\interior{S} \coloneqq \{ x\in \R^\sdim | \forall r  > 0, \ball{r}(x) \cap S \neq \emptyset , \ball{r}(x) \backslash S = \emptyset\}$. 
% $\X^{\epsilon} = \interior{\X} \ominus B_\epsilon(0), \A^{\epsilon} = \A \ominus B_\epsilon(0)$. 
A set $S\subseteq \R^\sdim$ is a path connected if $\forall x_0, x_1 \in S, \exists~\textit{a continuous}~\gamma:[0,1] \to S: \gamma(0)=\state_0, \gamma(1)=\state_1$.
$|A|$ denotes the cardinality of the set A. $\Nat$ denotes natural numbers. %A distribution $Q_{\eta}$ is $\noisevar$-sub-Gaussian for a fixed $\noisevar\geq0$ if $\forall \lambda \in \R, \E[e^{\lambda \eta}] \leq \mathrm{exp}(\frac{\lambda^2 \noisevar^2}{2})$ for $\eta \sim Q_{\eta}$.
A sequence $\{\eta_n\}_{n=1}^\infty$ is conditionally $\noisevar$-sub-Gaussian for a fixed $\noisevar\geq0$ if $\forall n\geq 0, \forall \lambda \in \R, \E[e^{\lambda \eta_n}|\{\eta_i\}_{i=1}^{\n-1},\{\state_i\}_{i=1}^{\n}] \leq \mathrm{exp}(\frac{\lambda^2 \noisevar^2}{2})$ \cite{beta_chowdhury17a}. % where $\mathcal{F}_{n-1}$ is a sigma-algebra generated by the conditioned random variables.
% A sequence $\{\eta_n\}_{n=1}^\infty$ is conditionally $\noisevar$-sub-Gaussian for a fixed $\noisevar\geq0$ if $\forall n\geq 0, \forall \lambda \in \R, \E[e^{\lambda \eta_n}|\mathcal{F}_{n-1}] \leq \mathrm{exp}(\frac{\lambda^2 \noisevar^2}{2})$, where $\mathcal{F}_{n-1}$ is a sigma-algebra generated by the random variables $\{\state_i,\eta_i\}_{i=1}^{n-1},x_n$.
\vspace{-0.35em}
\section{Problem Statement}
\looseness -1 We consider a continuous-time non-linear system
\begin{align}
    \dot{\state}(t) = \dyn(\state(t), \coninput(t)),\;\; \state(0) = \state_s,~ t \geq 0, \label{eq: dyn}
\end{align}
\looseness -1 with state \mbox{$\state(t) \in \R^\sdim$}, control input \mbox{$\coninput(t) \in \A \subseteq \R^m$} and initial state \mbox{$\state_s \in \X$}. The dynamic model $f$ is known, and additionally, we assume $f$ Lipschitz continuous, and the input signal 
%$\coninput(\cdot)\in \pwcinput$, where $\pwcinput$ is the set of piecewise continuous functions. 
$\coninput(\cdot)$ is piecewise continuous, denoted by $\coninput(\cdot) \in \pwcinput$.
This implies the existence of a unique solution $\state(\cdot)$ to the system \eqref{eq: dyn} \cite[Theorem 3.2]{khalil2015nonlinear}.
%\red{We assume that the set $\X$ is regularly close, i.e., $\X = \closure{\interior{\X}}$ and is not a lemniscates type shape where the boundary intersects itself, i.e., $\forall x \in \partial \X$, $\exists \epsilon>0$ such that the set $B_{\epsilon}(x) \cap \interior{\X}$ is path connected}. 
% The system is subject to known state constraints $\state(t) \in \X, \forall t \geq 0$, e.g., maximal velocity. 
The system is subject to known state and input constraints \mbox{$\state(t) \in \X, \coninput(t) \in \inputSpace, \forall t \geq 0$}.
%and additionally \emph{a-priori} unknown safety constraints, e.g., the location of the obstacles is unknown and needs to be estimated. %The input function $\coninput(\cdot) \in \pwcinput$ is piecewise continuous
% Lipschitz continuous, and the input signal $\coninput(\cdot)\in \pwcinput$ is piecewise continuous, which implies the existence of a unique solution $\state(\cdot)$ to the system given by \cref{eq: dyn} \cite[Theorem 3.2]{khalil2015nonlinear}. 
% The dynamics $f$ is known and continuous such that there exists a unique solution of the ODE defined in \cref{eq: dyn}. 
Additionally, the system should satisfy the safety constraints,
\begin{align}
\state(t) \in \constSet[]{} \coloneqq \{\state \in \X|~\constrain(\state)\geq 0\}, \forall t\geq 0. \label{eq:def_constraint}
\end{align}
\looseness -1 Here, the function $\constrain: \R^\sdim \to \R$ characterizing the safety constraint is \emph{a-priori unknown} and needs to be learnt. %, e.g., the location of the obstacles is unknown. %The system obtains noisy observations of the constraint  $y = \constrain(\state) + \eta$ at state $\state$, where $\eta \sim Q_{\eta}$ is independent and identically distributed $\noisevar$-sub Gaussian noise.
The system observes $y$ which is a noisy evaluation of the constraint at state $\state$, given by $y = \constrain(\state) + \eta$, where $\eta$ is zero-mean conditionally $\noisevar$-sub Gaussian noise. 
% The system obtains noisy observations of the constraint  $y = \constrain(\state) + \eta$ at state $\state$, where $\eta$ is zero-mean conditionally $\noisevar$-sub Gaussian noise. 
This includes the special case of zero-mean noise bounded in $[-\noisevar,\noisevar]$ and i.i.d. Gaussian noise with variance bounded by $\noisevar^2$.
For example, the constraint function $\constrain(\state)$ can be the distance to the nearest obstacle, which is not known, but for which noisy observations of $\constrain(\state)$ can be obtained using LIDAR or ultrasound sensors.
% \rev{For example, constraint function $\constrain(\state)$ can be the distance to the nearest obstacle whose location is unknown, making $\constrain(\state)$ also unknown. However, noisy observation of $\constrain(\state)$ can be obtained via sensors such as lidar or ultrasound.}
% \rev{For example, consider a constraint function $\constrain(\state)$ that represents the distance to the nearest obstacle, with the locations of the obstacles being unknown. As a result, the value of $\constrain(\state)$ at any point $\state$ in the domain is also unknown.}
% q(x) is a distance to the nearest obstable whose location is unknown example and noisy observation lidar or ulta
% It is a mild assumption assumption on noise and holds for distributions, e.g., bounded in $[-\noisevar,\noisevar]$, i.i.d. $\noisevar$-sub Gaussian noise.
% If requested, at the current state $x$, the system obtains noisy observations of the constrain function $\constrain(\state)$. 
% The system is subject to known constraints $\X,\A$, e.g., considering a maximal velocity or actuator limitations. The safety constraints $\constSet[]{}$ additionally incorporate unknown constraints $\constrain$, e.g., the location of the obstacles is unknown and needs to be estimated.
In order to ensure the safety of a dynamical system, we shall not only consider safe reachability to the measurement location but also ensure safe returnability to a safe set.
%The system may be subjected to both known constraints and unknown constraints, which may arise due to known limits of the systems, e.g., maximum velocity and the unknown constraints e.g., the location of the obstacles, that need to be estimated.
% Here, the function q:... characterizing the constraints is a priori unknown and needs to be.

% The environment has \emph{a-priori unknown} safety constraints $\constrain: \X \to \R$ such that the part of the state space satisfying $\constrain(\state) \geq 0$ is safe. 

 % maximum safe domain exploration and second, goal-directed safe exploration. 
\begin{figure}\vspace{-0.3em}
\hspace{-1.50mm}
    \begin{subfigure}[t]{0.5\columnwidth}
  	\centering
  	\scalebox{0.55}{\tikzset{every picture/.style={line width=0.75pt}} %set default line width to 0.75pt        

\begin{tikzpicture}[x=0.75pt,y=0.75pt,yscale=-1,xscale=1]
%uncomment if require: \path (0,1005); %set diagram left start at 0, and has height of 1005

%Rounded Rect [id:dp8209594458339862] 
\draw  [fill={rgb, 255:red, 177; green, 177; blue, 255 }  ,fill opacity=1 ] (23,813.2) .. controls (23,792.1) and (40.1,775) .. (61.2,775) -- (248.3,775) .. controls (269.4,775) and (286.5,792.1) .. (286.5,813.2) -- (286.5,927.8) .. controls (286.5,948.9) and (269.4,966) .. (248.3,966) -- (61.2,966) .. controls (40.1,966) and (23,948.9) .. (23,927.8) -- cycle ;
%Shape: Polygon Curved [id:ds018950718188541327] 
\draw  [fill={rgb, 255:red, 108; green, 215; blue, 108 }  ,fill opacity=0.6 ] (115.71,777) .. controls (137.71,777) and (159.54,777.8) .. (194.86,778.43) .. controls (230.17,779.05) and (264.25,779) .. (272,786.5) .. controls (279.75,794) and (285.5,819) .. (283,850.5) .. controls (280.5,882) and (289.5,901.5) .. (282.5,908) .. controls (275.5,914.5) and (265.5,917) .. (251.5,931) .. controls (237.5,945) and (240.5,950.5) .. (241.5,959.5) .. controls (242.5,968.5) and (187.5,964) .. (160.5,963.5) .. controls (133.5,963) and (64,968.5) .. (39.5,955) .. controls (15,941.5) and (29.66,867.31) .. (25.5,827.5) .. controls (21.34,787.69) and (44.47,803.2) .. (45.36,787.43) .. controls (46.24,771.66) and (93.71,777) .. (115.71,777) -- cycle ;
%Shape: Polygon Curved [id:ds06288435581569751] 
\draw  [fill={rgb, 255:red, 177; green, 177; blue, 255 }  ,fill opacity=1 ] (110.13,921.32) .. controls (95.04,929.89) and (81.34,917.3) .. (59.35,910.11) .. controls (37.36,902.93) and (110.88,905.63) .. (70.35,879.73) .. controls (29.82,853.82) and (33.5,854.5) .. (62.5,829) .. controls (91.5,803.5) and (75.5,789) .. (72.5,784.5) .. controls (69.5,780) and (67.29,777.14) .. (98.86,779) .. controls (130.43,780.86) and (80.14,833.25) .. (97,852.64) .. controls (113.87,872.03) and (143.5,790.5) .. (145,820.5) .. controls (146.5,850.5) and (125.22,912.74) .. (110.13,921.32) -- cycle ;
%Shape: Polygon Curved [id:ds6066032932977563] 
\draw  [fill={rgb, 255:red, 177; green, 177; blue, 255 }  ,fill opacity=1 ] (233,786.5) .. controls (255.02,786.92) and (237.95,825.73) .. (251.65,837.9) .. controls (254.12,840.09) and (257.6,841.42) .. (262.5,841.5) .. controls (294.5,842) and (228.5,884.5) .. (225,859.5) .. controls (221.5,834.5) and (207,786) .. (233,786.5) -- cycle ;
%Straight Lines [id:da5822822000180194] 
\draw [line width=1.5]    (37.11,788.61) -- (38.82,790.06) -- (40.23,791.25) ;
\draw [shift={(43.29,793.82)}, rotate = 220.1] [fill={rgb, 255:red, 0; green, 0; blue, 0 }  ][line width=0.08]  [draw opacity=0] (4.64,-2.23) -- (0,0) -- (4.64,2.23) -- cycle    ;
\draw [shift={(34.05,786.03)}, rotate = 40.2] [fill={rgb, 255:red, 0; green, 0; blue, 0 }  ][line width=0.08]  [draw opacity=0] (4.64,-2.23) -- (0,0) -- (4.64,2.23) -- cycle    ;
%Shape: Polygon Curved [id:ds8258101397645021] 
\draw  [fill={rgb, 255:red, 108; green, 215; blue, 108 }  ,fill opacity=0.4 ] (67.5,935.25) .. controls (79.5,932.25) and (111.25,932.25) .. (113,939.25) .. controls (114.75,946.25) and (88.75,962) .. (73.5,959.75) .. controls (58.25,957.5) and (55.5,938.25) .. (67.5,935.25) -- cycle ;
%Shape: Polygon Curved [id:ds930201422741264] 
\draw  [fill={rgb, 255:red, 0; green, 0; blue, 0 }  ,fill opacity=1 ] (112.56,912.87) .. controls (94.19,928.44) and (107.04,912.59) .. (74.94,911.05) .. controls (42.84,909.52) and (116.65,901.83) .. (74.49,878.67) .. controls (32.33,855.5) and (36.72,856.63) .. (70.71,823.59) .. controls (104.7,790.54) and (53.14,776.65) .. (92.17,779.17) .. controls (131.21,781.68) and (69.37,829.46) .. (93.99,855.18) .. controls (118.62,880.9) and (142.55,798.26) .. (142.5,823.5) .. controls (142.45,848.74) and (130.93,897.3) .. (112.56,912.87) -- cycle ;
%Shape: Polygon Curved [id:ds2168887383297069] 
\draw  [color={rgb, 255:red, 0; green, 0; blue, 0 }  ,draw opacity=1 ][fill={rgb, 255:red, 0; green, 0; blue, 0 }  ,fill opacity=1 ] (286,910.5) .. controls (285.5,923.5) and (288.67,930.99) .. (282.1,944.7) .. controls (271.24,965.13) and (254.64,965.11) .. (244.7,965.9) .. controls (240.95,954.9) and (241,942.88) .. (252.5,934) .. controls (264,925.13) and (274.5,913.5) .. (286,910.5) -- cycle ;
%Shape: Star [id:dp7996529007365991] 
\draw  [color={rgb, 255:red, 214; green, 39; blue, 40 }  ,draw opacity=1 ][fill={rgb, 255:red, 214; green, 39; blue, 40 }  ,fill opacity=1 ] (70.29,942) -- (71.66,945.27) -- (75.79,945.36) -- (72.5,947.46) -- (73.69,950.79) -- (70.29,948.81) -- (66.88,950.79) -- (68.07,947.46) -- (64.78,945.36) -- (68.92,945.27) -- cycle ;
%Shape: Polygon Curved [id:ds26306791664781515] 
\draw  [fill={rgb, 255:red, 0; green, 0; blue, 0 }  ,fill opacity=1 ] (233.5,789) .. controls (257.5,790) and (228,846.5) .. (260,843.5) .. controls (292,840.5) and (229.5,880.5) .. (226,855.5) .. controls (222.5,830.5) and (209.5,788) .. (233.5,789) -- cycle ;

% Text Node
\draw (45,787) node [anchor=north west][inner sep=0.75pt]  [font=\LARGE] [align=left] {$\displaystyle \epsilon $};

\end{tikzpicture}}
    \caption{Maximum safe domain exploration}
    \label{fig:obj_max_se}
    \end{subfigure}
    \hspace{-3.00mm}
~
    \begin{subfigure}[t]{0.5\columnwidth}
  	\centering
  	\scalebox{0.55}{\tikzset{every picture/.style={line width=0.75pt}} %set default line width to 0.75pt        

\begin{tikzpicture}[x=0.75pt,y=0.75pt,yscale=-1,xscale=1]
%uncomment if require: \path (0,1005); %set diagram left start at 0, and has height of 1005

%Rounded Rect [id:dp2935545079147217] 
\draw  [fill={rgb, 255:red, 177; green, 177; blue, 255 }  ,fill opacity=1 ] (310.5,681.7) .. controls (310.5,660.6) and (327.6,643.5) .. (348.7,643.5) -- (535.8,643.5) .. controls (556.9,643.5) and (574,660.6) .. (574,681.7) -- (574,796.3) .. controls (574,817.4) and (556.9,834.5) .. (535.8,834.5) -- (348.7,834.5) .. controls (327.6,834.5) and (310.5,817.4) .. (310.5,796.3) -- cycle ;
%Shape: Polygon Curved [id:ds18239718788905201] 
\draw  [fill={rgb, 255:red, 108; green, 215; blue, 108 }  ,fill opacity=0.6 ] (347.25,799.75) .. controls (368.5,796) and (389.5,795.5) .. (400,784.5) .. controls (410.5,773.5) and (416,769.5) .. (424,741.5) .. controls (432,713.5) and (440,688.5) .. (429,684) .. controls (418,679.5) and (408.5,720.5) .. (396,724) .. controls (383.5,727.5) and (382.2,723) .. (378.5,714) .. controls (374.8,705) and (389.6,671.4) .. (410.4,657) .. controls (431.2,642.6) and (464,654.6) .. (468.4,695.8) .. controls (472.8,737) and (447.8,798.9) .. (424.8,815.4) .. controls (401.8,831.9) and (388.8,833.4) .. (361.25,832.25) .. controls (333.7,831.1) and (326,803.5) .. (347.25,799.75) -- cycle ;
%Shape: Polygon Curved [id:ds31331310883152774] 
\draw  [fill={rgb, 255:red, 108; green, 215; blue, 108 }  ,fill opacity=0.4 ] (355,803.75) .. controls (367,800.75) and (398.75,800.75) .. (400.5,807.75) .. controls (402.25,814.75) and (376.25,830.5) .. (361,828.25) .. controls (345.75,826) and (343,806.75) .. (355,803.75) -- cycle ;
%Shape: Polygon Curved [id:ds9878459187040258] 
\draw  [fill={rgb, 255:red, 0; green, 0; blue, 0 }  ,fill opacity=1 ] (400.06,781.37) .. controls (381.69,796.94) and (394.54,781.09) .. (362.44,779.55) .. controls (330.34,778.02) and (404.15,770.33) .. (361.99,747.17) .. controls (319.83,724) and (324.22,725.13) .. (358.21,692.09) .. controls (392.2,659.04) and (340.64,645.15) .. (379.67,647.67) .. controls (418.71,650.18) and (356.87,697.96) .. (381.49,723.68) .. controls (406.12,749.4) and (430.05,666.76) .. (430,692) .. controls (429.95,717.24) and (418.43,765.8) .. (400.06,781.37) -- cycle ;
%Shape: Polygon Curved [id:ds6522608051445793] 
\draw  [fill={rgb, 255:red, 0; green, 0; blue, 0 }  ,fill opacity=1 ] (521,657.5) .. controls (545,658.5) and (515.5,715) .. (547.5,712) .. controls (579.5,709) and (517,749) .. (513.5,724) .. controls (510,699) and (497,656.5) .. (521,657.5) -- cycle ;
%Shape: Polygon Curved [id:ds3442018107101925] 
\draw  [color={rgb, 255:red, 0; green, 0; blue, 0 }  ,draw opacity=1 ][fill={rgb, 255:red, 0; green, 0; blue, 0 }  ,fill opacity=1 ] (573.5,779) .. controls (573,792) and (576.17,799.49) .. (569.6,813.2) .. controls (558.74,833.63) and (542.14,833.61) .. (532.2,834.4) .. controls (528.45,823.4) and (528.5,811.38) .. (540,802.5) .. controls (551.5,793.63) and (562,782) .. (573.5,779) -- cycle ;
%Shape: Star [id:dp7833690338456516] 
\draw  [color={rgb, 255:red, 214; green, 39; blue, 40 }  ,draw opacity=1 ][fill={rgb, 255:red, 214; green, 39; blue, 40 }  ,fill opacity=1 ] (357.79,810.5) -- (359.16,813.77) -- (363.29,813.86) -- (360,815.96) -- (361.19,819.29) -- (357.79,817.31) -- (354.38,819.29) -- (355.57,815.96) -- (352.28,813.86) -- (356.42,813.77) -- cycle ;
%Shape: Star [id:dp7107638325702708] 
\draw  [color={rgb, 255:red, 44; green, 160; blue, 44 }  ,draw opacity=1 ][fill={rgb, 255:red, 44; green, 160; blue, 44 }  ,fill opacity=1 ] (391.29,704.5) -- (392.66,707.77) -- (396.79,707.86) -- (393.5,709.96) -- (394.69,713.29) -- (391.29,711.31) -- (387.88,713.29) -- (389.07,709.96) -- (385.78,707.86) -- (389.92,707.77) -- cycle ;
%Curve Lines [id:da5993423804452791] 
\draw [color={rgb, 255:red, 74; green, 144; blue, 226 }  ,draw opacity=1 ]   (357.79,815.36) .. controls (406,825.4) and (436.2,773.2) .. (447.2,732.2) .. controls (458.2,691.2) and (436.8,626.2) .. (391.29,709.36) ;

\end{tikzpicture}}
    \caption{Goal-directed safe exploration}
    \label{fig:obj_goal_se}
    \end{subfigure}\vspace{-0.5em}
\caption{\looseness -1 Illustration of safe exploration for different objectives in (a) and (b). The domain (blue region) is a-priori unknown with constraints represented by the black region. The robot starts at the red star with a known invariant safe set around it (small green set). The robot in (a) needs to explore the entire domain safely up to small tolerance $\epsilon>0$ and in (b) only needs to explore the area required for guaranteeing convergence to a goal (green star).} \vspace{-1.5em}\label{fig:obj_demonstration}
\end{figure}
\looseness -1 \textit{Objective:} In this setting, our objective is to guarantee exploration in finite time while always ensuring safety \eqref{eq:def_constraint}. %In this setting, our objective is to guarantee exploration of the domain $\Domain$ in finite time while always ensuring safety \eqref{eq:def_constraint}.  
We consider two exploration scenarios: first, maximum safe domain exploration and second, goal-directed safe exploration (see \cref{fig:obj_demonstration}). %In each case, we aim to guarantee exploration in finite time while always ensuring safety \eqref{eq:def_constraint}. 
The task of maximum exploration manifests itself as learning about the a-priori unknown constraint $\constrain$ up to the limit of what can be learned safely, see also \cref{eqn:objective} for a formal definition. 
For a goal-directed safe exploration, we aim to minimise a known loss function $\utility:\R^\sdim \to \R$ while subject to safety constraint \eqref{eq:def_constraint}, see also \cref{eqn:goal_objective} for a formal definition.
% In addition, we consider a goal-directed safe exploration setting, where the objective is to minimise a known loss function $\utility:\R^\sdim \to \R$ while subject to safety constraint \eqref{eq:def_constraint}, see also \cref{eqn:goal_objective} for a formal definition.
% The loss function is known; however, the agent still needs to safely explore and learn the constraint function to identify a state that minimizes the loss while being provably safe throughout the learning process. 
Here, we only need to explore the region required for guaranteeing loss minimization instead of maximum domain exploration.
In both exploration scenarios, we want to achieve the corresponding objectives in finite time, up to a user-defined (arbitrarily small) tolerance $\epsilon > 0$ and guarantee safety with a user-defined arbitrarily high probability $1-\prob, (\prob >0)$.
% Here, in contrast to maximum domain exploration, we only need to explore the region required for guaranteeing loss minimization instead of maximum domain exploration.
% In addition, we consider a goal-directed safe exploration setting. 
% In contrast to maximum domain exploration, the objective here is to optimize a loss function $\utility:\R^\sdim \to \R$ while subject to safety constraint \eqref{eq:def_constraint}. The loss function is known; however, the agent still needs to safely explore and learn the constraint function to identify a state that minimizes the loss while being provably safe throughout the learning process. However, we only need to explore the region required for guaranteeing loss minimization instead of maximum domain exploration.

\looseness -1 \textit{Solution approach:} To tackle this challenge, we introduce our core algorithm, \sempc in \cref{sec:MPC}. 
Throughout the paper, we gradually build up this algorithm as summarized in \Cref{fig:flowchart}.
We first introduce its foundation in \cref{sec:SE_theory}, the safe guaranteed exploration framework, which provides the key technique to ensure exploration in finite time. Then, we introduce two techniques to improve the efficiency: \textit{(i)} utilizing a known Lipschitz bound (\cref{sec:SE_expanders}), \textit{(ii)} considering a goal-directed setting (\cref{sec:goalSE}). Lastly, \sempc introduces re-planning in a receding horizon style and additionally incorporates the improvements of $(i)$ and $(ii)$. %The approach is summarized in \Cref{fig:flowchart}.
\vspace{-0.35em}
\section{Background on Gaussian process}\label{sec:backgroundGP}
%motivation to explain GP
\looseness -1 During safe exploration, to collect a measurement at a state $\state$ safely, we require a mechanism to certify that $\state$ is safe before visiting it. For this, we require some regularity assumptions on the constraint function $\constrain$ such that knowing about $\constrain$ at a certain point also provides information about the neighbouring region.
% In particular, we assume the constraint is sufficiently well-behaved so that we can use data to infer the safety of unvisited states.
 % \emph{Regularity:} We assume the constraint is sufficiently well-behaved so that we can use data to infer the safety of unvisited states.
%  \begin{assumption}[Regularity]
%      \looseness -1 The positive definite kernel $\kernelfunc(\cdot,\cdot)$ is continuous and satisfies $\kernelfunc(x,x) \leq 1, \forall x\in \R^\sdim$. The constraint function $\constrain$ lies in the associated \textit{Reproducing Kernel Hilbert Space} \cite{kernel-Schlkopf} and has bounded norm, $\|\constrain\|_{\kernelfunc} \leq \Bq< \infty$. \label{assump:q_RKHS}
% \end{assumption}
\begin{assumption}[Regularity]
   \looseness -1 The constraint function $\constrain$ lies in a \textit{Reproducing Kernel Hilbert Space} (RKHS) \cite{kernel-Schlkopf} associated with a positive definite kernel \mbox{$\kernelfunc(\cdot,\cdot)$} such that \mbox{$\kernelfunc(\cdot,\cdot)$} is continuous and $\constrain$ has bounded RKHS norm, \mbox{$\|\constrain\|_{\kernelfunc} \leq \Bq< \infty$}. \label{assump:q_RKHS}
\end{assumption}
\looseness -1 
\cref{assump:q_RKHS} is common \cite{safe-bo-sui15,sui2018stagewise,berkenkamp2023bayesian,turchetta2016safemdp,turchetta2019safe,prajapat2022near} to prescribe regularity conditions on the unknown function, i.e., the kernel captures how close the function values are if two locations are close. \cref{assump:q_RKHS} enables us to use Gaussian Processes (GPs) \cite{gp-Rasmussen} to construct high-probability confidence intervals for $\constrain$. We specify the GP prior over $\constrain$ through a mean function $\mu: \R^\sdim \to \R$
% which we assume to be zero everywhere w.l.o.g., $\mu(\PtInDomain) = 0, \forall \PtInDomain \in \Domain$, 
and a kernel function $\kernelfunc: \R^\sdim \times \R^\sdim \to \R$ that captures the covariance between different locations. 
% Without loss of generality, we normalize such that $\kernelfunc(x,x) \leq 1, \forall x\in \R^\sdim$.
% If we have access to $N$ measurements, at $X_N = \{ \PtInDomain_\n \}^{N}_{\n=1}$ perturbed by i.i.d. $\noisevar$-sub-Gaussian noise, $Y_{X_N} = \{ q(\PtInDomain_\n)+\noise_\n \}_{\n=1}^N$ with $\noise_\n \sim Q_{\eta}$, 
If we have access to $\n$ measurements, $Y_{X_\n} = \{ q(\PtInDomain_i)+\noise_i \}_{i=1}^\n$ measured at $X_\n = \{ \PtInDomain_i \}^{\n}_{i=1}$ perturbed by zero-mean conditionally $\noisevar$-sub-Gaussian noise $\noise_i$, %conditioned on random variables $\{\state_i,\eta_i\}_{i=1}^{n-1},x_n$,
we can compute the posterior mean and covariance over the constraint at unseen locations $\PtInDomain$, $\PtInDomain^{\prime}$ as:
\begin{align*}
    \muconst[\n](\PtInDomain) &= k^{\top}_\n(\PtInDomain) (K_\n + \noisevar^2 \identity)^{-1}Y_{X_\n}, \\
    k_\n(\PtInDomain,\PtInDomain^{\prime}) &= k(\PtInDomain,\PtInDomain^{\prime}) - k^{\top}_\n(\PtInDomain)(K_\n + \noisevar^2 \identity)^{-1}k_\n (\PtInDomain^{\prime}), \numberthis \label{eqn:posterior_update}\\
    \sigconst[\n](\PtInDomain) &= \sqrt{k_\n(\PtInDomain,\PtInDomain)},
\end{align*}
% %
% \begin{align}
%     \mu_N(\PtInDomain) = k^{\top}_N(\PtInDomain) (K_N + \noisevar^2 I)^{-1}y_N, ~~~k_t(\PtInDomain,\PtInDomain^{\prime}) = k(\PtInDomain,\PtInDomain^{\prime}) - k^{\top}_N(\PtInDomain)(K_N + \noisevar^2 I)^{-1}k_N(\PtInDomain^{\prime}), \label{eq:GPposterior}
% \end{align}
% %
\looseness -1 where $k_\n(\PtInDomain) = [k(\PtInDomain_1,\PtInDomain), . . . , k(\PtInDomain_{\n}, \PtInDomain)]^{\top}, K_\n$ is the positive definite kernel matrix $[k(\PtInDomain,\PtInDomain^{\prime})]_{\PtInDomain,\PtInDomain^{\prime} \in X_\n}$ and $\identity \in \R^{\n \times \n}$ denotes the identity matrix.
Note that system \eqref{eq: dyn} operates in continuous time, while the \GP only uses measurements indexed by $\n \in \mathbb{N}$ at discrete-time $t_\n \in \R$.
% \red{We consider a continuous-time process with discrete-time measurements indexed by $\n$ at time $t_\n$.}
% To introduce the ucb and lcb we require the update eqns

\looseness -1 Let $H(Y_{X_\n})$ be the Shannon entropy for noisy samples $Y_{X_\n}$ collected at the set $X_\n$ \cite{beta-srinivas}. The mutual information between $\constrain_{X_\n}$ evaluated at $X_\n$ and the samples $Y_{X_\n}$ is given by, 
\begin{align}
    I(Y_{X_\n};\constrain_{X_\n}) = H(Y_{X_\n}) - H(Y_{X_\n}| \constrain_{X_\n}) \label{eqn: mutual-info-definition}
    %\gammaconst{n} \! \coloneqq \!\!\!\!\sup_{X \subseteq \X: |X| \leq n} \!\!\!\!\!\! I(Y_X;q_X).
\end{align}
% The mutual information $I(Y_{X};\constrain_{X})$ 
which quantifies the reduction in uncertainty about $\constrain$ after revealing $Y_{X_\n}$ (see \cref{apx:mutual_info}). Using this, we define the maximum information capacity $\gammaconst{n} \coloneqq \sup_{X \subseteq \X: |X| \leq n} I(Y_X;q_X)$, which upper bounds the information that can be obtained with a finite number of measurements $\n$ collected at any set $X$.  
% Using this, we define the maximum information capacity $\gammaconst{n} \coloneqq \sup_{X \subseteq \X: |X| \leq n} I(Y_X;q_X)$. The mutual information quantifies the reduction in uncertainty about $\constrain$ after revealing $Y_{X}$ (see \cref{apx:mutual_info}).
% Let $I(Y_X;q_X)$ denote mutual information between $\constrain_X$ evaluated at locations in the set $X$ and the noisy samples $Y_X$ collected at $X$ \cite{beta-srinivas}. \rev{Using Shannon entropy $H$, mutual information is defined  as $I(Y_{X_\n};\constrain_{X_\n}) = H(Y_{X_\n}) - H(Y_{X_\n}| \constrain_{X_\n})$,  
The following lemma establishes high probability confidence intervals containing the \emph{a-priori} unknown $\constrain$ function using the posterior mean $\muconst[\n]$ and variance $\sigconst[\n]$ derived in \cref{eqn:posterior_update}.
% The following lemma helps us choose an appropriate scaling factor $\beta_\n$ in order to build high probability confidence intervals containing the \emph{a-priori} unknown $\constrain$ function.
\begin{lemma}
{\cite[Theorem 2]{beta_chowdhury17a}}
\label{thm: beta} % Assume $\|\constrain\|_{\kernelfunc} \leq \Bq$ and $\noisevar$-sub-Gaussian noise.
\looseness -1 Let \cref{assump:q_RKHS} hold. If $\sqrt{\betaconst[\n]} = \Bq + 4\noisevar\sqrt{\gamma_{\n} + 1+ \log(1/\prob)}$, then $\forall \state \in \Domain$ and $\n \geq 1, |\constrain(\state) - \muconst[\n-1](\state)| \leq \sqrt{\betaconst[\n]} \sigconst[\n-1](\state)$ holds with probability at least $1-\prob$.
% Let \cref{assump:q_RKHS} hold. If $\sqrt{\betaconst[\n]} = \Bq + 4\noisevar\sqrt{\gamma_{\n} + 1+ \log(1/\prob)}$, then the following holds for all $\state \in \Domain$ and $\n \geq 1$ with probability at least $1-\prob : |\constrain(\state) - \muconst[\n-1](\state)| \leq \sqrt{\betaconst[\n]} \sigconst[\n-1](\state)$.
\end{lemma}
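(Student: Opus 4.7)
The plan is to reproduce the argument of Chowdhury and Gopalan \cite{beta_chowdhury17a} which established this bound for RKHS bandits. The starting point is to view $\muconst[\n-1]$ as the kernel ridge regression estimator with regularization $\noisevar^2$ and, using the reproducing property of the kernel, to decompose the pointwise prediction error into an approximation/bias term (driven by $\|\constrain\|_{\kernelfunc}\leq\Bq$) and a stochastic term (driven by the sub-Gaussian noise sequence $\{\noise_i\}_{i=1}^{\n}$).

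Concretely, I would first rewrite $\constrain(\state)-\muconst[\n-1](\state)$ via the representer theorem as $\langle \constrain,\kernelfunc(\state,\cdot)\rangle_{\kernelfunc} - k_{\n-1}^{\top}(\state)(K_{\n-1}+\noisevar^2\identity)^{-1}Y_{X_{\n-1}}$, and plug in $Y_{X_{\n-1}}=\constrain_{X_{\n-1}}+\boldsymbol{\eta}_{\n-1}$. Algebraically this splits into a deterministic term involving $\constrain$ alone, which after a Cauchy--Schwarz step in the RKHS is bounded by $\Bq\,\sigconst[\n-1](\state)$, and a noise-dependent term $k_{\n-1}^{\top}(\state)(K_{\n-1}+\noisevar^2\identity)^{-1}\boldsymbol{\eta}_{\n-1}$. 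The latter is controlled by the self-normalized inner product $\|\boldsymbol{\eta}_{\n-1}\|_{(K_{\n-1}+\noisevar^2\identity)^{-1}}\cdot\sigconst[\n-1](\state)/\noisevar$, again by Cauchy--Schwarz in an appropriate feature space.

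The main obstacle is bounding the self-normalized noise term $\|\boldsymbol{\eta}_{\n-1}\|_{(K_{\n-1}+\noisevar^2\identity)^{-1}}$ uniformly in $\n$. This is handled by the method of mixtures, lifted from $\R^d$ to the RKHS feature space: one builds a supermartingale from an exponential-tilt of the noise process, mixes with a Gaussian prior on the direction in feature space, and integrates out. The resulting Laplace-type bound yields, with probability at least $1-\prob$ and uniformly for all $\n\geq 1$, a deviation of order $\noisevar\sqrt{\log\det(\identity+\noisevar^{-2}K_{\n-1})+2\log(1/\prob)}$. The log-determinant is then identified with twice the mutual information $I(Y_{X_{\n-1}};\constrain_{X_{\n-1}})$ computed under a surrogate Gaussian model, and upper bounded by $2\gammaconst{\n}$ using the definition of the maximum information capacity given just before the lemma.

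Finally I would combine the two pieces: the deterministic RKHS bias contributes $\Bq\,\sigconst[\n-1](\state)$, and the stochastic term contributes at most $4\noisevar\sqrt{\gammaconst{\n}+1+\log(1/\prob)}\,\sigconst[\n-1](\state)$ after absorbing numerical constants from the mixture normalization into the $+1$ inside the square root. Adding these yields the claimed factor $\sqrt{\betaconst[\n]}=\Bq+4\noisevar\sqrt{\gammaconst{\n}+1+\log(1/\prob)}$, valid for every $\state\in\Domain$ and every $\n\geq 1$ on the same probability-$(1-\prob)$ event provided by the uniform-in-$\n$ self-normalized inequality. The continuity of $\kernelfunc$ assumed in \cref{assump:q_RKHS} ensures $\sigconst[\n-1]$ is well-defined pointwise on $\Domain$, so no measurability or continuity subtlety arises in extending the bound from the sample points to arbitrary $\state\in\Domain$.
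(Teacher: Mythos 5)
The paper does not prove this lemma: it is imported verbatim as a citation of \cite[Theorem 2]{beta_chowdhury17a}, so there is no in-paper argument to compare against. Your sketch is a faithful reconstruction of the cited proof — the bias/noise decomposition via the representer theorem, the Cauchy--Schwarz step giving $\Bq\,\sigconst[\n-1](\state)$ for the deterministic part, the self-normalized method-of-mixtures bound for the noise part, and the identification of the log-determinant with (twice) the information gain bounded by $\gammaconst{\n}$ — and the only places where your outline glosses over genuinely delicate points are the ones that are delicate in the original as well: the lift of the method of mixtures to an infinite-dimensional feature space, the matching of the ridge regularization to $\noisevar^2$ so that the bias term comes out as exactly $\Bq\,\sigconst[\n-1](\state)$ (Chowdhury--Gopalan actually work with a perturbed regularization $1+2/T$ and absorb the discrepancy into the constant, which is one source of the factor $4$ rather than $\sqrt{2}$), and the stopping-time argument that makes the bound uniform over all $\n\geq 1$ on a single probability-$(1-\prob)$ event. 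Since the paper's intent is to use this as a black box, your level of detail is appropriate; if you wanted a self-contained proof you would need to spell out the self-normalized concentration inequality (Theorem 1 of the cited reference) rather than invoke it by name.
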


% $\gammaconst{n} = \max_{A \subseteq \X: |A| \leq n} I(y_A;q_A)$ where $I(y_A;q_A)$ denotes mutual information gathered from noisy samples $y_A$ collected at $A$ about the function $\constrain$. 
\looseness -1 Given $\sqrt{\betaconst[\n]}$ as in \cref{thm: beta}, we recursively build monotonic confidence intervals for the constraint after $\n \geq 1$ samples,
\begin{align*}
    \lbconst[\n](\state) &\coloneqq \max \{ \lbconst[\n-1](\state), \muconst[\n-1](\state) - \sqrt{\betaconst[\n]} \sigconst[\n-1](\state) \},\\
    \ubconst[\n](\state) &\coloneqq \min \{ \ubconst[\n-1](\state), \muconst[\n-1](\state) + \sqrt{\betaconst[\n]} \sigconst[\n-1](\state) \},
\end{align*}
with \mbox{$\lbconst[0](\state)\!\coloneqq\! \muconst[0](\state)\! -\! \sqrt{\betaconst[1]} \sigconst[0](\state)$}, \mbox{$\ubconst[0](\state)\!\coloneqq\! \muconst[0](\state)\! +\! \sqrt{\betaconst[1]} \sigconst[0](\state)$}. The functions $\lbconst[\n](\cdot)$ and $\ubconst[\n](\cdot)$ are continuous due to continuity of the kernel, see \cref{assump:q_RKHS} \cite{korezlioglu1968reproducing}.
We define the confidence width $\sumMaxwidth[]{\n}(\state)\coloneqq \ubconst[\n](\state) - \lbconst[\n](\state)$. With the definition of lower and upper bounds, it directly follows that $\forall \n \in \Nat, \forall x \in \X,$ 
\begin{align}
    \lbconst[\n](x) \leq \lbconst[\n+1](x), \ubconst[\n](x) \geq \ubconst[\n+1](x) ,  \sumMaxwidth[]{\n}(x) \geq \sumMaxwidth[]{\n+1}(x).
\end{align}
% and $, \forall \n, \forall x \in \X$.

\looseness -1 %Throughout the paper, we assume that the assumptions of \cref{thm: beta} hold, $\noisevar$-sub-Gaussian noise, and that $\sqrt{\betaconst[\n]}$ is defined as in \cref{thm: beta}.
Throughout the paper, we assume  $\sqrt{\betaconst[\n]}$ is defined as in \cref{thm: beta}. 
Due to the construction of our confidence sets, an immediate corollary of \cref{thm: beta} results as follows.
% \begin{corollary}
%    \looseness -1 Let \cref{assump:q_RKHS} hold. If $\sqrt{\betaconst[\n]} = \Bq + 4\noisevar\sqrt{\gamma_{\n} + 1+ \log(1/\prob)}$ then with probability at least $1-\prob$, for all $\state \in \Domain$ and $\n \geq 1$, it holds $\constrain(\state) \in [\lbconst[\n](\state),\ubconst[\n](\state)]$ . \label{coro:hp_bounds}
% \end{corollary}
\begin{corollary}\label{coro:hp_bounds}
   \looseness -1 Let \cref{assump:q_RKHS} hold. If $\sqrt{\betaconst[\n]} = \Bq + 4\noisevar\sqrt{\gamma_{\n} + 1+ \log(1/\prob)}$ then the following holds:
   \begin{align*}
       \mathrm{Pr}\Big\{ \constrain(\state) \in \left[\lbconst[\n](\state),\ubconst[\n](\state)\right]~~~\forall \state \in \Domain, \n \geq 1 \Big\} \geq 1 - \prob.
   \end{align*}\vspace{-1.5em}
   % with probability at least $1-\prob$, for all $\state \in \Domain$ and $\n \geq 1$, it holds $\constrain(\state) \in [\lbconst[\n](\state),\ubconst[\n](\state)]$ . 
\end{corollary}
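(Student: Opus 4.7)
The plan is to reduce the statement directly to \cref{thm: beta} via a single high-probability event, and then propagate the bounds through the recursive construction of $l_n^\constrain$ and $u_n^\constrain$ by induction on $n$.

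First I would invoke \cref{thm: beta} to obtain an event $\mathcal{E}$ of probability at least $1-\prob$ on which the raw \GP-based bound
\[
|\constrain(\state) - \muconst[\n-1](\state)| \leq \sqrt{\betaconst[\n]}\,\sigconst[\n-1](\state)
\]
holds simultaneously for every $\state\in\Domain$ and every $\n\geq 1$. The entire rest of the argument is carried out pointwise on $\mathcal{E}$, so no further probabilistic work is required; everything after this is deterministic bookkeeping.

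Next I would show, by induction on $\n$, that on $\mathcal{E}$ one has $\lbconst[\n](\state)\leq \constrain(\state)\leq \ubconst[\n](\state)$ for every $\state\in\Domain$ and every $\n\geq 0$. The base case $\n=0$ is immediate from the definition of $\lbconst[0],\ubconst[0]$ together with the $\n=1$ instance of \cref{thm: beta}, which gives $\muconst[0](\state)-\sqrt{\betaconst[1]}\sigconst[0](\state)\leq \constrain(\state)\leq \muconst[0](\state)+\sqrt{\betaconst[1]}\sigconst[0](\state)$. For the inductive step, assume $\lbconst[\n](\state)\leq \constrain(\state)\leq \ubconst[\n](\state)$. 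Then
\[
\lbconst[\n+1](\state)=\max\bigl\{\lbconst[\n](\state),\,\muconst[\n](\state)-\sqrt{\betaconst[\n+1]}\sigconst[\n](\state)\bigr\}\leq \constrain(\state),
\]
because the first argument of the $\max$ is $\leq \constrain(\state)$ by the induction hypothesis and the second is $\leq \constrain(\state)$ by the $(\n+1)$-st instance of \cref{thm: beta}; the analogous inequality for $\ubconst[\n+1]$ follows by taking the $\min$ of two upper bounds on $\constrain(\state)$.

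Combining the two steps, the event on which $\constrain(\state)\in[\lbconst[\n](\state),\ubconst[\n](\state)]$ for all $\state\in\Domain$ and all $\n\geq 1$ contains $\mathcal{E}$, hence has probability at least $1-\prob$, which is the claim. The only subtlety worth flagging is that \cref{thm: beta} is quantified uniformly over $\state$ and $\n$ in the \emph{probability} statement (not merely pointwise), so the single event $\mathcal{E}$ really does support the joint uniform conclusion; no union bound over $\n$ is needed. I do not expect any real obstacle here — the result is a direct rewriting of \cref{thm: beta} in terms of the monotone confidence sequences $\lbconst[\n],\ubconst[\n]$.
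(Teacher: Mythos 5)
Your proof is correct and matches the argument the paper intends: the paper states \cref{coro:hp_bounds} as an ``immediate'' consequence of \cref{thm: beta} together with the construction of the confidence sets, which is precisely your single-event reduction followed by induction over the $\max$/$\min$ recursion. Your observation that the quantifiers over $\state$ and $\n$ sit inside the probability in \cref{thm: beta}, so that one event suffices and no union bound is needed, is exactly the point the paper relies on implicitly.
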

% \cref{col:hp_bounds} ensures that the constraint function stays in $[\lbconst[\n](\state),\ubconst[\n](\state)]$ for every $\state \in \Domain$ and $\n\geq 1$, with high probability.
Hence, if we ensure $\lbconst[\n](\state(t))\geq 0, \forall t \geq 0$, then \cref{coro:hp_bounds} guarantees satisfaction of  constraints~\eqref{eq:def_constraint} jointly for all $t \geq0, \n\geq 1$ with probability at least $1-p$. Note that since we consider a noise distribution with unbounded tails, we must exclude rare extreme events in probability, thus our results provide high probability safety guarantees.
% \vspace{-0.4em}
\section{Safe exploration - theoretical analysis}
\label{sec:SE_theory}
\begin{figure}
\setlength{\abovecaptionskip}{5pt}
    \centering
    \includegraphics[width=0.9\columnwidth]{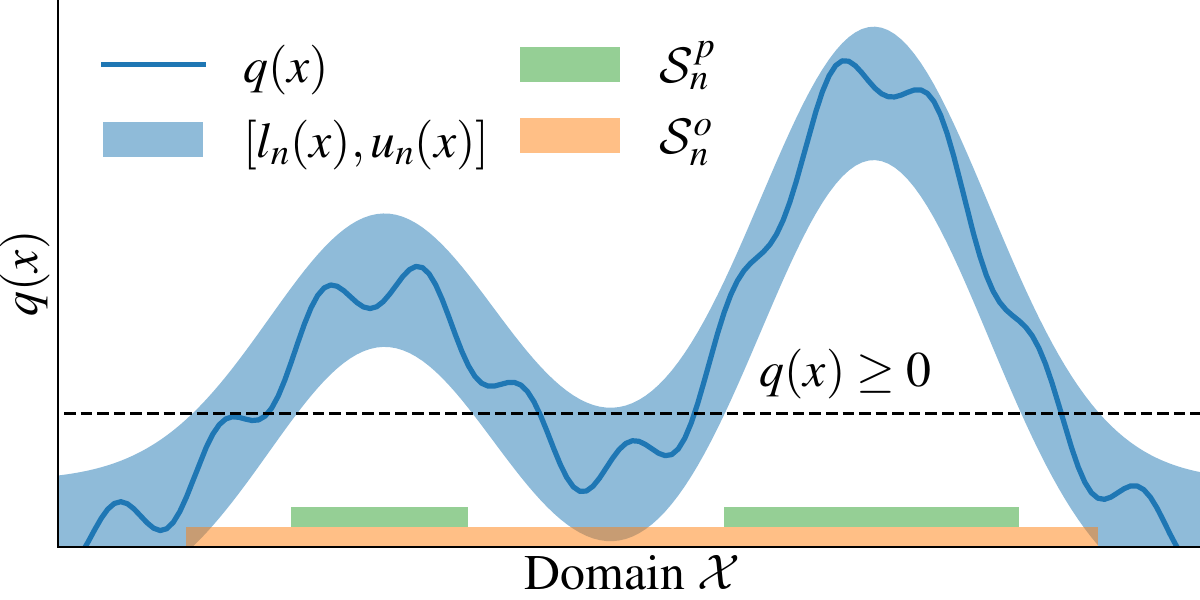}
    \caption{\looseness -1 With regularity assumptions on $\constrain$, we define high probability upper and lower confidence bounds on the a-priori unknown function $\constrain$. These are utilized to obtain pessimistic $\pessiSet[]{\n}$ and optimistic $\mathcal{S}^o_\n$ estimates of constrained sets.}
    \label{fig:1D-safe-set}\vspace{-1.5em}
\end{figure}
\looseness -1 This section gradually introduces our framework, safe guaranteed exploration using optimal control. %(\seocp). 
We first present the sample complexity result for the continuous state-space $\X \subseteq \R^\sdim$ without considering dynamics, i.e., how many samples are required to accurately learn the constraint function safely. 
Then, we extend this result to non-linear dynamical systems in \cref{sec:sageoc} and show guaranteed maximum domain exploration while being safe for all times in \cref{sec:guaranteeingSE}.
% Next, we extend the result to a dynamical system and present the \seocp approach. 
% Finally, we show that \seocp is sufficient to guarantee maximum domain exploration while being safe for non-linear systems.
% We first define the constrained sets, their reachable returnable counterparts and the convergence goal. Next, we provide a sampling scheme that guarantees safe exploration of the domain, which we later exploit in \cref{sec:MPC} to derive an efficient and practically implementable algorithm. 
\vspace{-0.6em}
\subsection{Sample complexity for continuous domain} \label{sec:see:sampling}
\looseness -1 Using the confidence bounds of the \GP, we define the constraint sets which will be crucial for safe sampling.

\looseness -1 \emph{Constraint sets}:
The pessimistic and optimistic estimate of the true constraint set are defined analogously to \eqref{eq:def_constraint} as
\begin{equation}
\begin{aligned}
\pessiSet[]{\n} \coloneqq& \{ \state \in \Domain| \lbconst[\n](\state) \geq 0 \}, \\
  \mathcal{S}^{o}_{\n} \coloneqq& \{\state \in \Domain| \ubconst[\n](\state) \geq 0 \},  \label{eqn:constrained_set}
  % \constSet[]{} \coloneqq& \{\state \in \Domain|\constrain(\state) \geq 0 \},
\end{aligned}
\end{equation}
\looseness -1 utilizing the lower bound and the upper bound of the true function, respectively. \cref{fig:1D-safe-set} graphically illustrates the pessimistic and optimistic approximation of the true constraint set. 
With high probability, the pessimistic set is an inner approximation of the true constraint set, and hence any point in it is safe with high probability.
Similarly, the optimistic set represents a high probability outer approximation of the true constraint set, and any point outside of it is unsafe with high probability.

% The pessimistic set represents an under-approximation of the safe set, which is safe with high probability. 
% Similarly, the optimistic set represents an over-approximation of the safe set, outside of it is known to be unsafe with high probability.

% \emph{Intuition for exploration:} We sample (collect a measurement) at a certain location which results in reducing uncertainty and due to the regularity of the \GP, sampling also shrinks the uncertainty of nearby regions. This results in increasing the lower confidence bound and hence enlarging the pessimistic set, i.e., with every sample we infer about the safety of nearby regions resulting in exploration.

\looseness -1 \emph{Exploration:} We collect a measurement at a certain location, which reduces the uncertainty of the \GP in nearby regions due to the regularity of the \GP. This increases the lower confidence bound which may enlarge the pessimistic set, i.e., with every constraint sample we infer the safety of nearby regions.

\looseness -1 Thus, the agent iteratively learns about the constraint function $\constrain$ by sampling; however with noisy samples. In general, we cannot learn the continuous function $\constrain: \R^\sdim \to \R$ exactly with finite samples using only the posed regularity condition (\cref{assump:q_RKHS}).
% Unless we collect infinite samples, we cannot learn the function $\constrain$ exactly. 
Hence we focus on learning $\constrain$ up to a user-defined tolerance $\epsilon > 0$.
% which is a prespecified hyperparameter. %which is a hyperparameter and can vary based on the environmental setup, e.g., safety function and the measurement sensor. 
We represent an $\epsconst\operatorname{-}$approximation of the optimistic constraint set as $\optiSet[]{\n} \coloneqq \{\state \in \Domain| \ubconst[\n](\state) -\epsconst \geq 0 \}$.

\looseness -1 \emph{Sampling strategy:} In order to learn about the constraint function up to $\epsconst$ precision, we devise the sampling strategy to sample at $\state_{n+1}$ after the collection of $\n$ samples as
\begin{align}
\mathrm{Find}~\state \in \pessiSet[]{\n}: \ubconst[\n](\state) - \lbconst[\n](\state) \geq \epsconst. \label{eqn:BOsampling_strategy}
    % \max_{ \substack{x(t')\in \Rcontoper[\safeInit{\n}] {\pessiSet[]{\n}},\\ t' \in [0,T]} } \!\!\!\! -\mathrm{cost}~~\textit{s.t.}~~ \ubconst[\n](x(t')) - \lbconst[\n](x(t')) \geq \epsconst, \label{eqn:sampling_strategy}
\end{align}

\looseness -1 Thus, we recursively sample in the pessimistic constraint set, which ensures safety and where the uncertainty is at least $\epsilon$, which ensures sufficient information gain for exploration. A natural implementation is to sample states with the largest confidence width, i.e., at $\argmax_{\state \in \pessiSet[]{\n}} \sumMaxwidth[]{\n}(x)$.
After each measurement, we perform a posterior update \eqref{eqn:posterior_update} and again identify the next sampling location as per \cref{eqn:BOsampling_strategy}. 
In order to show finite time termination of this simple sampling strategy, we consider the following regularity assumption.
% We prove that this simple sampling strategy is guaranteed to terminate within a finite number of samples. For this, we make the following regularity assumption which is easy to ensure by using a suitable kernel.
\begin{assumption} \label{assump:sublinear}
    $\!\betaconst[\n]\!\gamma_{\n}$ grows sublinearly in $\n$, i.e., $\betaconst[\n]\!\gamma_{\n} \!< \!\mathcal{O}(\n)$. 
\end{assumption}
\looseness -1 Such an assumption is common in most prior works \cite{beta-srinivas,beta_chowdhury17a,safe-bo-sui15,sui2018stagewise,berkenkamp2023bayesian,prajapat2022near,turchetta2019safe,turchetta2016safemdp} to establish sample complexity or sublinear regret results and is not restrictive. 
Indeed, due to bounded $\Bq$ (\cref{assump:q_RKHS}), $\betaconst[\n]\gamma_{\n}$ grows sublinear in ${\n}$ for compact domains $\X$ and commonly used kernels, e.g., linear, squared exponential, Mat\'ern, etc., with sufficient eigen decay \!\cite{gammaT-vakili21a,beta-srinivas}. \cref{assump:sublinear} imposes regularity on the kernel function and restricts the class of kernels such that we can achieve finite-time convergence.
\begin{theorem} [Sample complexity] \looseness -1 Let \cref{assump:q_RKHS,assump:sublinear} hold. Let $\n^{\star}$ be the largest integer satisfying $\frac{\n^{\star}}{\beta_{\n^{\star}} \gamma_{\n^{\star}}} \leq \frac{C}{\epsconst^2}$, with $C = 8/ \log (1 + \noiseconst)$. The \GP update \eqref{eqn:posterior_update} with sampling scheme \eqref{eqn:BOsampling_strategy} yields $\state_\n$ 
% at $n^{th}$ iteration 
satisfying $\constrain(\state_\n)\geq 0, \forall \n \geq 1$ with probability at least $1-\prob$ and $\exists \nfin \leq n^\star : \forall x \in \pessiSet[]{\nfin},~\sumMaxwidth[]{\nfin}(x)<\epsconst$.
%Sample $\state \in \Rcontoper[\safeInit{\n-1}]{\pessiSet[]{\n-1}}$ : $\ubconst[\n-1](\state) - \lbconst[\n-1](\state) \geq \epsconst$,  then $\exists \n \leq \n^{\star}$:  $\ubconst[\n](\state) - \lbconst[\n](\state) < \epsconst \forall \state \in \Rcontoper[\safeInit{\n}]{\pessiSet[]{\n}}$.
\label{thm:sample_complexity}
\end{theorem}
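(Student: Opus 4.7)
The plan is to decouple the two claims. The \textbf{safety} statement is immediate from the sampling rule combined with \cref{coro:hp_bounds}: every selected state $\state_{n+1}$ lies in $\pessiSet[]{\n}$, so by definition $\lbconst[\n](\state_{n+1}) \geq 0$. On the event of probability at least $1-p$ provided by \cref{coro:hp_bounds}, one has $\constrain(\state) \geq \lbconst[\n](\state)$ jointly for every $\state \in \X$ and $n \geq 1$; in particular $\constrain(\state_n) \geq 0$ for all $n \geq 1$, which gives the safety part of the claim.

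For the \textbf{termination} claim I would argue by contradiction. Suppose that for every $n' \leq n^\star$ the rule \eqref{eqn:BOsampling_strategy} admits a feasible point; then the algorithm produces a sequence $\state_1,\dots,\state_{n^\star+1}$ with $\sumMaxwidth[]{i}(\state_{i+1}) \geq \epsconst$ at every step. The recursive construction of the bounds yields $\sumMaxwidth[]{i}(\state) \leq 2\sqrt{\betaconst[i]}\,\sigconst[i-1](\state)$ for every $\state \in \X$, hence $\sigconst[i-1](\state_{i+1})^2 \geq \epsconst^{2}/(4\betaconst[i])$. Summing this pointwise lower bound over the sampled locations and invoking the standard information-gain inequality of Srinivas et al.\ (in the RKHS form used in \cref{thm: beta}), which gives $\sum_{i} \sigconst[i-1](\state_i)^2 \leq \tfrac{2}{\log(1+\noiseconst)}\,\gamma_{n^\star+1}$, while using monotonicity of $\betaconst[n]$ in $n$ to replace $\betaconst[i]$ by $\betaconst[n^\star+1]$, one obtains
\[
(n^\star+1)\,\epsconst^2 \;\leq\; \frac{8\,\betaconst[n^\star+1]\,\gamma_{n^\star+1}}{\log(1+\noiseconst)} \;=\; C\,\betaconst[n^\star+1]\,\gamma_{n^\star+1}.
\]
Rearranged, this reads $(n^\star+1)/(\betaconst[n^\star+1]\,\gamma_{n^\star+1}) \leq C/\epsconst^2$, contradicting the maximality of $n^\star$ in its definition. \cref{assump:sublinear} ensures that $n^\star$ is indeed a finite integer, so the contradiction is well-posed and some $n' \leq n^\star$ must satisfy $\sumMaxwidth[]{n'}(\state) < \epsconst$ for all $\state \in \pessiSet[]{n'}$.

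The main obstacle I anticipate is the bookkeeping around the one-step index shift: the paper's confidence bounds $\lbconst[n],\ubconst[n]$ are built from the posterior $\muconst[n-1],\sigconst[n-1]$ with coefficient $\betaconst[n]$, whereas the information-gain inequality naturally sums $\sigconst[i-1](\state_i)^2$ over the pre-observation variance at the $i$-th sample. I would handle this by a careful relabelling of samples, or by exploiting the pointwise monotonicity $\sigconst[n]\leq\sigconst[n-1]$ together with the non-decreasing character of $\betaconst[n]$ to absorb the shift without loss. Aside from this bookkeeping, the argument is a textbook combination of (i) the RKHS confidence bounds of \cref{thm: beta}, (ii) the pessimistic-set sampling rule that yields safety for free, and (iii) the $\gamma_n$-based telescoping that converts a uniform per-sample width lower bound into a sample-complexity upper bound.
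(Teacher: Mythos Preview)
Your proposal is correct and follows essentially the same approach as the paper: safety via $\state_n\in\pessiSet[]{\n-1}$ together with \cref{coro:hp_bounds}, and termination via the chain $\epsconst^2 \leq \sumMaxwidth[2]{\n-1}(\state_\n) \leq 4\betaconst[\n]\sigconst[\n-1]^2(\state_\n)$ combined with the $s\leq C_2\log(1+s)$ trick to bound the sum by $C\betaconst[\nfin]\gamma_{\nfin}$. The only cosmetic differences are that the paper argues directly (showing any $\nfin$ for which sampling remains feasible must satisfy $\nfin/(\betaconst[\nfin]\gamma_{\nfin})\leq C/\epsconst^2$) rather than by contradiction at $\n^\star+1$, and it bounds $\sum \sumMaxwidth[2]{}$ above rather than $\sum\sigconst[]^2$ below, but these are equivalent rearrangements of the same inequality; the index-shift bookkeeping you flag is handled identically in the paper.
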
\vspace{-0.9em}
\begin{proof} %The sampling rule \eqref{eqn:BOsampling_strategy} samples $\state \in \pessiSet[]{\n-1}$ : $\ubconst[\n-1](\state) - \lbconst[\n-1](\state) \geq \epsconst$. The lemma follows by proving $\exists \nfin \leq \n^{\star}$:  $\ubconst[\n](\state) - \lbconst[\n](\state) < \epsconst \forall \state \in \Rcontoper[\safeInit{\n}]{\pessiSet[]{\n}}$.
\looseness -1 The confidence width $\sumMaxwidth[]{\n-1}(\state)= \ubconst[\n-1](\state) - \lbconst[\n-1](\state) \leq 2\sqrt{\betaconst[\n]} \sigconst[\n-1](\state)$ can be bounded as:
\vspace{-0.4em}
\begin{align*} 
%  & \\
% \implies
\!\!    \sumMaxwidth[2]{\n-1}(\state) \!&\leq \!4\betaconst[\n] \sigconst[\n-1]^2(\state) %= 4\betaconst[\n]  \noisevar^2 \noiseconst (\sigconst[\n-1](\state))^2 
   \! \leq \! 4\betaconst[\n] \noisevar^2 C_2 \! \log(1\!+\!\noiseconst \sigconst[\n-1]^2(\state))   \\%\label{eqn: w-bound-c2} \\
    &= 1/2 C_1 \betaconst[\n] \log(1+\noiseconst \sigconst[\n-1]^2(\state)). %\label{eqn: w-bound-cont} 
    \vspace{-0.4em}
\end{align*} \vspace{-0.15em}
\looseness -1 Following the analysis from \cite{beta-srinivas}, let $s \coloneqq \noiseconst \sigconst[\n-1]^2(\state) \leq \noiseconst\kernelfunc(\state,\state) \leq \noiseconst$. The second inequality uses the fact that $ \forall s \in [0, \noiseconst] , s \leq C_2 \log(1+s)$ where $C_2 = \noiseconst/\log(1+\noiseconst) \geq 1$. The last equation holds with $C_1 = 8/\log(1+\noiseconst)$.
% The \cref{eqn: w-bound-c2} follows since $ \forall s \in [0, \noiseconst] , s \leq C_2 \log(1+s)$ where $C_2 = \noiseconst/\log(1+\noiseconst) \geq 1$. 
% In \cref{eqn: w-bound-cont} $C_1 = 8/\log(1+\noiseconst)$. 

We next bound it with the mutual information $I(Y_{X_\nfin};\constrain_{X_\nfin})$ acquired by $\nfin$ samples: \vspace{-0.4em}
% first bound the width at $\state(t)$ with mutual information acquired by sampling at those locations. 
\begin{align*}
    \sum_{\n=1}^{\nfin} \! \sumMaxwidth[2]{\n-1}(\state_\n)  \! &\leq  \sum_{\n=1}^{\nfin} C_1 \betaconst[\n] \frac{1}{2} \log(1+\noiseconst \sigconst[\n-1]^2(\state_\n)) \\[-0.5em] 
    &\leq C_1 \betaconst[\nfin] \frac{1}{2} \sum_{\n=1}^{\nfin}  \log(1+\noiseconst \sigconst[\n-1]^2(\state_\n))  \numberthis \label{eqn:non-dec-beta}\\[-0.1em]
    &= C_1 \betaconst[\nfin] I(Y_{X_\nfin};\constrain_{X_\nfin}). \label{eqn:mutual_info_bound}\numberthis 
    \vspace{-0.4em}
\end{align*}\vspace{-0.2em}
\looseness -1 \cref{eqn:non-dec-beta} follows since $\betaconst[\n]$ is non-decreasing (\!\!\cite{beta-srinivas}), and the last equality follows from the definition of mutual information $I(Y_{X_\nfin};\constrain_{X_\nfin})$, see \cref{apx:mutual_info}.
Suppose we sample as per criteria \eqref{eqn:BOsampling_strategy} for $\nfin$ iterations before terminating, then it holds that \vspace{-1em}
% Let $\n^{\star}$ be the largest number of samples acquired by \cref{alg:seocp}, beyond which it will be infeasible, i.e., $\ubconst[\n](\state) - \lbconst[\n](\state)< \epsconst$, then
% Using the sampling rule defined above, in the worst case, we can upper bound the number of required samples as, 
\begin{align}
    \!\!\nfin \epsconst^2 \!\leq \!\sum_{\n=1}^{\nfin} \! \sumMaxwidth[2]{\n-1}(\state_\n) \!\leq \! C_1 \betaconst[\nfin] &I(Y_{X_\nfin};\constrain_{X_\nfin}) \!\leq \!C_1 \betaconst[\nfin] \gamma_{\nfin} \label{eqn:mono-gamma}\\[-0.9em]
  \implies  \frac{\nfin}{\betaconst[\nfin] \gamma_{\nfin}} &\leq \frac{C_1 }{\epsconst^2}. \label{eqn:sample_complexity}
\end{align}\vspace{-0.2em}
\looseness -1 \cref{eqn:mono-gamma} follows from \cref{eqn:mutual_info_bound} and the definition of $\gamma_{\nfin}$ \cite{beta-srinivas}. 
Under \cref{assump:sublinear}, since $\betaconst[\nfin] \gamma_{\nfin}$ grows sublinear, there exists a largest integer $\n^\star$ that satisfies \eqref{eqn:sample_complexity} and thus $\nfin\leq\n^\star$.
% Under \cref{assump:sublinear}, since $\betaconst[\nfin] \gamma_{\nfin}$ grows sublinear, a finite $\nfin$ exists that satisfies \cref{eqn:sample_complexity}. Since $\n^\star$ is the largest integer satisfying \cref{eqn:sample_complexity}, we get $\nfin\leq\n^\star$.
Note that as per \eqref{eqn:BOsampling_strategy}, $\forall \n \geq 0,\state_\n \in \pessiSet[]{\n-1}$ which by definition \eqref{eqn:constrained_set} implies 
% \stackrel{\cref{eqn:constrained_set}}{\implies} 
$\lbconst[\n-1](\state_\n) \geq 0$ and hence by  \cref{coro:hp_bounds} implies
% \stackrel{\cref{coro:hp_bounds}}{\implies} 
$\constrain(\state_\n) \geq 0$ with probability at least $1-\prob$.
% The first inequality follows since the measurements are collected until $\sumMaxwidth[]{}(\state) \leq \epsconst$ and later one due to monotonicity of $\beta$ and $\gamma$ (maximum mutual information).
% $\gamma_{\n}$, the maximum information capacity can be upper bounded by the number of observations.
\end{proof}

%For non-smooth functions where $\gamma_n$ is not sublinear, theoretical $n^\star$ may not exist. However, it is a worst-case bound and in practice, the method is still applicable and performs well. 
% Moreover as shown in \cref{eqn:sample_complexity}, $n^\star$ decreases quadratically with accuracy parameter $\epsconst$.

% { Intuitively, for smooth enough kernels, this is possible; however, in practice, even with rough functions, we see convergence.} \manish{Decay quadratically with $\epsconst$}
% Thus, we are guaranteed to terminate within a finite number of $\nfin \leq \n^\star$ samples, beyond which uncertainty will go less than epsilon and thus sampling rule collect anymore.
% Thus, the uncertainty will go less than $\epsilon$ with $\nfin \leq \n^\star$ samples and the sampling rule \cref{eqn:BOsampling_strategy} do not collect anymore and we are guaranteed to terminate in a finite number of samples.
\looseness -1 Thus, with $\nfin \leq \n^\star$ samples, the uncertainty will decrease below $\epsilon$, the sampling rule \eqref{eqn:BOsampling_strategy} will no longer recommend additional samples, and we are guaranteed to terminate with $\nfin$ finite samples.
% Thus, we are guaranteed to have an infeasible sampling rule ( $\nexists x \in \pessiSet[]{\n}: \ubconst[\n](\state) - \lbconst[\n](\state) \geq \epsconst$) within a finite number of $\n^\star$ samples. 
Moreover, this readily implies that we know about the true constraint function up to $\epsconst$ precision within the pessimistic set i.e., $\forall x\in \pessiSet[]{\nfin}$, we have, $\constrain(\state) \in [\lbconst[\nfin](\state), \lbconst[\nfin](\state)+\epsconst]$.
%$\constrain(\state) - \epsconst \leq \ubconst[\n](\state) - \epsconst \leq \lbconst[\n](\state) \leq \constrain(\state)$.
% \begin{align*}
%     \\
%     \implies     \constSet[,\epsconst]{} \subseteq \optiSet[]{\n}  \subseteq \pessiSet[]{\n} \subseteq \constSet[]{}.
% \end{align*}
In contrast to earlier works \cite{safe-bo-sui15,sui2018stagewise,berkenkamp2023bayesian,turchetta2016safemdp,turchetta2019safe,prajapat2022near}, which focused on a discrete domain, our sample complexity analysis removes explicit dependence from the discretization size. %We show in  \cref{lem:sample_complexity_discrete} that . 
% Our sample complexity result directly holds for discrete domains as well by exploiting the additivity of mutual information. 
Notably, this result holds for any sampling rule ensuring $\sumMaxwidth[]{\n}(\state)\geq\epsconst$, while existing approaches~\cite{safe-bo-sui15,sui2018stagewise,berkenkamp2023bayesian,turchetta2016safemdp,turchetta2019safe,prajapat2022near} require sampling at the maximal uncertainity location, $\max_x \sumMaxwidth[]{\n}(\state)$, which is unnecessarily restrictive for dynamical systems.
% Moreover, we relax the requirement of uncertainty sampling in safe exploration, i.e., we can sample anywhere with $\sumMaxwidth[]{\n-1}(\state)\geq\epsconst$ instead of $\max_x \sumMaxwidth[]{\n-1}(\state)$ which is restrictive to reach for a dynamical system but was dominantly used previously. % \cite{safe-bo-sui15,turchetta2020safe}.
\vspace{-0.7em}
\subsection{Safe exploration for non-linear dynamics} %with \seocp} 
\label{sec:sageoc}
\looseness -1 In this section, we extend the sampling rule \eqref{eqn:BOsampling_strategy} to consider a dynamical system. The system cannot jump to any desired location and needs to navigate with a safe path. However, without making any assumption on the initial safe set we do not even know where to start the process. For this, we make the following assumption on the known terminal set, %e.g., start the robot not at the obstacle 
% As noted earlier, without an initial safe seed, safe exploration is a hopeless task since we don't even know where to start. For this, we make the following assumption: % on the starting location.
\begin{assumption}[Terminal set] \label{assump: safe-init} The agent starts exploration at the initial state $\state_s \in \safeInit{0}$. The terminal sets, $\safeInit{\n}$ $\n \geq 0$ satisfy:
\begin{itemize}
\item Pessimistically safe: $\forall n \geq 0, \safeInit{\n} \subseteq\pessiSet[]{\n}$.
% \item Control invariance: For any $\n \geq 0$, the set $X_\n$ is a control invariant (CI) set, i.e., there exists a feedback controller $\pi_\n: X_\n \to \A :~\textit{if}~ x(t_0) \in X_\n ~\textit{and}~ a(t_0) = \pi(x(t_0))$, then $x(t) \in X_{\n}, \forall t\in [t_0,\infty)$.
\item Monotonicity: The terminal sets $\safeInit{\n}$ are nested (non-decreasing) w.r.t $\n$, i.e., $\safeInit{\n} \subseteq \safeInit{\n+1}$.
% \item Controllable in pessimistic constraint set: The system \eqref{eq: dyn} can be controlled within time $\timeX$ between any two points in $\safeInit{\n}$ while satisfying the pessimistic constraints, i.e., $\forall \state_0, \state_1 \in \safeInit{\n}: \exists \coninput(\cdot),  \dot{\state}(t) = \dyn(\state(t),\coninput(t)), \state(0) = \state_0, \state(\timeX) = \state_1, (\state(t),\coninput(\cdot))\in\pessiSet[]{\n} \times \pwcinput, t \in [0, \timeX)$.
\item \looseness -1 Control invariance: There exists a continuous feedback $\kappa_\n:\safeInit{\n} \to \inputSpace$ such that the set $\safeInit{\n}$ is positively invariant \cite[Definition 3.2]{blanchini1999set} for $a=\kappa_{\n}(\state)$.
% \item \looseness -1 Control invariance: There exists $\kappa_\n:\safeInit{\n} \to \inputSpace$ such that $\safeInit{\n}$ is positively invariant \cite[Def. 3.2]{blanchini1999set} for $a=\kappa_{\n}(\state)$.
\item Controllable around the terminal set: For any two points $\state_0, \state_1 \in \safeInit{\n}$, there exists an input signal $\coninput(\cdot) \in \pwcinput$ that steers $\state_0$ to $\state_1$ safely in time $t_1 \leq \timeX$, i.e., $ \exists \coninput(\cdot),  \dot{\state}(t) = \dyn(\state(t),\coninput(t)), \state(0) = \state_0, \state(t_1) = \state_1, \state(t) \in \pessiSet[]{\n}, t \in [0, t_1], t_1 \leq \timeX$.

% \item Safely connected: Any two points in $\safeInit{\n}$ can be reached to any other within time $\timeX$ through a dynamics satisfying the pessimistic constraint, i.e., $\forall \state_0, \state_1 \in \safeInit{\n}: \exists \coninput(t),  \dot{\state}(t) = \dyn(\state(t),\coninput(t)), \state(0) = \state_0, \state(\timeX) = \state_1, (\state(t),\coninput(t))\in\pessiSet[]{\n}\times\A, t \in [0, \timeX)$. 

% There exists a uniform time constant $\bar{\timeX}$, such that $X_n\subseteq \cup_{t\in[0,\bar{\timeX}]}\mathcal{R}_t(X_n,S_n^p)$
% \manish{If there are no constraints in $\A$ and $\X$ then $\prob$ will be really small?}
% Example if the input is acceleration which can be very large, however, if a constraint is v<1, then it is still not possible to have $\timeX close to zero$
%{Still unclear in which cases bounded $\A$ will affect and in which it doesn't. We don't say anything about the bounded A. we never say bounded or unbounded A. we only make assumptions say they satisfy without additional assumption on A.}
\end{itemize}
\end{assumption}
\begin{figure}
\setlength{\abovecaptionskip}{1pt}
    \centering
    \scalebox{0.4}{\tikzset{every picture/.style={line width=0.75pt}} %set default line width to 0.75pt        

\begin{tikzpicture}[x=0.75pt,y=0.75pt,yscale=-1,xscale=1]
%uncomment if require: \path (0,300); %set diagram left start at 0, and has height of 300

%Shape: Polygon Curved [id:ds5183658347857469] 
\draw  [fill={rgb, 255:red, 177; green, 177; blue, 255 }  ,fill opacity=1 ] (67.47,55.26) .. controls (96.37,40.53) and (387.46,24.72) .. (430.37,68.95) .. controls (473.27,113.18) and (480.22,238.16) .. (459.53,261.56) .. controls (438.83,284.95) and (211.76,248.67) .. (211.72,271.36) .. controls (211.68,294.05) and (65.68,273.53) .. (41.03,258.56) .. controls (16.37,243.59) and (6.61,154.59) .. (23.76,146.02) .. controls (40.91,137.45) and (20.55,129.31) .. (24.53,105.12) .. controls (28.52,80.93) and (38.56,69.99) .. (67.47,55.26) -- cycle ;
%Shape: Polygon Curved [id:ds26120029962187785] 
\draw  [fill={rgb, 255:red, 245; green, 166; blue, 35 }  ,fill opacity=0.72 ] (399.24,208.72) .. controls (410.79,197.72) and (432,202.33) .. (444.08,203.82) .. controls (456.15,205.32) and (454,225.42) .. (451.46,236.54) .. controls (448.93,247.66) and (427.33,257.58) .. (416.34,258.04) .. controls (405.35,258.51) and (387.82,246.24) .. (386.48,238.29) .. controls (385.14,230.34) and (387.7,219.72) .. (399.24,208.72) -- cycle ;
%Shape: Polygon Curved [id:ds9173444717797508] 
\draw  [fill={rgb, 255:red, 245; green, 166; blue, 35 }  ,fill opacity=0.72 ] (81.55,80.69) .. controls (122.95,41.91) and (240.14,43.96) .. (283.4,49.14) .. controls (326.66,54.32) and (393.1,51.59) .. (372.02,79.63) .. controls (350.93,107.67) and (425.69,124.03) .. (416.73,148.93) .. controls (407.78,173.83) and (301.22,248.1) .. (285.04,253.58) .. controls (268.85,259.07) and (104.03,257.23) .. (78.53,222.28) .. controls (53.03,187.32) and (40.15,119.47) .. (81.55,80.69) -- cycle ;
%Shape: Polygon Curved [id:ds41936203992014653] 
\draw  [fill={rgb, 255:red, 108; green, 215; blue, 108 }  ,fill opacity=0.6 ] (81.55,80.69) .. controls (122.95,41.91) and (233.15,44.25) .. (283.4,49.14) .. controls (333.65,54.02) and (393.1,51.59) .. (372.02,79.63) .. controls (350.93,107.67) and (425.69,124.03) .. (416.73,148.93) .. controls (407.78,173.83) and (301.22,248.1) .. (285.04,253.58) .. controls (268.85,259.07) and (104.03,257.23) .. (78.53,222.28) .. controls (53.03,187.32) and (40.15,119.47) .. (81.55,80.69) -- cycle ;
%Shape: Circle [id:dp12327332702044358] 
\draw  [line width=3]  (293.9,141.14) .. controls (293.87,140.64) and (294.26,140.22) .. (294.76,140.2) .. controls (295.25,140.18) and (295.67,140.57) .. (295.69,141.06) .. controls (295.71,141.56) and (295.33,141.98) .. (294.83,142) .. controls (294.34,142.02) and (293.92,141.63) .. (293.9,141.14) -- cycle ;
%Flowchart: Connector [id:dp6156089200179125] 
\draw  [line width=2.25]  (168.79,160.21) .. controls (168.75,159.38) and (168.24,158.73) .. (167.63,158.76) .. controls (167.03,158.78) and (166.56,159.47) .. (166.6,160.3) .. controls (166.63,161.13) and (167.15,161.78) .. (167.76,161.75) .. controls (168.36,161.73) and (168.82,161.04) .. (168.79,160.21) -- cycle ;
%Shape: Polygon Curved [id:ds11581583306547549] 
\draw  [fill={rgb, 255:red, 10; green, 146; blue, 23 }  ,fill opacity=0.4 ] (267.94,204.26) .. controls (302.12,207.82) and (348.88,153.8) .. (331.96,132.49) .. controls (315.05,111.19) and (393.14,52.59) .. (372.02,79.63) .. controls (350.89,106.67) and (424.67,123.57) .. (416.73,148.93) .. controls (408.8,174.29) and (295.98,252.12) .. (285.04,253.58) .. controls (274.09,255.05) and (235.32,237.67) .. (233.98,229.72) .. controls (232.65,221.77) and (233.76,200.7) .. (267.94,204.26) -- cycle ;
%Shape: Ellipse [id:dp03928000209374627] 
\draw  [fill={rgb, 255:red, 80; green, 227; blue, 194 }  ,fill opacity=0.4 ] (93.02,180.94) .. controls (88.93,159.72) and (115.49,136.77) .. (152.33,129.67) .. controls (189.17,122.56) and (222.36,134) .. (226.45,155.22) .. controls (230.54,176.43) and (203.98,199.39) .. (167.14,206.49) .. controls (130.29,213.59) and (97.11,202.15) .. (93.02,180.94) -- cycle ;
%Shape: Ellipse [id:dp7635146178615764] 
\draw  [fill={rgb, 255:red, 80; green, 227; blue, 194 }  ,fill opacity=0.67 ] (152.23,188.25) .. controls (142.73,182.34) and (142.44,165.66) .. (151.57,151.01) .. controls (160.7,136.35) and (175.8,129.26) .. (185.29,135.17) .. controls (194.78,141.09) and (195.07,157.76) .. (185.94,172.42) .. controls (176.81,187.07) and (161.72,194.16) .. (152.23,188.25) -- cycle ;
%Curve Lines [id:da034936875821669044] 
\draw [line width=1.5]  [dash pattern={on 5.63pt off 4.5pt}]  (168.15,159.68) .. controls (189.18,174.53) and (303.21,171.69) .. (294.84,139.91) .. controls (286.68,108.93) and (252.27,122.22) .. (210.75,146.85) ;
\draw [shift={(207.54,148.76)}, rotate = 328.97] [fill={rgb, 255:red, 0; green, 0; blue, 0 }  ][line width=0.08]  [draw opacity=0] (13.4,-6.43) -- (0,0) -- (13.4,6.44) -- (8.9,0) -- cycle    ;
\draw [shift={(244.32,167.11)}, rotate = 174.86] [fill={rgb, 255:red, 0; green, 0; blue, 0 }  ][line width=0.08]  [draw opacity=0] (13.4,-6.43) -- (0,0) -- (13.4,6.44) -- (8.9,0) -- cycle    ;
\draw [shift={(248.32,127.56)}, rotate = 339.84] [fill={rgb, 255:red, 0; green, 0; blue, 0 }  ][line width=0.08]  [draw opacity=0] (13.4,-6.43) -- (0,0) -- (13.4,6.44) -- (8.9,0) -- cycle    ;
%Shape: Polygon Curved [id:ds3896890470184837] 
\draw  [fill={rgb, 255:red, 108; green, 215; blue, 108 }  ,fill opacity=0.6 ] (399.24,208.72) .. controls (410.79,197.72) and (432,202.33) .. (444.08,203.82) .. controls (456.15,205.32) and (454,225.42) .. (451.46,236.54) .. controls (448.93,247.66) and (427.33,257.58) .. (416.34,258.04) .. controls (405.35,258.51) and (387.82,246.24) .. (386.48,238.29) .. controls (385.14,230.34) and (387.7,219.72) .. (399.24,208.72) -- cycle ;
%Shape: Polygon Curved [id:ds5572001811081619] 
\draw  [fill={rgb, 255:red, 10; green, 146; blue, 23 }  ,fill opacity=0.4 ] (399.24,208.72) .. controls (410.79,197.72) and (432,202.33) .. (444.08,203.82) .. controls (456.15,205.32) and (454,225.42) .. (451.46,236.54) .. controls (448.93,247.66) and (427.33,257.58) .. (416.34,258.04) .. controls (405.35,258.51) and (387.82,246.24) .. (386.48,238.29) .. controls (385.14,230.34) and (387.7,219.72) .. (399.24,208.72) -- cycle ;

% Text Node
\draw (393.94,65.88) node [anchor=north west][inner sep=0.75pt]  [font=\Huge,rotate=-357.58]  {$\mathcal{X}$};
% Text Node
\draw (93.47,78.52) node [anchor=north west][inner sep=0.75pt]  [font=\LARGE,rotate=-357.58]  {$\Rcontoper[\safeInit{s},\safeInit{e}]{\S}$};
% Text Node
\draw (298.09,123.65) node [anchor=north west][inner sep=0.75pt]  [font=\LARGE,rotate=-357.58]  {$z$};
% Text Node
\draw (104.45,176.38) node [anchor=north west][inner sep=0.75pt]  [font=\LARGE,rotate=-357.58]  {$\safeInit{e}$};
% Text Node
\draw (319.19,185.41) node [anchor=north west][inner sep=0.75pt]  [font=\LARGE,rotate=-357.58]  {$\S$};
% Text Node
\draw (146.94,163.57) node [anchor=north west][inner sep=0.75pt]  [font=\LARGE,rotate=-357.58]  {$\safeInit{s}$};
% Text Node
\draw (410.67,218.58) node [anchor=north west][inner sep=0.75pt]  [font=\LARGE,rotate=-357.58]  {$\S$};

% % Text Node
% \draw (451.73,429.07) node [anchor=north west][inner sep=0.75pt]  [font=\Huge]  {$\mathcal{X}$};
% % Text Node
% \draw (190,481) node [anchor=north west][inner sep=0.75pt]  [font=\LARGE]  {$\Rcontoper[\safeInit{s},\safeInit{e}]{\S}$};
% % Text Node
% \draw (392.53,534.73) node [anchor=north west][inner sep=0.75pt]  [font=\LARGE]  {$z$};
% % Text Node
% \draw (196.83,579.24) node [anchor=north west][inner sep=0.75pt]  [font=\LARGE]  {$\safeInit{e}$};
% % Text Node
% \draw (411,597.33) node [anchor=north west][inner sep=0.75pt]  [font=\LARGE]  {$\S$};
% % Text Node
% \draw (239.83,568.24) node [anchor=north west][inner sep=0.75pt]  [font=\LARGE]  {$\safeInit{s}$};
% % Text Node
% \draw (501,634.33) node [anchor=north west][inner sep=0.75pt]  [font=\LARGE]  {$\S$};
\end{tikzpicture}}
    \caption{\looseness -1 Illustration of a reachable returnable safe set, $\Rcontoper[\safeInit{s},\safeInit{e}]{\S}$: 
    The locations within $\S$ that can be reached from $\safeInit{s}$ and returned to $\safeInit{e}$ while following the dynamics and satisfying the safety constraints $\S$.} %Notably this set is a subset of $\S$ due to disconnected safe regions.
    %    % It included all the locations in set $S$, that can be reached from $\safeInit{s}$ and return to $\safeInit{e}$ 
    % can be safely reached from $\safeInit{s}$ following the dynamics and can be safely returned to the safe set $\safeInit{e}$. 
    % This is a subset of $S$, e.g., $S$ has disconnected safe regions which cannot be reached.
\label{fig:constrain_func}\vspace{-1.4em}
\end{figure}
% refer to example c
\looseness -1 The assumption is comparable to the safe initial seed assumption in the discrete domain \cite{turchetta2019safe, turchetta2016safemdp, prajapat2022near} or safe invariant terminal sets assumed in the \mpc literature \cite{wabersich2021predictive,wabersich2023data,saccani2022multitrajectory,soloperto2023safe,kohler2023analysis}. \cref{assump: safe-init} ensures a safe initialization and is satisfied, e.g., by \textit{i)} a fixed safe initial steady state $\safeInit{\n} = \{x_s\}$,  \textit{ii)} a set of steady states in the pessimistic set. 
% It is satisfied by \textit{i)} a fixed safe initial steady state $\safeInit{\n} = \{x_s\}$, i.e., $\state_s \in \pessiSet[]{\n}, \exists a \in \A: \dyn(\state_s,\coninput)=0$, \textit{ii)} a set of steady states in the pessimistic set, i.e., $\safeInit{\n} = \{x \in \X^{\epsilon}| \exists a \in \A^{\epsilon}: \dyn(x, a) = 0, x \in \pessiSet[,\epsconst]{\n}\}$, if is path-connected for locally controllable dynamics.
% an ellipsoid CI set around $x_0$ under a controllability assumption. 
 Please see \cref{lem:pessi_increase_Xn} in \cref{apx:assum_local_control} for detailed proof that \textit{(ii)} satisfies \cref{assump: safe-init}. 
% The assumption
Note that \cref{assump: safe-init}
allows the safe sets to grow with the pessimistic set, which simplifies the computation of a safe returnable path. As we will see later in \cref{sec:guaranteeingSE}, the \textit{controllable around the terminal set} property of \cref{assump: safe-init} is only required for guaranteeing exploration whereas safety can be ensured with the remaining properties.
%It will become clear later, that this significantly simplifies the planning of a safe returnable path which can be exploited for deriving practical heuristics, e.g., small horizons.

\looseness -1 \emph{Reachable returnable safe set:} For safety \eqref{eq:def_constraint}, it does not suffice that we steer the system to a safe measurement location. 
Instead, the dynamics of the system need to be considered to ensure safety during transient operation.
% In this work, we are not only interested in the safety of the steady state but the complete path driven by the agent. 
For this, we define the reachable returnable safe set for a  fixed horizon $T>0$ as follows:
% Below, we define an over and under-approximation of the true safe set.
% But we are interested in not just the safe set but a true reachable and returnable safe set.
% We define reachable returnable trajectory set as follows,
% \begin{align} \label{eqn: pessi-reach-ret-cont}
%     \Rcontoper[\safeInit{0}]{S} &\coloneqq \{ \state(t) \in \R^n~\forall t \in[0,T]~|~ \exists \coninput(t) \in \inputSpace, \dot{\state}(t) =  \nonumber \\ \dyn(\state(t), \coninput(t)), & \state(0) = 0, \state(t)\in \safeInit{0}, \state(t)\in \X\cap S,~\forall t\in[0,T]\}
% \end{align}
\begin{multline}
\label{eqn: pessi-reach-ret-cont}
   \!\!\!\!\!\!\! \Rcontoper[\safeInit{\mathrm{s}}, \safeInit{\mathrm{e}}]{\S} \! \coloneqq \! \{ z \in \R^\sdim~|~ \exists \coninput(\cdot) \! \in \pwcinput, \tsample \!\! \in[0,T], \state(0) \in \safeInit{\mathrm{s}} \\
  \qquad \qquad \qquad \quad \ \dot{\state}(t) =  \dyn(\state(t), \coninput(t)), \state(t) \in \S, \forall t\in[0,T]\\ z=\state(\tsample),\state(T)\in \safeInit{e}~\}.
    % \Rcontoper[\safeInit{\n}]{S} \coloneqq \{ \state(t) \in \R^\sdim~\forall t \in[0,T]~|~ \exists \coninput(t) \in \inputSpace, \\ \qquad \qquad \dot{\state}(t) =  \dyn(\state(t), \coninput(t)),  \state(0) \in \safeInit{\n}, \state(T)\in \safeInit{\n}, \\ \state(t)\in \X\cap S,~\forall t\in[0,T]\}
\end{multline} 
% Wlog, the definition uses $\state(0)=0$. 
Here, $\Rcontoper[\safeInit{\mathrm{s}}, \safeInit{\mathrm{e}}]{\S}$ denotes the set of all the locations in set $\S$, which are safely reached from $\safeInit{\mathrm{s}}$ following the dynamics and can be safely returned to the set $\safeInit{\mathrm{e}}$ as shown in \cref{fig:constrain_func}. By using $\S \in \{  \optiSet[]{\n}, \pessiSet[]{\n}$, $\constSet[]{}$\}, we obtain $\epsconst-$optimistic, pessimistic and true reachable returnable safe sets, respectively. When $\safeInit{\mathrm{e}}=\safeInit{\mathrm{s}}$, we slightly abuse the notation by defining $\Rcontoper[\safeInit{\mathrm{s}}]{\S} \coloneqq \Rcontoper[\safeInit{\mathrm{s}}, \safeInit{\mathrm{s}}]{\S}$.

% The agent iteratively learns about the constrain function $\constrain$ by collecting noisy samples. In general, we cannot learn the continuous function $\constrain: \R^\sdim \to \R$ exactly with finite samples using only the posed regularity condition (\cref{assump:q_RKHS}).
% Unless we collect infinite samples, we cannot learn the function $\constrain$ exactly. 
% Hence we focus on learning $\constrain$ up to statistical confidence $\epsilon > 0$, which is a hyperparameter and can vary based on the environmental setup, e.g., safety function and the measurement sensor. 
\emph{Sampling strategy:} We now define the sampling strategy for a dynamical system utilizing $\Rcontoper[\safeInit{\n}]{\pessiSet[]{\n}}$ which allows us to sample while ensuring a safely reachable path, as well as safely returnable path back to the terminal set. Additionally, to learn the constraint up to $\epsconst$ tolerance, we sample the $(\n+1)^{th}$ location analogous to the rule \eqref{eqn:BOsampling_strategy} given by,
\begin{align}
\mathrm{Find}~\state \in \Rcontoper[\safeInit{\n}] {\pessiSet[]{\n}}: \ubconst[\n](\state) - \lbconst[\n](\state) \geq \epsconst. \label{eqn:sampling_strategy}
    % \max_{ \substack{x(t')\in \Rcontoper[\safeInit{\n}] {\pessiSet[]{\n}},\\ t' \in [0,T]} } \!\!\!\! -\mathrm{cost}~~\textit{s.t.}~~ \ubconst[\n](x(t')) - \lbconst[\n](x(t')) \geq \epsconst, \label{eqn:sampling_strategy}
\end{align}
Solving \eqref{eqn:sampling_strategy} yields a state and input trajectory $\tilde{\state}(t), \tilde{\coninput}(t)$, $t \in [0,T]$, initial condition $\tilde{\state}(0) \in \safeInit{\n}$, and a sampling location $\tilde{\state}(\tsample) \coloneqq \state, \tsample\in[0,T]$. Problem \eqref{eqn:sampling_strategy} can be formulated similar to a standard MPC problem and thus solved using established tools, see \cref{sec:numerical} for details on the implementation.
%Utilising the solution, we explain below the process of the safe exploration for a dynamical system, also see \cref{alg:seocp} for pseudocode. %Next, utilising this sampling rule, we explain the process of the safe exploration for dynamical systems also see \cref{alg:seocp} for pseudocode.
% the agent iteratively solves \cref{eqn:sampling_strategy} until infeasible.

\begin{algorithm}[!t]
\caption{Safe guaranteed exploration}
\begin{algorithmic}[1]
% \State \textbf{Initialize:} $\safeInit{0}$, start location $\state_s \in \safeInit{0}$, \GP, horizon $T$.
\State \textbf{Initialize:} Start at $\state_s \in \safeInit{0}$, \GP, Horizon $T$, Tolerance $\epsilon$.
 \For{$n=0, 1 \hdots$} 
 \State $\tilde{\state}(0), \tilde{\coninput}(t):t \in [0,T], \tsample \leftarrow$ Solve Problem \eqref{eqn:sampling_strategy}. \label{alg:line:SE:sampling}
 \If{\eqref{eqn:sampling_strategy} not feasible} terminate \label{alg:SE:terminate} \EndIf 
 \State Move from $x_s \to \tilde{\state}(0)$ (\cref{assump: safe-init}). \label{alg:SE:MoveInInit}
 \State Move from $\tilde{\state}(0) \to \tilde{\state}(\tsample)$ using $\tilde{\coninput}(t), t\in[0,\tsample)$. \label{alg:SE:MoveToSample}
 \State Update \GP using $y_\n = \constrain(\tilde{\state}(\tsample)) + \eta_\n$.
 \State Move from $\tilde{\state}(t') \!\to\! x(T)$ with $\tilde{\coninput}(t), t\in[\tsample\!,T)$. \label{alg:SE:MoveToInit} 
 \State Set $\state_s \!\leftarrow\! \tilde{\state}(T)$.
% \If{\senlp is not feasible}
% \State terminate = True; return
% \EndIf
\EndFor
\end{algorithmic}
\label{alg:seocp} \vspace{-0.15em}
\end{algorithm}
\looseness -1 %\emph{\seocp:} 
\emph{Safe guaranteed exploration process:}
\cref{alg:seocp} summarizes the proposed framework. The agent starts at location $\state_s$ and navigates to $\tilde{\state}(0) \in \safeInit{\n}$ which can be traversed in $\timeX$ due to \cref{assump: safe-init}.  Next, the agent navigates to the sampling location $\tilde{\state}(\tsample)$ using control action $\tilde{\coninput}$ up to time $\tsample$. Finally, it collects the measurement, updates the \GP \eqref{eqn:posterior_update} and navigates back to $\tilde{\state}(T)\in \safeInit{\n}$. 
Thus, the duration of one sampling iteration is bounded by $T + \timeX$. 
We next resolve Problem \eqref{eqn:sampling_strategy} to identify the next informative location and the process continues until \eqref{eqn:sampling_strategy} is infeasible, i.e., $\ubconst[\n](\state) - \lbconst[\n](\state) <\epsilon, \forall x\in \Rcontoper[\safeInit{\n}] {\pessiSet[]{\n}}$. This implies there are no more safely reachable-returnable informative points and thus the algorithm terminates in \cref{alg:SE:terminate}. 
% Note that the \textit{cost} in \ref{eqn:sampling_strategy} does not play a role in safe exploration and 

\looseness -1 In Problem \eqref{eqn:sampling_strategy}, we do not focus on sampling at a particular $\state$, as long as we satisfy the constraint. This provides an additional degree of freedom for prioritizing better sampling locations $\state$ in addition to safe exploration, which we leverage later in \cref{sec:goalSE} in a goal-directed setting. 
Note that we optimize over \mbox{$\state(0) \in \safeInit{\n}$}, and using controllable around terminal set property of \cref{assump: safe-init}, navigate to $\state(0)$ from $\state_s$, in contrast to starting from $\tilde{\state}(0)=\state_s$. This allows us to obtain uniform guarantees w.r.t. the set $\Rcontoper[\safeInit{\n}]{\pessiSet[]{\n}}$, which is independent of $\state_s$. %and results in a larger reachable returnable safe set.
% Note that we optimize over $\state(0) \in \safeInit{\rev{\n}}$, and navigate to it from $\state_s$ since it results in a larger reachable returnable safe set, in contrast to directly starting from $\tilde{\state}(0)=\state_s$.
The following corollary of \cref{thm:sample_complexity} extends safety and sample complexity results to dynamical systems.
% \seocp extends \cref{thm:sample_complexity} for results 

\begin{corollary} [Sample complexity]  \looseness -1 Let \cref{assump:q_RKHS,assump: safe-init,assump:sublinear} hold. Consider $\n^\star$ according to \cref{thm:sample_complexity}. The closed-loop system resulting from \cref{alg:seocp} satisfies $\constrain(\state(t))\geq 0, \forall t \geq 0$ with probability at least $1-\prob$ and terminates in $\nfin \leq n^\star$ iterations.
%Sample $\state \in \Rcontoper[\safeInit{\n-1}]{\pessiSet[]{\n-1}}$ : $\ubconst[\n-1](\state) - \lbconst[\n-1](\state) \geq \epsconst$,  then $\exists \n \leq \n^{\star}$:  $\ubconst[\n](\state) - \lbconst[\n](\state) < \epsconst \forall \state \in \Rcontoper[\safeInit{\n}]{\pessiSet[]{\n}}$.
\label{coro:sample_complexity}
\end{corollary}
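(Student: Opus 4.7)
The plan is to reduce the claim to \cref{thm:sample_complexity} plus a trajectory-level safety argument. Two things must be established: (i) every state visited by \cref{alg:seocp} lies in the current pessimistic set $\pessiSet[]{\n}$, so that \cref{coro:hp_bounds} yields joint-in-time safety with probability at least $1-\prob$; and (ii) the termination condition is triggered after at most $\n^\star$ sampling iterations.

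For (i), I would walk through the three motion segments of iteration $\n$ (\cref{alg:SE:MoveInInit,alg:SE:MoveToSample,alg:SE:MoveToInit}) and argue inductively that the state trajectory stays in $\pessiSet[]{\n}$. Inductively assume $\state_s\in\safeInit{\n}\subseteq \pessiSet[]{\n}$ at the beginning of iteration $\n$ (the base case follows from $\state_s\in\safeInit{0}$ in \cref{assump: safe-init}). The first segment moves from $\state_s$ to $\tilde{\state}(0)\in\safeInit{\n}$ and stays in $\pessiSet[]{\n}$ by the \emph{controllable around the terminal set} property of \cref{assump: safe-init}. The middle and final segments follow the trajectory $\tilde{\state}(t)$, $t\in[0,T]$, returned by \eqref{eqn:sampling_strategy}; by the very definition of $\Rcontoper[\safeInit{\n}]{\pessiSet[]{\n}}$ in \eqref{eqn: pessi-reach-ret-cont}, this trajectory satisfies $\tilde{\state}(t)\in\pessiSet[]{\n}$ for all $t\in[0,T]$ and $\tilde{\state}(T)\in\safeInit{\n}$. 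The new $\state_s$ at the start of iteration $\n+1$ therefore lies in $\safeInit{\n}\subseteq\safeInit{\n+1}\subseteq \pessiSet[]{\n+1}$ by monotonicity, closing the induction. Since every visited state $\state(t)$ satisfies $\lbconst[\n](\state(t))\geq 0$ on the corresponding iteration, \cref{coro:hp_bounds} delivers $\constrain(\state(t))\geq 0$ for all $t\geq 0$ jointly with probability at least $1-\prob$.

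For (ii), the argument is a direct reduction to the proof of \cref{thm:sample_complexity}. Any sample $\state_\n=\tilde{\state}(\tsample)$ returned by \eqref{eqn:sampling_strategy} satisfies $\state_\n\in \Rcontoper[\safeInit{\n-1}]{\pessiSet[]{\n-1}}\subseteq\pessiSet[]{\n-1}$ together with the uncertainty condition $\sumMaxwidth[]{\n-1}(\state_\n)\geq\epsconst$. These are precisely the two ingredients used in the chain \eqref{eqn:mutual_info_bound}--\eqref{eqn:sample_complexity}: the safety containment contributes nothing to that bound, which only uses $\sumMaxwidth[]{\n-1}(\state_\n)\geq\epsconst$ and the maximum information capacity $\gammaconst{\n}$. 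Hence the same computation yields $\nfin/(\betaconst[\nfin]\gamma_{\nfin})\leq C_1/\epsconst^2$, and \cref{assump:sublinear} bounds the largest such $\nfin$ by $\n^\star$. When \eqref{eqn:sampling_strategy} becomes infeasible the algorithm terminates in \cref{alg:SE:terminate}, which must happen no later than iteration $\n^\star$.

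The main obstacle I anticipate is not any single inequality but the bookkeeping in (i): one must be careful that the \emph{controllable around terminal set} clause of \cref{assump: safe-init} refers to safety w.r.t.\ $\pessiSet[]{\n}$ at the \emph{current} iteration (which is why the nestedness $\safeInit{\n}\subseteq\safeInit{\n+1}$ and $\pessiSet[]{\n}\subseteq \pessiSet[]{\n+1}$ are needed to carry $\state_s$ across iterations), and that the inductive hypothesis uses the high-probability event from \cref{coro:hp_bounds} which holds uniformly in $\n$. Everything else is essentially a restatement of \cref{thm:sample_complexity} over the reachable-returnable subset of $\pessiSet[]{\n}$.
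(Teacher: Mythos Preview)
Your proposal is correct and matches the paper's argument: safety is established by observing that each motion segment of an iteration stays in $\pessiSet[]{\n}$ (via \cref{assump: safe-init} for the segment inside $\safeInit{\n}$ and via the definition of $\Rcontoper[\safeInit{\n}]{\pessiSet[]{\n}}$ for the outgoing and returning segments), after which \cref{coro:hp_bounds} gives the high-probability constraint satisfaction; termination is obtained because samples from \eqref{eqn:sampling_strategy} satisfy $\sumMaxwidth[]{\n-1}(\state_\n)\geq\epsconst$, so the counting argument \eqref{eqn:mutual_info_bound}--\eqref{eqn:sample_complexity} of \cref{thm:sample_complexity} applies verbatim. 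The only cosmetic difference is that the paper phrases termination via the set inclusion $\Rcontoper[\safeInit{\nfin}]{\pessiSet[]{\nfin}}\subseteq\pessiSet[]{\nfin}$ (so infeasibility of \eqref{eqn:BOsampling_strategy} at $\nfin$ forces infeasibility of \eqref{eqn:sampling_strategy}), whereas you re-run the information-gain chain directly for rule \eqref{eqn:sampling_strategy}; both routes are equivalent.
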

\begin{proof} We know from \cref{thm:sample_complexity} that there exists $\nfin\leq \n^\star:$
\begin{align}
\!\!\!\epsilon \!>\! \max_{\state \in \pessiSet[]{\nfin}} \ubconst[\nfin](\state) - \lbconst[\nfin](\state) \geq \!\!\! \max_{\state \in \Rcontoper[\safeInit{\nfin}\!]{\pessiSet[]{\nfin}}} \!\!\!\ubconst[\nfin](\state) - \lbconst[\nfin](\state) \label{eqn:finite_terminate}
\end{align}
using $\Rcontoper[\safeInit{\nfin }]{\pessiSet[]{\nfin }} \subseteq \pessiSet[]{\nfin }$. %In \cref{alg:seocp}, since we sample $\state \in \Rcontoper[\safeInit{\nfin }]{\pessiSet[]{\nfin}}$ : $\ubconst[\nfin](\state) - \lbconst[\nfin](\state) \geq \epsconst$, \eqref{eqn:finite_terminate} implies \cref{alg:seocp} terminates after $\nfin\leq n*$ samples.
Using sampling rule \eqref{eqn:sampling_strategy}, \cref{eqn:finite_terminate} implies \cref{alg:seocp} terminates within $\nfin\leq \n^\star$ samples.

% In \cref{alg:seocp}, we sample $\state \in \Rcontoper[\safeInit{\n-1}]{\pessiSet[]{\n-1}}$ : $\ubconst[\n-1](\state) - \lbconst[\n-1](\state) \geq \epsconst$. 

% The corollary follows by proving $\exists ~\nfin \leq \n^{\star}$:  $\ubconst[\nfin ](\state) - \lbconst[\nfin ](\state) < \epsconst ~\forall \state \in \Rcontoper[\safeInit{\nfin }]{\pessiSet[]{\nfin }}$. 
% Since $\Rcontoper[\safeInit{\nfin }]{\pessiSet[]{\nfin }} \subseteq \pessiSet[]{\nfin }$ and  from \cref{thm:sample_complexity}, $\ubconst[\nfin ](\state) - \lbconst[\nfin ](\state) < \epsconst \ \forall \state \in \pessiSet[]{\nfin} \! \implies \! \ubconst[\nfin ](\state) - \lbconst[\nfin ](\state) < \epsconst \ \forall \state \in \Rcontoper[\safeInit{\nfin }]{\pessiSet[]{\nfin}}$.
% ()
\looseness -1 In \cref{alg:SE:MoveInInit}, the agent moves inside the set $\pessiSet[]{\n}$ (\cref{assump: safe-init}). In the rest, \cref{alg:SE:MoveToSample,alg:SE:MoveToInit}, the agent moves inside the reachable returnable pessimistic set, $\Rcontoper[\safeInit{\n}]{\pessiSet[]{\n}} \subseteq \pessiSet[]{\n}$. 
% Hence $\forall t \geq 0,\state(t) \in \pessiSet[]{\n} \stackrel{\cref{eqn:constrained_set}}{\implies} \lbconst[\n](\state(t)) \geq 0 \stackrel{\cref{coro:hp_bounds}}{\implies} \constrain(\state(t)) \geq 0$.
Thus $\forall t \geq 0,\state(t) \in \pessiSet[]{\n}$ which by \eqref{eqn:constrained_set} implies  $\lbconst[\n](\state(t)) \geq 0$ and hence by \cref{coro:hp_bounds} we get  $\constrain(\state(t)) \geq 0$.
% $\forall \n \geq 0,\state_\n \in \pessiSet[]{\n-1}$ which by definition \eqref{eqn:constrained_set} implies 
% % \stackrel{\cref{eqn:constrained_set}}{\implies} 
% $\lbconst[\n-1](\state_\n) \geq 0$ and hence by  \cref{coro:hp_bounds} implies
% % \stackrel{\cref{coro:hp_bounds}}{\implies} 
% $\constrain(\state_\n) \geq 0$.
\end{proof}

Thus, \cref{alg:seocp} retains the sample complexity result of \cref{thm:sample_complexity} while guaranteeing safe paths for a dynamical system. Since every iteration of \cref{alg:seocp} can take up to $T+\timeX$ time, the total physical runtime is bounded by $\n^\star(T+\timeX)$. Next, we focus on guaranteeing maximum safe domain exploration within this time.
\vspace{-0.4em}
\subsection{Guaranteeing full safe exploration for non-linear dynamics}%for \seocp} 
\label{sec:guaranteeingSE}
\looseness -1 We first define our notion of full exploration and then show how the safe exploration framework guarantees it. We make a mild regularity assumption on the reachable returnable sets to exclude pathological cases shown in \Cref{fig:conti_domain_eg}.
\begin{figure}
    \centering
    \scalebox{0.45}{\input{images/conti_domain_eg}}
    % \caption{The domains on the right satisfy~\cref{assump: regularconnected}, while the domains on the left do not.}
    \caption{Illustration of the pathological cases on the left which are excluded by having \cref{assump: regularconnected}, while the domains on the right satisfy~\cref{assump: regularconnected}.} 
    % \caption{While the domain on the left does not satisfy \cref{assump: regularconnected}, we can handle domains on the right}
    \label{fig:conti_domain_eg}\vspace{-1em}
\end{figure}
\begin{assumption}\label{assump: regularconnected}
For any $\safeInit{\n} \subseteq \R^\sdim$ and $S \in \{  \optiSet[]{\n}, \pessiSet[]{\n}, \constSet[]{}, \X\}$, the reachable returnable set $\Rcontoper[\safeInit{\n}]{S}$ satisfies,
\begin{itemize}
    \item Regular closed set: $\Rcontoper[\safeInit{\n}]{S} = \closure{\interior{\Rcontoper[\safeInit{\n}]{S}}}$.
    \item Locally path-connected interior: $\forall x \in \partial \Rcontoper[\safeInit{\n}]{S}$, there exists a sufficiently small $\epsilon>0$ such that the set $B_{\epsilon}(x) \cap \interior{\Rcontoper[\safeInit{\n}]{S}}$ is path connected.
\end{itemize}
\end{assumption}
\looseness -1 The \textit{regular closed set} ensures the neighbourhood of boundaries has a non-empty interior, which removes degenerate cases, e.g., when the reachable returnable set is a curve in 2D space. The \textit{locally path-connected interior} ensures the set does not have intersecting boundaries, e.g. in 2D, this requires that subsets are (locally) not only connected through a single point.
% or, alternatively, does not have a subset connected through only one point. %In general, if the domain $\X$ looks like in \cref{fig:conti_domain_eg}, then it may not be satisfied. 
% However, it is not restrictive, e.g., manifolds with boundaries satisfy \cref{assump: regularconnected}.

\looseness-1 \textit{Full safe exploration:} Since, we learn about $\constrain$ up to $\epsilon$ tolerance, it is not possible to guarantee exploration beyond $\constSet[,\epsilon]{}$ safely. Thus, the desired behaviour is to guarantee the safe reachability and returnability to any state in $\Rcontoper[\safeInit{\n}]{\constSet[,\epsconst]{}}$. Hence, the best any algorithm can guarantee is the following:
% Hence we define our objective is to guaranteeing the following,
% \begin{tcolorbox}[colframe=white!, top=2pt,left=2pt,right=2pt,bottom=2pt]\vspace{-1em}
% \begin{align}\vspace{-0.5em}
% &\qquad\qquad\quad\textbf{Maximum safe domain exploration} \label{eqn:objective} \numberthis\\
% &\!\!\Rcontoper[\safeInit{\n}]{\!\constSet[,\epsconst]{}}\! \subseteq\!\Rcontoper[\safeInit{\n}]{\!\optiSet[]{\n}} \! \subseteq \!\Rcontoper[\safeInit{\n}]{\!\pessiSet[]{\n}} \!\subseteq \!\Rcontoper[\safeInit{\n}]{\!\constSet[]{}}. \nonumber 
% \end{align}
% \end{tcolorbox}

\begin{tcolorbox}[colframe=white!, top=2pt,left=2pt,right=6pt,bottom=2pt]
\begin{objective}[Maximum safe domain exploration] There exists $\nfin \leq \n^\star$ such that \label{eqn:objective}
\begin{align}\vspace{-0.5em}
\!\!\!\Rcontoper[\safeInit{\nfin}\!]{\!\constSet[,\epsconst]{}}\! \subseteq\!\Rcontoper[\safeInit{\nfin}\!]{\!\optiSet[]{\nfin}} \! \subseteq \!\Rcontoper[\safeInit{\nfin}\!]{\!\pessiSet[]{\nfin}\!} \!\subseteq \!\Rcontoper[\safeInit{\nfin}\!]{\!\constSet[]{}}. \nonumber 
\end{align}
\end{objective}
\end{tcolorbox}
\looseness-1 In other words, we need to show the reachable returnable pessimistic set has expanded enough (or vice versa, the $\epsilon-$reachable returnable optimistic set has shrunk enough) to contain the $\epsilon-$reachable returnable optimistic set using a finite number of measurements $\nfin$. %By definition of the sets $\optiSet[]{}$ and $\pessiSet[]{}$, both are within the  $\epsconst-$true and true safe sets. 
Due to monotonicity of the terminal sets $\safeInit{\n}$,  satisfaction of \cref{eqn:objective} also implies 
 $\Rcontoper[\safeInit{\n'}]{\pessiSet[]{\n'}} \supseteq \Rcontoper[\safeInit{0}]{\constSet[,\epsconst]{}}$, i.e.,
the pessimistic reachable returnable set is larger than the $\epsilon-$reachable returnable set with the terminal set $\safeInit{0}$. 
% \red{The set $\Rcontoper[\safeInit{\n}]{\constSet[,\epsconst]{}}$ varies with $\n$, however \cref{eqn:objective} still ensures convergence to $\Rcontoper[\safeInit{\n'}]{\pessiSet[]{\n'}} \supseteq \Rcontoper[\safeInit{\n'}]{\constSet[,\epsconst]{}}$, which by monotonicity implies $ \Rcontoper[\safeInit{\n'}]{\pessiSet[]{\n'}} \supseteq \Rcontoper[\safeInit{0}]{\constSet[,\epsconst]{}}$.} 
% The goal can be defined as showing the existence of finite $\n$ such that after $\n$ measurements, one can
% Precisely, the objective is defined as:
% Hence, we can state the exploration objective as:

\begin{figure}
    \centering
    \scalebox{0.5}{\tikzset{every picture/.style={line width=0.75pt}} %set default line width to 0.75pt        

\begin{tikzpicture}[x=0.75pt,y=0.75pt,yscale=-1,xscale=1]
%uncomment if require: \path (0,485); %set diagram left start at 0, and has height of 485

% %Shape: Polygon Curved [id:ds8291444761299018] 
% \draw  [fill={rgb, 255:red, 177; green, 177; blue, 255 }  ,fill opacity=1 ] (199.5,92.25) .. controls (213,81.25) and (432.33,51.75) .. (488,94.25) .. controls (543.67,136.75) and (535.67,271.75) .. (514,294.25) .. controls (492.33,316.75) and (339.5,314.75) .. (312,313.25) .. controls (284.5,311.75) and (236.28,316.52) .. (202,297.75) .. controls (167.72,278.98) and (157.43,256.42) .. (155.5,244.25) .. controls (153.57,232.08) and (150.5,214.25) .. (153.5,191.25) .. controls (156.5,168.25) and (167,145.25) .. (169.5,133.25) .. controls (172,121.25) and (186,103.25) .. (199.5,92.25) -- cycle ;
%Shape: Polygon Curved [id:ds8291444761299018] 
\draw  [fill={rgb, 255:red, 177; green, 177; blue, 255 }  ,fill opacity=1 ] (199.5,92.25) .. controls (213,81.25) and (432.33,51.75) .. (488,94.25) .. controls (543.67,136.75) and (517.67,232.5) .. (496,255) .. controls (474.33,277.5) and (417,283) .. (373,284) .. controls (329,285) and (289,283) .. (241,274) .. controls (193,265) and (152.5,245.75) .. (156,205) .. controls (159.5,164.25) and (167,145.25) .. (169.5,133.25) .. controls (172,121.25) and (186,103.25) .. (199.5,92.25) -- cycle ;
%Shape: Polygon Curved [id:ds6250653327682998] 
\draw  [fill={rgb, 255:red, 108; green, 215; blue, 108 }  ,fill opacity=0.6 ] (298.2,99.2) .. controls (306.6,92) and (353.5,97.75) .. (359.5,99.75) .. controls (365.5,101.75) and (384.6,105) .. (379.6,125) .. controls (374.6,145) and (345.4,134.45) .. (329,139.25) .. controls (312.6,144.05) and (300,158.75) .. (280,149.25) .. controls (260,139.75) and (289.8,106.4) .. (298.2,99.2) -- cycle ;
%Shape: Cross [id:dp8302983992221162] 
\draw   (347.39,195.32) -- (347.39,195.32) -- (342.7,200.1) -- (347.48,204.79) -- (347.48,204.79) -- (342.7,200.1) -- (338.01,204.88) -- (338.01,204.88) -- (342.7,200.1) -- (337.92,195.41) -- (337.92,195.41) -- (342.7,200.1) -- cycle ;
%Straight Lines [id:da579563995038296] 
\draw [line width=1.5]    (342.7,200.1) -- (342.76,118.59) ;
%Shape: Circle [id:dp4841991682617759] 
\draw  [fill={rgb, 255:red, 80; green, 227; blue, 194 }  ,fill opacity=1 ] (336.18,118.59) .. controls (336.18,114.95) and (339.13,112) .. (342.76,112) .. controls (346.4,112) and (349.35,114.95) .. (349.35,118.59) .. controls (349.35,122.23) and (346.4,125.18) .. (342.76,125.18) .. controls (339.13,125.18) and (336.18,122.23) .. (336.18,118.59) -- cycle ;
%Curve Lines [id:da19326460918602262] 
\draw [line width=1.5]    (365,92) .. controls (352.85,80.55) and (329.33,79.36) .. (314.06,89.89) ;
\draw [shift={(311,92.25)}, rotate = 319.09] [fill={rgb, 255:red, 0; green, 0; blue, 0 }  ][line width=0.08]  [draw opacity=0] (11.61,-5.58) -- (0,0) -- (11.61,5.58) -- cycle    ;
% %Shape: Circle [id:dp9463940600226213] 
% \draw  [fill={rgb, 255:red, 80; green, 227; blue, 194 }  ,fill opacity=1 ] (93.68,112.09) .. controls (93.68,108.45) and (96.63,105.5) .. (100.26,105.5) .. controls (103.9,105.5) and (106.85,108.45) .. (106.85,112.09) .. controls (106.85,115.73) and (103.9,118.68) .. (100.26,118.68) .. controls (96.63,118.68) and (93.68,115.73) .. (93.68,112.09) -- cycle ;
%Shape: Cross [id:dp5865712571112986] 
\draw  [color={rgb, 255:red, 246; green, 9; blue, 9 }  ,draw opacity=1 ] (248.89,189.82) -- (248.89,189.82) -- (244.2,194.6) -- (248.98,199.29) -- (248.98,199.29) -- (244.2,194.6) -- (239.51,199.38) -- (239.51,199.38) -- (244.2,194.6) -- (239.42,189.91) -- (239.42,189.91) -- (244.2,194.6) -- cycle ;
%Shape: Polygon Curved [id:ds8920410984308529] 
\draw   (234,185.75) .. controls (241.5,179.75) and (253,182.25) .. (257.5,183.25) .. controls (262,184.25) and (281.5,191.5) .. (278.5,203.25) .. controls (275.5,215) and (249.5,211.75) .. (242.5,208.25) .. controls (235.5,204.75) and (226.5,191.75) .. (234,185.75) -- cycle ;
%Shape: Cross [id:dp1586130524160898] 
\draw  [color={rgb, 255:red, 246; green, 9; blue, 9 }  ,draw opacity=1 ] (261.08,198.33) -- (261.08,198.33) -- (256.39,203.11) -- (261.17,207.8) -- (261.17,207.8) -- (256.39,203.11) -- (251.7,207.89) -- (251.7,207.89) -- (256.39,203.11) -- (251.61,198.42) -- (251.61,198.42) -- (256.39,203.11) -- cycle ;
%Shape: Cross [id:dp007113803708288735] 
\draw  [color={rgb, 255:red, 246; green, 9; blue, 9 }  ,draw opacity=1 ] (272.39,191.82) -- (272.39,191.82) -- (267.7,196.6) -- (272.48,201.29) -- (272.48,201.29) -- (267.7,196.6) -- (263.01,201.38) -- (263.01,201.38) -- (267.7,196.6) -- (262.92,191.91) -- (262.92,191.91) -- (267.7,196.6) -- cycle ;
%Shape: Cross [id:dp7980621721108598] 
\draw  [color={rgb, 255:red, 246; green, 9; blue, 9 }  ,draw opacity=1 ] (259.39,186.32) -- (259.39,186.32) -- (254.7,191.1) -- (259.48,195.79) -- (259.48,195.79) -- (254.7,191.1) -- (250.01,195.88) -- (250.01,195.88) -- (254.7,191.1) -- (249.92,186.41) -- (249.92,186.41) -- (254.7,191.1) -- cycle ;

% Text Node
\draw (417.73,82.07) node [anchor=north west][inner sep=0.75pt]  [font=\Huge]  {$\mathcal{X}$};
% % Text Node
% \draw (69.7,77.73) node [anchor=north west][inner sep=0.75pt]  [font=\LARGE]  {$\Rcontoper[\safeInit{\n}]{\pessiSet[]{\n}}$};
% Text Node
\draw (352.45,106.4) node [anchor=north west][inner sep=0.75pt]  [font=\LARGE]  {$x_{s}$};
% Text Node
\draw (294.7,106.73) node [anchor=north west][inner sep=0.75pt]  [font=\LARGE]  {$\pessiSet[]{\n}$};
% Text Node
\draw (347.48,204.79) node [anchor=north west][inner sep=0.75pt]  [font=\LARGE]  {$o$};

%Curve Lines [id:da780652549666589] 
\draw [color={rgb, 255:red, 0; green, 0; blue, 0 }  ,draw opacity=0.5 ][line width=3.75]    (391.57,172.71) .. controls (372.12,161.15) and (361.62,149.58) .. (349.6,127.78) ;
\draw [shift={(346.43,121.86)}, rotate = 60.48] [fill={rgb, 255:red, 0; green, 0; blue, 0 }  ,fill opacity=0.5 ][line width=0.08]  [draw opacity=0] (20.27,-9.74) -- (0,0) -- (20.27,9.74) -- (13.46,0) -- cycle    ;

% Text Node
\draw (390.7,170.73) node [anchor=north west][inner sep=0.75pt]  [font=\LARGE]  {$\Rcontoper[\safeInit{\n}]{\pessiSet[]{\n}}$};

\end{tikzpicture}}
    \caption{\looseness -1 Pendulum pivoting around $o$, with dynamics that restrict it to a counter-clockwise rotation. The red $\times$ represents an unsafe region which is a-priori unknown. Starting from a safe location $
    \state_s$, after gathering $n$ measurements, it identifies a pessimistic constrained set $\pessiSet[]{\n}$ (green). To gain further insight into the unsafe region, it needs to approach the boundary of $\pessiSet[]{\n}$. However, with unidirectional (non-controllable) dynamics, this cannot be done safely since the pendulum cannot turn back, i.e., once the state is close to the unsafe region and detects an obstacle, the collision can no longer be avoided. Such systems are excluded through \cref{assump: equal_boundary}.}
    % To illustrate it further, as shown in \cref{fig: assumption}, consider a pendulum pivoted around $o$ with a dynamics such that it can only rotate counter-clockwise. There is an unsafe region marked by red $\times$, which is a-priori unknown. The pendulum starts initial known safe location $\state_0$, and, after obtaining $n$ measurements at $\state_0$, obtains the pessimistic constrained set $\pessiSet[]{k}$. Unless the pendulum goes close to the boundary of $\pessiSet[]{k}$, it cannot obtain further information to classify the unsafe region. However, the pendulum cannot go to the boundary since it is not returnable due to unidirectional(uncontrollable) dynamics. 
    \label{fig: assumption}\vspace{-1.5em}
\end{figure}

\looseness -1 However, achieving the \cref{eqn:objective} is not possible for general non-linear systems, e.g.,
if a system can not be controlled arbitrarily close to the boundary of the pessimistic constraint set, we cannot guarantee maximum exploration while being safe.
% if a system is not controllable, we cannot \rev{control it to }guarantee maximum exploration while being safe. 
\Cref{fig: assumption} shows a simple example to illustrate this aspect further.
% To illustrate this aspect further, \cref{fig: assumption} shows a simple example demonstrating that the \cref{eqn:objective} can, in general, not be achieved while ensuring safety. 
To avoid such uncontrollable systems, we make the following assumption on the dynamics. %and illustrate it further with a pendulum example.
\begin{assumption}\label{assump: equal_boundary}
For any $\safeInit{\n} \subseteq \R^n$, and any continuous function $\lbconst[\n]$, $\partial \Rcontoper[\safeInit{\n}]{\pessiSet[]{\n}} \subseteq \partial \pessiSet[]{\n} \bigcup \partial \Rcontoper[\safeInit{\n}]{\X}.$
\end{assumption}
In \cref{apx:assum_local_control}, we show that \cref{assump: equal_boundary} is satisfied by locally controllable systems, which includes feedback linearizable systems, differential flat systems \cite{FAULWASSER20141151}, %$\sdim$-order integrator dynamics, 
etc. Many real-world robotics platforms fall into these categories such as cars, wheeled robots, drones, etc.
%(\cref{lem: traj-path}, \cref{lem: boundary}), % \manish{any non-linear system whose observation space ($S_y$) of the steady state space is $\R^n$}. 
%\manish{Shall we claim this: Note that \sempc can be applied to any general non-linear dynamics and with appropriate discretization, in practice, it will converge to the true pessimistic set of the discrete case, (showing this is out of scope, can we show equivalence?)}.
% \manish{Local controllability is required in the dimension/space of the constraints.} 
   
% Figure 3 illustrates a pendulum pivoting around point $o$, with the restriction of only rotating in a counter-clockwise direction. The red $\times$ symbolizes an unknown unsafe region that is not known beforehand. Initially starting from a safe position $\state_0$, after gathering data from $n$ measurements, it identifies a pessimistic constrained set $\pessiSet[]{\n}$ (depicted in green). However, it is impossible to gain further understanding of the unsafe region without approaching the boundary of $\pessiSet[]{\n}$. This limitation arises due to the unidirectional dynamics, emphasizing the need for dynamics that adhere to the assumption of equal boundaries (\cref{assump: equal_boundary}) in order to ensure safe exploration.
Next, we utilize \cref{coro:sample_complexity} and show that \cref{alg:seocp} achieves the maximum safe domain exploration \eqref{eqn:objective}.
\begin{theorem} \label{thm:SE} 
%Let \cref{assump:q_RKHS,assump: safe-init,assump: equal_boundary} hold. Let $\prob \in (0,1), \epsconst>0$ and $\ubconst[\n](\state) - \lbconst[\n](\state) < \epsconst, \forall \state \in \Rcontoper[\safeInit{0}]{\pessiSet[]{\n}}$, then 
\looseness -1 Let \cref{assump:q_RKHS,assump: safe-init,assump: regularconnected,assump: equal_boundary,assump:sublinear} hold. Consider $\n^\star$ according to \cref{thm:sample_complexity}. The closed-loop system resulting from \cref{alg:seocp} guarantees \cref{eqn:objective} with probability at least $1 - \prob$.
\end{theorem}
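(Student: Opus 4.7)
The plan is to invoke \cref{coro:sample_complexity} to pin down the termination iteration and the safety event, and then verify the four-set chain in \cref{eqn:objective} one inclusion at a time, with the middle inclusion being the genuinely new step. From \cref{coro:sample_complexity}, there is an index $\nfin\leq\n^\star$ at which \cref{alg:seocp} terminates, i.e.\ the condition in \eqref{eqn:sampling_strategy} is infeasible, so $\sumMaxwidth[]{\nfin}(x)<\epsconst$ for every $x\in\Rcontoper[\safeInit{\nfin}]{\pessiSet[]{\nfin}}$; the same corollary guarantees that on an event $\mathcal{E}$ of probability at least $1-\prob$ the closed loop stays in $\constSet[]{}$ for all $t\geq 0$, and by \cref{coro:hp_bounds} on $\mathcal{E}$ we also have $\constrain(\state)\in[\lbconst[\n](\state),\ubconst[\n](\state)]$ for every $\state,n$. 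I work on $\mathcal{E}$ throughout.

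The two outer inclusions are essentially definitional once the confidence bounds are in hand. From $\constrain(\state)\geq\lbconst[\nfin](\state)$ we get $\pessiSet[]{\nfin}\subseteq\constSet[]{}$, and from $\ubconst[\nfin](\state)\geq\constrain(\state)$ together with $\constSet[,\epsconst]{}=\{\constrain\geq\epsilon\}$ we get $\constSet[,\epsconst]{}\subseteq\optiSet[]{\nfin}$. Monotonicity of the operator $\Rcontoper[\safeInit{\nfin}]{\cdot}$ in its set argument (enlarging the feasible region can only produce more admissible trajectories, since any witness trajectory for the smaller set is also a witness for the larger set) lifts these pointwise inclusions to the outer reachable-returnable inclusions in \cref{eqn:objective}.

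The middle inclusion $\Rcontoper[\safeInit{\nfin}]{\optiSet[]{\nfin}}\subseteq\Rcontoper[\safeInit{\nfin}]{\pessiSet[]{\nfin}}$ is the crux and I argue by contradiction. Suppose $z\in\Rcontoper[\safeInit{\nfin}]{\optiSet[]{\nfin}}\setminus\Rcontoper[\safeInit{\nfin}]{\pessiSet[]{\nfin}}$ with witness trajectory $\bar\state(\cdot)\subseteq\optiSet[]{\nfin}$ starting and ending in $\safeInit{\nfin}\subseteq\pessiSet[]{\nfin}$. Since $\bar\state(0)\in\Rcontoper[\safeInit{\nfin}]{\pessiSet[]{\nfin}}$ while $z=\bar\state(\tsample)\notin\Rcontoper[\safeInit{\nfin}]{\pessiSet[]{\nfin}}$, and since by \cref{assump: regularconnected} the set $\Rcontoper[\safeInit{\nfin}]{\pessiSet[]{\nfin}}$ is regular closed, continuity of $\bar\state$ produces a first exit time $t^{\star}\in(0,\tsample]$ with $\bar\state(t^{\star})\in\partial\Rcontoper[\safeInit{\nfin}]{\pessiSet[]{\nfin}}\subseteq\Rcontoper[\safeInit{\nfin}]{\pessiSet[]{\nfin}}$. \cref{assump: equal_boundary} then places $\bar\state(t^{\star})$ in $\partial\pessiSet[]{\nfin}\cup\partial\Rcontoper[\safeInit{\nfin}]{\X}$. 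On the first alternative, continuity of $\lbconst[\nfin]$ gives $\lbconst[\nfin](\bar\state(t^{\star}))=0$, and the termination condition applied to this point yields $\ubconst[\nfin](\bar\state(t^{\star}))=\lbconst[\nfin](\bar\state(t^{\star}))+\sumMaxwidth[]{\nfin}(\bar\state(t^{\star}))<\epsconst$, which contradicts $\bar\state(t^{\star})\in\optiSet[]{\nfin}$. On the second alternative, the witness trajectory itself certifies $\bar\state(t)\in\Rcontoper[\safeInit{\nfin}]{\X}$ on both sides of $t^{\star}$; the locally path-connected interior of \cref{assump: regularconnected} applied to $\Rcontoper[\safeInit{\nfin}]{\X}$ then allows me to replace a small portion of $\bar\state$ near $t^{\star}$ by a nearby sub-arc inside $\interior{\Rcontoper[\safeInit{\nfin}]{\X}}$ while staying in $\optiSet[]{\nfin}$, so the exit from $\Rcontoper[\safeInit{\nfin}]{\pessiSet[]{\nfin}}$ can be taken to occur on $\partial\pessiSet[]{\nfin}$, reducing back to the first alternative. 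Combining the three steps establishes \cref{eqn:objective} on $\mathcal{E}$, i.e.\ with probability at least $1-\prob$.

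The main obstacle I anticipate is the second boundary case: arguing rigorously that an exit across $\partial\Rcontoper[\safeInit{\nfin}]{\X}$ can always be pushed back onto $\partial\pessiSet[]{\nfin}$. This is exactly what the regular-closed plus locally-path-connected interior conditions of \cref{assump: regularconnected} are designed to furnish, so the work is in turning those topological properties into an explicit trajectory perturbation that preserves membership in $\optiSet[]{\nfin}$ (which is automatic for small enough perturbations by openness of $\{\ubconst[\nfin]>\epsconst\}$ at points of the witness trajectory not on $\partial\pessiSet[]{\nfin}$, combined with the treatment of the $\ubconst[\nfin]=\epsconst$ case above).
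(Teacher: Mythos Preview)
Your overall strategy matches the paper's: invoke \cref{coro:sample_complexity} for termination, get the outer inclusions from \cref{coro:hp_bounds}, and prove the middle inclusion by contradiction via a path that must cross $\partial\Rcontoper[\safeInit{\nfin}]{\pessiSet[]{\nfin}}$. The contradiction at a point with $\lbconst[\nfin]=0$ and $\ubconst[\nfin]\geq\epsconst$ is also exactly right.

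The gap is your handling of the ``second alternative.'' Your perturbation argument does not close: (i) replacing a portion of $\bar\state(\cdot)$ by an arc in $\interior{\Rcontoper[\safeInit{\nfin}]{\X}}$ yields a path, not a trajectory of~\eqref{eq: dyn}, so you can no longer claim the new curve lies in $\Rcontoper[\safeInit{\nfin}]{\optiSet[]{\nfin}}$ (membership there is defined via trajectories); (ii) even as a path, the new first-exit point from $\Rcontoper[\safeInit{\nfin}]{\pessiSet[]{\nfin}}$ could again land on $\partial\Rcontoper[\safeInit{\nfin}]{\X}$, so the reduction to the first alternative does not terminate; and (iii) your appeal to openness of $\{\ubconst[\nfin]>\epsconst\}$ fails at points where $\ubconst[\nfin]=\epsconst$, which are precisely the points of $\optiSet[]{\nfin}$ that matter. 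Relatedly, your first alternative silently assumes that $\bar\state(t^\star)\in\partial\pessiSet[]{\nfin}$ implies $\lbconst[\nfin](\bar\state(t^\star))=0$; this is false at points of $\partial\X$ with $\lbconst[\nfin]>0$, which also belong to $\partial\pessiSet[]{\nfin}$.

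The paper avoids all of this by working with closures instead of perturbations. It shows (\cref{lem: subtract_domain_boundary}, via \cref{lem: boundaryDiff,lem:subsetUnequalBound}) that any path staying in $\Rcontoper[\safeInit{\nfin}]{\X}$ that exits $\Rcontoper[\safeInit{\nfin}]{\pessiSet[]{\nfin}}$ must hit a point of $\closure{\partial\Rcontoper[\safeInit{\nfin}]{\pessiSet[]{\nfin}}\setminus\partial\Rcontoper[\safeInit{\nfin}]{\X}}$; this is where the regular-closed and locally-path-connected-interior parts of \cref{assump: regularconnected} are actually used, and it is a statement about paths, not trajectories. Then \cref{lem:boundary_inclusion} (set algebra plus \cref{assump: equal_boundary}) pushes this closure into $\closure{\partial\pessiSet[]{\nfin}\setminus\partial\X}$, and \cref{lem: boundary-pessi-zero} gives $\lbconst[\nfin]=0$ on that set by continuity. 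No trajectory perturbation is needed, and the closure absorbs the degenerate boundary intersections you were trying to perturb away.
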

\begin{proof}
Using \cref{coro:sample_complexity} we know there exists $\nfin \leq \n^\star : \ubconst[\nfin](\state) - \lbconst[\nfin](\state) < \epsconst \ \forall \ \state \in \Rcontoper[\safeInit{\nfin}]{\pessiSet[]{\nfin}}$. Thus it remains to show that this implies \cref{eqn:objective}.
% Thus we need to show that $\ubconst[\nfin](\state) - \lbconst[\nfin](\state) < \epsconst \ \forall \ \state \in \Rcontoper[\safeInit{\nfin}]{\pessiSet[]{\nfin}}$ implies \cref{eqn:objective}.
% $\Rcontoper[\safeInit{\n}]{\constSet[,\epsconst]{}} \subseteq \Rcontoper[\safeInit{\n}]{\optiSet[]{\n}}  \subseteq \Rcontoper[\safeInit{\n}]{\pessiSet[]{\n}} \subseteq \Rcontoper[\safeInit{\n}]{\constSet[]{}}$.

We first prove $\Rcontoper[\safeInit{\nfin}]{\optiSet[]{\nfin}}\subseteq \Rcontoper[\safeInit{\nfin}]{\pessiSet[]{\nfin}}$ by contradiction. Let us assume $\exists ~\state^\star \in \Rcontoper[\safeInit{\nfin}]{\optiSet[]{\nfin}} \backslash \Rcontoper[\safeInit{\nfin}]{\pessiSet[]{\nfin}}$. This implies there exists a continuous path $\zeta(b) \in \Rcontoper[\safeInit{\nfin}]{\optiSet[]{\nfin}} \forall b \in [0,1]$ such that $\zeta(0) \in \safeInit{\nfin}, \zeta(b^{\star}) =\state^\star$ and $\zeta(1) \in \safeInit{\nfin}$. Please see \Cref{fig: conti-opti-in-pessi} for a visual description. 
\begin{figure}
    \centering
    \scalebox{0.48}{\tikzset{every picture/.style={line width=0.75pt}} %set default line width to 0.75pt        

\begin{tikzpicture}[x=0.75pt,y=0.75pt,yscale=-1,xscale=1]
%uncomment if require: \path (0,485); %set diagram left start at 0, and has height of 485

%Rounded Rect [id:dp6063605775852643] 
\draw  [fill={rgb, 255:red, 128; green, 128; blue, 128 }  ,fill opacity=0.3 ] (40,117.7) .. controls (40,85) and (66.5,58.5) .. (99.2,58.5) -- (523.8,58.5) .. controls (556.5,58.5) and (583,85) .. (583,117.7) -- (583,295.3) .. controls (583,328) and (556.5,354.5) .. (523.8,354.5) -- (99.2,354.5) .. controls (66.5,354.5) and (40,328) .. (40,295.3) -- cycle ;
%Shape: Polygon Curved [id:ds8291444761299018] 
\draw  [fill={rgb, 255:red, 177; green, 177; blue, 255 }  ,fill opacity=1 ] (156.5,96.5) .. controls (186,83) and (454.33,42.83) .. (510,85.33) .. controls (565.67,127.83) and (571.67,304.83) .. (550,327.33) .. controls (528.33,349.83) and (351.5,337.5) .. (302,332.67) .. controls (252.5,327.83) and (145.13,264.67) .. (133.07,243.83) .. controls (121,223) and (121.06,214.09) .. (119.5,193.5) .. controls (117.94,172.91) and (119.93,128.37) .. (126,122) .. controls (132.07,115.63) and (127,110) .. (156.5,96.5) -- cycle ;
%Shape: Polygon Curved [id:ds34784325533507765] 
\draw  [fill={rgb, 255:red, 245; green, 166; blue, 35 }  ,fill opacity=0.72 ] (126,122) .. controls (138,111.5) and (205,105.6) .. (227,113.2) .. controls (249,120.8) and (357,70.4) .. (391.4,76) .. controls (425.8,81.6) and (470.74,270.31) .. (453.4,282.8) .. controls (436.06,295.29) and (369.66,303.65) .. (339.8,304.8) .. controls (309.94,305.95) and (221,278.8) .. (210.2,276.8) .. controls (199.4,274.8) and (132.1,255.6) .. (125,230) .. controls (117.9,204.4) and (114,132.5) .. (126,122) -- cycle ;
%Shape: Polygon Curved [id:ds18382512117504635] 
\draw  [fill={rgb, 255:red, 10; green, 146; blue, 23 }  ,fill opacity=0.7 ] (365,100.4) .. controls (385,90.4) and (373.8,116.4) .. (386.6,112.8) .. controls (399.4,109.2) and (390.8,81.8) .. (410.8,111.8) .. controls (430.8,141.8) and (370.8,160.8) .. (350.8,130.8) .. controls (330.8,100.8) and (345,110.4) .. (365,100.4) -- cycle ;
%Shape: Polygon Curved [id:ds6250653327682998] 
\draw  [fill={rgb, 255:red, 108; green, 215; blue, 108 }  ,fill opacity=0.6 ] (126,122) .. controls (138,111.5) and (189,95.2) .. (227,113.2) .. controls (265,131.2) and (391,193.67) .. (386,213.67) .. controls (381,233.67) and (226.6,272) .. (210.2,276.8) .. controls (193.8,281.6) and (132.1,255.6) .. (125,230) .. controls (117.9,204.4) and (114,132.5) .. (126,122) -- cycle ;
%Shape: Polygon Curved [id:ds5803393046306793] 
\draw  [fill={rgb, 255:red, 231; green, 154; blue, 25 }  ,fill opacity=0.9 ] (64.5,305.9) .. controls (59.7,282.7) and (63.7,298.3) .. (69.7,274.3) .. controls (75.7,250.3) and (97.3,258.7) .. (117.7,260.7) .. controls (138.1,262.7) and (124.9,279.1) .. (132.5,290.7) .. controls (140.1,302.3) and (133,338.1) .. (129,342.5) .. controls (125,346.9) and (89.5,346.8) .. (85.5,346) .. controls (81.5,345.2) and (73.7,339.9) .. (69.7,334.7) .. controls (65.7,329.5) and (69.3,329.1) .. (64.5,305.9) -- cycle ;
%Shape: Polygon Curved [id:ds1615464683089376] 
\draw  [fill={rgb, 255:red, 10; green, 146; blue, 23 }  ,fill opacity=0.7 ] (84.2,266) .. controls (91.6,258.4) and (113,263.2) .. (118.4,272.4) .. controls (123.8,281.6) and (129.2,267.4) .. (128,300) .. controls (126.8,332.6) and (91.6,303.6) .. (85.6,296.8) .. controls (79.6,290) and (76.8,273.6) .. (84.2,266) -- cycle ;
%Shape: Circle [id:dp7821091387354739] 
\draw  [line width=3.75]  (421.04,198.85) .. controls (421.04,198.14) and (421.61,197.57) .. (422.32,197.57) .. controls (423.02,197.57) and (423.6,198.14) .. (423.6,198.85) .. controls (423.6,199.55) and (423.02,200.12) .. (422.32,200.12) .. controls (421.61,200.12) and (421.04,199.55) .. (421.04,198.85) -- cycle ;
%Shape: Circle [id:dp014534674184328855] 
\draw  [fill={rgb, 255:red, 0; green, 0; blue, 0 }  ,fill opacity=1 ] (382.33,217.25) .. controls (382.33,216.19) and (383.19,215.33) .. (384.25,215.33) .. controls (385.31,215.33) and (386.17,216.19) .. (386.17,217.25) .. controls (386.17,218.31) and (385.31,219.17) .. (384.25,219.17) .. controls (383.19,219.17) and (382.33,218.31) .. (382.33,217.25) -- cycle ;
%Shape: Ellipse [id:dp6312547536810114] 
\draw  [fill={rgb, 255:red, 80; green, 227; blue, 194 }  ,fill opacity=0.67 ] (138.62,239.67) .. controls (132.34,228.44) and (141.56,211.33) .. (159.22,201.45) .. controls (176.88,191.57) and (196.29,192.66) .. (202.57,203.89) .. controls (208.85,215.11) and (199.63,232.23) .. (181.97,242.11) .. controls (164.31,251.99) and (144.91,250.9) .. (138.62,239.67) -- cycle ;
%Curve Lines [id:da26850222842939164] 
\draw [line width=1.5]  [dash pattern={on 5.63pt off 4.5pt}]  (154.59,225.4) .. controls (174.97,241.13) and (429.33,228.95) .. (422.32,196.85) .. controls (415.3,164.75) and (397.93,122.05) .. (373.4,121.2) .. controls (349.6,120.37) and (222.66,191.99) .. (185.76,212.27) ;
\draw [shift={(182.6,214)}, rotate = 331.39] [fill={rgb, 255:red, 0; green, 0; blue, 0 }  ][line width=0.08]  [draw opacity=0] (13.4,-6.43) -- (0,0) -- (13.4,6.44) -- (8.9,0) -- cycle    ;
\draw [shift={(300.69,228.9)}, rotate = 175.59] [fill={rgb, 255:red, 0; green, 0; blue, 0 }  ][line width=0.08]  [draw opacity=0] (13.4,-6.43) -- (0,0) -- (13.4,6.44) -- (8.9,0) -- cycle    ;
\draw [shift={(403.42,145.55)}, rotate = 61.24] [fill={rgb, 255:red, 0; green, 0; blue, 0 }  ][line width=0.08]  [draw opacity=0] (13.4,-6.43) -- (0,0) -- (13.4,6.44) -- (8.9,0) -- cycle    ;
\draw [shift={(271.49,165.6)}, rotate = 332.31] [fill={rgb, 255:red, 0; green, 0; blue, 0 }  ][line width=0.08]  [draw opacity=0] (13.4,-6.43) -- (0,0) -- (13.4,6.44) -- (8.9,0) -- cycle    ;
%Flowchart: Connector [id:dp07429023818191816] 
\draw  [line width=1.5]  (153.57,225.4) .. controls (153.57,224.79) and (154.02,224.3) .. (154.59,224.3) .. controls (155.15,224.3) and (155.61,224.79) .. (155.61,225.4) .. controls (155.61,226.01) and (155.15,226.51) .. (154.59,226.51) .. controls (154.02,226.51) and (153.57,226.01) .. (153.57,225.4) -- cycle ;
%Shape: Polygon Curved [id:ds6113216131799368] 
\draw  [fill={rgb, 255:red, 231; green, 154; blue, 25 }  ,fill opacity=0.9 ] (53.6,182.6) .. controls (52.8,174.7) and (58.6,154.6) .. (63.2,147.6) .. controls (67.8,140.6) and (131.2,117.4) .. (126,122) .. controls (120.8,126.6) and (119.4,141.4) .. (119,146.4) .. controls (118.6,151.4) and (118.6,175.4) .. (118.8,182.4) .. controls (119,189.4) and (121.2,211.4) .. (122.6,220.2) .. controls (124,229) and (123.2,225.2) .. (125,230) .. controls (126.8,234.8) and (122.2,228.2) .. (113.2,225.4) .. controls (104.2,222.6) and (75.6,210.8) .. (65.8,201.6) .. controls (56,192.4) and (54.4,190.5) .. (53.6,182.6) -- cycle ;
%Shape: Polygon Curved [id:ds7819232602415973] 
\draw  [fill={rgb, 255:red, 10; green, 146; blue, 23 }  ,fill opacity=0.6 ] (66.2,156.1) .. controls (73.6,148.5) and (130.8,116) .. (126,122) .. controls (121.2,128) and (119.4,138.8) .. (118.8,148.6) .. controls (118.2,158.4) and (119.2,195.8) .. (121,206.2) .. controls (122.8,216.6) and (122,223.6) .. (125.8,231.4) .. controls (129.6,239.2) and (133.87,245.14) .. (133.07,243.83) .. controls (132.26,242.53) and (73.6,193.7) .. (67.6,186.9) .. controls (61.6,180.1) and (58.8,163.7) .. (66.2,156.1) -- cycle ;
%Shape: Circle [id:dp003487459378777258] 
\draw  [line width=3.75]  (382.04,217.85) .. controls (382.04,217.14) and (382.61,216.57) .. (383.32,216.57) .. controls (384.02,216.57) and (384.6,217.14) .. (384.6,217.85) .. controls (384.6,218.55) and (384.02,219.12) .. (383.32,219.12) .. controls (382.61,219.12) and (382.04,218.55) .. (382.04,217.85) -- cycle ;

% Text Node
\draw (85.7,271.5) node [anchor=north west][inner sep=0.75pt]  [font=\LARGE]  {$\pessiSet[]{\nfin}$};
% Text Node
\draw (68.23,77.07) node [anchor=north west][inner sep=0.75pt]  [font=\Huge]  {$\mathcal{X}$};
% Text Node
\draw (151.2,132.73) node [anchor=north west][inner sep=0.75pt]  [font=\LARGE]  {$\Rcontoper[\safeInit{\nfin}]{\pessiSet[]{\nfin}}$};
% Text Node
\draw (138.95,227.9) node [anchor=north west][inner sep=0.75pt]  [font=\LARGE]  {$\state_{s}$};
% Text Node
\draw (301.67,260.67) node [anchor=north west][inner sep=0.75pt]  [font=\LARGE]  {$\Rcontoper[\safeInit{\nfin}]{\optiSet[]{\nfin}}$};
% Text Node
\draw (415.53,203.73) node [anchor=north west][inner sep=0.75pt]  [font=\LARGE]  {$\zeta \left( b^{\star }\right)$};
% Text Node
\draw (364.53,221) node [anchor=north west][inner sep=0.75pt]  [font=\LARGE]  {$\zeta \left( b^{'}\right)$};
% Text Node
\draw (161.23,203.15) node [anchor=north west][inner sep=0.75pt]  [font=\LARGE]  {$\safeInit{\nfin}$};
% Text Node
\draw (423.2,105.73) node [anchor=north west][inner sep=0.75pt]  [font=\LARGE]  {$\Rcontoper[\safeInit{\nfin\!}]{\!\X}$};
% Text Node
\draw (78.7,160) node [anchor=north west][inner sep=0.75pt]  [font=\LARGE]  {$\pessiSet[]{\nfin}$};
% Text Node
\draw (69.2,311.5) node [anchor=north west][inner sep=0.75pt]  [font=\LARGE]  {$\optiSet[]{\nfin}$};

\end{tikzpicture}}
    \caption{Illustration of contradiction case in the proof of \cref{thm:SE} with $\zeta(b^{\star}) \in \Rcontoper[\safeInit{\nfin}]{\optiSet[]{\nfin}}\backslash\Rcontoper[\safeInit{\nfin}]{\pessiSet[]{\nfin}}$.}
    % \caption{Shows an instance of safe sets and their research returnable counterparts. The figure illustrates a contradiction case for \cref{thm:SE} that $\zeta(b^{\star}) \in \Rcontoper[\safeInit{\nfin}]{\optiSet[]{\nfin}}\backslash\Rcontoper[\safeInit{\nfin}]{\pessiSet[]{\nfin}}$.}
    \label{fig: conti-opti-in-pessi}\vspace{-1.4em}
\end{figure}

Since $\zeta(b^\star) \notin \Rcontoper[\safeInit{\nfin}]{\pessiSet[]{\nfin}}$, $\exists~ b'<b^{\star}: \zeta(b') \in \partial \Rcontoper[\safeInit{\nfin}]{\pessiSet[]{\nfin}}$ which using \cref{lem: subtract_domain_boundary} implies $\zeta(b') \in \closure{\partial \Rcontoper[\safeInit{\nfin}]{\pessiSet[]{\nfin}} \backslash \partial \Rcontoper[\safeInit{\nfin}]{\X}}$. Furthermore, \cref{lem:boundary_inclusion} implies  %$\zeta(b') \in \closure{\partial \Rcontoper[\safeInit{0}]{\pessiSet[]{\nfin}} \backslash \partial \Rcontoper[\safeInit{\nfin}]{\X}} \stackrel{\cref{lem:boundary_inclusion}}{\implies} 
$\zeta(b') \in \closure{\partial \pessiSet[]{\nfin} \backslash \partial \X}$ and finally using \cref{lem: boundary-pessi-zero} we get $\lbconst[\nfin](\zeta(b'))=0$.

Since, $\zeta(b') \in \optiSet[]{\nfin}$ implies $\ubconst[\n](\zeta(b')) - \epsconst \geq 0$. Hence from the above two equations, $\ubconst[\nfin](\zeta(b')) - \lbconst[\nfin](\zeta(b'))  \geq \epsconst$. But $\zeta(b') \in \Rcontoper[\safeInit{\nfin}]{\pessiSet[]{\nfin}}$ and by \cref{eqn:finite_terminate} in \cref{coro:sample_complexity}, we know, $\ubconst[\nfin](\zeta(b')) - \lbconst[\nfin](\zeta(b')) < \epsconst$, which is a contradiction. Hence $\exists \nfin\! \leq \n^\star: \Rcontoper[\safeInit{\nfin\!}]{\optiSet[]{\nfin}}\!\subseteq\! \Rcontoper[\safeInit{\nfin\!}]{\pessiSet[]{\nfin}}$, which yields the second set inclusion in \cref{eqn:objective}. Moreover, note that using \cref{coro:hp_bounds}, $\forall n \geq 0, \constSet[,\epsconst]{} \subseteq \optiSet[]{\n} ~\mathrm{and}~ \pessiSet[]{\n} \subseteq \constSet[]{}$, which yields the other two set inclusions in \cref{eqn:objective}.
% \begin{gather*}
%     \forall n\geq0, x\in\X, \lbconst[n](\state) \leq \constrain(\state) \leq \ubconst[n](\state) \\
%     \implies  \constSet[,\epsconst]{} \subseteq \optiSet[]{\n} ~\mathrm{and}~ \pessiSet[]{\n} \subseteq \constSet[]{} \\
% \!\!\!\implies\!\!\!\Rcontoper[\safeInit{\nfin}\!]{\!\constSet[,\epsconst]{}}\! \subseteq\!\Rcontoper[\safeInit{\nfin}\!]{\!\optiSet[]{\nfin}} \! \subseteq \!\Rcontoper[\safeInit{\nfin}\!]{\!\pessiSet[]{\nfin}\!} \!\subseteq \!\Rcontoper[\safeInit{\nfin}\!]{\!\constSet[]{}},
%     % \!\!\implies\!\!\!\Rcontoper[\safeInit{\nfin}]{\constSet[,\epsconst]{}} \!\subseteq\! \Rcontoper[\safeInit{\nfin}]{\optiSet[]{\nfin}}, \Rcontoper[\safeInit{\nfin}]{\pessiSet[]{\nfin}} \!\subseteq\! \Rcontoper[\safeInit{\nfin}]{\constSet[]{}},
% \end{gather*}
% which yields \cref{eqn:objective}.
% since $\forall A \subseteq B \subseteq \X \implies \Rcontoper[\safeInit{\nfin}]{A} \subseteq \Rcontoper[\safeInit{\nfin}]{B}$.
% $\forall n\geq0, x\in\X, \lbconst[n](\state) \leq \constrain(\state) \leq \ubconst[n](\state) \implies  \constSet[,\epsconst]{} \subseteq \constSet[,\epsconst]{}$ and $\pessiSet[]{\nfin} \subseteq \constSet[]{} \implies \Rcontoper[\safeInit{0}]{\constSet[,\epsconst]{}} \subseteq \Rcontoper[\safeInit{0}]{\pessiSet[]{\nfin}}$ and $\Rcontoper[\safeInit{0}]{\pessiSet[]{\nfin}} \subseteq \Rcontoper[\safeInit{0}]{\constSet[]{}}$.
\end{proof}\vspace{-0.15mm}

% \cref{thm:SE} shows that
Thus with sampling in the reachable returnable pessimistic set at locations having uncertainty higher than or equal to $\epsconst$~\eqref{eqn:sampling_strategy}, we achieve maximum domain exploration up to the $\epsconst$ tolerance \eqref{eqn:objective} while being safe at all times for non-linear systems with time duration of exploration bounded by $n^\star(T + \timeX)$.
% Every iteration of \cref{alg:seocp} can take up to $T+\timeX$ time. Hence, time for safe exploration is bounded by $\n^\star(T+\timeX)$. 
While on a high level, the guarentees sound similar to discrete domain case \cite{turchetta2016safemdp,turchetta2019safe}, they are fundamentally different due to different definitions of the reachable-returnable safe sets. %For instance, we cannot consider uncontrollable dynamics (\cref{fig: assumption}), while in discrete cases, they explicitly compute these sets by iterating over the dynamics graph and take uncontrollable dynamics into the true reachable returnable safe resulting in the initial safe seed as the true reachable returnable safe set.
%In discrete cases, they do not need an assumption on the dynamics since
% for instance in \cref{fig: assumption}
% They define a one-step safe operator and apply it recursively on the dynamics graph to build reachable-returnable sets. Thus, the dynamics constraints are inherent in the reachable returnable sets, for instance in \cref{fig: assumption} the true reachable returnable set is the initial location for discrete case. While we pose it as an assumption on the dynamics and do 
% \cite{turchetta2019safe} computes these sets considering the dynamics graph. 
% 
% We do not compute these sets explicitly which provides computational benefits 

% Apart from considering complicated non-linear dynamics, the formulation in the continuous domain has several advantages compared to the discrete domain \cite{turchetta2016safemdp,prajapat2022near}, such as efficient exploration. 
\looseness -1 The formulation in the continuous domain has several advantages, particularly in terms of more efficient exploration. Our proof of \cref{thm:SE} guarantees full exploration by relying solely on the condition that uncertainty at the boundary is less than $\epsconst$. 
Thus, it is sufficient to sample only at the boundaries of the reachable returnable pessimistic set to guarantee full exploration, potentially allowing to ignore a large part of the pessimistic set, which makes the algorithm efficient. 
In contrast to existing work for discrete domains \cite{turchetta2016safemdp,turchetta2019safe,prajapat2022near}, we do not explicitly compute the reachable returnable sets $\Rcontoper[\safeInit{\n}]{\pessiSet[]{\n}}$. Instead, we use an implicit characterization of $\Rcontoper[\safeInit{\n}]{\pessiSet[]{\n}}$ in terms of an optimal control problem 
which enables an efficient implementation, see \cref{sec:numerical} for details.

\vspace{-0.25em}
% \section{Safe exploration exploiting Lipschitz continuity using Expanders}
\section{Safe exploration using Lipschitz bound}
\label{sec:SE_expanders}
% This is one of two efficiency improvements before you present the MPC.  Also there is no mention of expanders here, which should be motivated given that the section title is on expanders

% \looseness -1 In this section, we show how a known Lipschitz bound of the constraint function $q$ can be exploited to enhance efficiency. This represents our first technique to enhance efficiency by using a larger pessimistic set due to an additional prior on function continuity and a targeted exploration using expanders. Although the constraint function is a-priori unknown, in many cases, its Lipschitz constant is known. For this, we make the following assumption.

\looseness -1 In this section, we present the first technique to enhance efficiency: exploiting a Lipschitz bound of the constraint function $\constrain$.
% In this section, we present our first technique that exploits the Lipschitz bound of the constraint function $\constrain$ to enhance efficiency. 
We utilize the prior on function continuity to obtain a larger pessimistic set and additionally perform a targeted exploration using \emph{expanders}. Although the constraint function is a-priori unknown, in many cases, its Lipschitz constant is known. For this, we make the following assumption.
% Although the constraint function is a-priori unknown, in many cases, its Lipschitz constant is known.
% In such cases, this information can be used to improve the exploration further, intuitively speaking by obtaining a larger pessimistic set due to an additional prior on function continuity.}
% In this section, we consider a scenario commonly encountered in many applications, where the Lipschitz constant of the unknown function $\constrain$ is known. 
% In such cases, the question is whether we can further optimize the exploration. Intuitively one shall obtain a larger pessimistic set due to additional prior on function continuity.
\begin{assumption}\label{assump:Lipschitz}
    The constraint $\constrain$ is $\LipConst$-Lipschitz continuous.
\end{assumption}
% In many applications, we explicitly know the Lipschitz constant of the unknown $\constrain$ function. In such cases, the question is whether we can further optimize the exploration.
 %In many applications, we explicitly know the Lipschitz constant of the unknown $\constrain$ function. In such cases, the question is whether we can further optimize the exploration. Intuitively one shall obtain a larger pessimistic set with additional prior on function continuity.
% In this section, we assume the following:
% \begin{assumption}\label{assump:Lipschitz}
%     The constrain function $\constrain$ is $\LipConst$-Lipschitz continuous.
% \end{assumption}
\looseness -1 Under \cref{assump:q_RKHS}, the function $\constrain$ satisfies Lipschitz continuity for kernels such as squared exponential and Mat\'ern \cite{safe-bo-sui15}. 
In case the constraint function $\constrain$ is simply the 2-norm distance to the nearest obstacle, the Lipschitz constant is trivial \mbox{$\LipConst = 1$}.
Incorporating the prior of $\LipConst$-Lipschitz continuity, we define the enlarged pessimistic and true constraint sets as:
% \begin{align}
%     % \expansion[\n](\safeInit{\n}, \LpessiSet[]{\n}) &\coloneqq \{ \state \in \Rcontoper[{\safeInit{0}}]{\LpessiSet[]{\n}} | \exists \closePt: \lbconst[\n](\closePt) \leq 0, \ubconst[\n](\state) - \LipConst d(\state, z) \geq 0 \} \\
%     \optiSet[]{\n} &\coloneqq \{\state \in \Domain| \ubconst[\n](\state) -\epsconst \geq 0 \}\\%\label{eqn: lipch-opti-defi}\\
%     \LpessiSet[]{\n} &\coloneqq \{\state \in \Domain| \exists z \in \Domain, \lbconst[\n](z) - \LipConst \|\state-z\| \geq 0 \} \\ %\label{eqn: lipch-pessi-defi}\\ 
%         \constSet[,\epsconst]{} &\coloneqq \{\state \in \Domain| \exists z \in \Domain, \constrain(z) - \LipConst \|\state-z\| -\epsconst \geq 0 \}  %\label{eqn: lipch-true-defi-domain}
% \end{align}
\begin{align}
    % \expansion[\n](\safeInit{\n}, \LpessiSet[]{\n}) &\coloneqq \{ \state \in \Rcontoper[{\safeInit{0}}]{\LpessiSet[]{\n}} | \exists \closePt: \lbconst[\n](\closePt) \leq 0, \ubconst[\n](\state) - \LipConst d(\state, z) \geq 0 \} \\
    % \optiSet[]{\n} &\coloneqq \{\state \in \Domain| \ubconst[\n](\state) -\epsconst \geq 0 \}\\%\label{eqn: lipch-opti-defi}\\
    \LpessiSet[]{\n} &\coloneqq \{\state \in \Domain| \exists z \in \Domain, \lbconst[\n](z) - \LipConst \|\state-z\| \geq 0 \}, \label{eqn: lipch-pessi-defi}\\ 
        \LconstSet[]{} &\coloneqq \{\state \in \Domain| \exists z \in \Domain, \constrain(z) - \LipConst \|\state-z\| \geq 0 \}.  \label{eqn: lipch-true-defi-domain}
\end{align}
\begin{lemma} Under \cref{assump:Lipschitz}, $\forall \n, \pessiSet[]{\n} \!\subseteq \LpessiSet[]{\n} \subseteq \LconstSet[]{} = \constSet[]{}$ holds with probability at least $1-\prob$.\label{lem:Lip_pessi_relation}
\end{lemma}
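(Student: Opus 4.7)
The lemma is a chain of three set relations. I plan to dispatch each separately, using only (i) the definitions of the sets, (ii) Corollary~\ref{coro:hp_bounds} providing the high-probability lower bound $\lbconst[\n](\cdot) \leq \constrain(\cdot)$, and (iii) the Lipschitz hypothesis in Assumption~\ref{assump:Lipschitz}. Only the inclusion involving $\lbconst[\n]$ relies on the probabilistic statement; the other two are deterministic, so the overall probability of $1-\prob$ inherits directly from Corollary~\ref{coro:hp_bounds}.

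\textbf{Step 1: $\pessiSet[]{\n} \subseteq \LpessiSet[]{\n}$.} For $\state \in \pessiSet[]{\n}$ we have $\lbconst[\n](\state) \geq 0$ by definition. Plugging the trivial witness $z = \state$ into the definition \eqref{eqn: lipch-pessi-defi} gives $\lbconst[\n](\state) - \LipConst\|\state - \state\| = \lbconst[\n](\state) \geq 0$, so $\state \in \LpessiSet[]{\n}$. This inclusion is deterministic.

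\textbf{Step 2: $\LpessiSet[]{\n} \subseteq \LconstSet[]{}$.} This is the only step needing the high-probability event from Corollary~\ref{coro:hp_bounds}; restrict attention to the event $E = \{\constrain(\state) \in [\lbconst[\n](\state), \ubconst[\n](\state)]~\forall \state\in\X, \n\geq 1\}$, which holds with probability at least $1-\prob$. On $E$, for any $\state \in \LpessiSet[]{\n}$ pick a witness $z \in \X$ with $\lbconst[\n](z) - \LipConst\|\state-z\| \geq 0$; then $\constrain(z) \geq \lbconst[\n](z)$ implies $\constrain(z) - \LipConst\|\state-z\| \geq 0$, so the same $z$ witnesses membership $\state \in \LconstSet[]{}$.

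\textbf{Step 3: $\LconstSet[]{} = \constSet[]{}$.} The $\supseteq$ direction again uses the witness $z = \state$. For $\subseteq$, I use the Lipschitz assumption: if $z$ witnesses $\state \in \LconstSet[]{}$, then $\constrain(\state) \geq \constrain(z) - \LipConst\|\state - z\| \geq 0$, where the first inequality is $|\constrain(\state) - \constrain(z)| \leq \LipConst\|\state-z\|$ (Assumption~\ref{assump:Lipschitz}). Hence $\state \in \constSet[]{}$. Both directions are deterministic.

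\textbf{Main obstacle.} There is essentially none; the argument is a short exercise in unfolding definitions. The only subtle point is to be careful that the probabilistic event in Step 2 is taken jointly over all $\state$ and all $\n$ (as stated in Corollary~\ref{coro:hp_bounds}), so that the inclusion is quantified over all $\n$ under a single $1-\prob$ probability budget, matching the lemma's statement. No union bound is needed beyond the one already absorbed into the definition of $\sqrt{\betaconst[\n]}$.
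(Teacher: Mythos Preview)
Your proposal is correct and follows essentially the same route as the paper's proof: the trivial witness $z=\state$ for the inclusions $\pessiSet[]{\n}\subseteq\LpessiSet[]{\n}$ and $\constSet[]{}\subseteq\LconstSet[]{}$, Corollary~\ref{coro:hp_bounds} to pass from $\lbconst[\n]$ to $\constrain$ for $\LpessiSet[]{\n}\subseteq\LconstSet[]{}$, and the Lipschitz bound to show $\LconstSet[]{}\subseteq\constSet[]{}$. Your explicit bookkeeping of where the single $1-\prob$ event is used is a welcome clarification but does not differ in substance.
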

\begin{proof} By \cref{coro:hp_bounds}, $\lbconst[\n](\state) \leq \constrain(\state)$, which implies $\LpessiSet[]{\n} \subseteq \LconstSet[]{},\forall n \geq 1$. For any $x \in \pessiSet[]{\n}$, $\lbconst[\n](x) \geq 0$, hence $\exists z \coloneqq x : \lbconst[\n](z) - \LipConst \|\state-z\| \geq 0$ and thus $\state \in \LpessiSet[]{\n}$. Hence $\pessiSet[]{\n} \subseteq \LpessiSet[]{\n}$. Analogously we can prove $\constSet[]{} \subseteq \LconstSet[]{}$ and hence it suffices to prove $\LconstSet[]{} \subseteq \constSet[]{}$ to show $\LconstSet[]{} = \constSet[]{}$. Consider a $\state \in \LconstSet[]{}$, which implies $\exists z \in \X : \constrain(z) - \LipConst\|\state - z \| \geq 0$. Due to \cref{assump:Lipschitz}, $\constrain(z) - \constrain(\state) \leq \LipConst \| \state - z \|$. Hence $\constrain(z) - \LipConst\|\state - z \| \geq 0 \implies \constrain(\state)\geq 0$ and thus $\state \in \constSet[]{}$. %Hence $\LconstSet[]{} \subseteq \constSet[]{}$ and similar to $\pessiSet[]{\n} \subseteq \LpessiSet[]{\n}$, we get, $\constSet[]{} \subseteq \LconstSet[]{}$ which implies $\LconstSet[]{} = \constSet[]{}$.
\end{proof}
\looseness -1  While the true constraint set $\constSet[]{}$ is not altered due to true Lipschitz constant of the function, the set $\LpessiSet[]{\n}$ may be a bigger set as compared to $\pessiSet[]{\n}$; Naturally, the corresponding reachable returnable set  $\Rcontoper[{\safeInit{\n}}]{\LpessiSet[]{\n}}$ is also larger as shown in \cref{fig:Lipschitz_safe_set}. 
Note that, we do not enlarge the optimistic set since a tighter optimistic set enables faster convergence. 
Next, we define an \emph{expander}, a region around the boundary of a reachable returnable pessimistic safe set, where sampling may expand the pessimistic constraint set. % on sampling, could potentially expand the safe set.}
%Next, we define an \emph{expander}, a region in the reachable returnable pessimistic set that, on sampling, will make some point $x \notin \pessiSet[]{\n}$ safe.
% the boundary of the pessimistic safe set which, when sampled, can potentially lead to the expansion of the safe set. 
%\red{Additionally, we can guarantee that sampling at a location not in the expander will not result in expanding the safe set.}
%Moreover, in order to expand the safe set, one must explore near the boundaries.
\begin{figure}
    \centering
    % Pessimistic set is shown in green, Using Lipschitz continuity the we can shrink the uncertainty width and results in large pessimitic set. To determine a sampling location we use expanders which are only at the edges. The agent only samples in expander
    \includegraphics[width=0.97\columnwidth]{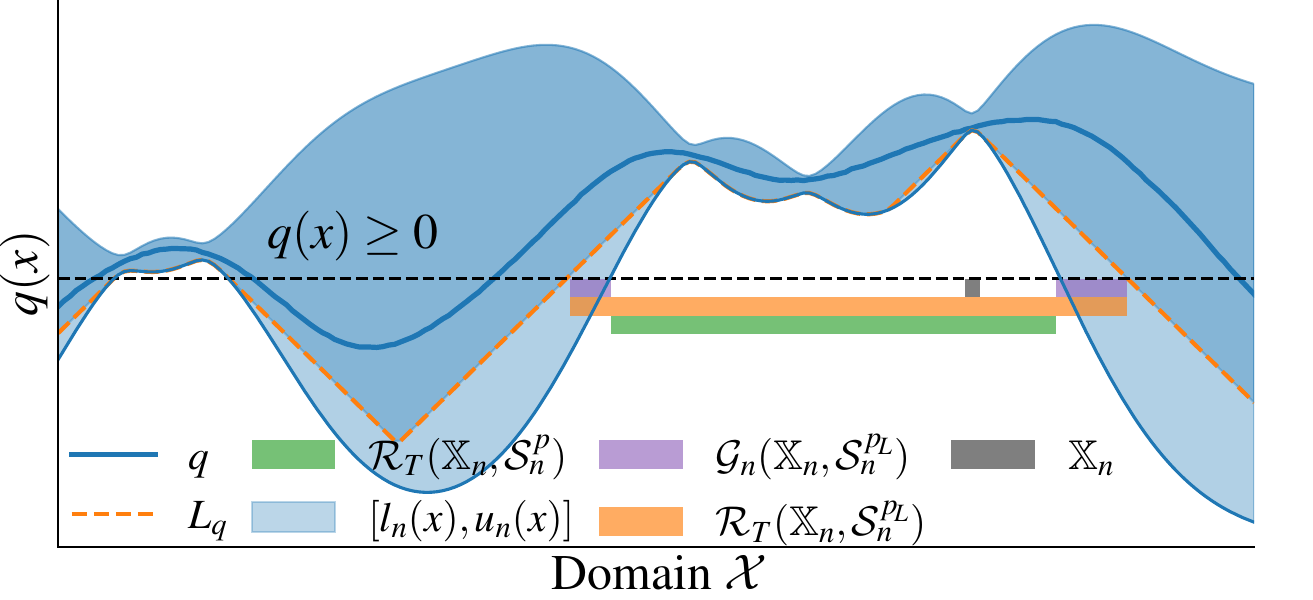}
    \caption{ \looseness -1 The set $\Rcontoper[{\safeInit{\n}}]{\pessiSet[]{\n}}$ in green is obtained with the lower bound $\lbconst[\n](\state)$. Utilizing the known Lipschitz constant, the lower bound is increased (orange line), leading to the enlarged set $\Rcontoper[{\safeInit{\n}}]{\LpessiSet[]{\n}}$ in orange. We efficiently sample from expanders $\expansion[\n](\safeInit{\n}, \LpessiSet[]{\n})$ (purple) near the boundary of $\Rcontoper[{\safeInit{\n}}]{\LpessiSet[]{\n}}$, ignoring a large part of the set $\Rcontoper[{\safeInit{\n}}]{\LpessiSet[]{\n}}$.}
    % The $\Rcontoper[{\safeInit{\n}}]{\LpessiSet[]{\n}}$ is larger as compared to $\Rcontoper[{\safeInit{\n}}]{\pessiSet[]{\n}}$ due to known Lipschitz continuity of the constraint function.}
    \label{fig:Lipschitz_safe_set}\vspace{-1.3em}
\end{figure}
The expander is defined as,
% \begin{multline}
%         \expansion[\n](\safeInit{\n}, \LpessiSet[]{\n}) \coloneqq \{ \state \in \Rcontoper[{\safeInit{\n}}]{\LpessiSet[]{\n}} | \exists \closePt \in \X: \lbconst[\n](\closePt) \leq 0, \\
%     \ubconst[\n](\state) - \LipConst \|z-\state\| \geq 0 \} \label{eq:expander}
% \end{multline}
\begin{align}
\expansion[\n](\safeInit{\n}, \LpessiSet[]{\n}) \coloneqq \{ \state \in \Rcontoper[{\safeInit{\n}}]{\LpessiSet[]{\n}}~|~\lbconst[\n](\state) \leq 0 \} \label{eq:expander}
\end{align}
% \manish{TODO: Make an image to explain expanders}

 %Moreover, it also contained the region that potentially gives the largest expansion. %For instance, we establish that the boundary of the reachable returnable pessimistic set in the optimistic set is contained in the expander.
% \begin{lemma}\label{lem: boundary-in-expander}
% $\forall \n, (\partial \Rcontoper[{\safeInit{0}}]{\LpessiSet[]{\n}} \cap \optiSet[]{\n}) \subseteq \expansion[\n](\safeInit{\n}, \LpessiSet[]{\n})$.
% \end{lemma}
% \begin{proof}
% Consider a $p \in (\partial \Rcontoper[{\safeInit{0}}]{\LpessiSet[]{\n}} \cap \optiSet[]{\n})$. 

% This implies the following, 
% \begin{align}
%     &\exists a \in \Domain: \max_a \lbconst[\n](a) -\LipConst d(a,p)=0 \nonumber\\
%     &\implies \lbconst[\n](p)\leq0,\quad  p\in\Rcontoper[{\safeInit{0}}]{\LpessiSet[]{\n}}, \quad \ubconst[\n](p)\geq\epsconst \label{eqn: pessimitic_boundary}
% \end{align}

% Next we prove that $p \in \expansion[\n](\safeInit{\n}, \LpessiSet[]{\n})$. By definition $x\in\expansion[\n](\safeInit{\n}, \LpessiSet[]{\n})$ if $x\in\Rcontoper[{\safeInit{0}}]{\LpessiSet[]{\n}}| \exists z: \lbconst[\n](z) \leq 0, \ubconst[\n](x) - \LipConst d(x,z)\geq 0.$ Now, we can pick $x,z: p=x=z$ (since $p$ satisfies \cref{eqn: pessimitic_boundary}). Also, $\ubconst[\n](p)\geq\epsconst \implies \ubconst[\n](x) - \LipConst d(x,z)\geq 0$. Hence $p \in \expansion[\n](\safeInit{\n}, \LpessiSet[]{\n})$.
% \end{proof}

\looseness -1 The expander is designed such that reducing uncertainty within it is sufficient to guarantee maximum domain exploration.
% Therefore, when determining a sampling location, we can potentially ignore a large portion of the pessimistic set, which enhances efficiency. 
% Therefore, we can ignore a large portion of the pessimistic set to achieve exploration, which enhances efficiency.
Therefore, we can ignore a large portion of the pessimistic set to achieve exploration, which enhances efficiency.
Refer to \Cref{fig:Lipschitz_safe_set} for a visual description.
% The expander is the region around the boundary of the pessimistic set as shown in \Cref{fig:Lipschitz_safe_set}, which when sampled lets us characterize the safety of a point which is currently outside the pessimistic set. % that is the only region of the pessimistic set which can help in expansion. 
% Needless to say that the locations in the pessimistic set outside of the expander region do not immediately help in exploration. 
Similar to \eqref{eqn:sampling_strategy}, we define an exploration strategy in the expanders for determining $(n+1)^{th}$ sample,
\begin{align}
   \mathrm{Find} ~~x \in \expansion[\n](\safeInit{\n}, \LpessiSet[]{\n}) ~:~ \ubconst[\n](x) - \lbconst[\n](x) \geq \epsconst. \label{eqn:sampling_strategy_expander}
\end{align}
\looseness -1 The solution to \eqref{eqn:sampling_strategy_expander} yields a state and input trajectory $\tilde{\state}(t), \tilde{\coninput}(t)$, $t \in [0,T]$, and initial condition $\tilde{\state}(0) \in \safeInit{\n}$, and a sampling location $\state_{\n+1}$ at $\tilde{\state}(\tsample) \in \expansion[\n](\safeInit{\n}, \LpessiSet[]{\n})$ with some $t'\in[0,T]$. The execution remains the same as \cref{alg:seocp}. Similar to \cref{thm:SE}, the following theorem shows finite time safe guaranteed exploration with expanders.
%The following corollary of \cref{thm:sample_complexity} establishes the same $\n^\star$ for sampling within the expander as well.
% that is if sampled recursively in $\expansion[\n](\safeInit{\n}, \LpessiSet[]{\n}) \forall \n$ with \cref{eqn:sampling_strategy_expander} then $\exists n \leq n^\star: \forall x \in \expansion[\n](\safeInit{\n}, \LpessiSet[]{\n}), \ubconst[\n](x) - \lbconst[\n](x)<\epsconst$.

% The proof follows similar to \cref{thm:sample_complexity}. In the following theorem, we establish that by sampling in the expander $\expansion[n-1]$, we can guarantee convergence of the reachable returnable pessimistic set with reduced conservatism due to known Lipschitz constant $\LipConst$ to $\epsconst$ close to the true reachable returnable safe set.

\begin{theorem} \label{thm:SE_exp} 
Let \cref{assump:q_RKHS,assump: safe-init,assump: regularconnected,assump: equal_boundary,assump:Lipschitz,assump:sublinear} hold. 
Consider $\n^\star$ according to \cref{thm:sample_complexity} and \cref{alg:seocp} with \eqref{eqn:sampling_strategy} replaced by \eqref{eqn:sampling_strategy_expander} in \cref{alg:line:SE:sampling}. %Sample $\state \in \expansion[\n](\safeInit{\n}, \LpessiSet[]{\n})$ : $\ubconst[\n-1](\state) - \lbconst[\n-1](\state) \geq \epsconst$,  
The resulting closed-loop system satisfies $\constrain(\state(t))\geq 0, \forall t \geq 0$, guarantees \cref{eqn:objective} 
under enlarged pessimistic set $\LpessiSet[]{\n}$ with probability at least $1-\prob$ and terminates in $\nfin \leq \n^\star$ iterations. %$\Rcontoper[\safeInit{\nfin}]{\constSet[,\epsconst]{}} \subseteq \Rcontoper[\safeInit{\nfin}]{\optiSet[]{\nfin}}  \subseteq \Rcontoper[\safeInit{\nfin}]{\LpessiSet[]{\nfin}} \subseteq \Rcontoper[\safeInit{\nfin}]{\constSet[]{}}$ with probability at least $1-\prob$.
% then $\exists \n \leq \n^{\star}$ such that 
% and $\Rcontoper[\safeInit{\n}]{\constSet[,\epsconst]{}} \subseteq \Rcontoper[\safeInit{\n}]{\optiSet[]{\n}}  \subseteq \Rcontoper[\safeInit{\n}]{\LpessiSet[]{\n}} \subseteq \Rcontoper[\safeInit{\n}]{\constSet[]{}}$ holds with probability at least $1 - \prob$.
\end{theorem}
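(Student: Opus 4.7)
The plan is to mirror the three-part structure used for Corollary~\ref{coro:sample_complexity} and Theorem~\ref{thm:SE}, making targeted adjustments for the expander-based rule \eqref{eqn:sampling_strategy_expander} and the enlarged pessimistic set $\LpessiSet[]{\n}$.

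First, I would handle the sample complexity and termination. The derivation in \cref{thm:sample_complexity} only uses that each sampled point satisfies $\sumMaxwidth[]{\n-1}(\state_\n)\geq\epsconst$ and contributes to mutual information; it is independent of the particular set in which the samples lie. Since \eqref{eqn:sampling_strategy_expander} enforces exactly this width condition, the same bound yields a finite $\nfin\leq \n^\star$ at which \eqref{eqn:sampling_strategy_expander} becomes infeasible, i.e.\ $\sumMaxwidth[]{\nfin}(\state)<\epsconst$ for all $\state\in\expansion[\nfin](\safeInit{\nfin},\LpessiSet[]{\nfin})$. For safety, every candidate sample lies in $\expansion[\n]\subseteq\Rcontoper[\safeInit{\n}]{\LpessiSet[]{\n}}\subseteq\LpessiSet[]{\n}$, and \cref{assump: safe-init} together with the path-traversal step keep the closed-loop trajectory inside the reachable-returnable set of $\LpessiSet[]{\n}$. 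By \cref{lem:Lip_pessi_relation}, $\LpessiSet[]{\n}\subseteq\constSet[]{}$ with probability at least $1-\prob$, so $\constrain(\state(t))\geq 0$ for all $t\geq 0$ with the same probability.

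Next, for the maximum exploration inclusions, the outer two, $\Rcontoper[\safeInit{\nfin}]{\constSet[,\epsconst]{}}\subseteq \Rcontoper[\safeInit{\nfin}]{\optiSet[]{\nfin}}$ and $\Rcontoper[\safeInit{\nfin}]{\LpessiSet[]{\nfin}}\subseteq \Rcontoper[\safeInit{\nfin}]{\constSet[]{}}$, follow directly from \cref{coro:hp_bounds} and \cref{lem:Lip_pessi_relation} by monotonicity of $\Rcontoper[\safeInit{\nfin}]{\cdot}$ in its second argument. The core step is the middle inclusion $\Rcontoper[\safeInit{\nfin}]{\optiSet[]{\nfin}}\subseteq \Rcontoper[\safeInit{\nfin}]{\LpessiSet[]{\nfin}}$, which I would establish by contradiction in exactly the geometric style of \cref{thm:SE}. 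Assuming a violating point $x^\star$ and invoking \cref{assump: regularconnected} produces a continuous path $\zeta$ through $\Rcontoper[\safeInit{\nfin}]{\optiSet[]{\nfin}}$ with endpoints in $\safeInit{\nfin}$ and some intermediate $\zeta(b')\in\partial \Rcontoper[\safeInit{\nfin}]{\LpessiSet[]{\nfin}}$. Applying the same chain of lemmas used in the proof of \cref{thm:SE} (subtract the domain boundary and then push to $\partial \LpessiSet[]{\nfin}$ via \cref{assump: equal_boundary}, which is stated for arbitrary continuous lower bounds and hence equally applies to the pointwise sup defining $\LpessiSet[]{\n}$) gives $\zeta(b')\in\closure{\partial\LpessiSet[]{\nfin}\setminus\partial\X}$.

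The new ingredient, and the step I expect to be the main obstacle, is the analogue of \cref{lem: boundary-pessi-zero} for the Lipschitz-enlarged pessimistic set. By continuity of $\lbconst[\nfin]$, the function $x\mapsto\sup_{z\in\X}(\lbconst[\nfin](z)-\LipConst\|x-z\|)$ is continuous; on $\partial \LpessiSet[]{\nfin}$ it must equal zero (points of the boundary are simultaneously limits of points where the sup is nonnegative and of points where it is negative). Evaluating the sup at the feasible choice $z=\zeta(b')$ gives $\lbconst[\nfin](\zeta(b'))\leq 0$, so together with $\zeta(b')\in\Rcontoper[\safeInit{\nfin}]{\LpessiSet[]{\nfin}}$ we conclude $\zeta(b')\in\expansion[\nfin](\safeInit{\nfin},\LpessiSet[]{\nfin})$ by the definition in \eqref{eq:expander}. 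Since $\zeta(b')\in\optiSet[]{\nfin}$ yields $\ubconst[\nfin](\zeta(b'))\geq\epsconst$, we obtain $\sumMaxwidth[]{\nfin}(\zeta(b'))\geq\epsconst$, contradicting infeasibility of \eqref{eqn:sampling_strategy_expander} at iteration $\nfin$. This delivers the middle inclusion and hence \cref{eqn:objective} with the enlarged pessimistic set, completing all three claims of the theorem.
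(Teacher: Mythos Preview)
Your proposal is correct and follows essentially the same route as the paper's proof: invoke the sample-complexity argument of \cref{thm:sample_complexity} for termination, use \cref{lem:Lip_pessi_relation} for safety, and replay the contradiction argument of \cref{thm:SE} with $\pessiSet[]{\nfin}$ replaced by $\LpessiSet[]{\nfin}$, noting that \cref{lem: subtract_domain_boundary,lem:boundary_inclusion,lem: boundary-pessi-zero} transfer verbatim to the enlarged set. Your explicit justification that the sup defining $\LpessiSet[]{\n}$ is continuous (so \cref{assump: equal_boundary} applies) and vanishes on $\partial\LpessiSet[]{\nfin}$, forcing $\lbconst[\nfin](\zeta(b'))\leq 0$ and hence $\zeta(b')\in\expansion[\nfin](\safeInit{\nfin},\LpessiSet[]{\nfin})$, is exactly the step the paper compresses into ``analogously follows for $\LpessiSet[]{\nfin}$''.
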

\looseness -1 Thus, sampling recursively in $\expansion[\n](\safeInit{\n}, \LpessiSet[]{\n})$ with an enlarged pessimistic set guarantees convergence to the true reachable returnable set up to tolerance $\epsilon$. The proof can be found in the extended version \cite[Appendix B]{prajapat2024safe}. Note that in the extreme case of $\LipConst \to \infty$, the expander $\expansion[\n](\safeInit{\n}, \LpessiSet[]{\n})$ reduces to the boundary of the pessimistic set and \cref{thm:SE_exp} ensures that it is sufficient to sample on the boundary for complete exploration. % efficiency
\section{Goal directed safe exploration}\label{sec:goalSE}
% \looseness -1 In this section, we consider a goal-directed safe exploration setting. We aim to reach a goal location, and safe exploration is a consequence of the objective, which is \emph{reaching the goal safely}.
In this section, we introduce our second technique aimed at enhancing efficiency by considering a goal-directed safe exploration. 
In many applications, such as reaching a goal, maximum domain exploration can be wasteful, 
%In numerous applications, e.g., reaching a goal, maximum domain exploration can be wasteful 
and we desire to explore only the region essential for the objective. In these cases, safe exploration is a mere consequence of the objective, which is \emph{reaching the goal safely}.
%To address such applications that involve moving to a target location, 
%% In order to capture applications which require moving to a target location, 
%the agent aims 
We consider the problem of reaching a \textit{steady-state} that minimizes a loss function $\utility:\R^\sdim \to \R$ while ensuring safety at all times. We assume the loss function $\rho$ is continuous and bounded.
% In order to capture applications which require reaching a goal and staying there, we look for a steady-state. Thus, the agent aims to reach a steady-state that minimizes a continuous loss function $\utility:\R^\sdim \to \R$ while ensuring safety at all times. 
%We later explain how the idea extends to different cases, e.g., if $\utility$ is also a-priori unknown function or if the objective is to reach an a-priori fixed location.
Given that we can learn about the constraint only up to $\epsconst$ precision, the best any algorithm can guarantee safely is the following.
\begin{tcolorbox}[colframe=white!, top=2pt,left=2pt,right=2pt,bottom=2pt]
% \begin{objective}[Bounded regret in goal-directed safe exploration]
\begin{objective}[Goal-directed safe exploration with bounded regret] 
\label{eqn:goal_objective}\looseness -1 There exists $\nfin\leq\n^\star$ such that average regret \AvR satisfies,
%Goal-directed SE has bounded regret if, there exists $\nfin \leq \n^\star$ such that average regret \AvR satisfies,
\begin{align*} \vspace{-0.5em}
\!  \!  \AvR(\tau) \! \coloneqq \frac{1}{\tau} \! \int_{0}^{\tau} \!\!\!\left(\!\utility(\state(t)) - \!\! \min_{\state \in \Rcontoper[\safeInit{\n^\prime}]{\constSet[,\epsconst]{}}}\!\utility(\state)\!\right) \!dt \leq\! \mathcal{O}\!\left(\frac{1}{\tau}\right)\!. 
\end{align*}
% $\exists C \geq 0: \forall \tau \geq 0$, $\int_{0}^{\tau} \utility(\state(t)) - \utility(\state^\star_{\n}) dt \leq C$ where $\state^\star_{\n} \coloneqq \argmin\nolimits_{\state \in \Rcontoper[\safeInit{\n}]{\constSet[,\epsconst]{}}}\utility(\state)$.
% \begin{align*}\vspace{-0.5em}
% \forall t \geq t', \ \utility(\state(t)) \leq \min\nolimits_{x \in \Rcontoper[\safeInit{\n}]{\constSet[,\epsconst]{}}} \utility(x): \dyn(\state, \coninput) = 0.
% \end{align*}
\end{objective}
\end{tcolorbox}
\looseness -1 To achieve \cref{eqn:goal_objective}, naively, one could explore the complete domain utilizing safe exploration \cref{alg:seocp} and then plan to minimize $\utility$ in the known safe domain. However, it is inefficient to explore the complete domain without considering the objective. One simple strategy is to explore the locations that greedily minimize the loss in the pessimistic set.
% In the goal-directed setting, we aim to explore the region required for guaranteeing loss minimization. What should be the exploring strategy to achieve this? 
% One can hope to explore the locations that greedily minimize the loss in the pessimistic set. 
Although a decently performing heuristic, it may get stuck in a local minimum since it lacks active constraint exploration, which is crucial for guaranteeing loss minimization \cite{safe-bo-sui15}. 
We, therefore, propose a strategy that makes use of the optimistic set, which only excludes the locations that are provably unsafe, such that a greedily minimizing strategy in the optimistic set can guarantee loss minimization among the safe locations.
% Particularly for this, we maintain an optimistic set. 
% Since an optimistic set only removes the locations which are unsafe, a greedily minimizing strategy in the optimistic set can guarantee loss minimization. 
% Since an optimistic set ensures a location outside of it is unsafe, a greedily minimizing strategy in the optimistic set can guarantee loss minimization. 
For this, we define the \emph{optimistic} and \emph{pessimistic goal} respectively as,
\begin{align}
    \gostateOpti[\n] &= \argmin_{x \in \Rcontoper[\safeInit{\n}]{\optiSet[]{\n}}} \utility(x): \exists \coninput \in \inputSpace, \dyn(\state, \coninput) = 0, \label{eq:goal_sampling} \\
    \gostatePessi[\n] &=  \argmin\limits_{\state \in \Rcontoper[\safeInit{\n}]{\pessiSet[]{\n}}}  \ \, \utility(\state): \exists \coninput \in \inputSpace, \dyn(\state,\coninput)=0.\label{eq:pessi_goal_sampling}
\end{align}
% The algorithm 
\begin{algorithm}[!t]
\caption{Goal-directed safe exploration}
\begin{algorithmic}[1]
\State \textbf{Initialize:} Start at $\state_s \in \safeInit{0}$, \GP, Horizon $T$, Tolerance $\epsilon$.
 \For{$n=0, 1 \hdots$}
 \State $\gostateOpti[\n] \leftarrow$ Solve Problem \eqref{eq:goal_sampling}. \label{alg:GO:solveGO}
 \State $\gostatePessi[\n]  \leftarrow$ Solve Problem \eqref{eq:pessi_goal_sampling}. \label{alg:GO:get_pessi_best}
\If {$\utility(\gostatePessi[\n]) \leq \utility(\gostateOpti[\n])$} \label{alg:GO:check_val}
\State Move from  $\state_s \to \gostatePessi[\n]$ and terminate.  \label{alg:GO:moveToSteady}
% \State \red{Need last iter $\tilde{\state}(t):t\in [t',T)$ to move to $\safeInit{\n}$ first}
\EndIf
 \State $x(0), a(t)\!:\!t \in [0,T], t' \leftarrow$ Solve Problem \eqref{eqn:goal_close}. \label{alg:GO:closest_loc}
 % \If{\cref{eqn:sampling_strategy} not feasible} Terminate \EndIf
 \State Go: $x_s \to x(0)$. $x(0) \to x(t')$ using $a(t), t\in[0,t')$. \label{alg:GO:moveToSample}
 \State Update \GP: $y = q(x(t')) + \eta$.
 \State Go: $x(t') \!\to\! x(T)$ with $a(t), t\in[t'\!,T)$, set $\state_s \!\leftarrow\! x(T)$.
\EndFor
\end{algorithmic}
\label{alg:GOsampling}
\end{algorithm}
\looseness -1 \emph{Goal-directed safe exploration algorithm:} The process is summarized in \cref{alg:GOsampling}. In \eqref{eq:goal_sampling}, $\gostateOpti[\n]$ represents a steady-state goal location in the optimistic set, which may or may not be in the pessimistic set. 
%If there is a loss minimizer in the reachable returnable pessimistic set, $\gostatePessi[\n]$ better than the optimistic one, $\gostateOpti[\n]$, 
If the pessimistic goal $\gostatePessi[\n]$ is better than the optimistic one, $\gostateOpti[\n]$, the agent moves to the location and the algorithm terminates (\cref{alg:GO:moveToSteady}). However, if this is not the case, we find the closest safe location to the optimistic goal state, $\gostateOpti[\n]$ by solving the following in \cref{alg:GO:closest_loc},
\begin{align}
    \min\limits_{\state \in \Rcontoper[\safeInit{\n}]{\pessiSet[]{\n}}} \dist(x, \gostateOpti[\n])~:~\ubconst[\n](\state)-\lbconst[\n](\state)\geq \epsconst. \label{eqn:goal_close}
\end{align}
% $\dist(x,\gostateOpti[\n])$ can be any continuous function capturing distance measure such that $\utility(\state) \leq \utility(\gostateOpti[\n]), \forall x \in \X$ (c.f. \cref{rem:dist}).
% $\dist(x,\gostateOpti[\n])$ is any sensible function that motivates selecting good locations, see \cref{rem:dist} for some smart choices \cite{turchetta2019safe}. 
\looseness-1 $\dist(x,\gostateOpti[\n])$ can be freely designed by a user, see \cref{rem:dist} for cost functions that may speed up exploration by utilizing the optimistic goal in the optimization. We move to $\state(t') $ in \cref{alg:GO:moveToSample} and collect a constraint measurement.
The optimistic and pessimistic constraint sets change with every constraint posterior update. Consequently, we evaluate whether the goal remains safely reachable and returnable after every posterior update in \cref{alg:GO:solveGO}. 
% The event of a goal change implies that the last goal is outside of the reachable returnable optimistic set and hence there does not exist any safe path to reach or return from the goal safely. 
A change in the goal $\gostateOpti[\n]$ implies that there cannot exist any safe path to reach and return from the previous goal.
The following theorem summarizes the theoretical guarantees for Goal-directed safe exploration.
% We can guarantee the following,
% If the goal \eqref{eq:goal_sampling} gets changed, this implies we cannot go there and more strongly there is no safe way to reach there. 
% \manish{}
% and one can perform a greedy minimization in it. Since an optimistic set ensures a point outside of it is unsafe with a high probability, we can guarantee loss minimization and practically, it obtains a natural trade-off between safely exploring vs minimizing the $\utility$.

% We utilize it to query if the goal location is optimistically reachable and returnable in the middle of the exploration. 
% which with a posterior update, in the middle of exploration if the goal is not safely reachable and returnable.
% guarantees that a location outside of it is not reachable. 

% where $\utility:\R^\sdim \to \R$ is the goal function and $\gostateOpti[\n]$ is the steady-state goal location. This location is then given to $\senlp$ to safely explore towards $\gostateOpti[\n]$. Problem \ref{eq:goal_sampling} is recursively solved with every measurement collected. 

\begin{theorem} 
\label{thm:go} \looseness -1 Let \cref{assump:q_RKHS,assump: safe-init,assump: regularconnected,assump: equal_boundary,assump:sublinear} hold and consider $\n^\star$ as in \cref{thm:sample_complexity}. Then the closed loop system resulting from \cref{alg:GOsampling} satisfies $\constrain(\state(t))\geq 0, \forall t \geq 0$ and guarantees \cref{eqn:goal_objective} with probability at least $1 - \prob$. 
% Consider $\n^\star$ according to \cref{thm:sample_complexity} and $\state^\star_{\n} \coloneqq \argmin\nolimits_{\state \in \Rcontoper[\safeInit{\n}]{\constSet[,\epsconst]{}}}\utility(\state)$. There exists $C \geq 0$ such that the closed loop system resulting from \cref{alg:GOsampling} guarantees $\int_{0}^{\tau} \utility(\state(t)) - \utility(\state^\star_{\n}) dt \leq C$ %within $\n \leq \n^\star$ iterations 
% with probability at least $1 - \prob$. 
% Analogously, average regret, $\AvR(\tau) \coloneqq \int_{0}^{\tau} \utility(\state(t)) -  \utility(\state^\star_{\n})dt / \tau \leq \mathcal{O}(\frac{1}{\tau}).$
% \label{thm:go} Let \cref{assump:q_RKHS,assump: safe-init,assump: regularconnected,assump: equal_boundary} hold. Let $\n^\star$ be as defined in \cref{thm:sample_complexity}. With probability at least $1 - \prob$, the \cref{alg:GOsampling} classify the goal's reachable returnable safety within $ \n \leq \n^\star$ iterations and if the goal is safely reachable returnable then \cref{alg:GOsampling} guarantees,
% \begin{align*} 
%         \lim_{T'\to \infty} \frac{1}{T'} \sum_{t=0}^{T'} \utility(x(t)) \leq \argmin_{x \in \Rcontoper[\safeInit{\n}]{\constSet[, \epsconst]{}}}\utility(x).
% \end{align*}
\end{theorem}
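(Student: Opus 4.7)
The plan is to prove the theorem in three parts following the structure of \cref{coro:sample_complexity,thm:SE}: (a) safety of the closed-loop trajectory, (b) finite termination in at most $n^\star$ iterations, and (c) the $\mathcal{O}(1/\tau)$ average-regret bound. Safety is essentially inherited from the framework; the new work lies in (b) and (c), where the goal-dependent termination check in \cref{alg:GO:check_val} has to be tied back to the sample-complexity machinery.

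For safety, every motion segment in \cref{alg:GOsampling} either remains inside $\safeInit{\n}\subseteq\pessiSet[]{\n}$ (by \cref{assump: safe-init}) or traverses a trajectory inside $\Rcontoper[\safeInit{\n}]{\pessiSet[]{\n}}$ supplied by \eqref{eq:pessi_goal_sampling} or \eqref{eqn:goal_close}. Hence $\lbconst[\n](\state(t))\ge 0$ for all $t\ge 0$, so \cref{coro:hp_bounds} yields $\constrain(\state(t))\ge 0$ with probability at least $1-\prob$.

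For finite termination, I would argue that whenever the test in \cref{alg:GO:check_val} fails at iteration $\n$, the algorithm proceeds to \eqref{eqn:goal_close} and obtains a sample $\state_{\n+1}\in\Rcontoper[\safeInit{\n}]{\pessiSet[]{\n}}$ with width $\sumMaxwidth[]{\n}(\state_{\n+1})\ge\epsconst$. Summing squared widths and applying \cref{assump:sublinear} exactly as in \cref{thm:sample_complexity} then bounds the number of such non-terminating iterations by $n^\star$. It still remains to verify that the test actually triggers once widths everywhere in $\Rcontoper[\safeInit{\nfin}]{\pessiSet[]{\nfin}}$ drop below $\epsconst$. Here I would port the contradiction from \cref{thm:SE}: in that regime $\Rcontoper[\safeInit{\nfin}]{\optiSet[]{\nfin}}\subseteq\Rcontoper[\safeInit{\nfin}]{\pessiSet[]{\nfin}}$, and intersecting both sides with the common steady-state manifold $\{\state:\exists\coninput\in\inputSpace,\dyn(\state,\coninput)=0\}$ makes the feasible set of \eqref{eq:goal_sampling} a subset of that of \eqref{eq:pessi_goal_sampling}, forcing $\utility(\gostatePessi[\nfin])\le\utility(\gostateOpti[\nfin])$ and triggering termination.

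For the regret bound, let $T_{\mathrm{term}}$ denote the physical time at which termination occurs; each iteration takes at most $T+\timeX$ time units, so $T_{\mathrm{term}}\le n^\star(T+\timeX)$ is finite. Before termination, boundedness of $\utility$ bounds the instantaneous regret by $2\|\utility\|_\infty$. After termination the agent sits at the steady state $\gostatePessi[\nfin]$; combining the termination test with $\constSet[,\epsconst]{}\subseteq\optiSet[]{\nfin}$ from \cref{coro:hp_bounds} gives $\utility(\gostatePessi[\nfin])\le\utility(\gostateOpti[\nfin])\le\min_{\state\in\Rcontoper[\safeInit{\nfin}]{\constSet[,\epsconst]{}}}\utility(\state)$, so the post-termination integrand is non-positive. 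Splitting the integral defining $\AvR(\tau)$ at $T_{\mathrm{term}}$ then yields $\AvR(\tau)\le 2\|\utility\|_\infty T_{\mathrm{term}}/\tau=\mathcal{O}(1/\tau)$. The main obstacle I anticipate is the careful handling of the steady-state constraint appearing in \eqref{eq:goal_sampling}, \eqref{eq:pessi_goal_sampling}, and the minimum in \cref{eqn:goal_objective}: I must verify that the inclusion imported from \cref{thm:SE} survives intersection with the steady-state manifold so that the inequality $\utility(\gostateOpti[\nfin])\le\min_{\state\in\Rcontoper[\safeInit{\nfin}]{\constSet[,\epsconst]{}}}\utility(\state)$ genuinely holds under the intended (steady-state) interpretation. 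The remaining work is routine bookkeeping of sample-complexity and horizon bounds.
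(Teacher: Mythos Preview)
Your proposal is correct and follows essentially the same route as the paper: safety via $\state(t)\in\pessiSet[]{\n}$ and \cref{coro:hp_bounds}; finite termination by combining the width bound from \cref{thm:sample_complexity}/\cref{coro:sample_complexity} with the set inclusion $\Rcontoper[\safeInit{\nfin}]{\optiSet[]{\nfin}}\subseteq\Rcontoper[\safeInit{\nfin}]{\pessiSet[]{\nfin}}$ from \cref{thm:SE} to force $\utility(\gostatePessi[\nfin])\le\utility(\gostateOpti[\nfin])$; and the regret bound by splitting the integral at the (finite, $\le n^\star(T+\timeX)$) termination time. Your explicit treatment of the steady-state manifold---intersecting the set inclusion from \cref{thm:SE} with $\{\state:\exists\coninput,\dyn(\state,\coninput)=0\}$---is in fact more careful than the paper's proof, which invokes the inclusion directly without spelling out this intersection step; the paper likewise does not make explicit that the minimum in \cref{eqn:goal_objective} is implicitly over steady states, which you correctly identify as the point requiring attention.
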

\begin{proof}  
\looseness -1 Using \cref{coro:sample_complexity}, there exists $\nfin \leq \n^\star$ such that $\forall x \in \Rcontoper[\safeInit{\nfin}]{\pessiSet[]{\nfin}},~\sumMaxwidth[]{\nfin}(\state) < \epsconst$ and by \cref{thm:SE}, $\Rcontoper[\safeInit{\nfin}]{\optiSet[]{\nfin}} \subseteq \Rcontoper[\safeInit{\nfin}]{\pessiSet[]{\nfin}}$, which implies that
%. Since the reachable returnable pessimistic set is a superset of the optimistic one,
the solution $\gostatePessi[\nfin]$ in the pessimistic set obtained in \cref{alg:GO:get_pessi_best} satisfies $\utility(\gostatePessi[\nfin]) \leq \utility(\gostateOpti[\nfin])$. The algorithm thus terminates in \cref{alg:GO:check_val} within $\nfin \leq \n^\star$ iterations.

% thus \cref{alg:GOsampling} will terminate (see \cref{alg:GO:solveGO,alg:GO:get_pessi_best,alg:GO:check_val}) in $\nfin \leq \n^\star$ iterations. 
\looseness -1 Let us define $\state^\star_{\nfin} \!\! \coloneqq \!\argmin\nolimits_{\state \in \Rcontoper[\safeInit{\nfin}]{\constSet[,\epsconst]{}}}\!\utility(\state)$. After termination, the agent stays at the steady-state $\gostatePessi[\nfin]$, hence $\forall t \geq t_{\nfin}, \utility(\state(t)) \leq \utility(\gostateOpti[\nfin])\leq \utility(\state^\star_{\nfin})$. The last inequality follows since $\Rcontoper[\safeInit{\nfin}]{\constSet[,\epsconst]{}} \subseteq \Rcontoper[\safeInit{\nfin}]{\optiSet[]{\nfin}}$ by \cref{thm:SE}. Since $\utility$ is bounded, for any $x_1,x_2\in\mathcal{X}, \utility(x_1)-\utility(x_2)\leq \bar{\utility}$  with some $\bar{\utility}>0$. Thus, $\forall \tau \geq 0, \int_{0}^{\tau} \! \utility(\state(t)) - \utility(\state^\star_{\nfin}) dt \leq \bar{\utility} t_{n'}\leq \bar{\utility}n^\star(T + \timeX).$ Hence, averaging the bound over $\tau$ implies $\AvR(\tau) \leq \mathcal{O}(1/\tau)$, i.e., \cref{alg:GOsampling} achieves \cref{eqn:goal_objective}.
% Hence, ... objective(2)
% Since $\utility$ is continuous and the domain is bounded, we get $\utility$ is uniformly bounded. This implies for time $t \leq t_\nfin$, $\utility(\state(t)) - \utility(\state^\star_{\nfin})$ bounded and since $t_\nfin \leq n^\star(T + \timeX)$, we know $\forall \tau \geq t_{\nfin}, \exists ~C \geq 0: \int_{0}^{\tau} \! \utility(\state(t)) - \utility(\state^\star_{\nfin}) dt \leq \int_{0}^{t_\nfin} \!\! \utility(\state(t)) - \utility(\state^\star_{\nfin}) dt \leq C.$ %The same $C$ holds for $\tau \geq t_{\nfin}$ since  $\utility(\state(\tau)) \leq \utility(\state^\star_{\nfin})$. 
% Hence on averaging the bound over $\tau$ implies $\AvR(\tau) \leq \mathcal{O}(1/\tau)$.
% % The proof follows from \cref{thm:SE} and \cref{alg:GOsampling}. wlog assume $\min_{x \in \Rcontoper[\safeInit{\n}]{\optiSet[]{\n}}} \hat{\utility}(x) = 0$. $\state(t)$ is the location of agent at time $t$.
% % \begin{itemize}
% %         \item If the goal is not within the reachable returnable optimistic set (indicated by the cost value), then it can never be reached and the algorithm terminates (\cref{alg:go_terminate})
% %         \item If safe exploration cost $=0$, then there exists a path to reach there but may not be safe yet
% %         \item Based on the cost/heuristics of \senlp it will explore in the direction of $\gostate$ but with $\ubconst[\n](x)- \lbconst[\n](x) \geq \epsconst$ (which in the worst case requires n measurements and explores the whole domain)
% %     \end{itemize}
\end{proof}
% Thus, the goal-directed \seocp is guaranteed to converge to the safe reachable returnable loss minimizer in finite time of $\n^\star (T+ \timeX)$ for non-linear systems. 
\looseness -1 Thus, \cref{alg:GOsampling} is guaranteed to converge to a goal that ensures a suitable bound on the regret (\cref{eqn:goal_objective}). Goal-directed safe exploration mainly differs from full safe exploration (\cref{alg:seocp}) in utilizing the reachable returnable optimistic set, $\Rcontoper[\safeInit{\n}]{\optiSet[]{\n}}$ to set an appropriate goal recursively and terminates early once the objective is achieved.

\begin{remark} \label{rem:dist}
    \looseness -1 $\dist(\state,\gostateOpti[\n])$ can be any function which satisfies $\argmin_{\state \in \X} \dist(\state,\gostateOpti[\n])=\gostateOpti[\n]$. For instance, 
    \dist can be  $\|x-\gostateOpti[\n]\|$, minimum time to go from $x$ to $\gostateOpti[\n]$, or 
    any combination of distance and $\rho$, such as $\dist(x,x_g)=a\|x-x_g\|^2+b\rho(x)$ for some $a,b > 0 \in \R$.
\end{remark}

\vspace{-0.2em}
\section{Safe exploration MPC} \label{sec:MPC}
% TODO: change the order of explanation
% SE-Algorithm
% HL-planner + SE-NLP logic framework
\looseness -1 In this section, we propose \sempc, our core \mpc-type algorithm. In the described safe exploration approaches so far, the agent returns to the terminal set $\safeInit{\n}$ after each measurement. 
Instead, we can replan for the next-to-go location as soon as we collect a measurement. This can yield a significant reduction in the exploration time. %Potentially, it can reduce the exploration time by a large factor. 
%In addition to re-planning in receding horizon fashion, \sempc directly incorporates a goal-directed approach (\cref{sec:goalSE}). Furthermore, a known Lipschitz continuity can also be added as shown in \cref{rem:seMPCwithexpanders}. 
In addition to re-planning in receding horizon fashion, to enhance efficiency \sempc can incorporate a goal-directed technique (\cref{sec:goalSE}) and exploit a known Lipschitz constant to enlarge the pessimistic set (\cref{sec:SE_expanders}). For simplicity of exposition, we present \sempc directly as goal-directed and without utilizing Lipschitz continuity, while the corresponding variations are discussed in \cref{rem:seMPCwithexpanders,rem:GoaltoSE}, respectively. 
% continuity can also be added as shown in \cref{rem:seMPCwithexpanders}.
%The full safe exploration problem is a special case that can be recovered as described in \cref{rem:GoaltoSE}.
% In this section, we leverage the ideas of safe exploration and goal-oriented exploration developed in the previous section to develop a more efficient \mpc type algorithm. \begin{comment} and, moreover, result in exploring a larger domain than initially anticipated. \end{comment}
% expalin 3,8,10,11,19
% only difference with GO is resolve
% In MPC we
% option1: HL and detail
% option2: have 20 asap.
% we cannot tolerate infeasibility
\begin{algorithm}[!t]
\caption{\sempc}
\begin{algorithmic}[1]
\State \textbf{Initialize:} Start at $\state_s \in \safeInit{0}$, \GP, Horizon $T$, Tolerance $\epsilon$.
\For{$\n = 0, 1, \hdots $} 
 \State $\gostateOpti[\n] \leftarrow$ Solve Problem \eqref{eq:goal_sampling}. \label{alg:MPC:get_opti_goal} 
 \State $\gostatePessi[\n]  \leftarrow$ Solve Problem \eqref{eq:pessi_goal_sampling}. \label{alg:MPC:get_pessi_best}
% \If {$\utility(\tilde{\state}(t')) \leq \utility(\gostateOpti[\n])$ or $\utility(\gostatePessi[\n]) \leq \utility(\gostateOpti[\n])$} 
\If {$\min(\utility(\state(t_\n)), \utility(\gostatePessi[\n])) \leq \utility(\gostateOpti[\n])$} 
 \label{alg:MPC:go_termination}
\State Accordingly move to $\state(t_\n)\!$ or $\gostatePessi[\n]\!$ and terminate. 
%\red{Need last iter $\tilde{\state}(t):t\in [t',T)$ to move to $\safeInit{\n}$ first} 
\label{alg:MPC:move_SteadyState}
\EndIf%\utility(\state)$} $\gostateOpti[\n] \leftarrow \argmax\nolimits_{\state \in \Rcontoper[\safeInit{\n}]{\optiSet[]{\n}}} \utility(x)$ \label{alg: goal_reach} 
% \State $\tilde{\state}(t):t \in [0,T], t', \tilde{\nu} \leftarrow \exact(\{\state(t_\n)\}, \safeInit{\n}, \pessiSet[]{\n})$.  \label{alg:MPC:exact_xtn}
\State $\tilde{\state}(t):t \in [0,T], t', \nu \leftarrow$ Solve Problem \eqref{eq:exact_prob}.  \label{alg:MPC:exact_xtn}
% x^\star(0), a^\star(t):t \in [0,T], t' \leftarrow \!\!\!\!\!\!\!\! \min\limits_{\state \in \Rcontoper[\state(t_n),\safeInit{\n-1}]{\pessiSet[]{\n-1}}} \!\!\!\!\!\!\!\!\!\!\!\!\!\!\!\!\!\!\!\!\!\!\! - \lambda \noisevar + \dist(x, \gostateOpti)$
\If{$\nu = 0$} \label{eq:case_distinction} %\tag{case distinction}
%  \State Reach $\state(t_{\n-1} + t')$ using
% \State Apply $\coninput(t_{\n} + t) \coloneqq \tilde{\coninput}(t), t \in [0,t')$. \label{alg:MPC:moveTOsampling}
\State Move from $\state(t_{\n}) \to \state(t_{\n+1}) = \tilde{\state}(t')$. \label{alg:MPC:moveTOsampling}
% \State $t_{\n+1} \leftarrow t_{\n} + t'$.
\Else
% \State Apply $\coninput(t_{\n} + t) \coloneqq \tilde{\coninput} (t), t\in[0,T)$. \label{alg:MPC:last_control_input}
\State Move from  $\state(t_{\n}) \to \tilde{\state}(T) \in \safeInit{\n}$. \label{alg:MPC:last_control_input}
 % \State Reach $\state(t_{\n-1} + T) \coloneqq \tilde{\state}(T)$ %using $\coninput(t_{\n-1} + t) = \tilde{\coninput} (t), t\in[0,T)$
 
% \State $\tilde{\state}(0), \tilde{\coninput}(t):t \in [0,T], t', \tilde{\nu} \leftarrow$ Solve problem \ref{eqn:sampling_strategy} 
% \State $\tilde{\state}(t)\!:t \in [0,T], t', \nu \!\leftarrow \exact(\safeInit{\n},\safeInit{\n}, \pessiSet[]{\n})$. \label{alg:MPC:exact_Xn}
\State $\tilde{\state}(t)\!:t \in [0,T], t' \!\leftarrow$ Solve Problem \eqref{eqn:goal_close}. \label{alg:MPC:exact_Xn}
  % \State Move to $\tilde{\state}(0)$ (\textit{controllable} in $\safeInit{\n}$, \cref{assump: safe-init}\!).\! \label{alg:MPC:move1}
 % \State Apply $\coninput(t_{\n} + T +  \timeX + t) \coloneqq \tilde{\coninput} (t), t\in[0,t')$.\!\! \label{alg:MPC:move2}
 \State Move to $\state(t_{\n+1}) = \tilde{\state}(t')$. \label{alg:MPC:move2}
 % \State $t_{\n+1} \leftarrow t_{\n} + T +  \timeX + t'$. \label{alg:MPC:move3}
\EndIf
\State Update \GP with $y = q(x(t_{\n+1})) + \eta$. \label{alg:MPC:updateGP}
\EndFor
\end{algorithmic}
\label{alg:go-se-mpc}
\end{algorithm}
\looseness -1 Analogously to goal-directed safe exploration, we set an optimistically safe reachable returnable goal, $\gostateOpti[\n]$, by solving \eqref{eq:goal_sampling}. %in \cref{alg:MPC:get_opti_goal}.
% If there exists a state in the reachable returnable pessimistic set which returns better loss as compared to $\gostateOpti[\n]$, then the agent drives to it and stays there (\cref{alg:MPC:move_SteadyState}). 
% However, if there is no such state in the pessimistic reachable returnable safe set, 
If the goal $\gostateOpti[\n]$ is not in the reachable returnable pessimistic set, we shall determine a safe location, close to $\gostateOpti[\n]$ with uncertainty above $\epsilon$ to ensure sufficient information to keep exploring.
% If the goal $\gostateOpti[\n]$ is not pessimistically reachable, we must determine a location with uncertainty above $\epsilon$ to ensure sufficient information to keep exploring.
% we must determine with uncertainty above \epsilon to ensure ....., while we are particularly interested in locations close to the optimistic goal x^{g,o}_n.
This naturally requires us to solve a constrained optimization problem \eqref{eqn:goal_close}, however, the main difference from goal-directed safe exploration is that \sempc uses the current location as the starting point $\state(t_\n)$ to ensure re-planning instead of returning to the safe set. 
Note that even if Problem \eqref{eqn:goal_close} is feasible, re-planning starting from the current state may result in an infeasible problem,  $\nexists x \in \Rcontoper[\{ \state(t_\n) \},\safeInit{\n}]{\pessiSet[]{\n}}: \sumMaxwidth[]{\n}(\state) \geq \epsconst$, i.e., there is no location with uncertainty at least $\epsconst$ that can be reached from the current location and return to the terminal set $\safeInit{n}$ pessimistically in time $T$.
% in the set $\Rcontoper[\safeInit{s},\safeInit{\n}]{\pessiSet[]{\n}}$, i.e, starting from current location cannot identify $\state:\sumMaxwidth[]{\n}(\state) \geq \epsconst$ and return back to $\safeInit{n}$ pessimistically in time $T$. Hence, 
% This naturally requires us to solve a constraint optimization problem given by $\min\nolimits_{\state \in \Rcontoper[\safeInit{s},\safeInit{\n}]{\pessiSet[]{\n}}} \dist(\state, \gostateOpti[\n])$ s.t. $\sumMaxwidth[]{\n}(x) \geq \epsconst$. 
Hence, %instead of solving problem \eqref{eqn:goal_close} directly, 
we soften this constraint and solve the following optimization problem to ensure the feasibility of \sempc, %using the exact penalty function ensures the trade-off:
\begin{align}
% \!\!\!\exact(\safeInit{s}, \safeInit{\n}, \pessiSet[]{\n}) \coloneqq \!\!\!\!
\min\limits_{\nu, \state \in \Rcontoper[\{\state(t_{\n})\},\safeInit{\n}]{\pessiSet[]{\n}}} \;\; &\lambda \nu + \dist(\state, \gostateOpti[\n]) \nonumber \\
  \mathrm{s.t.} \quad &\epsconst - \sumMaxwidth[]{\n}(x) \leq \nu, \nu \geq 0, \label{eq:exact_prob}
\end{align}
where $\lambda>0$ is a penalty and $\nu$ is a slack variable to ensure the feasibility of the problem. 
% Since the original problem is non-convex, we make the following assumption to ensure the optimizer of problem \eqref{eq:exact_prob} coincides with that of the original constrained optimization problem.
% \begin{assumption}[Exact penalty]\label{assump:exact}
%     There exists a constant $\underline{\lambda} > 0$ such that for any $\lambda \geq \underline{\lambda}$, the minimizer of \eqref{eq:exact_prob} stays the same.
% \end{assumption}

\looseness -1 To achieve \cref{eqn:goal_objective}, we want to guarantee exploration of the set $\Rcontoper[\safeInit{\n}]{\pessiSet[]{\n}}$, however the Problem \eqref{eq:exact_prob} solves for a sampling location in the set $\Rcontoper[\{\state(t_\n)\},\safeInit{\n}]{\pessiSet[]{\n}} \neq \Rcontoper[\safeInit{\n}]{\pessiSet[]{\n}}$. 
% We handle this by two modifications: First, after every measurement posterior update, we solve \eqref{eq:goal_sampling} in the $\Rcontoper[\safeInit{\n}]{\pessiSet[]{\n}}$ (\cref{alg:MPC:get_pessi_best}) to check and terminate if the pessimistic goal location $\gostatePessi[\n]$ has better loss as compared to $\gostateOpti[\n]$ in \cref{alg:MPC:go_termination,alg:MPC:move_SteadyState}. 
% To tackle this, we solve Problem \eqref{eqn:goal_close} which particularly explores the set $\Rcontoper[\safeInit{\n}]{\pessiSet[]{\n}}$ when the slack value $\nu>0$ which corresponds to uncertainity at $x$, $\sumMaxwidth[]{}(x)\leq \epsconst$.
% To address this issue, we solve Problem \eqref{eqn:goal_close} when the slack value $\nu>0$, i.e., $\sumMaxwidth[]{\n}(\tilde{x}(t')) < \epsconst$, since Problem \eqref{eqn:goal_close} explores for sampling location in the set $\Rcontoper[\safeInit{\n}]{\pessiSet[]{\n}}$.}
To address this issue, instead we solve Problem \eqref{eqn:goal_close} when the slack value $\nu>0$, i.e., $\sumMaxwidth[]{\n}(\tilde{x}(t')) < \epsconst$.
% \\
% To address this issue, when the slack value $\nu>0$, i.e., $\sumMaxwidth[]{\n}(\tilde{x}(t')) < \epsconst$, we solve Problem \eqref{eqn:goal_close}, since it explores for sampling location in the set $\Rcontoper[\safeInit{\n}]{\pessiSet[]{\n}}$.}
% when there is no location in $\Rcontoper[\{\state(t_\n)\},\safeInit{\n}]{\pessiSet[]{\n}}$ with uncertainty greater than $\epsilon$.
% This condition is checked by slack value $\nu>0$ which by \cref{assump:exact} implies $\nexists x \in \Rcontoper[\safeInit{s},\safeInit{\n}]{\pessiSet[]{\n}}: \sumMaxwidth[]{\n}(\state) \geq \epsconst$. 
% Also, in contrast to Goal-directed \seocp, we allow to terminate if  $\utility(\tilde{\state}(t')) \leq \utility(\gostateOpti[\n])$, i.e., if we get improved loss at sampling location $\tilde{\state}(t')$ in the set $\Rcontoper[\{\state(t_\n)\},\safeInit{\n}]{\pessiSet[]{\n}}$.

% , then pessimistic goal in \cref{alg:MPC:get_pessi_best} and check if pessimitic goal is better as compared to 

\looseness -1 \emph{\sempc steps:} The resulting approach is summarized in \cref{alg:go-se-mpc}. Analogous to Goal-directed safe exploration (\cref{alg:GOsampling}), we start by solving for the optimistic goal, $\gostateOpti[\n]$, in \cref{alg:MPC:get_opti_goal}. To sample close to the goal $\gostateOpti[\n]$, we recursively solve Problem \eqref{eq:exact_prob} in \cref{alg:MPC:exact_xtn}. Problem \eqref{eq:exact_prob} also ensures instantaneously re-planning from $\state(t_{\n})$ instead of returning to the terminal set, as shown in \Cref{fig: conti-algorithm-steps}.  
In Problem \eqref{eq:exact_prob},  with $\nu=0$ we are guaranteed to have \emph{sufficient information} with $\sumMaxwidth[]{\n}(\tilde{\state}(t')) \geq \epsilon$ and a \emph{goal directed} approach by being $\tilde{\state}(t')$ close to the goal location $\gostateOpti[\n]$ encoded by the \dist function.  
However if this is not the case, i.e., $\nu>0$ which implies $\sumMaxwidth[]{\n}(\tilde{\state}(t')) < \epsilon$,  the agent returns to the terminal set, $\safeInit{\n}$ in \cref{alg:MPC:last_control_input} as shown in \Cref{fig: conti_mpc_time} and solves Problem \eqref{eqn:goal_close} in \cref{alg:MPC:exact_Xn} to obtain the next goal to go in the set $\Rcontoper[\safeInit{\n}]{\pessiSet[]{\n}}$. The agent first moves to $\tilde{x}(0)$ in time $\timeX$ due to the controllability property from \cref{assump: safe-init}. Afterwards, the agent moves to the sampling location $\tilde{\state}(t')$ in \cref{alg:MPC:move2} where it collects a constraint sample, performs a \GP update (\cref{alg:MPC:updateGP}), and the goal-directed safe exploration continues until the termination criteria in \cref{alg:MPC:go_termination} gets satisfied.
Notably, the case of $\nu>0$ rarely occurs in practice (see \cref{sec:demo_car_dynamics}) and thus \sempc mainly determines the next sampling location based on the current state in Problem \eqref{eq:exact_prob}, moves to the location for sampling and updates the \GP in \cref{alg:MPC:updateGP}.
% In \cref{alg:MPC:exact_Xn}, the optimization results in $\nu=0$ (c.f. \cref{rem:nu>0}) implies $\sumMaxwidth[]{\n}(\tilde{\state}(t')) \geq 0$ and 

\begin{figure}
    \centering
    \scalebox{0.5}{\input{images/conti-algorithm-steps}}
    \caption{\looseness 0 Illustration of re-planning in \sempc: The agent starts at $\state_s \in \safeInit{\n}$ and plans a trajectory in $\Rcontoper[\safeInit{\n}]{\pessiSet[]{\n}}$ that reaches the $(n+1)^{th}$ sampling location, $\state(t_{\n+1})$ (blue line) and returns to the terminal set (black line). After sampling at $\state(t_{\n+1})$, the agent updates the optimistic and pessimistic constraint sets. Instead of returning, the agent re-plans for the new sampling location, $\state(t_{\n+2})$ in the set $\Rcontoper[\{\state(t_{\n+1})\},\safeInit{\n+1}]{\pessiSet[]{\n+1}}$. The process continues until it reaches optimistic goal $\gostateOpti[\n]$.}
    % However, the reachable returnable pessimistic set grows to $\Rcontoper[\safeInit{\n+1}]{\pessiSet[]{\n+1}}$ after sampling and instead of returning, the agent re-plans for the new sampling location, $\state(t_{\n+2})$ in the set $\Rcontoper[\{\state(t_{\n+1})\},\safeInit{\n+1}]{\pessiSet[]{\n+1}}$. The process continues until it reaches optimistic goal $\gostateOpti[\n]$.}
    % hence it cannot go directly there
    % it return to the safe set $\safeInit{\n}$ and converged to the $\gostateOpti[\n] = \gostatePessi[\n]$ (green dashed).}
    \label{fig: conti-algorithm-steps} \vspace{-1.0em}
\end{figure}

In the following, we prove that \sempc is recursively feasible and ensures the safety of the non-linear system.
% \manish{Shall I take recursive feasibility to SE-MPC?}
\begin{proposition}[Recursive feasibility]\label{prop:recursive_feas} Let %\cref{assump:q_RKHS,assump: safe-init} 
\cref{assump:q_RKHS,assump: safe-init,assump: regularconnected,assump: equal_boundary,assump:sublinear} hold and suppose $\Rcontoper[\safeInit{0}]{\pessiSet[,\epsconst]{0}}\not=\emptyset$. Then all the optimization problems in \cref{alg:go-se-mpc} are feasible for $\forall \n \geq 0$. Furthermore, with probability at least $1-\prob$, the resulting closed-loop system satisfies $\constrain(\state(t))\geq0 ~\forall t \geq 0$. 
% For long enough horizon, \sempc is recursive feasible until $\exists x \in \Rcontoper[\safeInit{0}]{\pessiSet[]{\n-1}}: w(x)>\epsconst$.
\end{proposition}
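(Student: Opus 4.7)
The plan is to prove feasibility by induction on $n$, and then derive the safety guarantee as a direct consequence of the fact that all closed-loop states lie in a pessimistic set.

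\emph{Base case ($n=0$).} Since $X_0$ is control invariant via $\kappa_0$ and $X_0 \subseteq \pessiSet[]{0} \subseteq \optiSet[]{0}$, the closed-loop trajectory $x(t)\equiv x_s$ lies in $\Rcontoper[\safeInit{0}]{\pessiSet[]{0}} \subseteq \Rcontoper[\safeInit{0}]{\optiSet[]{0}}$, which together with the hypothesis $\Rcontoper[\safeInit{0}]{\pessiSet[,\epsconst]{0}} \neq \emptyset$ yields a steady-state candidate and hence feasibility of \eqref{eq:goal_sampling} and \eqref{eq:pessi_goal_sampling}. For \eqref{eq:exact_prob} at $n=0$, the constant trajectory $\tilde{x}(t)\equiv x_s$ belongs to $\Rcontoper[\{x_s\},\safeInit{0}]{\pessiSet[]{0}}$, and choosing $\nu = \max\{0, \epsconst - \sumMaxwidth[]{0}(x_s)\}$ makes the slack constraint hold.

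\emph{Inductive step.} Assume feasibility at iteration $n$ and let $\tilde{x}(\cdot)$ denote the planned trajectory actually executed (from \eqref{eq:exact_prob} in the $\nu=0$ branch, or from \eqref{eqn:goal_close} in the $\nu>0$ branch). In both branches the trajectory satisfies $\tilde{x}(\tsample) = \state(t_{n+1})$, $\tilde{x}(T)\in \safeInit{n}$, and $\tilde{x}(t)\in \pessiSet[]{n}$ for all $t\in[0,T]$. I will construct an explicit candidate trajectory for iteration $n+1$ by concatenation: follow the tail $\tilde{x}(\tsample+\tau)$ for $\tau\in[0,T-\tsample]$, reaching $\tilde{x}(T)\in \safeInit{n}$, and then apply the invariance feedback $\kappa_{n+1}$ for the remaining duration $\tsample$. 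Monotonicity ($\pessiSet[]{n}\subseteq \pessiSet[]{n+1}$ and $\safeInit{n}\subseteq \safeInit{n+1}$, \cref{assump: safe-init}) together with control invariance of $\safeInit{n+1}$ then place this candidate trajectory in $\Rcontoper[\{x(t_{n+1})\},\safeInit{n+1}]{\pessiSet[]{n+1}}$, and feasibility of \eqref{eq:exact_prob} at step $n+1$ follows by choosing $\nu$ large enough. Feasibility of \eqref{eq:goal_sampling} and \eqref{eq:pessi_goal_sampling} at step $n+1$ is obtained analogously from the invariance of $\safeInit{n+1}$.

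\emph{Handling \eqref{eqn:goal_close}.} This problem is only invoked when $\nu>0$ in \eqref{eq:exact_prob}. I plan to argue by contradiction: if \eqref{eqn:goal_close} were infeasible, then $\sumMaxwidth[]{n}(x)<\epsconst$ for all $x\in \Rcontoper[\safeInit{n}]{\pessiSet[]{n}}$, which by the exploration argument underlying \cref{thm:SE} implies $\Rcontoper[\safeInit{n}]{\optiSet[]{n}}\subseteq \Rcontoper[\safeInit{n}]{\pessiSet[]{n}}$ and hence $\utility(\gostatePessi[n])\leq \utility(\gostateOpti[n])$; thus the termination check in \cref{alg:MPC:go_termination} would have triggered \emph{before} reaching line \ref{alg:MPC:exact_Xn}. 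This is the step I expect to be the main obstacle, because it requires linking the MPC slack $\nu>0$ to the full-exploration certificate of \cref{thm:SE} while being careful that the set in \eqref{eq:exact_prob} is $\Rcontoper[\{x(t_n)\},\safeInit{n}]{\pessiSet[]{n}}$ rather than $\Rcontoper[\safeInit{n}]{\pessiSet[]{n}}$; the inclusion of candidates built from the previous tail into the latter set is precisely what lets me conclude.

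\emph{Safety.} Inspecting \cref{alg:MPC:moveTOsampling}, \cref{alg:MPC:last_control_input} and \cref{alg:MPC:move2} and using the controllability-around-the-terminal-set property of \cref{assump: safe-init} to go from $\safeInit{n}$ to $\tilde{x}(0)$, the closed-loop state at every time $t\geq 0$ lies in $\pessiSet[]{n(t)}$ for the current sample index $n(t)$. An application of \cref{coro:hp_bounds} then gives $\lbconst[n](\state(t))\geq 0$ uniformly in $n,t$ with probability at least $1-\prob$, and hence $\constrain(\state(t))\geq 0$ for all $t\geq 0$ on that probability-$1-\prob$ event, completing the proof.
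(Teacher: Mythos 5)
Your proposal is correct and follows essentially the same route as the paper's proof: feasibility of the goal problems from non-emptiness of $\Rcontoper[\safeInit{0}]{\pessiSet[,\epsconst]{0}}$ together with monotonicity, feasibility of \eqref{eq:exact_prob} via the standard shifted-tail candidate appended with the terminal invariance feedback $\kappa_n$ plus a sufficiently large slack $\nu$, feasibility of \eqref{eqn:goal_close} by contradiction through \cref{thm:SE} and the termination check in \cref{alg:MPC:go_termination}, and safety from $\state(t)\in\pessiSet[]{\n}$ combined with \cref{coro:hp_bounds}. The only cosmetic difference is that you package the argument as an explicit induction, whereas the paper states the set inclusions and the candidate construction directly for arbitrary $\n$.
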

\begin{proof} 
Note that $\safeInit{0} \subseteq \safeInit{\n}$ by \cref{assump: safe-init}, and
$\forall \n \geq 0, %\ubconst[\n](\state) \geq \lbconst[\n](\state) \geq \lbconst[0](\state)
\optiSet[]{\n} \supseteq \pessiSet[,\epsconst]{\n} \supseteq \pessiSet[,\epsconst]{0}$, 
together imply $\Rcontoper[\safeInit{\n}]{\optiSet[]{\n}} \supseteq \Rcontoper[\safeInit{0}]{\pessiSet[,\epsconst]{0}} \neq \emptyset$. Hence the problem in \cref{alg:MPC:get_opti_goal} is always feasible.
% The problem in \cref{alg:MPC:get_opti_goal} is feasible as long as $\Rcontoper[\safeInit{\n}]{\optiSet[]{\n}} \neq \emptyset$, which holds since $\Rcontoper[\safeInit{\n}]{\optiSet[]{\n}}  \supseteq \Rcontoper[\safeInit{0}]{\pessiSet[,\epsconst]{0}} \neq \emptyset$. $\Rcontoper[\safeInit{\n}]{\optiSet[]{\n}}  \supseteq \Rcontoper[\safeInit{0}]{\pessiSet[,\epsconst]{0}}$ since $\safeInit{0} \subseteq \safeInit{\n}$ by \cref{assump: safe-init}, and
% $\forall \n \geq 0, \ubconst[\n](\state) \leq \lbconst[\n](\state) \leq \lbconst[0](\state)$. The later holds by \cref{coro:hp_bounds} and since $\lbconst[\n]$ is non-decreasing by definition. 
%, i.e., $\forall \n \geq 0, \ubconst[\n](\state) \leq \lbconst[\n](\state) \leq \lbconst[0](\state)$, and by \cref{assump: safe-init}, $\safeInit{0} \subseteq \safeInit{\n}$.
Analogously, the problem in \cref{alg:MPC:get_pessi_best} is feasible 
%as long as $\Rcontoper[\safeInit{\n}]{\pessiSet[]{\n}} \neq \emptyset$, which holds 
since $\Rcontoper[\safeInit{\n}]{\pessiSet[]{\n}} \supseteq \Rcontoper[\safeInit{0}]{\pessiSet[,\epsconst]{0}}  \neq \emptyset$. 

Next, we show that Problem \eqref{eqn:goal_close} in \cref{alg:MPC:exact_Xn} is always feasible. For contradiction assume that Problem \eqref{eqn:goal_close} is infeasible, i.e., $\sumMaxwidth[]{\n}(\state)<\epsilon \forall x \in \Rcontoper[\safeInit{\n}]{\pessiSet[]{\n}}$ which by \cref{thm:SE} implies $\Rcontoper[\safeInit{\n}]{\optiSet[]{\n}} \subseteq \Rcontoper[\safeInit{\n}]{\pessiSet[]{\n}}$. This implies 
%a full domain exploration is completed and thus the goal $\gostateOpti[\n]$ is, in fact, within a reachable returnable pessimistic set, thus 
$\utility(\gostatePessi[\n]) \leq \utility(\gostateOpti[\n])$ and the algorithm should have terminated in \cref{alg:MPC:go_termination} already.% and thus Problem \eqref{eqn:goal_close} cannot be infeasible in \cref{alg:MPC:exact_Xn}. %and hence Problem \eqref{eqn:goal_close} cannot be infeasible in \cref{alg:MPC:exact_Xn}.}

%This is true by monotonicity of lower bound, i.e., $\forall \n, \pessiSet[]{\n} \subseteq \pessiSet[]{\n+1}$ and by \cref{assump: safe-init}, $\safeInit{0} \subseteq \safeInit{\n}$. %By definition of $\lbconst[\n], \forall \n \geq 0, \lbconst[\n](\state) \leq \lbconst[\n+1](\state)
% We know by applying \tilde{a}... from last time, we end in X_{n-1}\subseteq X_n.
% Furthermore, this trajectory is in S^p_{n-1}\subseteq S^p_n.
% Finally, we can append an input using "controllable in pessimistic set" (Assumption 2)

% In order to show the feasibility of the problem in \cref{alg:MPC:exact_xtn}, we show there always exists a solution. using standard shifted candidate solution
\looseness -1 Feasibility of  Problem \eqref{eq:exact_prob} in \cref{alg:MPC:exact_xtn} at any $t_\n \geq 0$ is ensured using the standard \mpc candidate input $\hat{\coninput}(\cdot)\in \pwcinput, \hat{\state}(\cdot) \subseteq \Domain$: shifting the previous feasible solution $\hat{\coninput}(t) = \tilde{\coninput}(t' + t), t \in [0,T-t')$ and appending  $\hat{\coninput}(t)=\kappa_n(\hat{\state}(t))$, $t\in[T-t',T)$ from \cref{assump: safe-init}. %We show the feasibility of the problem in \cref{alg:MPC:exact_xtn} at any $t_\n \geq 0$ by constructing a feasible candidate solution: $\coninput(t_{\n} + t) = \tilde{\coninput}(t' + t), t \in [0,T-t')$ computed in the last $(\n-1)^{th}$ iteration and then append $\coninput(t_{\n} +  t) = \coninput'(t), t \in [T - t', T) : \state(t_{\n} + t)\in \safeInit{\n}, \forall t \in [T - t', T]$, where the input signal $\coninput'(t)$ exists due to ``control invariance" property (\cref{assump: safe-init}).
Feasibility follows from $\hat{\state}(t) \in \pessiSet[]{\n-1} \subseteq \pessiSet[]{\n}, \forall t \in [0, T-t']$ (previous feasible solution); and $\hat{\state}(t) \in \safeInit{\n-1} \subseteq \safeInit{\n} \subseteq \pessiSet[]{\n}, \forall t\in[T-t',T]$ respectively due to ``control invariance", ``monotonicity" and ``pessimistically safe" properties (\cref{assump: safe-init}). 
Moreover, note that the constraint $\epsconst-\sumMaxwidth[]{\n}(\state)\leq \nu$ in Problem \eqref{eq:exact_prob} is always feasible by choosing $\nu$ sufficiently large.
This implies all the optimization problems from \cref{alg:MPC:get_opti_goal,alg:MPC:get_pessi_best,alg:MPC:exact_Xn,alg:MPC:exact_xtn,} are feasible $\forall \n \geq 0$. 

In \cref{alg:go-se-mpc},  it holds that, $x(t)\in \pessiSet[]{\n}$ which by definition~\eqref{eqn:constrained_set} implies $\lbconst[\n](\state(t)) \geq 0$ and hence by  \cref{coro:hp_bounds} implies $\constrain(\state(t)) \geq 0$ with probability at least $1-\prob$.
% In \cref{alg:go-se-mpc},  it holds that, $x(t)\in \pessiSet[]{\n} \subseteq \constSet[]{}, \forall \n \geq 0$, hence $\constrain(\state(t))\geq0$, which ensures safety.
% In \cref{alg:go-se-mpc}, $x(t)$ may not always be in $\Rcontoper[\safeInit{\n}]{\pessiSet[]{\n}}$, however for any $t \geq 0$, it holds that $x(t)\in \Rcontoper[\X, \safeInit{\n}]{\pessiSet[]{\n}} \subseteq \pessiSet[]{\n} \subseteq \constSet[]{}$, hence $\constrain(\state(t))\geq0$, which ensures safety.
\end{proof}

\looseness -1 We next establish that \sempc recovers the theoretical guarantees of the safe exploration framework (\cref{sec:goalSE}) by combining results of \cref{thm:SE,thm:go}.
% Let \cref{assump:q_RKHS,assump: safe-init,assump: regularconnected,assump: equal_boundary,assump:sublinear} hold and consider $\n^\star$ as in \cref{thm:sample_complexity}. Then the closed loop system resulting from \cref{alg:GOsampling} achieves \cref{eqn:goal_objective} with probability atleast $1 - \prob$. 
    \begin{theorem} \label{coro:sempc} \looseness -1 Let \cref{assump:q_RKHS,assump: safe-init,assump: regularconnected,assump: equal_boundary,assump:sublinear} hold, and consider $\n^\star$ as in \cref{thm:sample_complexity}. Then the closed-loop system resulting from \cref{alg:go-se-mpc} guarantees \cref{eqn:goal_objective} with probability at least $1 - \prob$.
% $\int_{0}^{\tau} \utility(\state(t)) - \utility(\state^\star_{\n}) dt \leq C$ %within $\n \leq \n^\star$ iterations 
% with probability at least $1 - \prob$. 
% Analogously, average regret, $\AvR(\tau) \coloneqq \int_{0}^{\tau} \utility(\state(t)) -  \utility(\state^\star_{\n})dt / \tau \leq \mathcal{O}(\frac{1}{\tau}).$ %classify the goal's reachable returnable safety within $ \n \leq \n^\star$ iterations and if the goal is safely reachable returnable then \cref{alg:GOsampling} guarantees,
% \begin{align*}
%         \lim_{T'\to \infty} \frac{1}{T'} \sum_{t=0}^{T'} \utility(x(t)) \leq \argmin_{x \in \Rcontoper[\safeInit{\n}]{\constSet[, \epsconst]{}}}\utility(x).
% \end{align*}
\end{theorem}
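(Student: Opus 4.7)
The plan is to combine three ingredients already established in the paper: recursive feasibility from \cref{prop:recursive_feas}, the sample-complexity bound from \cref{thm:sample_complexity}, and the regret argument used in the proof of \cref{thm:go}. Safety is immediate because \cref{prop:recursive_feas} guarantees with probability at least $1-\prob$ that the closed-loop state stays in $\pessiSet[]{\n}$ and therefore $\constrain(\state(t)) \geq 0$ for all $t \geq 0$; hence the work reduces to establishing the average-regret bound in \cref{eqn:goal_objective}.

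My next step is to show that \sempc inherits the sample-complexity guarantee of \cref{thm:sample_complexity}. Each collected measurement comes either from Problem~\eqref{eq:exact_prob} with $\nu = 0$, in which case the constraint $\epsconst - \sumMaxwidth[]{\n}(\tilde{\state}(t'))\leq 0$ forces $\sumMaxwidth[]{\n}(\tilde{\state}(t'))\geq \epsconst$, or from the fallback Problem~\eqref{eqn:goal_close}, whose own constraint again enforces width at least $\epsconst$. In both cases the sampled state lies in $\pessiSet[]{\n}$, so the bound chain underlying \cref{thm:sample_complexity} carries over verbatim and produces some $\nfin\leq n^\star$ after which $\sumMaxwidth[]{\nfin}(\state)<\epsconst$ for every $\state\in\Rcontoper[\safeInit{\nfin}]{\pessiSet[]{\nfin}}$. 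By \cref{thm:SE} this uniform width shrinkage implies $\Rcontoper[\safeInit{\nfin}]{\optiSet[]{\nfin}}\subseteq\Rcontoper[\safeInit{\nfin}]{\pessiSet[]{\nfin}}$, so $\utility(\gostatePessi[\nfin])\leq \utility(\gostateOpti[\nfin])$ and the termination check in \cref{alg:MPC:go_termination} succeeds no later than iteration $\nfin$; earlier termination triggered by $\utility(\state(t_\n))\leq \utility(\gostateOpti[\n])$ only helps.

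The regret bound then mirrors the proof of \cref{thm:go}. Setting $\state^\star_{\nfin}\in\arg\min_{\state\in\Rcontoper[\safeInit{\nfin}]{\constSet[,\epsconst]{}}}\utility(\state)$ and using $\constSet[,\epsconst]{}\subseteq\optiSet[]{\nfin}$ from \cref{coro:hp_bounds}, the post-termination value is upper bounded by $\utility(\gostateOpti[\nfin])\leq \utility(\state^\star_{\nfin})$, so the integrand in $\AvR(\tau)$ is non-positive from $t_{\nfin}$ onwards. Boundedness of $\utility$ together with the termination time $t_{\nfin}\leq n^\star(T+\timeX)$ caps the pre-termination contribution by $\bar{\utility}\,n^\star(T+\timeX)$, giving $\AvR(\tau)\leq \mathcal{O}(1/\tau)$. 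The main obstacle I expect lies in the sample-complexity step, since Problem~\eqref{eq:exact_prob} is posed over $\Rcontoper[\{\state(t_\n)\},\safeInit{\n}]{\pessiSet[]{\n}}$ rather than $\Rcontoper[\safeInit{\n}]{\pessiSet[]{\n}}$; the slack-based fallback to \eqref{eqn:goal_close} whenever $\nu>0$ is what keeps the per-iteration information gain at least $\epsconst$ and thereby preserves the $n^\star$ bound.
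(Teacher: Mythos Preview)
Your proposal is correct and follows essentially the same route as the paper: verify that every measurement in \cref{alg:go-se-mpc} has width at least $\epsconst$ (via the $\nu=0$ branch of \eqref{eq:exact_prob} or the fallback \eqref{eqn:goal_close}), invoke the sample-complexity bound together with \cref{thm:SE} to force termination at \cref{alg:MPC:go_termination} within $\nfin\leq\n^\star$ iterations, and then replay the regret argument of \cref{thm:go} with the $\n^\star(T+\timeX)$ time bound. The paper's proof additionally spells out that even in the $\nu>0$ case each iteration still takes at most $T+\timeX$, which justifies the termination-time bound you state.
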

 \begin{proof} \looseness -1 \cref{alg:go-se-mpc} ensures that in \cref{alg:MPC:updateGP}, we sample at a location $\state(t_{\n+1}): \sumMaxwidth[]{\n}(\state(t_{\n+1}))\geq \epsilon$. 
  In particular, $\state(t_{\n+1})$ is chosen based on Problem \eqref{eq:exact_prob} if $\nu=0$ or Problem \eqref{eqn:goal_close}, which both enforce $\sumMaxwidth[]{\n}(\state(t_{\n+1}))\geq \epsilon$.
 % In particular, this is enforced in \eqref{eq:exact_prob} with $\nu=0$, which holds considering the case distinction in \cref{eq:case_distinction} \rev{and feasibility of Problem \eqref{eqn:goal_close} in \cref{alg:MPC:exact_Xn} (\cref{prop:recursive_feas})}.
 % In \cref{alg:go-se-mpc}, we sample at $\state$ in \cref{alg:MPC:updateGP} only if slack $\tilde{\nu}$ of problems in \cref{alg:MPC:exact_Xn,alg:MPC:exact_xtn} (\cref{rem:nu>0}) satisfies $\tilde{\nu}=0$ which under \cref{assump:exact} ensures $\sumMaxwidth[]{\n}(\state) \geq \epsconst$. 
 For this sampling rule, using \cref{coro:sample_complexity}, there exists $\nfin \leq \n^\star$ such that $\forall x \in \Rcontoper[\safeInit{\nfin}]{\pessiSet[]{\nfin}},~\sumMaxwidth[]{\nfin}(\state) < \epsconst$. %and by \cref{thm:SE}, $\Rcontoper[\safeInit{\nfin}]{\optiSet[]{\nfin}} \subseteq \Rcontoper[\safeInit{\nfin}]{\pessiSet[]{\nfin}}$.
% For this sampling rule, by \cref{thm:sample_complexity} $\exists \nfin \leq \n^\star : \forall x \in \pessiSet[]{\nfin}, \sumMaxwidth[]{\nfin}(\state) < \epsconst \!\! \implies \!\!\forall x \in \Rcontoper[\safeInit{\nfin}]{\pessiSet[]{\nfin}}, \sumMaxwidth[]{\nfin}(\state) < \epsconst.$ 
Finally this yields $\Rcontoper[\safeInit{\nfin}]{\optiSet[]{\nfin}} \subseteq \Rcontoper[\safeInit{\nfin}]{\pessiSet[]{\nfin}}$ by \cref{thm:SE}, which implies $\utility(\gostatePessi[\nfin]) \leq \utility(\gostateOpti[\nfin])$ on solving Problems in \cref{alg:MPC:get_opti_goal,alg:MPC:get_pessi_best}. Thus the algorithm terminates at \cref{alg:MPC:go_termination} in $\nfin \leq \n^\star$ iterations.
 % We will first prove that for \cref{alg:go-se-mpc}$, \exists \nfin \leq \n^\star : \Rcontoper[\safeInit{\nfin}]{\optiSet[]{\nfin}} \subseteq \Rcontoper[\safeInit{\nfin}]{\pessiSet[]{\nfin}}$. 
 % We will first prove that for \cref{alg:go-se-mpc}$, \exists \nfin \leq \n^\star : \Rcontoper[\safeInit{\nfin}]{\optiSet[]{\nfin}} \subseteq \Rcontoper[\safeInit{\nfin}]{\pessiSet[]{\nfin}}$. This directly implies that on solving problems in \cref{alg:MPC:get_opti_goal,alg:MPC:get_pessi_best}, we get $\utility(\gostatePessi[\nfin]) \leq \utility(\gostateOpti[\nfin])$, thus the algorithm terminates at \cref{alg:MPC:go_termination} in $\nfin \leq \n^\star$ iterations.  In \cref{alg:go-se-mpc}, we sample at $\state$ in \cref{alg:MPC:updateGP} only if slack $\tilde{\nu}$ of problems in \cref{alg:MPC:exact_Xn,alg:MPC:exact_xtn} satisfies $\tilde{\nu}=0$ which under \cref{assump:exact} ensures $\sumMaxwidth[]{\n}(\state) \geq \epsconst$. Thus, by \cref{thm:sample_complexity} $\exists \nfin\leq\n^\star : \forall x \in \pessiSet[]{\nfin}, \sumMaxwidth[]{\nfin}(\state) < \epsconst \!\! \implies \!\!\forall x \in \Rcontoper[\safeInit{\nfin}]{\pessiSet[]{\nfin}}, \sumMaxwidth[]{\nfin}(\state) < \epsconst.$ Consecutively by \cref{thm:sample_complexity}, $\Rcontoper[\safeInit{\nfin}]{\optiSet[]{\nfin}} \subseteq \Rcontoper[\safeInit{\nfin}]{\pessiSet[]{\nfin}}$.

\looseness -1 The total time is bounded by $\n^\star(T + \timeX)$. As shown in \cref{fig: conti_mpc_time}, in the extreme case, $\nu\! >\! 0$,\! the agent returns from $\state(t_\n)$ into $\safeInit{\n}$ using the last control sequence (\cref{alg:MPC:last_control_input}) in the same time T. It re-evaluates (\cref{alg:MPC:exact_Xn}) the next sampling location, where the time for iteration is again bounded by $T + \timeX$. Utilising this and \cref{alg:MPC:go_termination}, the regret guarantees follow similar to \cref{thm:go}.\!\!\!\!
 %In case the agent explores in one direction until there is no more location with $\sumMaxwidth[]{\n}(x)\geq\epsconst, \state \in \Rcontoper[\{ \state(t_\n)\}, \safeInit{\n}]{\pessiSet[]{\n}}$, then it can return to the $\safeInit{\n}$ using the last control sequence in the same time T and replan for a new goal. This still preserves the time of convergence to be $\leq \n^\star (T + \timeX)$.}
\end{proof}

\begin{figure}
    \centering
    \scalebox{0.5}{\input{images/conti_mpc_time}}
        \caption{\looseness -1 Illustration of $\nu>0$ case in \cref{alg:MPC:exact_xtn} \cref{alg:go-se-mpc} at location $\state(t_{\n})$. The solution of Problem~\eqref{eq:exact_prob} results in $\nu>0$ (red trajectory) which corresponds to uncertainty less than $\epsconst$. Hence the agent returns back to the safe set $\safeInit{\n}\!$ from $\state(t_{\n})$ following the last computed trajectory (blue line) in \cref{alg:MPC:last_control_input}. %may explore recursively and go out of $\Rcontoper[\safeInit{\n}]{\pessiSet[]{\n}}$(blue dashed) however still remains safe inside $\Rcontoper[\X,\safeInit{\n}]{\pessiSet[]{\n}}$. 
    Then it optimizes for the next sampling location $\state(t_{\n+1})$ in \cref{alg:MPC:exact_Xn} and the exploration process continues. Note that $x(t_{\n+1}) \not\in \Rcontoper[\{\state(t_{\n}), \safeInit{\n}]{\pessiSet[]{\n}}$ and thus we return back to the safe set $\safeInit{\n}$ and then go to $\state(t_{\n+1})$.}
    % \caption{\looseness -1 Illustration of $\nu>0$ case in \cref{alg:MPC:exact_xtn} \cref{alg:go-se-mpc} at location $\state(t_{\n+1})$. The solution of \eqref{eq:exact_prob} results in $\nu>0$ (red trajectory) which corresponds to uncertainty less than $\epsconst$. Hence the agent returns back to the safe set $\safeInit{\n}\!$ from $\state(t_{\n+1})$ following the last computed trajectory (blue line) in \cref{alg:MPC:last_control_input}. %may explore recursively and go out of $\Rcontoper[\safeInit{\n}]{\pessiSet[]{\n}}$(blue dashed) however still remains safe inside $\Rcontoper[\X,\safeInit{\n}]{\pessiSet[]{\n}}$. 
    % Then it evaluates for the next sampling location $\state(t_{\n+2})$ in \cref{alg:MPC:exact_Xn} and the exploration process continues. Note that $x(t_{\n+2}) \not\in \Rcontoper[\{\state(t_{\n+1}), \safeInit{\n+1}]{\pessiSet[]{\n+1}}$ and thus we return back to the safe set $\safeInit{\n}$ and then go to $\state(t_{\n+2})$.}%We return back since time to go from $\state(t_{\n+1})$ to $\state(t_{\n+2})$ and then back to the safe set $\safeInit{\n}$ may be greater than horizon time $T$.}
    % hence it cannot go directly there
    % it return to the safe set $\safeInit{\n}$ and converged to the $\gostateOpti[\n] = \gostatePessi[\n]$ (green dashed).}
    \label{fig: conti_mpc_time} \vspace{-1.0em}
\end{figure}
\begin{remark}[\sempc for maximum safe domain exploration]\label{rem:GoaltoSE}
\looseness -1 The maximum safe domain exploration can be recovered as a special case of \cref{alg:go-se-mpc} by two crucial changes. %First, introducing a termination criterion same as \seocp, which is to terminate if Problem \eqref{eqn:sampling_strategy} is infeasible.
% First, replace the termination criterion in \cref{alg:MPC:go_termination} by $\nu>0$ after solving \cref{alg:MPC:exact_Xn} (cf. \cref{rem:nu>0}).
First, replace the termination criterion in \cref{alg:MPC:go_termination} by infeasibility check of Problem \eqref{eqn:goal_close} in \cref{alg:MPC:exact_Xn}. %(cf. \cref{rem:nu>0}).
Second, for efficient domain exploration, define optimistic goal $\gostateOpti[\n] \coloneqq \argmax_{x \in \Rcontoper[\safeInit{\n}]{\optiSet[]{\n}}}\sumMaxwidth[]{\n}(\state)$ and Problem \eqref{eq:exact_prob} ensures sampling closer to $\gostateOpti[\n]$ encoded by \dist while satisfying $\sumMaxwidth{\n}(x) \geq \epsilon$ until termination, which is sufficient to guarantee maximum domain exploration (\cref{thm:SE}).
% drop the steady-state constraint ($\dyn(\state,\coninput)=0)$ in computing the goal. Notably, \dist can be an arbitrary function since Problem \eqref{eq:exact_prob} still ensures sampling at $x : \sumMaxwidth{\n}(x) \geq \epsilon$ until termination, which is sufficient to guarantee maximum domain exploration (\cref{thm:SE}). However, for efficient domain exploration, define optimistic goal $\gostateOpti[\n] \coloneqq \argmax_{x \in \Rcontoper[\safeInit{\n}]{\optiSet[]{\n}}}\sumMaxwidth[]{\n}(\state)$ and \eqref{eq:exact_prob} ensures sampling closer to $\gostateOpti[\n]$ encoded by \dist.
%However, in order to perform an efficient domain exploration, define optimistic goal $\gostateOpti[\n] \coloneqq \argmax_{x \in \Rcontoper[\safeInit{\n}]{\optiSet[]{\n}}}\sumMaxwidth[]{\n}(\state)$ and \eqref{eq:exact_prob} ensures sampling closer to $\gostateOpti[\n]$ encoded by \dist.
% ensuring fast reduction in uncertainty.
% an iteration varying loss function $\utility(\state) \coloneqq - \sumMaxwidth[]{\n}(\state)$, which results in uncertainty sampling.
% Second, replacing \dist with any constant function in \eqref{eq:exact_prob}. This ensures we recursively sample at $\sumMaxwidth{\n}(x) \geq \epsilon$ until termination which guarantees complete exploration (\cref{thm:SE}). Alternatively, in order to perform an efficient maximum domain exploration, define an iteration varying loss function $\utility(\state) \coloneqq - \sumMaxwidth[]{\n}(\state)$, which results in uncertainty sampling.
\end{remark}
\begin{remark}[\sempc with expanders] \label{rem:seMPCwithexpanders}
\looseness -1 If \cref{assump:Lipschitz} is satisfied, \sempc can be modified to be more sample efficient by exploiting $\LipConst$. 
% This can be done by using a larger pessimistically reachable returnable safe set, i.e., replacing $\Rcontoper[\safeInit{\n}]{\pessiSet[]{\n}}$ by $\Rcontoper[\safeInit{\n}]{\LpessiSet[]{\n}}$ in Problem~\eqref{eq:exact_prob}. 
This can be done by using a larger pessimistically reachable returnable safe set, i.e., replacing $\pessiSet[]{\n}$ by $\LpessiSet[]{\n}$ in Problem~\eqref{eq:exact_prob}.
% Furthermore, the expander $\expansion[\n](\safeInit{\n}, \LpessiSet[]{\n})$ can be utilized to further restrict \eqref{eq:exact_prob} by adding the constraint $\state \in \expansion[\n](\safeInit{\n}, \LpessiSet[]{\n})$.
Furthermore, we can restrict sampling within the expander, $\expansion[\n](\safeInit{\n},\LpessiSet[]{\n})$ by adding the constraint $\lbconst[\n]{}(\state) \leq 0$ in \eqref{eq:exact_prob}.
Combining both results in more efficient exploration while preserving the guarantees of \cref{coro:sempc}.
% This can be done by modifying \cref{alg:go-se-mpc} to sample in the expander $\expansion[\n]$ instead of $\Rcontoper[\safeInit{s},\safeInit{\n}]{\pessiSet[]{\n}}$ in \cref{alg:MPC:exact_Xn,alg:MPC:exact_xtn}, and use a larger pessimistically reachable returnable safe set $\Rcontoper[\safeInit{\n}]{\LpessiSet[]{\n}}$. 
% If \cref{assump:Lipschitz} is satisfied, Goal-directed \seocp can be modified to be more sample efficient by exploiting $\LipConst$. This can be done by modifying \cref{alg:GOsampling} to sample in the expander $\expansion[\n]$ instead of $\Rcontoper[\safeInit{\n}]{\pessiSet[]{\n}}$ in \cref{alg:GO:closest_loc} and use a larger pessimistically reachable returnable safe set $\Rcontoper[\safeInit{\n}]{\LpessiSet[]{\n}}$ in \cref{alg:GO:get_pessi_best}. This results in more efficient exploration while preserving guarantees of \cref{thm:go}.
    % If \cref{assump:Lipschitz} is satisfied, \cref{alg:GOsampling} can be modified to sample in the expander $\expansion[\n]$ instead of $\Rcontoper[\safeInit{\n}]{\pessiSet[]{\n}}$ in \cref{alg:GO:closest_loc} for sample efficiency and correspondingly can use a larger pessimistically reachable returnable safe set $\Rcontoper[\safeInit{\n}]{\LpessiSet[]{\n}}$ in \cref{alg:GO:get_pessi_best}. 
\end{remark}

% Furthermore, the application of this framework also arise in \BO where the parameters also follow a dynamics and cannot independently jumped \cite[]

% Due to the monotonically increasing pessimistic set, re-planing the next-to-go location still preserves the guarantees of safe exploration (\cref{thm:SE}). 

% In safe exploration OCP, we solve the following optimal control program:
% \textbf{MPC formulation} \senlp
% ii) \emph{Safe exploration NLP (\senlp)}: Samples closest to the goal location that is safely reachable and returnable. The \senlp is given by:
%======================
% \begin{align*}
%     &\max_{\state(\ti), \coninput(\ti), \ti'} - \cost(\state_g) \\
%   \textit{s.t.}~\forall \ti \in [t_\n, t_\n + T]:&\lbconst[\n-1](\state(\ti)) \geq 0~~ \\
%     &\dot{\state}(\ti) = \dyn(\state(\ti), \coninput(\ti))  \\
%     & \state(\ti) = \state(t_\n), \state(t_\n + T) \in X_{0} \\
%     & \state(\ti) \in \Domain, \coninput(\ti) \in \mathcal{\inputSpace} \\
%       \ti' \in [t_\n, t_\n + T],~ & \ubconst[\n-1](\state(\ti')) - \lbconst[\n-1](\state(\ti')) \geq \epsconst \\
% \end{align*}
% where $t_n$ is the time at $n^th$ measurement. The above program is solved after every measurement from time $t_n$ for a horizon time $T$. % \tag{$\utility$ is unknown reward function}

\section{Simulation Results}
\label{sec:simulations}
\looseness -1 In this section, we first discuss the numerical implementation of \sempc in \cref{sec:numerical}. Next, we validate our theory and compare different safe exploration techniques on the task of maximum domain exploration and goal-directed safe exploration in \cref{sec:comparison}. Finally, we demonstrate our core algorithm, \sempc with car dynamics in challenging environments in \cref{sec:demo_car_dynamics}. For additional details on simulation, including exact parameters and videos, we refer readers to our codebase at \href{https://github.com/manish-pra/sagempc}{https://github.com/manish-pra/sagempc}.
% While we provide essential details here, for a comprehensive description of the simulation, including exact parameters and videos we encourage readers to refer to our code base available at \href{https://github.com/manish-pra/sagempc}{https://github.com/manish-pra/sagempc}.
% In the following, we first discuss implementation details for \seocp and then a goal-direct safe exploration environment.
% In order to illustrate our algorithm, 
% \emph{i)} a 2D integrator system \emph{ii)} a non-holonomic integrator \emph{iii)} a bicycle model. 
\vspace{-2mm}
\subsection{Numerical solution to \sempc}\label{sec:numerical}
\looseness -1 In the following, we discuss how to formulate Problem \eqref{eq:exact_prob} as a finite-dimensional discrete-time problem to solve it efficiently. Specifically, we frame the problem over a finite horizon $H\in \mathbb{N}$ and optimize over the discrete-time differences, $\Delta t_k \coloneqq t(k+1) - t(k)$, at $k = 0,1, \hdots, H-1$. Thus, Problem \eqref{eq:exact_prob} can be posed as the following non-linear program (\nlp):
%\looseness -1 In the \sempc sampling strategy \eqref{eq:exact_prob}, we aim to find the optimal time $t'\in [0,T]$ for the agent to collect a sample. However, in practice, one can only implement discrete-time formulations. To address this, 
% Equivalently, we consider a problem of horizon $H$ and optimize over the horizon time $t_k, k \in  [H]$. W.l.o.g. we fix a sampling horizon $\horizonmid = \lfloor H/2 \rfloor$ and the optimization problem results in $t_{\horizonmid} = t'$.
% and where $H$ is a tuning parameter
% we w.l.o.g we fix 
% $k \in [H]$ 
% $t'=t_{\horizonmid}$
% Hence for a problem with horizon $H$, we fixed to sample at horizon $\horizonmid = \lfloor H/2 \rfloor$ and optimize over the horizon time which is equivalent to optimising for $t' \in [0,T]$. 
% In particular, \seocp solves the following \nlp,
\begin{subequations} \label{eqn:sagempc_nlp}
\begin{align}
   % \!\!\!\!\!\!\! \text{s.t.}
    \min_{\nu, \state_{k|t_\n}, \coninput_{k|t_\n}, \Delta t_k}& \lambda \nu + \dist(\state_{\horizonmid|t_\n}, \gostateOpti[\n]) \\
    \mathrm{s.t.}~&\state_{k+1|t_\n} = \dyn_d(\state_{k|t_\n}, \coninput_{k|t_\n}, \Delta t_k)  \label{nlp:dynamics}\\
        &\state_{k|t_\n} \in \Domain, \coninput_{k|t_\n} \in \mathcal{\inputSpace}\label{nlp:stateInputConstraint}\\
    &\lbconst[\n](\state_{k|t_\n}) \geq 0, ~k = 0, 1, \hdots, H-1  \label{nlp:pessiConstraint}\\
    % & \epsconst - (\sumMaxwidth[]{\n}(\state_{\horizonmid}) - \lbconst[\n](\state_{\horizonmid})) \leq \nu,~\nu \geq 0 \\
    & \epsconst - \sumMaxwidth[]{\n}(\state_{\horizonmid|t_\n}) \leq \nu,~\nu \geq 0 \label{nlp:uncertainityConstraint}\\
    & \sum\nolimits_{k=0}^{H-1} \Delta t_k \leq T ,~ \Delta t_k \geq 0\label{nlp:timeConstraint}\\
    & \state_{0|t_\n} = \state(t_{\n}), \state_{H|t_\n} \in \safeInit{\n}. \label{nlp:InitReturnConstraint}
\end{align}
\end{subequations}
% \begin{subequations}
%
% \begin{align}
%     &\min_{\state(t(k)), \coninput(t(k)), t(k), \Delta t(k)} \cost \\
%    \!\!\!\!\!\!\! \text{s.t.}~\forall t_k \in [t_0, t_{H-1}]:~
%     &\lbconst[\n](\state(t_k)) \geq 0\\
%     & \ubconst[\n](\state(t_{\horizonmid})) - \lbconst[\n](\state(t_{\horizonmid})) \geq \epsconst \\
%     & t_k = t_{k-1} + \Delta t_k, ~t_0 = 0\\
%     & \sum\nolimits_k \Delta t_k \leq T ,~ \Delta t_k \geq 0\\
%     &\state(t_{k+1}) = \dyn_d(\state(t_k), \coninput(t_k), \Delta {t_k})  \\
%     & \state(t_0) \in \safeInit{\n}, \state(t_H) \in \safeInit{\n} \\
%     & \state(t_k) \in \Domain, \coninput(t_k) \in \mathcal{\inputSpace}.
% \end{align}
% \end{subequations}
% \begin{align}
%         &\min_{\state(t_k), \coninput_{t_k}, \Delta t_k, z} \cost \\
%     \textit{s.t.}~~\forall t_k \in [t_0, t_{H-1}]:~
%     &\lbconst[\n-1](z) \geq 0~~ \tag{Safe set}\\
%     & z = \state(t_{H/2}) \tag{Sampling point}\\
%     & \ubconst[\n-1](z) - \lbconst[\n-1](z) \geq \epsconst \tag{Uncertain set}\\
%     & \sum\nolimits_k \Delta t_k = T \\
%     &\state(t_{k+1}) = \dyn_d(\state(t_k), \coninput_{t_k}, \Delta {t_k}) \tag{Dynamics} \\
%     & \state(t_0) = \state(t_k), \state(t_H) \in X_{\n} \tag{Initialization and returnable constraint}\\
%     & \state(t_k) \in \Domain, \coninput_{t_k} \in \mathcal{\inputSpace} \tag{Space constraints}
% \end{align}
% $\Delta t(k)$ is the time between two horizons and 
% x(H')
\looseness -1 Here, $f_d(x, a, \Delta t)$ is the integration of the continuous time dynamics \eqref{eq: dyn} over a time interval $\Delta t$. The integration is performed using the $4^{th}$ order Runge-Kutta (\textsc{\small{RK4}}) method. 
In the \nlp, \eqref{nlp:dynamics} yields a predicted state sequence, \eqref{nlp:stateInputConstraint}-\eqref{nlp:pessiConstraint} ensure that this sequence satisfies state, input, and the pessimistic safety constraints, and \eqref{nlp:InitReturnConstraint} ensures that this sequence starts at the current state $\state(t_\n)$ and ends in the safe set $\safeInit{\n}$. As per sampling strategy \eqref{eq:exact_prob}, we aim to find
the optimal time $t'\in[0,T]$ for the agent to collect a sample. For this, we fix a sampling horizon $\horizonmid = \lfloor H/2 \rfloor$ w.l.o.g. and the solution to \nlp~\eqref{eqn:sagempc_nlp} results in $\state_{\horizonmid|t_\n}$ corresponding to $\tilde{\state}(t') \in \Rcontoper[\{\state(t_{\n})\},\safeInit{\n}]{\pessiSet[]{\n}}$ in Problem \eqref{eq:exact_prob}. %with $\sum\nolimits_{k=0}^{\horizonmid} \Delta t_k = t'$ since $\Delta t_k$ are also optimization variables. 
The constraint \eqref{nlp:uncertainityConstraint} ensures that the uncertainty at $\state_{\horizonmid|t_\n}$ is larger than $\epsilon$. Note that, we also optimize over horizon time while limiting the overall time to $T$ in \eqref{nlp:timeConstraint}. Due to the ``control invariance" property (\cref{assump: safe-init}) the agent can remain in $\safeInit{\n}$ for the excess time $T-t(H)$, which results in the equivalent reachable returnable set to that of fixed $T$.
% however since $\X_n$ is a set of steady states and the robot can stay for access time in $\safeInit{\n}$, the reachable returnable set is equivalent.

\looseness -1 We use a Sequential Quadratic Programming (\SQP) method to solve the Problem \eqref{eqn:sagempc_nlp} with the Acados framework \cite{verschueren2022acados}, 
% which utilizes casadi functions for symbolic modelling and automatic differentiation
implemented in Python. We model \GP\!\!s in GPytorch \cite{gardner2018gpytorch} and provide the gradient of $\lbconst[\n](\cdot), \ubconst[\n](\cdot)$ to each \textsc{\small{QP}} iteration externally for improved efficiency as proposed in \cite{lahr2023zero}. %We have made the code and the videos public.
% In contrast to building by intersecting the confidence intervals defined in \cref{sec:backgroundGP},
For simplicity, we use $\lbconst[\n](\state) = \muconst[\n-1](\state) - \sqrt{\betaconst[\n]} \sigconst[\n-1](\state)$, $\ubconst[\n](\state) = \muconst[\n-1](\state) + \sqrt{\betaconst[\n]} \sigconst[\n-1](\state)$ and $\sqrt{\beta_\n} = 3, \forall n \geq 1$ as done in \cite{safe-bo-sui15,sui2018stagewise,berkenkamp2023bayesian,turchetta2016safemdp,turchetta2019safe,prajapat2022near}. 
%In practice, it satisfies $\constrain(x)\in [\lbconst[\n](x),\ubconst[\n](x)]$ from \cref{coro:hp_bounds} and ensures safety as well as efficient exploration \cite{prajapat2022near,turchetta2019safe}.
%This reduces the overhead of storing all the confidence bounds and is in general true for well-defined models.  

% \cite{fiedler2021practical}

\looseness -1 In our experimental setup, the \emph{a-priori} unknown constraint~\eqref{eq:def_constraint} only depends on the cartesian coordinates $[x_p,y_p]$. To create environments, we randomly sample the constraint functions from a \GP and define the constraint sets on the cartesian space where unsafe regions represent e.g., the obstacles lying in the 2D space. We use squared exponential and sufficiently smooth Mat\'ern kernels which satisfy \cref{assump:q_RKHS,assump:sublinear}. 
In the environments, we find a safe initial position $p_s = [\xpos,\ypos]^\top$ and define a prior such that $\lbconst[0](p_s) \geq 0$.
\vspace{-0.1em}
\subsection{Comparison of safe exploration techniques} 
\label{sec:comparison}
\begin{figure}%\vspace{-0.2em}
\hspace{-1.50mm}
    \begin{subfigure}[t]{0.5\columnwidth}
  	\centering
  	\includegraphics[scale=0.75]{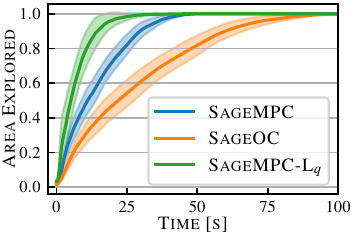}
    \caption{Fraction of area explored}
    \label{fig:maximum_domain_SE}
    \end{subfigure}
    \hspace{-3.00mm}
~
    \begin{subfigure}[t]{0.5\columnwidth}
  	\centering
  	\includegraphics[scale=0.75]{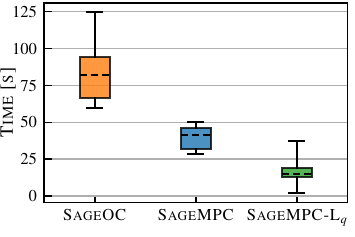}
    \caption{Exploration end time statistics}
    \label{fig:maximum_domain_SE_time}
    \end{subfigure} \vspace{-0.1em}
\caption{\looseness -1 Comparison of \seocp, \sempc and $\sempcL$ on the task of maximum safe domain exploration. (a) Shows the area explored normalized with the maximum explored area along with  95\% confidence bounds across different environments. (b) Shows box plot for the finish time of the exploration process. While all variants achieve the maximum domain exploration, techniques such as exploiting the Lipschitz bound and re-planning in receding horizon style make the algorithm more efficient.} \vspace{-1.4em}
\end{figure}
\looseness 0 In this section, we validate our theory and compare different techniques of safe guaranteed exploration framework on a task of maximum domain exploration and goal-directed safe exploration. For this, we consider a unicycle robot whose non-linear continuous-time dynamic model is,\vspace{-1mm}
\begin{equation}
\label{eqn:unicycle_dynamics}
 \!\!\!\! \dxpos = v \cos(\theta), \, \dypos = v \sin(\theta), \, \dot{v} = \alpha, \, \dot{\theta} =  \omega, \, \dot{\omega} = \psi \! \vspace{-1mm}
\end{equation}
where $[\xpos, \ypos]$ represent the position in cartesian space, while $v$ stands for velocity, $\theta$ denotes the absolute heading angle and $\omega$ represents angular velocity. The control inputs $\alpha,\psi$ correspond to linear and angular acceleration. The system is locally controllable and thus satisfies \cref{assump: equal_boundary}, as shown in \cref{apx:assum_local_control}. 
%[-60,60]\times [-60,60] The control inputs $a = [F/m;\alpha] \in \A$ correspond to linear and angular acceleration. 
% The \emph{a-priori} unknown constraints \eqref{eq:def_constraint} can be defined on any subset of the state space and we define it on configuration space, i.e., directly on the position of the robots.
% The \emph{a-priori} unknown constraints \eqref{eq:def_constraint} can be defined on any subset of the state space. 
We define the terminal set $\safeInit{\n}\coloneqq\{\state_s\}$, where $\state_s$ is a safe steady state corresponding to the safe position $p_s$. %which satisfies \cref{assump: safe-init}.
%We find a safe starting position $p_s = [\xpos,\ypos]^\top$ and define a prior (by collecting a measurement) such that $\lbconst[0](p_s) \geq 0$ which satisfies \cref{assump: safe-init}.
% $\lbconst[0](p_s) \geq 0 \implies \constrain(p_s) \geq 0$
% \manish{shorten it out}
% Explain the model and the constant.
% \emph{ii)} \seocp-Lip, uses expanded pessimistic safe set based on Lipschitz continuity (\cref{sec:SE_expanders}),
We compare three variants \emph{i)} \seocp, safe guaranteed exploration using optimal control presented in \cref{alg:seocp}/~\ref{alg:GOsampling}, %depending on the task of either maximum domain or goal-directed safe exploration,
\emph{ii)} \sempc, presented in \cref{alg:go-se-mpc}, and \emph{iii)} $\sempcL$, which exploits Lipschitz continuity to obtain an enlarged pessimistic constraint set (\cref{rem:seMPCwithexpanders}). We utilize the true Lipschitz constant of the constraint function. % \emph{v)} \sempc-Xn, which uses the safe set that grows with the pessimistic set, i.e., $\safeInit{\n} = \{x \in \X^{\epsilon}| \exists a \in \A^{\epsilon}: v = 0, \omega=0, x \in \pessiSet[,\epsconst]{\n}\}$. For a locally controllable system, $\safeInit{\n}$ the set of steady states satisfies \cref{assump: safe-init} (proof in \cref{apx:assum_local_control}). 
All of the variants use the same horizon $H=80$ and $T=1$s. We compared these variants on both the tasks of maximum domain exploration and goal-directed safe exploration explained below. %, first maximum domain exploration and secondly goal-directed safe exploration.

\subsubsection{Maximum safe domain exploration}
\looseness -1 The robot is tasked to explore the domain and achieve Objective~\eqref{eqn:objective} up to an $\epsilon$ tolerance. A higher $\epsilon$ naturally leads to faster exploration but might result in a smaller explored domain. To achieve the objective faster, each of the variants determines the highest uncertainty location in the reachable returnable pessimistic set. Then they sample at that location, obtaining a measurement corrupted with $\noisevar = 10^{-4}$ Gaussian noise. 

\looseness -1 We utilize 10 different environments, each having a randomly generated constraint function, and run 4 instances for each environment resulting in a total of 40 runs. \Cref{fig:maximum_domain_SE} shows the fraction of the domain explored over time. %In the plot, confidence bounds represent a 95\% confidence interval due to different constraint functions. 
All the variants of \seocp achieve maximum domain exploration while being safe always. 
\Cref{fig:maximum_domain_SE_time} shows statistics regarding the finish time of the exploration process.
We observe that \sempc is more efficient as compared to \seocp, since the robot instantaneously re-plans with every new information and does not return to the terminal set. $\sempcL$ additionally takes advantage of Lipschitz continuity, thereby obtaining an enlarged pessimistic set, which accelerates the exploration process. More importantly, $\sempcL$ samples only in the expanders $\expansion[\n](\safeInit{\n}, \LpessiSet[]{\n})$, which allows for early termination of the exploration when no location in $\expansion[\n](\safeInit{\n}, \LpessiSet[]{\n})$ has uncertainty at least $\epsilon$.

% Additionally, $\sempcL$ 
% exploits the Lipschitz continuity and utilizes an expanded pessimistic set which speeds up the process but more importantly samples only in the expander and finishes the process much earlier as soon as the expander does not have samples with uncertainty greater than $\epsilon$.
%due to the known Lipschitz constant and hence can explore faster than their counterparts \sempc and \seocp respectively. 
%\sempc-Xn is almost equivalent to \sempc in \ref{fig:maximum_domain_SE} however, the difference arises in computation time and much lower horizons that \sempc-Xn can exploit which we demonstrate in \cref{sec:car_demo}.

% \cref{}
% how data is sampled
% explain the plot, what we plot, and how we compute that and why different algorithm varies
\begin{figure}
\setlength{\abovecaptionskip}{8pt}
    \centering
    \includegraphics[width=0.8\columnwidth]{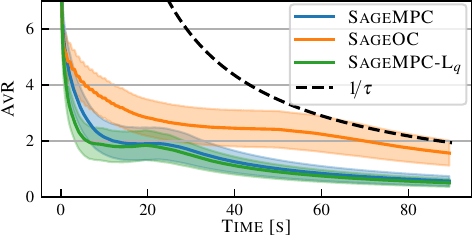}
    \caption{\looseness -1 Comparison of \seocp, \sempc and $\sempcL$ on the task of goal-directed safe exploration. The figure shows average regret (\AvR) defined in \cref{eqn:goal_objective} with 1-sigma confidence bounds across different environments.  
    While all variants minimize the loss and achieve sublinear regret, i.e., \AvR is bounded by the dashed line ($1/\tau$), techniques such as using an enlarged pessimistic set using Lipschitz bound and re-planning in receding horizon style make the algorithm more efficient.}
    \label{fig:goal_directed_SE} \vspace{-1.4em}
\end{figure}
\begin{figure*}[]
\centering
% \hspace{-1.00em}
    \begin{subfigure}[t]{0.33\linewidth}
    \centering
    \includegraphics[width=1\textwidth]{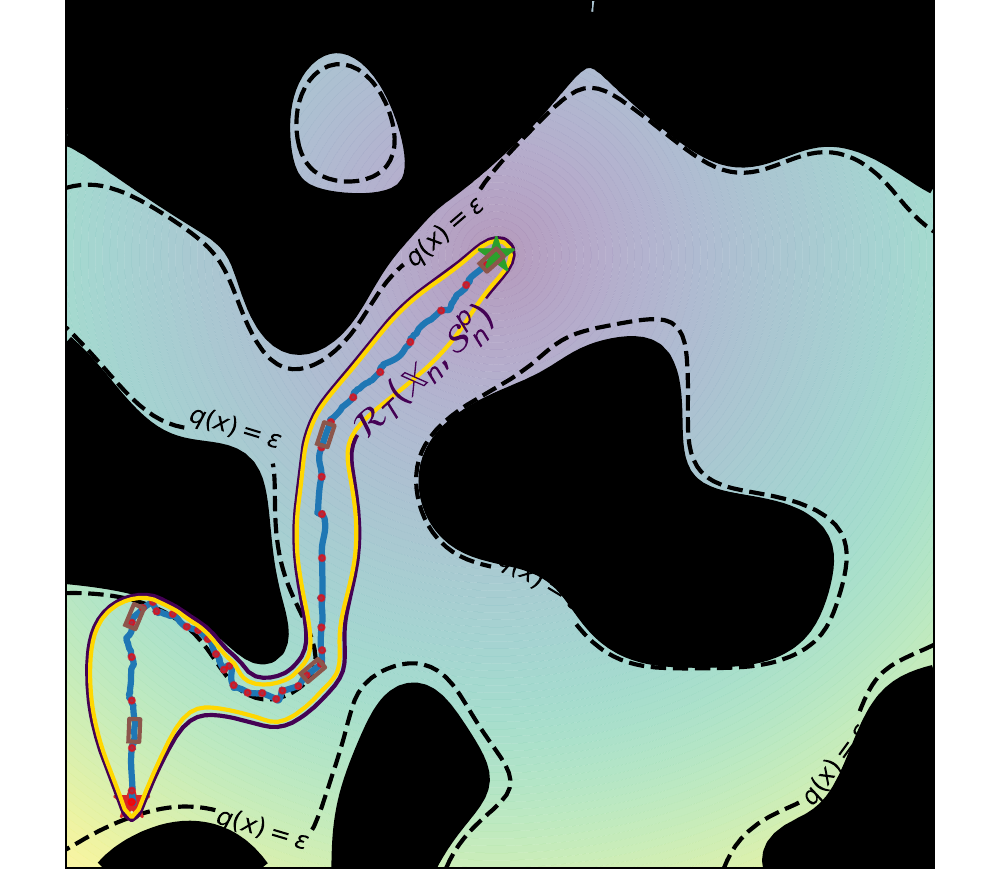}
    \caption{A cluttered environment}
    \label{fig:multi-constraints}
    \end{subfigure}
    \hspace{-3.00mm}
~
    \begin{subfigure}[t]{0.33\linewidth}
    \centering
    \includegraphics[width=1\textwidth]{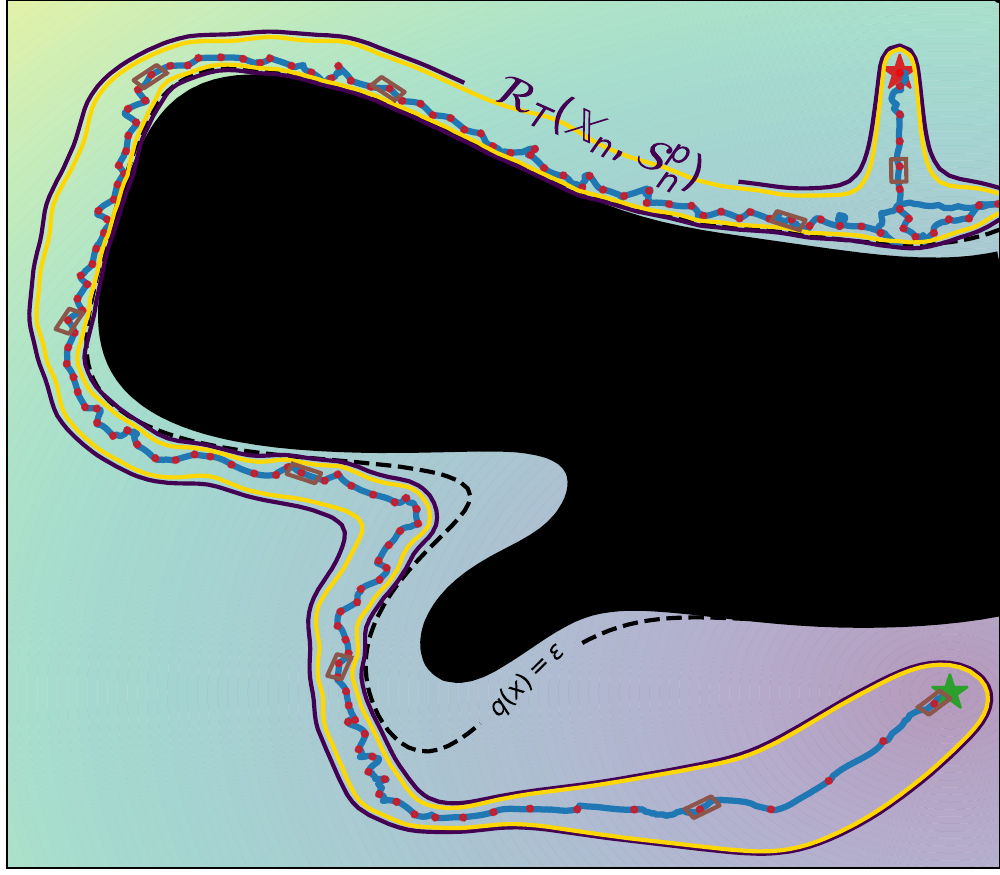}
    \caption{Environment with a large obstacle}
    \label{fig:large-constraints}
    \end{subfigure}
    \hspace{-4.00mm}
~
\hspace{-2.00mm}
    \begin{subfigure}[t]{0.33\linewidth}
    \centering
    \includegraphics[width=1\textwidth]{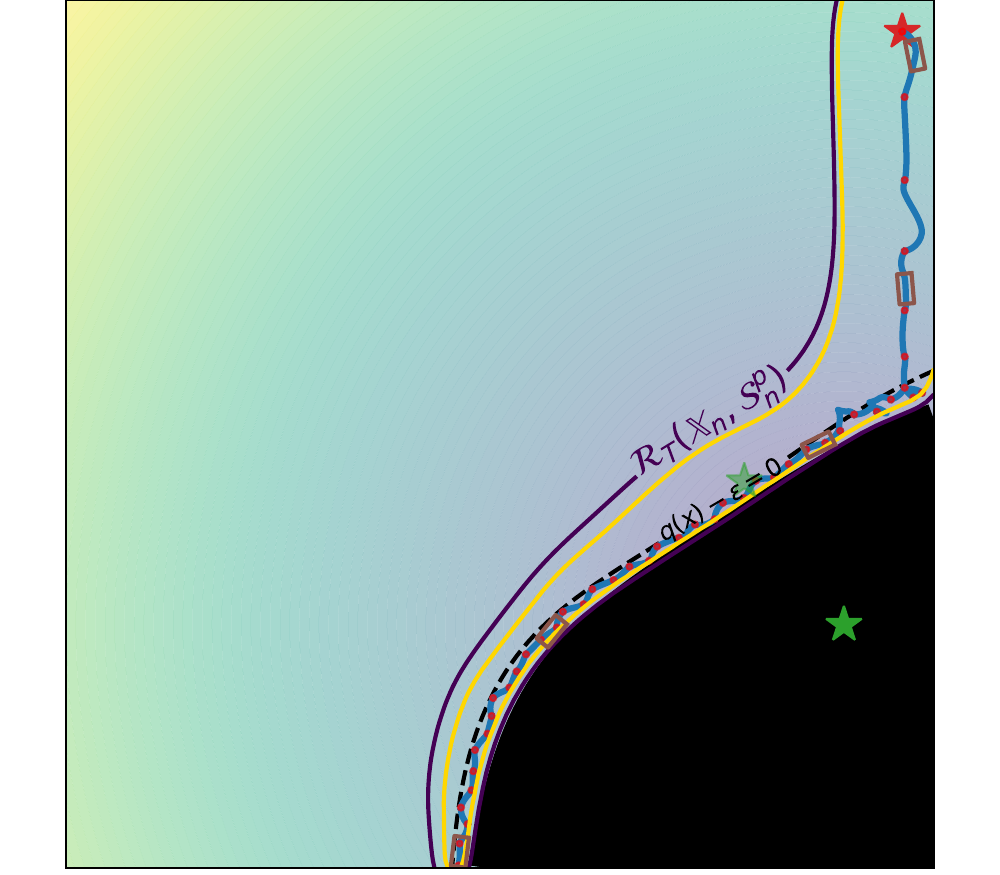}
    \caption{Environment with non-reachable initial goal}
    \label{fig:goal-change}
    \end{subfigure} \vspace{-0.3em}
    \caption[
Demonstration of the task of goal-directed safe exploration by a car in unknown, challenging environments. The car, depicted by the brown box, starts safely at the red star and needs to navigate through a priori unknown obstacles represented by the black region to minimize the loss function, shown by contours, which captures the task of reaching the goal location marked by the green star. To explore the region, it collects measurements depicted by the red points and gradually grows its reachable returnable pessimistic set, shown by the black-yellow lines. Throughout the process, the car does not violate any of the safety-critical constraints, and the resulting safe trajectory traversed by the car is depicted by the blue line.]{\looseness -1 Demonstration of the task of goal-directed safe exploration by a car in unknown, challenging environments. The car, depicted by the \brown{brown box} ({\scalebox{0.2}{\tikzset{every picture/.style={line width=0.75pt}} %set default line width to 0.75pt        

\begin{tikzpicture}[x=0.75pt,y=0.75pt,yscale=-1,xscale=1]
%uncomment if require: \path (0,300); %set diagram left start at 0, and has height of 300
%Shape: Rectangle [id:dp13241406343927764] 
\draw  [color={rgb, 255:red, 140; green, 86; blue, 75 }  ,draw opacity=1 ][line width=5]  (198,29) -- (246,29) -- (246,53) -- (198,53) -- cycle ;
\end{tikzpicture}}}), starts safely at the \red{red star} ({\scalebox{0.15}{\tikzset{every picture/.style={line width=0.75pt}} %set default line width to 0.75pt        

\begin{tikzpicture}[x=0.75pt,y=0.75pt,yscale=-1,xscale=1]
%uncomment if require: \path (0,300); %set diagram left start at 0, and has height of 300

%Shape: Star [id:dp2815821962957654] 
\draw  [color={rgb, 255:red, 214; green, 39; blue, 40 }  ,draw opacity=1 ][fill={rgb, 255:red, 214; green, 39; blue, 40 }  ,fill opacity=1 ] (137,110) -- (145.46,131.93) -- (172.19,132.11) -- (150.68,145.85) -- (158.75,167.89) -- (137,154.44) -- (115.25,167.89) -- (123.32,145.85) -- (101.81,132.11) -- (128.54,131.93) -- cycle ;

\end{tikzpicture}}}) and needs to navigate through a-priori unknown obstacles, represented by the black region ({\scalebox{0.15}{\tikzset{every picture/.style={line width=0.75pt}} %set default line width to 0.75pt        

\begin{tikzpicture}[x=0.75pt,y=0.75pt,yscale=-1,xscale=1]
%uncomment if require: \path (0,300); %set diagram left start at 0, and has height of 300

%Shape: Rectangle [id:dp5993345275377593] 
\draw  [draw opacity=0][fill={rgb, 255:red, 0; green, 0; blue, 0 }  ,fill opacity=1 ] (509,201) -- (585,201) -- (585,252) -- (509,252) -- cycle ;

\end{tikzpicture}}}), to minimize the loss function, shown by contours ({\scalebox{0.15}{\input{images/countours}}}), which captures the task of reaching the goal location marked by the \green{green star} ({\scalebox{0.15}{% \tikzset{every picture/.style={line width=0.75pt}} %set default line width to 0.75pt        

\begin{tikzpicture}[x=0.75pt,y=0.75pt,yscale=-1,xscale=1]
%uncomment if require: \path (0,300); %set diagram left start at 0, and has height of 300
%Shape: Star [id:dp7861760089828993] 
\protect\draw  [color={rgb, 255:red, 44; green, 160; blue, 44 }  ,draw opacity=1 ][fill={rgb, 255:red, 44; green, 160; blue, 44 }  ,fill opacity=1 ] (301,110) -- (309.46,131.93) -- (336.19,132.11) -- (314.68,145.85) -- (322.75,167.89) -- (301,154.44) -- (279.25,167.89) -- (287.32,145.85) -- (265.81,132.11) -- (292.54,131.93) -- cycle ;
\end{tikzpicture}}}). To explore the region, it collects measurements depicted by the \red{red points} ({\scalebox{0.15}{\tikzset{every picture/.style={line width=0.75pt}} %set default line width to 0.75pt        

\begin{tikzpicture}[x=0.75pt,y=0.75pt,yscale=-1,xscale=1]
%uncomment if require: \path (0,300); %set diagram left start at 0, and has height of 300
\path (450,180);
%Shape: Circle [id:dp25926166692871666] 
\draw  [color={rgb, 255:red, 214; green, 39; blue, 40 }  ,draw opacity=1 ][fill={rgb, 255:red, 214; green, 39; blue, 40 }  ,fill opacity=1 ] (434,162.5) .. controls (434,158.36) and (437.36,155) .. (441.5,155) .. controls (445.64,155) and (449,158.36) .. (449,162.5) .. controls (449,166.64) and (445.64,170) .. (441.5,170) .. controls (437.36,170) and (434,166.64) .. (434,162.5) -- cycle ;

\end{tikzpicture}}}) and gradually grows its reachable returnable pessimistic set, $\Rcontoper[\safeInit{\n}]{\pessiSet[]{\n}}$, shown by the black-\yellow{yellow} lines (\!\!{\scalebox{1}{\tikzset{every picture/.style={line width=0.75pt}} %set default line width to 0.75pt        

\begin{tikzpicture}[x=0.75pt,y=0.1pt,yscale=-0.5,xscale=0.2]
%uncomment if require: \path (0,300); %set diagram left start at 0, and has height of 300
%Curve Lines [id:da9007891112439419] 
\draw [color={rgb, 255:red, 255; green, 215; blue, 0 }  ,draw opacity=1 ][line width=1]    (290,226) .. controls (312,182) and (310,265) .. (339,220) ;
%Curve Lines [id:da868432366069946] 
\draw [color={rgb, 255:red, 44; green, 31; blue, 78 }  ,draw opacity=1 ][line width=1]    (289,256) .. controls (311,212) and (309,294) .. (338,250) ;
\end{tikzpicture}}}\!\!). Throughout the process, the car does not violate any of the safety-critical constraints, and the resulting safe trajectory traversed by the car is depicted by the \blue{blue line} (\!\!{\scalebox{1}{\tikzset{every picture/.style={line width=0.75pt}} %set default line width to 0.75pt        
\begin{tikzpicture}[x=0.75pt,y=0.1pt,yscale=0.5,xscale=0.2]
%uncomment if require: \path (0,300); %set diagram left start at 0, and has height of 300
%Curve Lines [id:da37335596649167413]
\draw [color={rgb, 255:red, 31; green, 119; blue, 180 }  ,draw opacity=1 ][line width=1]    (94,265) .. controls (116,221) and (114,303) .. (143,259) ;
\end{tikzpicture}}}\!\!).}
\vspace{-1.3em} \label{fig:challenging_env}
\end{figure*} 
\subsubsection{Goal-directed safe exploration}
\looseness -1 The robot is tasked to move to a pre-specified goal $\state_g$ by minimizing the loss $\utility(\state)\coloneqq \|\state - \state_g\|^2$. %which is a Euclidean distance from , within $\epsilon$ accurate true reachable returnable set, and the safe exploration is a consequence of the objective. 
We consider 10 diverse environments with goal and constraint functions being randomly generated. The goal $\state_g$ may be safe or unsafe with respect to unknown constraints. If it is unsafe, \seocp will identify that it is not possible to reach it safely and go to the closest point which is in the safe reachable returnable set.

\looseness -1 %We solve Problem~\eqref{eq:goal_sampling} to determine $\gostateOpti[\n]$ and provide a safe path evaluated with the discrete domain to warm start Problem~\eqref{eq:goal_sampling}. 
In \eqref{eq:exact_prob}, we choose the $\dist(\state,\gostateOpti[\n]) = \|\state - \gostateOpti[\n]\|$, which returns a sampling location closer to $\gostateOpti[\n]$ while ensuring uncertainty at least $\epsilon$. We run 4 instances for each of the environments to account for variations in noise, resulting in a total of 40 runs. \Cref{fig:goal_directed_SE} shows the average regret (\cref{eqn:goal_objective}) for each algorithm over time. %In the plot, confidence bounds represent one sigma standard deviation resulting from diverse environments. 
All the variants of \seocp reach the desired best safe position and achieve sublinear regret. We observe that \sempc is more efficient as compared to \seocp, since it re-plans immediately with every new information. $\sempcL$ additionally utilizes an enlarged pessimistic set obtained using Lipschitz continuity, which makes the process more efficient. 
% Notably, in the goal-direct task the agent generally samples near the boundary of the reachable returnable pessimistic set to reach the goal faster and hence expanders $\expansion[\n](\safeInit{\n}, \LpessiSet[]{\n})$ do not help significantly as compared to that in the maximum domain exploration task.
Notably, in the goal-direct task the agent generally samples near the boundary of the reachable returnable pessimistic set to reach the goal faster and hence the benefits of expanders $\expansion[\n](\safeInit{\n}, \LpessiSet[]{\n})$ are small compared to that in the maximum domain exploration task.
% \red{ \sempc-Lip and \seocp-Lip utilize an expanded pessimistic set due to the known Lipschitz constant and hence are able to explore faster as compared to their counterparts \sempc and \seocp respectively.} %\sempc-Xn is almost equivalent to \sempc in \ref{fig:goal_directed_SE}; however, the difference arises in computation time and much lower horizons that \sempc-Xn can exploit which we demonstrate in \cref{sec:car_demo}.

\vspace{-0.4em}
% \subsection{Demonstration with car dynamics}
\subsection{\sempc with car dynamics in challenging environments }
\label{sec:demo_car_dynamics}
\label{sec:car_demo}

\looseness -1 In this section, we demonstrate \sempc, with car dynamics in challenging a-priori unknown environments. We model the car dynamics with a non-linear bicycle model,
% \begin{align*}
% \dxpos &= v \cos(\theta + \beta) \\ 
% \dypos&= v \sin(\theta + \beta)\\
% \dot{\theta} &= \frac{v}{l_r} \sin(\beta) \numberthis \label{eqn:car_dynamics} \\
% \dot{v} &= a \\
% \beta &=\tan^{-1}\left(\frac{l_r}{l_f + l_r} \tan(\delta)\right)
% \end{align*}
\begin{align*}
\dxpos = v \cos(\theta + \beta), ~&\dypos= v \sin(\theta + \beta), ~\dot{\theta} = \frac{v}{l_r} \sin(\beta) \\
 \dot{v} = \alpha,~\beta &=\tan^{-1}\left(\frac{l_r}{l_f + l_r} \tan(\delta)\right) \numberthis \label{eqn:car_dynamics}
\end{align*}
% where the symbols have usual meanings (\cref{sec:comparison_seocp}). 
which includes slip angle $\beta$, distance from the centre of gravity to the rear and the front wheel represented by $l_f=1.105$ and $l_r=1.738$ respectively. The control inputs are the front steering angle $\delta$ and the linear acceleration $\alpha$. In contrast to the unicycle model, the car cannot rotate on the spot and needs to turn with a wider radius. This makes the safe exploration interesting as the algorithm needs to plan curved paths taking dynamics into account. 

% \looseness -1 In contrast to \cref{sec:comparison}, \sempc grows the initial safe set $\safeInit{\n}$ with the pessimistic set, i.e., $\safeInit{\n} = \{x \in \X^{\epsilon}| \exists a \in \A^{\epsilon}: v = 0, \omega=0, x \in \pessiSet[,\epsconst]{\n}\}$. This simplifies the computation of a safe retunable path and allows us to use much smaller horizons. With the system \eqref{eqn:car_dynamics} begin locally controllable, the set $\safeInit{\n}$ defined as the set of steady states satisfies \cref{assump: safe-init} as shown in \cref{apx:assum_local_control}. 

\looseness -1 We consider the task of goal-directed safe exploration on challenging instances as shown in \Cref{fig:challenging_env}. The car minimizes the loss $\utility(\state) \coloneqq \|\state - \state_g\|^2$ from a pre-specified goal $\state_g$. 
% In contrast to \cref{sec:comparison}, \rev{we} grow the initial safe set $\safeInit{\n}$ by defining it as the set of steady states within the pessimistic set, see \cref{eq:growing_safe_set} for details. 
In contrast to \cref{sec:comparison}, the terminal set $\safeInit{\n}$ is defined as the set of steady states within the pessimistic set, see \cref{eq:growing_safe_set} for details, which grows the terminal set $\safeInit{\n}$ in \sempc.
The system \eqref{eqn:car_dynamics} is locally controllable and thus the set $\safeInit{\n}$ \eqref{eq:growing_safe_set} satisfies \cref{assump: safe-init}, as shown in \cref{apx:assum_local_control}. Note that the entire cartesian space is a steady state with $v=0$, which simplifies the computation of a safe retunable path since it can be just a trajectory slowing down the car to stop in the pessimistically safe set instead of a path back to the starting location. This allows us to use \sempc with a relatively small horizon of $H=25$. On an Intel i7-11800H @ 2.30GHz processor, one \SQP-iteration takes on average $9.82 \pm 1.16$ ms. We restrict the maximum number of \SQP iterations to $20$ and the maximal solve time for \nlp \eqref{eqn:sagempc_nlp} over all runs is $258.7$ ms which occurs in the environment with large obstacle (\cref{fig:large-constraints}) due to increasing number of data samples with $\n^\prime=146$.

% \rev{In contrast to \cref{sec:comparison}, the safe set $\safeInit{\n}$ is defined as the set of steady states within the pessimistic set, see \cref{eq:growing_safe_set} for details.  which allows for growing $\safeInit{\n}$ in \sempc.}
% The system \eqref{eqn:car_dynamics} is locally controllable and thus the set $\safeInit{\n}$ \eqref{eq:growing_safe_set} satisfies \cref{assump: safe-init}, as shown in \cref{apx:assum_local_control}.
% we don't have to plan a safe path back to the initial safe location and the safe returnable path is just a trajectory slowing down the car to stop in the safe reachable return set. 

% , i.e., $\safeInit{\n} = \{x \in \X^{\epsilon}| \exists a \in \A^{\epsilon}: v = 0, \omega=0, x \in \pessiSet[,\epsconst]{\n}\}$. 
% This  and allows us to use much smaller horizons. 

% the \cref{alg:go-se-mpc} results in average computation times of around $0.01$ sec with the maximum number of \SQP iterations restricted to $20$.
%Algorithm X results in average computation times of around ... 

\looseness -1 In \Cref{fig:multi-constraints}, the car navigates an unknown environment cluttered with multiple objects and achieves the task of reaching a location marked with a green star. In \Cref{fig:large-constraints}, the car needs to explore almost the complete domain to overcome the large obstacle and reach the goal on the other side. Finally in \Cref{fig:goal-change}, the initial goal location is unsafe and the car identifies that it cannot be reached and converges to the goal location in the safely reachable returnable set (faded green star) without exploring the whole domain. 
In all of the above examples, we observe that \sempc plans curvy paths to efficiently reach informative locations while respecting the non-holonomic car dynamics.
%In all of the above examples, given the car dynamics, we observe that \sempc does not have trajectories rotating on the spot and plans curvy paths that sample at informative locations. 
Moreover, throughout the exploration process, the car does not violate the safety-critical constraint and achieves the goal of minimizing the loss. 

\section{Conclusion}
\looseness -1 We propose a novel framework for guaranteeing exploration in finite time while being provably safe at all times for non-linear systems. 
One of the remarkable aspects of the framework is its broad applicability, as it requires only mild assumptions and can be readily employed in various complex real-world scenarios with non-linear dynamics and a-priori unknown domains.
% \seocp uses only mild assumption and hence is fairly general and applicable to many real-world scenarios with complicated non-linear dynamics and unknown domains. 
% We instantiate the paradigm and propose efficient algorithms.
Utilizing this framework, we propose an efficient algorithm, \sempc, by exploiting known Lipschitz bound, incorporating a goal-directed approach for safe exploration, and introducing re-planning in a receding horizon fashion with every new information update. These techniques enhance sample efficiency while maintaining the desired guarantees of the framework, i.e., safe guaranteed exploration in finite time for non-linear systems. We substantiate the practicality of our contributions by numerical experiments on challenging unknown domains with non-linear car dynamics. Our work opens up many possibilities for safe exploration in continuous domains with theoretical guarantees for a wide range of use cases. As future work addressing additional uncertainty in dynamics \cite{prajapat2024towards} and deploying it in real-world exploration experiments would be interesting.

\bibliographystyle{IEEEtran} 
\bibliography{ref}

\appendix
\looseness -1 In the appendix, we discuss locally controllable systems and provide sufficient conditions for \cref{assump: safe-init,assump: equal_boundary} in \cref{apx:assum_local_control}. Later we have auxiliary lemmas and proof for \cref{thm:SE_exp} in \cref{apx:expander_proof} and lastly, for completeness, we included a section on mutual information in \cref{apx:mutual_info}.
\vspace{-1em}
\subsection{Locally controllable systems}
\label{apx:assum_local_control}
\looseness -1 In this section, we introduce locally controllable systems and the path-connected sets. Utilizing these, we prove a sufficient condition for \cref{assump: safe-init} in \cref{lem:pessi_increase_Xn} and show that locally controllable systems satisfy \cref{assump: equal_boundary} in \cref{lem: local_control}. %, i.e., $\safeInit{\n} = \{x \in \X^{\epsilon}| \exists a \in \A^{\epsilon}: \dyn(x, a) = 0, x \in \pessiSet[,\epsconst]{\n}\}$, if is path-connected for locally controllable dynamics satifies \cref{assump: safe-init} in \cref{lem:pessi_increase_Xn}.

% \begin{definition}[local controllability, Definition 3.7.4, \cite{sontag2013mathematical}] Let $\Sigma$ be a \red{topological system}. Let $\zeta$ be any path on an interval $\mathcal{I} = [a,b]$, and denote 
% \begin{align*}
%     \state_0 \coloneqq\zeta(a),     \state_1 \coloneqq \zeta(b) 
% \end{align*}
% The system $\Sigma$ is \textbf{locally controllable along} $\zeta$ if for each $\epsilon>0$ there is some $\delta>0$ such that the following property holds: For each $z,y\in \Domain$ with $d(z,x_0) < \delta$ and $d(y,x_1) < \delta$ there is some path $\rho$ on $\mathcal{I}$ such that 
% \begin{align*}
% \rho(a) = z,     \rho(b) = y ~\textit{and}~ d_{\infty}(\rho, \zeta) <\epsilon.
% \end{align*}\label{def: local_control}
% Note: locally controllable also controls within the same time interval $(b-a)$.
% \end{definition}

\begin{definition}[Local controllability, Definition 3.7.4, \cite{sontag2013mathematical}] \looseness -1 Let $\state(t)$ and $\coninput(t)$ be state and input trajectory, of a dynamical system \eqref{eq: dyn} on an interval $t \in [0,\delta T]$, such that,
\begin{align*}
    \state_0 \coloneqq \state(0), \qquad    \state_1 \coloneqq \state(\delta T) 
\end{align*}
\looseness -1 The system \eqref{eq: dyn} is \textbf{locally controllable along} $\state(t)$ if for each $\epsilon>0$ there is some $\delta>0$ such that the following property holds: For each $\tilde{\state}_0, \tilde{\state}_1 \in \R^\sdim$ with $\|\tilde{\state}_0-\state_0\| < \delta$ and $ \|\tilde{\state}_1-\state_1\| < \delta$ there is some state and input trajectories $\tilde{\state}(t)$, $\tilde{\coninput}(t), t\in [0, \delta T]$ of dynamical system $\dyn$ such that, 
\begin{align*}
\tilde{\state}(0) = \tilde{\state}_0,  \quad   &\tilde{\state}(\delta T) = \tilde{\state}_1,\quad\textit{and}~\\
\|\tilde{\state}(t) - \state(t)\| <\epsilon, ~~\|\tilde{\coninput}(t) &- \coninput(t)\| <\epsilon, ~~\forall t \in [0, \delta T].
\end{align*}\label{def: local_control}\vspace{-1em}
% Note: $\rho$ and $\zeta$ controls within the same time interval $(l-k)$.
\end{definition} \vspace{-1em}
\begin{definition}[Path connected with finite length] \looseness -1  A topological space $\X$ is a path connected with finite length if for any two points $\state_0,\state_1 \in \X$ there is a continuously differentiable map $\zeta:[0,1] \to \X$ %$\zeta(s) \in \X, \forall s  \in [0,1]$ 
 s.t. $\zeta(0) = \state_0$, $\zeta(1) = \state_1$ and $\frac{d \zeta}{d s}\leq L < \infty$.
\label{def: path_connected}
\end{definition}
\looseness -1  $\zeta$ is a continuous function for a path connected set; however, we additionally assume continuously differentiable $\zeta$ with bounded derivative, which implies finite length.
% \manish{I don't think we need continuously differentiable; only continuity should be fine, but then derivative has no meaning. so we need some definition of differentiability so why not continuously differentiable}
% For any path $(t,\zeta(t)), t\in [0,1]$, if $\zeta$ is continuously differentiable and $\frac{d\zeta}{dt} \leq L$ is bounded $t\in [0,1]$ then the length $\int_{0}^1 \frac{d\zeta}{dt} \leq L$ is finite. \red{Since $\zeta$ represent paths, if dynamics $f(x,a)$ is bounded then $\frac{d\zeta}{dt}$ is also bounded?}

% The next lemma proves that the set of steady states in the pessimistic set with local controllability satisfy \cref{assump: safe-init} and let us build growing safe sets $\safeInit{\n}$. 
 \looseness -1 The next lemma proves a sufficient condition for \cref{assump: safe-init}. 
Consider the terminal set $\safeInit{\n}$ after $\n$ samples defined as:
\begin{equation}
\safeInit{\n} \coloneqq \{x \in \X^\epsilon | \exists a \in \A^{\epsilon}\!: \dyn(x, a) = 0, x \in \pessiSet[,\epsconst]{\n} \},\label{eq:growing_safe_set}
\end{equation}
for a fixed $\epsilon>0$ with compact set $\X^{\epsilon} \subseteq \X \ominus B_\epsilon(0)$ and compact set $\A^{\epsilon} \subseteq \A \ominus B_\epsilon(0)$. 
% for a fixed $\epsilon>0$ with compact set $\X^{\epsilon} \subseteq \interior{\X}, \A^{\epsilon} \subseteq \interior{\A}$ where $\X^{\epsilon} = \X \ominus B_\epsilon(0), \A^{\epsilon} = \A \ominus B_\epsilon(0)$. 
\begin{figure}
    \centering
    \setlength{\abovecaptionskip}{3pt}
    \scalebox{0.45}{\tikzset{every picture/.style={line width=0.75pt}} %set default line width to 0.75pt        

\begin{tikzpicture}[x=0.75pt,y=0.75pt,yscale=-1,xscale=1]
%uncomment if require: \path (0,485); %set diagram left start at 0, and has height of 485

%Rounded Rect [id:dp008131126692382029] 
\draw  [fill={rgb, 255:red, 128; green, 128; blue, 128 }  ,fill opacity=0.3 ] (81,142.2) .. controls (81,112.82) and (104.82,89) .. (134.2,89) -- (570.8,89) .. controls (600.18,89) and (624,112.82) .. (624,142.2) -- (624,301.8) .. controls (624,331.18) and (600.18,355) .. (570.8,355) -- (134.2,355) .. controls (104.82,355) and (81,331.18) .. (81,301.8) -- cycle ;
%Shape: Polygon Curved [id:ds8291444761299018] 
\draw  [fill={rgb, 255:red, 177; green, 177; blue, 255 }  ,fill opacity=1 ] (220,100) .. controls (287,97) and (481.33,88.5) .. (537,131) .. controls (592.67,173.5) and (571.67,304.83) .. (550,327.33) .. controls (528.33,349.83) and (351,348.6) .. (302.6,346.2) .. controls (254.2,343.8) and (147,354.2) .. (121,332.2) .. controls (95,310.2) and (93,193) .. (112,152) .. controls (131,111) and (153,103) .. (220,100) -- cycle ;
%Shape: Polygon Curved [id:ds6250653327682998] 
\draw  [fill={rgb, 255:red, 245; green, 166; blue, 35 }  ,fill opacity=0.72 ] (149,122) .. controls (185,96) and (363,104) .. (418,115) .. controls (473,126) and (525,175) .. (509,227) .. controls (493,279) and (456,305) .. (419,315) .. controls (382,325) and (151.8,363.8) .. (123.4,310.6) .. controls (95,257.4) and (113,148) .. (149,122) -- cycle ;
%Shape: Polygon Curved [id:ds7252264389249328] 
\draw  [fill={rgb, 255:red, 108; green, 215; blue, 108 }  ,fill opacity=0.6 ] (149,122) .. controls (174,100) and (347,100) .. (412,114) .. controls (477,128) and (518.2,168) .. (494.6,217) .. controls (471,266) and (439.4,270.2) .. (423,275) .. controls (406.6,279.8) and (215,287) .. (172,260) .. controls (129,233) and (124,144) .. (149,122) -- cycle ;
%Curve Lines [id:da6613104330256601] 
\draw [color={rgb, 255:red, 80; green, 227; blue, 194 }  ,draw opacity=0.9 ][line width=6]    (168.67,236) .. controls (178.67,206.67) and (211.88,176.93) .. (267.1,208.8) .. controls (322.33,240.67) and (390.67,251.27) .. (378,225.8) .. controls (365.33,200.33) and (426.33,183.67) .. (467,153) ;
%Flowchart: Connector [id:dp07429023818191816] 
\draw  [line width=2.25]  (268.2,208.8) .. controls (268.2,207.97) and (267.71,207.3) .. (267.1,207.3) .. controls (266.5,207.3) and (266.01,207.97) .. (266.01,208.8) .. controls (266.01,209.63) and (266.5,210.3) .. (267.1,210.3) .. controls (267.71,210.3) and (268.2,209.63) .. (268.2,208.8) -- cycle ;
%Curve Lines [id:da26850222842939164] 
\draw [line width=1.5]  [dash pattern={on 5.63pt off 4.5pt}]  (267.1,210.3) .. controls (287.49,226.02) and (384.02,258.9) .. (377,226.8) .. controls (370.16,195.5) and (399.47,198.24) .. (436.63,172.1) ;
\draw [shift={(439.5,170.03)}, rotate = 143.61] [fill={rgb, 255:red, 0; green, 0; blue, 0 }  ][line width=0.08]  [draw opacity=0] (13.4,-6.43) -- (0,0) -- (13.4,6.44) -- (8.9,0) -- cycle    ;
\draw [shift={(331.55,236.3)}, rotate = 194.49] [fill={rgb, 255:red, 0; green, 0; blue, 0 }  ][line width=0.08]  [draw opacity=0] (13.4,-6.43) -- (0,0) -- (13.4,6.44) -- (8.9,0) -- cycle    ;
\draw [shift={(404.3,190.38)}, rotate = 154.16] [fill={rgb, 255:red, 0; green, 0; blue, 0 }  ][line width=0.08]  [draw opacity=0] (13.4,-6.43) -- (0,0) -- (13.4,6.44) -- (8.9,0) -- cycle    ;
%Shape: Circle [id:dp7821091387354739] 
\draw  [line width=3]  (438.6,170.03) .. controls (438.6,169.53) and (439,169.13) .. (439.5,169.13) .. controls (440,169.13) and (440.4,169.53) .. (440.4,170.03) .. controls (440.4,170.53) and (440,170.93) .. (439.5,170.93) .. controls (439,170.93) and (438.6,170.53) .. (438.6,170.03) -- cycle ;
%Shape: Ellipse [id:dp7439658459937415] 
\draw   (218.72,208.8) .. controls (218.72,182.47) and (240.38,161.12) .. (267.1,161.12) .. controls (293.83,161.12) and (315.49,182.47) .. (315.49,208.8) .. controls (315.49,235.13) and (293.83,256.48) .. (267.1,256.48) .. controls (240.38,256.48) and (218.72,235.13) .. (218.72,208.8) -- cycle ;
%Shape: Ellipse [id:dp5485297280172285] 
\draw   (230.38,208.8) .. controls (230.38,189.49) and (246.82,173.83) .. (267.1,173.83) .. controls (287.38,173.83) and (303.83,189.49) .. (303.83,208.8) .. controls (303.83,228.11) and (287.38,243.77) .. (267.1,243.77) .. controls (246.82,243.77) and (230.38,228.11) .. (230.38,208.8) -- cycle ;
%Straight Lines [id:da8810532189838027] 
\draw    (267.1,208.8) -- (267.1,165.24) ;
\draw [shift={(267.1,162.24)}, rotate = 90] [fill={rgb, 255:red, 0; green, 0; blue, 0 }  ][line width=0.08]  [draw opacity=0] (8.93,-4.29) -- (0,0) -- (8.93,4.29) -- cycle    ;
%Straight Lines [id:da36758462347744403] 
\draw    (267.1,208.8) -- (279.06,179.58) ;
\draw [shift={(280.2,176.8)}, rotate = 112.26] [fill={rgb, 255:red, 0; green, 0; blue, 0 }  ][line width=0.08]  [draw opacity=0] (6.25,-3) -- (0,0) -- (6.25,3) -- cycle    ;
%Shape: Circle [id:dp6017996058365931] 
\draw  [line width=3]  (298,226.43) .. controls (298,225.93) and (298.4,225.53) .. (298.9,225.53) .. controls (299.4,225.53) and (299.8,225.93) .. (299.8,226.43) .. controls (299.8,226.93) and (299.4,227.33) .. (298.9,227.33) .. controls (298.4,227.33) and (298,226.93) .. (298,226.43) -- cycle ;
%Curve Lines [id:da40199186724453995] 
\draw [line width=1.5]    (268.2,208.8) .. controls (257.28,258.72) and (266.51,268.33) .. (295.71,229.51) ;
\draw [shift={(298,226.43)}, rotate = 126.19] [fill={rgb, 255:red, 0; green, 0; blue, 0 }  ][line width=0.08]  [draw opacity=0] (11.61,-5.58) -- (0,0) -- (11.61,5.58) -- cycle    ;
%Shape: Circle [id:dp45839715637924616] 
\draw  [line width=3]  (412.6,186.03) .. controls (412.6,185.53) and (413,185.13) .. (413.5,185.13) .. controls (414,185.13) and (414.4,185.53) .. (414.4,186.03) .. controls (414.4,186.53) and (414,186.93) .. (413.5,186.93) .. controls (413,186.93) and (412.6,186.53) .. (412.6,186.03) -- cycle ;
%Shape: Circle [id:dp35147935651466367] 
\draw  [line width=3]  (344.6,239.03) .. controls (344.6,238.53) and (345,238.13) .. (345.5,238.13) .. controls (346,238.13) and (346.4,238.53) .. (346.4,239.03) .. controls (346.4,239.53) and (346,239.93) .. (345.5,239.93) .. controls (345,239.93) and (344.6,239.53) .. (344.6,239.03) -- cycle ;
%Shape: Circle [id:dp8782499998914459] 
\draw  [line width=3]  (381.6,203.03) .. controls (381.6,202.53) and (382,202.13) .. (382.5,202.13) .. controls (383,202.13) and (383.4,202.53) .. (383.4,203.03) .. controls (383.4,203.53) and (383,203.93) .. (382.5,203.93) .. controls (382,203.93) and (381.6,203.53) .. (381.6,203.03) -- cycle ;
%Shape: Circle [id:dp8998855563769683] 
\draw  [line width=3]  (375.6,229.03) .. controls (375.6,228.53) and (376,228.13) .. (376.5,228.13) .. controls (377,228.13) and (377.4,228.53) .. (377.4,229.03) .. controls (377.4,229.53) and (377,229.93) .. (376.5,229.93) .. controls (376,229.93) and (375.6,229.53) .. (375.6,229.03) -- cycle ;

% Text Node
\draw (577.73,120) node [anchor=north west][inner sep=0.75pt]  [font=\Huge]  {$\X$};
% Text Node
\draw (298,119) node [anchor=north west][inner sep=0.75pt]  [font=\LARGE]  {$\Rcontoper[\safeInit{\n}]{\pessiSet[]{\n}}$};
% Text Node
\draw (246.53,205) node [anchor=north west][inner sep=0.75pt]  [font=\LARGE]  {$u$};
% Text Node
\draw (445.27,170.07) node [anchor=north west][inner sep=0.75pt]  [font=\LARGE]  {$v$};
% Text Node
\draw (178,200) node [anchor=north west][inner sep=0.75pt]  [font=\LARGE]  {$\safeInit{\n}$};
% Text Node
\draw (379.2,223.8) node [anchor=north west][inner sep=0.75pt]  [font=\LARGE]  {$\zeta $};
% Text Node
\draw (256,138.4) node [anchor=north west][inner sep=0.75pt]  [font=\LARGE]  {$\bar{\epsilon}$};
% Text Node
\draw (274.6,179.73) node [anchor=north west][inner sep=0.75pt]  [font=\LARGE]  {$\delta $};
% Text Node
\draw (267.1,256.48) node [anchor=north west][inner sep=0.75pt]  [font=\LARGE]  {$\Tilde{x}(t)$};
% Text Node
\draw (282,199) node [anchor=north west][inner sep=0.75pt]  [font=\LARGE]  {$u'$};
% Text Node
\draw (334.31,210) node [anchor=north west][inner sep=0.75pt]  [font=\LARGE]  {$u''$};
% Text Node
\draw (450,300) node [anchor=north west][inner sep=0.75pt]  [font=\LARGE]  {$\Rcontoper[\safeInit{\n}]{\X}$};
% Text Node
\draw (140.6,284) node [anchor=north west][inner sep=0.75pt]  [font=\LARGE]{$\Rcontoper[\safeInit{\n}]{\optiSet[]{\n}}$};

\end{tikzpicture}}
    \caption{\looseness -1 The set $\safeInit{\n}$ (cyan line) is a set of steady states in $\pessiSet[,\epsilon]{\n}$. Consider a path $\zeta$ connecting $u \to v$. \cref{lem:pessi_increase_Xn} shows that for any $u, v \in \safeInit{\n}$, we can reach in finite time for locally controllable $\dyn$.}\vspace{-1.26em}
    \label{fig: conti-safe-init-assump}
\end{figure}
\begin{lemma}\label{lem:pessi_increase_Xn} Suppose the system $\dyn$ is locally controllable (\cref{def: local_control}) and the terminal set $\safeInit{\n}$ \eqref{eq:growing_safe_set} is path-connected with finite length (\cref{def: path_connected}), then $\safeInit{\n}$ satisfies \cref{assump: safe-init}.
\end{lemma}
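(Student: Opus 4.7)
My plan is to verify the four properties in \cref{assump: safe-init} one at a time. Properties 1 (pessimistic safety) and 2 (monotonicity) are immediate from the definition \eqref{eq:growing_safe_set}: the inclusion $\pessiSet[,\epsconst]{\n} \subseteq \pessiSet[]{\n}$ yields $\safeInit{\n} \subseteq \pessiSet[]{\n}$, and since $\lbconst[\n]$ is monotonically non-decreasing in $\n$, the set $\pessiSet[,\epsconst]{\n}$ grows with $\n$, and with it $\safeInit{\n}$.

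For property 3 (control invariance), by construction every $x \in \safeInit{\n}$ admits some $a \in \A^\epsconst$ with $\dyn(x,a) = 0$. I will choose $\kappa_\n(x)$ to be such a steady-state input, so the closed-loop trajectory stays at $x$ for all time and positive invariance of $\safeInit{\n}$ is trivial. Continuity of $\kappa_\n$ follows from a continuous-selection argument applied to the correspondence $x \mapsto \{a \in \A^\epsconst : \dyn(x,a)=0\}$, which is nonempty on $\safeInit{\n}$; in typical mechanical examples (e.g.\ the car model \eqref{eqn:car_dynamics} where steady states correspond to $v=0$, $\alpha=0$) a canonical continuous selection is available directly.

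The main work lies in property 4 (controllability around the terminal set), which I will prove by combining the path-connectedness of $\safeInit{\n}$ with local controllability of $\dyn$. Given $\state_0, \state_1 \in \safeInit{\n}$, \cref{def: path_connected} produces a continuously differentiable curve $\zeta: [0,1] \to \safeInit{\n}$ with $\|\dot\zeta(s)\| \leq L$, each point of which is a steady state. For any fixed $s$, applying \cref{def: local_control} to the trivial trajectory $\bar{\state}(t) \equiv \zeta(s)$ with input $\kappa_\n(\zeta(s))$ yields a radius $\delta(s)$ and time $\delta T(s)$ within which nearby states can be interconnected by a system trajectory that remains in an $\epsilon'$-tube around $\zeta(s)$. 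By compactness of $\zeta([0,1])$ these constants can be taken uniform, $\delta, \delta T > 0$, which lets me partition the curve into $N \leq \lceil L/\delta \rceil$ subarcs of length at most $\delta$, connect consecutive waypoints via local controllability, and concatenate to obtain a trajectory from $\state_0$ to $\state_1$ in total time $\timeX \coloneqq N \cdot \delta T$, bounded uniformly in $\n$ and in the choice of endpoints.

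The hard part is certifying safety of this concatenated trajectory, i.e.\ that it lies in $\pessiSet[]{\n}$ and that the applied input stays in $\inputSpace$; this is precisely where the $\epsconst$-slack baked into \eqref{eq:growing_safe_set} is exploited. Using uniform continuity of $\lbconst[\n]$ on the compact image $\zeta([0,1])$, I will choose $\epsilon'$ small enough that an $\epsilon'$-perturbation decreases $\lbconst[\n]$ by at most $\epsconst$; since $\lbconst[\n](\zeta(s)) \geq \epsconst$ by definition of $\safeInit{\n} \subseteq \pessiSet[,\epsconst]{\n}$, the local trajectory satisfies $\lbconst[\n] \geq 0$ throughout and hence lies in $\pessiSet[]{\n}$. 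The analogous $\epsconst$-buffers built into $\X^\epsconst \subseteq \X \ominus \ball{\epsconst}(0)$ and $\A^\epsconst \subseteq \inputSpace \ominus \ball{\epsconst}(0)$ simultaneously ensure that the state and input constraints hold along every local trajectory, completing the verification of property 4 and thus of \cref{assump: safe-init}.
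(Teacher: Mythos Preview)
Your proposal is correct and follows essentially the same approach as the paper: verify properties 1--3 directly from the definition \eqref{eq:growing_safe_set}, then establish property 4 by partitioning a finite-length path in $\safeInit{\n}$ into segments of length at most $\delta$, invoking local controllability around each steady-state waypoint, and using the $\epsconst$-slack in $\pessiSet[,\epsconst]{\n}$, $\X^\epsconst$, $\A^\epsconst$ to guarantee the concatenated trajectory remains in $\pessiSet[]{\n}$ with admissible inputs, yielding the same time bound $\timeX \leq \lceil L/\delta\rceil\,\delta T$. Your treatment is slightly more careful in two places the paper glosses over---the continuous-selection argument for $\kappa_\n$ (the paper only notes existence of a steady-state input pointwise) and the compactness argument for uniform $\delta,\delta T$---but the core structure is the same.
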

\begin{proof} In the following, we show that $\safeInit{\n}$ satisfies each of the four properties of \cref{assump: safe-init}.

\emph{1. Pessimistically safe:} By definition, $\safeInit{\n} \subseteq \pessiSet[,\epsilon]{\n} \subset \pessiSet[]{\n} $, since $\epsilon>0$. Hence $\safeInit{\n}$ is pessimistically safe.

\emph{2. Monotonicity:}  By def., $\pessiSet[,\epsconst]{\n} \subseteq \pessiSet[,\epsconst]{\n+1}$ implies $\safeInit{\n} \subseteq \safeInit{\n+1}$, since $\dyn$ is independent is of $\n$. Hence $\safeInit{\n}$ is non-decreasing. %Monotonicity of $\safeInit{\n}$ is implied by monotonicity of $\pessiSet[,\epsconst]{\n}$ with $n$ and dynamics is independent of $n$.

\emph{3. Control invariance:} By def., $\forall \state \in \safeInit{\n} \subseteq \X^\epsconst, \exists a \in \A^\epsilon \subset \A: f(x,a) = 0$. Hence the $\safeInit{\n}$ satisfies control invariance.
% There exists a continuous feedback $\kappa_\n:\safeInit{\n} \to \inputSpace$ such that the set $\safeInit{\n}$ is positively invariant \cite[Definition 3.2]{blanchini1999set} for $a=\kappa_{\n}(\state)$.

\emph{4. Controllable around terminal set}: We need to show
$\forall u, v \in \safeInit{\n}: \exists \coninput(\cdot),  \dot{\state}(t) = \dyn(\state(t),\coninput(t)), \state(0) = u, \state(\timeX) = v, (\state(t),\coninput(\cdot))\in\pessiSet[]{\n} \times \pwcinput, t \in [0, \timeX)$.
% $\forall ~u, v \in \safeInit{\n}, \exists \coninput(\cdot) \in \pwcinput:  \dot{\state}(t) = \dyn(\state(t),\coninput(t)), \state(0) = u, \state(\timeX) = v, \state(t)\in\pessiSet[]{\n}, t \in [0, \timeX)$.
        % The objective is to show we can go from $a \to b$ for any $a, b\in \safeInit{\n}$ in finite time while satisfying input constraints and staying in the pessimistic constrained set.
        % $\exists$ a path $\zeta: \zeta(0) = a, \zeta(1) = b,$ and $\forall s \in [0,1], \zeta(s)\in \pessiSet[]{n}$.
        % \emph{$a\to b$ reachability proof}:
        
        Given $\safeInit{\n}$ is path-connected with finite length (\cref{def: path_connected}), for any $u, v \in \safeInit{\n}$ there is a continuously differentiable path $\zeta:[0,1] \to \safeInit{\n}$ such that $\zeta(0) = u$, $\zeta(1) = v$ and $\frac{d \zeta}{d s}\leq L < \infty$. Hence the path length is $\int_{0}^1\frac{d \zeta}{d s} ds \leq \int_{0}^1 L ds = L < \infty$.
        
        % we can find a path between any two points in the safe set $\safeInit{\n}$. Consider any continuous path $\zeta$ connecting $u \to v$ in the safe set $\safeInit{\n}$, i.e., $\zeta(0)=u, \zeta(1)=v, \zeta(s)\in \safeInit{\n} \forall s \in [0,1]$.

        % Since $\safeInit{\n}$ satisfy \cref{def: path_connected}, between any two points, there exists a path $\zeta$ with bounded derivative, which implies the length of the path connecting any two points in $\safeInit{\n}$ is finite, i.e., 
        % Bounded derivative can be ensured since $\X, \A$ are bounded and $f$ is Lipchitz continuous implies $f$ is bounded.
        % Let the finite length of $\zeta$ be $L$. 

        Define the largest constant $\bar{\epsilon}>0$ such that $\A^\epsilon \oplus B_{\bar{\epsilon}}\subseteq \A, \X^\epsilon\oplus B_{\bar{\epsilon}}\subseteq \X$; $\pessiSet[,\epsconst]{\n} \oplus B_{\bar{\epsilon}} \subseteq \pessiSet[]{\n}$. This $\bar{\epsilon}$ exists since $\A^\epsilon, \X^\epsilon$ lie in the interior of $\A, \X$ and $\pessiSet[,\epsconst]{} \subset \pessiSet[]{}$ with $\epsilon>0$ since $\lbconst[\n]$ is uniformly continuous. Next, we construct a way to go from any $u\in\safeInit{\n}$ to any $v\in\safeInit{\n}$ while satisfying the constraints.
        
        Given the $\bar{\epsilon}$, let $\delta$ be defined as per local controllability of $f$. We split $\zeta$ with $s_k = k/\lceil L/\delta \rceil, k=1, \hdots ,\lceil L/\delta - 1 \rceil$ into $\lceil L/\delta \rceil$ parts with distance lower than $\delta$ apart. For reference   marked as $u', u'', u''', \hdots $ in \cref{fig: conti-safe-init-assump}.
        % Consider a point $s \in \safeInit{\n}$ on the path $\zeta: b \in [0,1]: \zeta(b)=A$. 
        Since $u$ is a steady state, there exists a state and input trajectory $\state(t)\in \safeInit{\n}$ and $a(t) \in \A^\epsilon$  $\forall t \in [0,\delta T]$, such that  $\state(t)=u, t\in [0, \delta T]$.
        % -we start with the finite length L;
%-then we split it points of distance \delta
%-then, we can go from point to point while satisfying constraints for \delta<<.
%-finite time=..
        Since $\dyn$ is locally controllable, for given $\bar{\epsilon} , \exists \delta>0: \exists~\text{trajectories}~\tilde{\coninput}(t), \tilde{\state}(t), \forall t \in [0,\delta T]: \tilde{\state}(0) = u, \tilde{\state}(\delta T) = u', \|\tilde{\state}(t)-\state(t)\| < \bar{\epsilon}$ and $\|\tilde{\coninput}(t) -\coninput(t)\| < \bar{\epsilon}, \forall t \in [0, \delta T]$. 
        
         Since $\forall t \in [0, \delta T], \state(t) \in \pessiSet[,\epsilon]{\n}, \state(t) \in \X^{\epsconst}$ and $\|\tilde{\state}(t)-\state(t)\| < \bar{\epsilon}$, implies $\tilde{\state}(t) \in \pessiSet[]{\n}$. 
         Similarly $\forall t \in [0, \delta T], \coninput(t) \in \A^\epsilon$ and $\|\tilde{\coninput}(t)-\coninput(t)\| < \bar{\epsilon}$, implies $\tilde{\coninput}(t) \in \A$. We can recursively perform the above steps and jump from point to point, i.e., $u' \to u'' \to u'''$ 
         %using trajectories $\rho'_\state, \rho'_\coninput$ 
         and so on, until we reach $v$.
        Combining all the state and input trajectories, $\tilde{\state}(\cdot),\tilde{\coninput}(\cdot)$, we can reach from any $u\in\safeInit{\n}$ to any $v\in \safeInit{\n}$ while staying the pessimistic constraint set, $\tilde{\state}(t) \subseteq \pessiSet[]{\n}$ and satisfying the input constraints $\tilde{\coninput}(t) \subseteq \A$, $\forall t \in [0, \timeX]$, where $\timeX$ is the time taken to reach  $v$. Next we show that $\timeX$ is finite.
        
        \emph{Finite time:} %We now prove that for any $u,v \in \safeInit{\n}$, $u\to v$  can be reached in finite time. Pick the longest path $\zeta$, with \cref{def: path_connected}, $\zeta$ has a bounded length $L$. 
        The time taken to travel through any $\delta$ length, e.g., $(u \to u')$ is $\delta T$.
        % The distance travelled by $\rho$ along $\zeta$ is $\delta$. 
        % If $\dyn$ is bounded and $\X$ is bounded, then the length of any path in $\safeInit{\n}$ is finite.
        Since the length of $\zeta \leq L$, we get an upper bound on time taken to travel between any two points in $\safeInit{\n}$ as $\timeX \leq \lceil L/\delta \rceil \delta T$. Hence any two points in $\safeInit{\n}$ can be travelled in finite time.
        \end{proof}

\begin{lemma} \looseness -1 Suppose the system $\dyn$ is locally controllable (\cref{def: local_control}) and $\A = \R^m$. Then \cref{assump: equal_boundary} holds, i.e, for any $\safeInit{\n},\pessiSet[]{\n}$, it holds, $\partial \Rcontoper[\safeInit{\n}]{\pessiSet[]{\n}} \subseteq \partial \pessiSet[]{\n} \bigcup \partial \Rcontoper[\safeInit{\n}]{\X}$. \label{lem: local_control}
\end{lemma}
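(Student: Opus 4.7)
The plan is to prove this by contrapositive. Take $z \in \partial \Rcontoper[\safeInit{\n}]{\pessiSet[]{\n}}$ and assume both $z \notin \partial \pessiSet[]{\n}$ and $z \notin \partial \Rcontoper[\safeInit{\n}]{\X}$; I will derive a contradiction by showing a whole neighbourhood of $z$ lies inside $\Rcontoper[\safeInit{\n}]{\pessiSet[]{\n}}$, contradicting $z$ being on its boundary. Since $z \in \Rcontoper[\safeInit{\n}]{\pessiSet[]{\n}} \subseteq \pessiSet[]{\n}$ and $z \notin \partial \pessiSet[]{\n}$, continuity of $\lbconst[\n]$ gives some $\epsilon_0 > 0$ with $\ball{\epsilon_0}(z) \subseteq \pessiSet[]{\n}$. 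Let $x(\cdot), a(\cdot)$ be a witnessing trajectory with $x(0),x(T)\in \safeInit{\n}$, $x(t')=z$, and $x(t) \in \pessiSet[]{\n}$ for all $t \in [0,T]$.

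The key step is to use local controllability of $f$ in a small time window around $t'$. By continuity of $x(\cdot)$, pick $\delta'>0$ such that $x(t) \in \ball{\epsilon_0/2}(z)$ for all $t \in [t'-\delta', t'+\delta'] \cap [0,T]$. Apply Definition~\ref{def: local_control} twice, with nominal trajectory segments $x|_{[t'-\delta', t']}$ and $x|_{[t', t'+\delta']}$ and with tolerance $\epsilon = \epsilon_0/2$. This produces $\delta>0$ such that for any $\tilde{z}\in \ball{\delta}(z)$, there exist perturbed trajectories on $[t'-\delta', t']$ and $[t', t'+\delta']$ that match $x$ at the outer endpoints, pass through $\tilde{z}$ at time $t'$, and remain in $\ball{\epsilon_0}(z) \subseteq \pessiSet[]{\n}$. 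Concatenating with the original unchanged trajectory on $[0,t'-\delta'] \cup [t'+\delta',T]$ yields a single trajectory on $[0,T]$ that starts in $\safeInit{\n}$, ends in $\safeInit{\n}$, stays in $\pessiSet[]{\n}$, and passes through $\tilde{z}$. The condition $\A=\R^m$ ensures the perturbed (piecewise continuous) input is admissible, while piecewise continuity is preserved by the two-stage concatenation. Hence $\ball{\delta}(z) \subseteq \Rcontoper[\safeInit{\n}]{\pessiSet[]{\n}}$, contradicting $z \in \partial \Rcontoper[\safeInit{\n}]{\pessiSet[]{\n}}$.

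The main obstacle I expect is handling the global trajectory constraint $x(t)\in\pessiSet[]{\n}$ for all $t\in[0,T]$ --- not just at $z$ --- while perturbing only a small time window. The above construction circumvents this by localising all perturbations to $[t'-\delta',t'+\delta']$, exploiting that the original trajectory already satisfies $x(t)\in\pessiSet[]{\n}$ outside that window and that inside the window the perturbation is confined to $\ball{\epsilon_0}(z)\subseteq\pessiSet[]{\n}$. A secondary subtlety is that local controllability must be invoked with the interior points of the window as the fixed endpoints (not $0$ and $T$), so the precise quantifier ordering matters: first fix $\epsilon_0$ from interior-ness w.r.t.\ $\pessiSet[]{\n}$, then $\delta'$ from trajectory continuity, then $\delta$ from Definition~\ref{def: local_control}, ultimately delivering the radius of the neighbourhood of $z$ contained in $\Rcontoper[\safeInit{\n}]{\pessiSet[]{\n}}$.
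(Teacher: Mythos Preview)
Your proposal is correct and follows essentially the same approach as the paper's proof: assume a boundary point $z$ of $\Rcontoper[\safeInit{\n}]{\pessiSet[]{\n}}$ lies in the interior of $\pessiSet[]{\n}$, take a witnessing trajectory through $z$, restrict to a small time window around $t'$ where the trajectory stays in a ball contained in $\pessiSet[]{\n}$, and apply local controllability on both subintervals to steer to any nearby $\tilde z$ while keeping the trajectory inside that ball; concatenation with the unperturbed remainder yields $\ball{\delta}(z)\subseteq\Rcontoper[\safeInit{\n}]{\pessiSet[]{\n}}$, a contradiction. The paper additionally records that $B_\varphi(z)\subseteq\Rcontoper[\safeInit{\n}]{\X}$ from the hypothesis $z\notin\partial\Rcontoper[\safeInit{\n}]{\X}$, but like your argument it does not actually use this inclusion in the perturbation step---so your omission of that piece is harmless.
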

\begin{proof} We will prove it by contradiction.
Suppose there exists $x' \in \partial \Rcontoper[\safeInit{\n}]{\pessiSet[]{\n}} \backslash \left(\partial \pessiSet[]{\n} \bigcup \partial \Rcontoper[\safeInit{\n}]{\X}\right)$.  
\begin{figure}
    \centering
    \setlength{\abovecaptionskip}{5pt}
    \scalebox{0.5}{\input{images/conti-local-controllability}}
    \caption{Shows a contradictory case for \cref{lem: local_control}, that $x' \in \partial \Rcontoper[\safeInit{\n}]{\pessiSet[]{\n}}$ }
    \label{fig: local_control_contradict} \vspace{-1.5em}
\end{figure}
\begin{itemize}
    \item Using $x'\in \Rcontoper[\safeInit{\n}]{\pessiSet[]{\n}} \subseteq \pessiSet[]{\n}$ but $x'\notin\partial \pessiSet[]{\n}$ and $x'\in \Rcontoper[\safeInit{\n}]{\pessiSet[]{\n}} \subseteq \Rcontoper[\safeInit{\n}]{\X}$ but $x'\notin\partial \Rcontoper[\safeInit{\n}]{\X}$, there exists a sufficiently small $\varphi>0$ such that $B_{\varphi}(x')\subseteq\pessiSet[]{\n}$ and $B_{\varphi}(x') \subseteq \Rcontoper[\safeInit{\n}]{\X}$. (Please see \cref{fig: local_control_contradict})
    
    % $\state' \in \Rcontoper[\safeInit{\n}]{\pessiSet[]{\n}} \implies \state' \in \Rcontoper[\safeInit{\n}]{\X}$ but by assumption $\state' \notin \partial \Rcontoper[\safeInit{\n}]{\X}$, hence we can construct a ball of sufficiently small radius around $\state'$ inside $\Rcontoper[\safeInit{\n}]{\X}$. Similarly, Since $\lbconst[\n]$ is continuous and $\lbconst[\n](x')>0$, we can construct a ball of a sufficiently small radius $\varphi>0$ around $x'$, such that $\forall p \in B_{\varphi}(x'), \lbconst[\n](p) \geq 0$ and $p\in \Rcontoper[\safeInit{\n}]{\X}$.
    % Note: $\Rcontoper[\safeInit{\n}]{\pessiSet[]{\n}} \neq \pessiSet[]{\n} \bigcap \Rcontoper[\safeInit{\n}]{\X}$. 
    
    \item Since $x' \in \Rcontoper[\safeInit{\n}]{\pessiSet[]{\n}}, \exists$ trajectory $\state(t), \coninput(t), t\in [0,T]$ of dynamics \eqref{eq: dyn}, such that $\state(t) \in \Rcontoper[\safeInit{\n}]{\pessiSet[]{\n}}, \forall t\in[0,T]$ and $\state(0) \in \safeInit{\n}, \state(t')=x', \state(T)\in \safeInit{\n}$. 
    
    Given $\state(\cdot)$ is continuous, there exists sufficiently small $\delta t>0, \epsilon>0$ such that $B_\epsilon(\state(t)) \subseteq B_\varphi(\state(t')), \forall t\in[t'-\delta t,t'+\delta t]$.
    % Consider $s_a = s' - \delta t, \delta t > 0: B_{\epsilon}(\zeta(s)) \subset B_{\varphi}(x') \forall s \in [s_a,s']$. Similarly, $s_b = s' + \delta t, \delta t > 0: B_{\epsilon}(\zeta(s)) \subset B_{\varphi}(x') \forall s \in [s',s_b]$. 
    % 
    
    % $B_{\delta}(\zeta(s_a)) \subset B_{\varphi}(x'), B_{\epsilon}(\zeta(s_b)) \subset B_{\varphi}(x')$ and $s_a\neq s', s_b\neq s'$.
    % For $A \in B_{\delta}(\zeta(s_a))$ and $p \in B_{\delta}(x')$,
    
    \item %Given local controllability (\cref{def: local_control}), i.e., $\forall \epsilon$, $\exists \delta$, such that starting and ending $\delta$ close we get an $\epsilon$ close path. 
    \looseness -1 Given local controllablility (\cref{def: local_control}) around trajectory $\state(t), \coninput(t), t\in [t'-\delta t, t']$, we get, for $\epsilon$ defined above and some $\delta\leq\epsilon$, $\exists$ trajectories $\tilde{\state}(t), \tilde{\coninput}(t), t\in [t'-\delta t, t']: \tilde{\state}(t'-\delta t) = \state(t'-\delta t) , \tilde{\state}(t') = p$, such that $\|p - \state(t') \| <\delta$ and $\forall t\in[t-\delta t, t'],~ \|\tilde{\state}(t) - \state(t)\|<\epsilon, \|\tilde{\coninput}(t) - \coninput(t)\|<\epsilon$. 
    Since the trajectory $\tilde{\state}(\cdot)$ is $\epsilon$ close to $\state(\cdot)$, we get, $ \tilde{\state}(t) \in B_{\varphi}(\state') \subseteq \pessiSet[]{\n}, \forall t \in [t'-\delta t, t']$. Since $\A = \R^m$, $\tilde{\coninput}(t) \in \A$.
    % Using \cref{def: local_control}, for $\epsilon$ defined above and some $\delta\leq\epsilon$, $\exists$ path $\rho: \rho(s_a) = A , \rho(s') = p$ and $\forall s\in[s_a, s'],~ d_{\infty}(\zeta(s), \rho(s))<\epsilon$. This implies $\forall s \in [s_a, s']:\rho(s) \in B_{\varphi}(x') \implies \lbconst[\n](\zeta(s))\geq0$.

    \item Similarly, exploiting the local controllability, a return path can be constructed from $p \to \state(t+\delta t)$. From $\state(t+\delta t)$, it is trivial to follow the same path as $\state(\cdot)$.
    \item Hence, $\forall p \in B_{\delta}(\state')$, we can reach to $p$ and return back to $\safeInit{\n}$ in time $T$. This implies $B_{\delta}(\state') \subset \Rcontoper[\safeInit{\n}]{\pessiSet[]{\n}}$, hence $x' \notin \partial \Rcontoper[\safeInit{\n}]{\pessiSet[]{\n}}$, which  is a contradiction. Hence, $\partial \Rcontoper[\safeInit{\n}]{\pessiSet[]{\n}} \subseteq \partial \pessiSet[]{\n} \bigcup \partial \Rcontoper[\safeInit{\n}]{\X}$. \vspace{-0.5em}
    % \item GO a bit forward and back in time $\delta t$, such that the black trajectory stays within the ball. Then dynamics $f() \delta t$ is at max some $\epsilon$ For this epsilon there exists a delta (from controllability) such that if i start from A i can reach p. Prob: entire trajectory from A to x' lies in the ball also it's epsilon close $\rho$ trajectory is in the ball 
\end{itemize}\vspace{-1em}
% \manish{Is this system $\dot{x} = -x +a$ not locally controllable? if the entire domain is a pessimistic set, where is this lemma violated? We are able to prove even ig $\bigcup \Rcontoper[\safeInit{\n}]{\X}$ is not included. The proof should fail that we are not able to find a path in the ball after $\Rcontoper[\safeInit{\n}]{\X}$. or imagine $x'$ is the only point in the boundary that is not the boundary of $R(X)$, which implies boundary of $R(X)$ is open so i can create a ball around it? this ball is enough for the proof. can this happen that there is only one point?not it can't happen, then it violates local controllability} 
\end{proof}
Note that $B_{\varphi}(x') \subseteq \pessiSet[]{\n} \bigcap \Rcontoper[\safeInit{\n}]{\X} \centernot\implies B_{\varphi}(x') \subseteq \Rcontoper[\safeInit{\n}]{\pessiSet[]{\n}}$. 
Since the reachability returnability to the ball $(i.e., B_{\varphi}(x') \subseteq \Rcontoper[\safeInit{\n}]{\X})$ can be due to the paths from the regions which are not pessimistic safe. Hence, we specifically require local controllability to prove $B_{\varphi}(x') \subseteq \Rcontoper[\safeInit{\n}]{\pessiSet[]{\n}}$.

% Consider a system with state constraints, but the region beyond state constraints is pessimistically safe. 
\looseness -1 The system may be locally controllable, but exploring even the connected safe region may not be possible due to state constraints. %the agent cannot go to the safe region, which is non-reachable-returnable. 
The term $\bigcup \Rcontoper[\safeInit{\n}]{\X}$ in \cref{assump: equal_boundary} allows us to consider the systems with state constraints as well. This guarantees the maximum domain exploration in cases where a large region is pessimistically safe; however, the agent cannot go beyond the reachable returnable set due to dynamic constraints.
%(which naturally also shrink the reachability and returnability set) 

\looseness -1 \cref{lem: local_control} assumes no input constraints. To consider both state and input constraints, we can define constraint sets with $(\state, \coninput)$ pair instead of only $\state$. Alternatively, wlog we can define a new dynamics $\tilde{f}$ with state $z = [\state^\top, \coninput^\top]^\top \in \R^{p+m}$ and consider the constraints for the state $z$. One way to define $\tilde{f}$ is, $\dot{z} = \tilde{\dyn}(z, \tilde{\coninput}),$  where $ \tilde{\dyn}(z, \tilde{\coninput}) \coloneqq [\dyn(\state,\coninput)^\top, \tilde{\coninput}^\top]^\top$ and $\tilde{\coninput} = \dot{\coninput} \in \R^m $.
\vspace{-1em}
\subsection{Auxiliary lemmas for proof of \texorpdfstring{\cref{thm:SE}}{Lg}} 
\label{apx:expander_proof}

\begin{lemma}\label{lem:setRSX}
    For any sets $\mathcal{R}, \mathcal{S}, \mathcal{X} \subseteq \R^\sdim$,  
    \begin{align*}
        \mathcal{R} \subseteq \mathcal{S} \subseteq \mathcal{X} \implies \partial \mathcal{R} \cap \partial \mathcal{X} \subseteq \partial \mathcal{S}.
    \end{align*}
\end{lemma}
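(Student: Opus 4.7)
The plan is to unwind the boundary definition from the notation section directly. Fix an arbitrary point $x \in \partial \mathcal{R} \cap \partial \mathcal{X}$; I need to verify the two conditions of the boundary definition for $\mathcal{S}$, namely that for every $r > 0$, the ball $\ball{r}(x)$ meets both $\mathcal{S}$ and its complement.

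For the ``meets $\mathcal{S}$'' condition, I would use the hypothesis $x \in \partial \mathcal{R}$: by definition $\ball{r}(x) \cap \mathcal{R} \neq \emptyset$, and since $\mathcal{R} \subseteq \mathcal{S}$, this immediately yields $\ball{r}(x) \cap \mathcal{S} \neq \emptyset$. For the ``leaves $\mathcal{S}$'' condition, I would instead use $x \in \partial \mathcal{X}$: by definition $\ball{r}(x) \setminus \mathcal{X} \neq \emptyset$, and since $\mathcal{S} \subseteq \mathcal{X}$, the complement relation $\R^\sdim \setminus \mathcal{X} \subseteq \R^\sdim \setminus \mathcal{S}$ gives $\ball{r}(x) \setminus \mathcal{S} \neq \emptyset$. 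Since $r > 0$ was arbitrary, we conclude $x \in \partial \mathcal{S}$.

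I do not anticipate a real obstacle here: the statement is a purely set-theoretic consequence of monotonicity of intersections and of complementation, and the two set inclusions provide exactly the two ingredients needed for the two boundary conditions. The only subtlety worth flagging is choosing which inclusion to apply to which condition -- $\mathcal{R} \subseteq \mathcal{S}$ handles the ``interior side'' (meeting $\mathcal{S}$) and $\mathcal{S} \subseteq \mathcal{X}$ handles the ``exterior side'' (leaving $\mathcal{S}$) -- but both steps are one-liners, so the entire proof should fit in a few lines.
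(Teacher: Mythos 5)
Your proposal is correct: with the paper's definition of $\partial S$, the inclusion $\mathcal{R} \subseteq \mathcal{S}$ gives $\ball{r}(x)\cap\mathcal{S} \supseteq \ball{r}(x)\cap\mathcal{R} \neq \emptyset$, and $\mathcal{S} \subseteq \mathcal{X}$ gives $\ball{r}(x)\setminus\mathcal{S} \supseteq \ball{r}(x)\setminus\mathcal{X} \neq \emptyset$, so both defining conditions hold for every $r>0$ and $x\in\partial\mathcal{S}$. The route differs from the paper's in form: the paper fixes $p\in\partial\mathcal{X}\cap\partial\mathcal{R}$ and argues by contradiction, assuming $p\notin\partial\mathcal{S}$, concluding from $p\in\mathcal{S}$ (via $p\in\partial\mathcal{R}$ and $\mathcal{R}\subseteq\mathcal{S}$) that some ball $\ball{\epsilon}(p)\subseteq\mathcal{S}\subseteq\mathcal{X}$, which contradicts $p\in\partial\mathcal{X}$. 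Your direct verification is arguably the cleaner of the two: it uses each inclusion exactly once, one per boundary condition, and it sidesteps a minor imprecision in the paper's argument, namely the step ``$p\in\partial\mathcal{R}$ and $\mathcal{R}\subseteq\mathcal{S}$ imply $p\in\mathcal{S}$,'' which strictly only yields $p\in\closure{\mathcal{S}}$ (boundary points need not belong to the set); the paper's contradiction still goes through from $p\in\closure{\mathcal{S}}$ and $p\notin\partial\mathcal{S}$, but your version never needs this repair. Both proofs are elementary and of the same length, so the difference is stylistic rather than substantive.
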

\begin{proof}
    For any point $p \in \partial \X$, the lemma follows by showing $p \in \partial \mathcal{R} \implies p\in \partial \mathcal{S}$, under the assumption $\mathcal{R} \subseteq \mathcal{S} \subseteq \X$.

    Proof by contradiction. Let's assume $p \notin \partial \mathcal{S}$. Since $p \in \partial \mathcal{R}$ and $\mathcal{R} \subseteq \mathcal{S}$ implies $p \in \mathcal{S}$. Since $p \notin \partial \mathcal{S}$ there exists a sufficiently small $\epsilon>0$ such that $B_{\epsilon}(p) \subseteq \mathcal{S}$. Since $\mathcal{S}\subseteq\mathcal{X}$ implies $B_{\epsilon}(p) \subseteq \mathcal{X}$. This is a contradiction since $p \in \partial \X$.
\end{proof}

% The \cref{fig: conti-opti-in-pessi-extreme} represents a particular case that arises in contradiction case of \cref{thm:SE}, specifically when the reachable returnable pessimistic set touches the boundary of the domain and a possible path $\zeta$, starting from $x_0$ enters reachable returnable optimistic set from the joint boundary of the domain and the reachable returnable pessimistic set. By taking closure of the $\partial \pessiSet[]{\n} \backslash \partial \Domain$, we can handle this case specifically $A, B \in \closure{\partial \pessiSet[]{\n} \backslash \partial \Domain}$, since \cref{lem: boundary-pessi-zero} shows $\lbconst[\n] = 0$ at A,B as well. 

\begin{lemma}\label{lem:boundary_inclusion}
$\closure{\partial \Rcontoper[\safeInit{\n}]{\pessiSet[]{\n}} \backslash \partial \Rcontoper[\safeInit{\n}]{\X}} \subseteq \closure{\partial \pessiSet[]{\n} \backslash \partial \X}$, under \cref{assump: equal_boundary}. 
\end{lemma}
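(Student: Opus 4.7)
The plan is to unpack the set inclusions and combine the hypothesis from \cref{assump: equal_boundary} with the previous auxiliary lemma (\cref{lem:setRSX}). Since closure is monotone with respect to set inclusion, it suffices to prove the stronger statement
\[
\partial \Rcontoper[\safeInit{\n}]{\pessiSet[]{\n}} \setminus \partial \Rcontoper[\safeInit{\n}]{\X} \;\subseteq\; \partial \pessiSet[]{\n} \setminus \partial \X,
\]
and then take closures on both sides.

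First I would record the chain of inclusions $\Rcontoper[\safeInit{\n}]{\pessiSet[]{\n}} \subseteq \Rcontoper[\safeInit{\n}]{\X} \subseteq \X$. The first containment holds because any trajectory staying in $\pessiSet[]{\n}\subseteq \X$ also satisfies the state constraint. The second containment follows directly from the definition of $\Rcontoper[\cdot]{\cdot}$ in \cref{eqn: pessi-reach-ret-cont}. Applying \cref{lem:setRSX} with $\mathcal{R} = \Rcontoper[\safeInit{\n}]{\pessiSet[]{\n}}$, $\mathcal{S} = \Rcontoper[\safeInit{\n}]{\X}$, $\mathcal{X}=\X$ yields
\[
\partial \Rcontoper[\safeInit{\n}]{\pessiSet[]{\n}} \cap \partial \X \;\subseteq\; \partial \Rcontoper[\safeInit{\n}]{\X}.
\]

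The key step is then a short case analysis. Pick any $p \in \partial \Rcontoper[\safeInit{\n}]{\pessiSet[]{\n}} \setminus \partial \Rcontoper[\safeInit{\n}]{\X}$. From the displayed inclusion above, $p \notin \partial \Rcontoper[\safeInit{\n}]{\X}$ forces $p \notin \partial \X$. On the other hand, \cref{assump: equal_boundary} gives $p \in \partial \pessiSet[]{\n} \cup \partial \Rcontoper[\safeInit{\n}]{\X}$, and since $p$ is excluded from the second set by assumption, we conclude $p \in \partial \pessiSet[]{\n}$. Together these two facts give $p \in \partial \pessiSet[]{\n} \setminus \partial \X$, which proves the set inclusion before taking closures.

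Finally, using monotonicity of the closure operation ($A \subseteq B \Rightarrow \closure{A} \subseteq \closure{B}$) yields the stated lemma. I expect no real obstacle here; the only subtle point is making sure the chain $\Rcontoper[\safeInit{\n}]{\pessiSet[]{\n}} \subseteq \Rcontoper[\safeInit{\n}]{\X} \subseteq \X$ is justified (so that \cref{lem:setRSX} applies) and that the complement within $\partial \Rcontoper[\safeInit{\n}]{\pessiSet[]{\n}}$ is handled carefully so as to exclude $\partial \X$ rather than $\partial \Rcontoper[\safeInit{\n}]{\X}$ on the right-hand side.
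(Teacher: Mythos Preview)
Your proposal is correct and follows essentially the same approach as the paper: both use \cref{lem:setRSX} applied to the chain $\Rcontoper[\safeInit{\n}]{\pessiSet[]{\n}} \subseteq \Rcontoper[\safeInit{\n}]{\X} \subseteq \X$ together with \cref{assump: equal_boundary}, establish the inclusion before closure, and then apply monotonicity of closure. The only cosmetic difference is that the paper writes the argument via set-algebraic manipulations (subtracting $\partial\X$ and $\partial\Rcontoper[\safeInit{\n}]{\X}$ from both sides of \cref{assump: equal_boundary}) whereas you do the equivalent pointwise case analysis.
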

\begin{proof} 
\looseness -1 With \cref{assump: equal_boundary} on subtracting $\partial \X,\partial \Rcontoper[\safeInit{\n}]{\X}$,
\begin{multline}
         % \partial \Rcontoper[\safeInit{\n}]{\pessiSet[]{\n}} \backslash \partial \X &\subseteq \partial \pessiSet[]{\n} \bigcup \partial \Rcontoper[\safeInit{\n}]{\constSet[]{}} \backslash \partial \X\\
    \left(\partial \Rcontoper[\safeInit{\n}]{\pessiSet[]{\n}} \backslash \partial \X \right) \backslash \partial \Rcontoper[\safeInit{\n}]{\X} \\
     \subseteq \left(\left(\partial \pessiSet[]{\n} \bigcup \partial \Rcontoper[\safeInit{\n}]{\X}\right) \backslash \partial \X \right) \backslash \partial \Rcontoper[\safeInit{\n}]{\X} \label{eq:boundary_inclusion}
\end{multline}
Since $\Rcontoper[\safeInit{\n}]{\pessiSet[]{\n}} \subseteq \Rcontoper[\safeInit{\n}]{\X} \subseteq \X$, by \cref{lem:setRSX}, $\partial \Rcontoper[\safeInit{\n}]{\pessiSet[]{\n}} \bigcap \partial \X \subseteq \partial \Rcontoper[\safeInit{\n}]{\X}$. Using this,  left hand side (LHS) of \cref{eq:boundary_inclusion} can be simplified as,
\begin{multline}
    \left(\partial \Rcontoper[\safeInit{\n}]{\pessiSet[]{\n}} \backslash \partial \X  \right) \backslash \partial \Rcontoper[\safeInit{\n}]{\X} \\
     = \partial \Rcontoper[\safeInit{\n}]{\pessiSet[]{\n}} \backslash \partial \Rcontoper[\safeInit{\n}]{\X} \label{eq:lhs}
\end{multline}
Consider right hand side (RHS) of \cref{eq:boundary_inclusion},
\begin{multline}
            \left(\left(\partial \pessiSet[]{\n} \bigcup \partial \Rcontoper[\safeInit{\n}]{\X}\right) \backslash \partial \X \right) \backslash \partial \Rcontoper[\safeInit{\n}]{\X} \\ 
            = \left(\partial \pessiSet[]{\n} \backslash \partial \X \right) \backslash \partial \Rcontoper[\safeInit{\n}]{\X} \subseteq \partial \pessiSet[]{\n} \backslash \partial \X \label{eq:rhs}
\end{multline}
Substituting \cref{eq:lhs} and \cref{eq:rhs} in \cref{eq:boundary_inclusion} and using closure operator, gives, 
$\closure{\partial \Rcontoper[\safeInit{\n}]{\pessiSet[]{\n}} \backslash \partial \Rcontoper[\safeInit{\n}]{\X}} \subseteq \closure{\partial \pessiSet[]{\n} \backslash \partial \X}$.\!\!\!\!
\end{proof}
\begin{comment}
    \begin{figure}
\setlength{\abovecaptionskip}{5pt}
    \centering
    \scalebox{0.4}{\input{images/conti-opti-in-pessi-extreme}}
    \caption{ \looseness -1 Illustration of an extreme case that arises in the proof of \cref{thm:SE}. A possible path $\zeta$, start from $\state_s$ enters (B) and exit (A) reachable returnable optimistic set from the joint boundary of the domain and the reachable returnable pessimistic set.
    When the reachable returnable pessimistic set touches the boundary of the domain  
    and  We handle this case, by taking closure of the $\partial \pessiSet[]{\n} \backslash \partial \Domain$, specifically $A, B \in \closure{\partial \pessiSet[]{\n} \backslash \partial \Domain}$, since \cref{lem: boundary-pessi-zero} shows $\lbconst[\n] = 0$ at A,B as well.}
    \label{fig: conti-opti-in-pessi-extreme}\vspace{-1.5em}
\end{figure}
\end{comment}

%%%%$\partial \mathcal{R} \backslash \partial \mathcal{S}\neq\emptyset$, (regular close eliminate this)
\begin{lemma} \label{lem: boundaryDiff} Let the regular close sets $\calP = \{\mathcal{R}, \mathcal{S} \} \subseteq \R^\sdim$ additionally satisfy $\forall p \in \partial \calP,$ there exists an $\epsilon>0$ the set $B_{\epsilon}(p) \cap \interior{\calP}$ is path connected. 
Assume $\mathcal{R} \subseteq \mathcal{S} $ and
% $\forall x \in  (\partial \mathcal{S} \cap \partial \mathcal{R}) \backslash (\closure{\partial \mathcal{R}\backslash\partial\mathcal{S}}), B_{\epsilon}(x) \cap (\partial \mathcal{S}\backslash\partial\mathcal{R}) = \emptyset$ for any arbitrary small $\epsilon>0$
$\exists$ a continuous path $\zeta: [0,1] \to \mathcal{S}$ such that $\zeta(0) = \state_0 \in \mathcal{R}, \zeta(1) = \state_1 \in \mathcal{S}\backslash \mathcal{R}$. Then $\exists s'\in [0,1]: \zeta(s')\in \closure{\partial \mathcal{R}\backslash\partial\mathcal{S}}$.
\end{lemma}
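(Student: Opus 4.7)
The plan is to choose $s^\star:=\sup\{s\in[0,1]:\zeta(s)\in\mathcal R\}$ and to show that $s'=s^\star$ already does the job. Since $\mathcal R$ is closed and $\zeta$ continuous, $\zeta^{-1}(\mathcal R)$ is closed in $[0,1]$; it contains $0$ but not $1$, so $s^\star\in[0,1)$ and $\zeta(s^\star)\in\mathcal R$. For every $s\in(s^\star,1]$ we have $\zeta(s)\notin\mathcal R$, and letting $s\downarrow s^\star$ exhibits $p:=\zeta(s^\star)$ as a limit of points in $\R^\sdim\setminus\mathcal R$; hence $p\in\partial\mathcal R$. If $p\in\partial\mathcal R\setminus\partial\mathcal S$ the conclusion is immediate, so the genuinely interesting case is $p\in\partial\mathcal R\cap\partial\mathcal S$.

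For this case I fix an arbitrary $\epsilon>0$ and construct a witness $q\in B_\epsilon(p)\cap(\partial\mathcal R\setminus\partial\mathcal S)$. Regular closedness $\mathcal R=\closure{\interior{\mathcal R}}$ applied to $p\in\partial\mathcal R$ yields some $y\in\interior{\mathcal R}\cap B_{\epsilon/3}(p)$, and since $\mathcal R\subseteq\mathcal S$ implies $\interior{\mathcal R}\subseteq\interior{\mathcal S}$, in fact $y\in\interior{\mathcal S}$. Using the sequence $\zeta(s_n)\to p$ with $\zeta(s_n)\in\mathcal S\setminus\mathcal R$, the openness of $\R^\sdim\setminus\mathcal R$ together with $\mathcal S=\closure{\interior{\mathcal S}}$ supplies $z\in(\interior{\mathcal S}\setminus\mathcal R)\cap B_{\epsilon/3}(p)$. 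I then connect $y$ and $z$ by a continuous path $\gamma:[0,1]\to B_\epsilon(p)\cap\interior{\mathcal S}$: if $p\in\interior{\mathcal S}$, an open ball $B_\delta(p)\subseteq\interior{\mathcal S}$ exists and, after shrinking $\epsilon$ to $\min(\epsilon,\delta)$ at the outset, the straight segment $[y,z]$ works; if $p\in\partial\mathcal S$, the locally path-connected interior assumption applied to $\mathcal S$ provides some $\epsilon_0\in(0,\epsilon)$ with $B_{\epsilon_0}(p)\cap\interior{\mathcal S}$ path connected, and picking $y,z\in B_{\epsilon_0/3}(p)$ from the start lets me join them by a path inside $B_{\epsilon_0}(p)\cap\interior{\mathcal S}\subseteq B_\epsilon(p)\cap\interior{\mathcal S}$.

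Applying the argument of the first paragraph with $\gamma$ in place of $\zeta$ produces $t^\star\in[0,1)$ with $q:=\gamma(t^\star)\in\partial\mathcal R$; because $\gamma$ lies entirely in $\interior{\mathcal S}$ by construction, $q\in\interior{\mathcal S}$ and hence $q\notin\partial\mathcal S$. Combined with $q\in B_\epsilon(p)$, this is the desired witness and proves $p\in\closure{\partial\mathcal R\setminus\partial\mathcal S}$. The main obstacle throughout is precisely the sub-case $p\in\partial\mathcal R\cap\partial\mathcal S$, in which the natural candidate crossing on $\zeta$ itself lies on $\partial\mathcal S$ and is therefore useless; to push a crossing off $\partial\mathcal S$, both regular closedness (to guarantee $z$ strictly inside $\mathcal S$ yet outside $\mathcal R$, and to guarantee $y$ in $\interior{\mathcal R}\subseteq\interior{\mathcal S}$) and the locally path-connected interior of $\mathcal S$ (to keep the detour $\gamma$ inside $\interior{\mathcal S}$ while remaining within $B_\epsilon(p)$) are essential; without either ingredient the perturbation argument collapses.
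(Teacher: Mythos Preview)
Your proof is correct and takes a genuinely different, more direct route than the paper. Both arguments isolate the same crossing point $p=\zeta(s^\star)\in\partial\mathcal R$ and treat the easy case $p\notin\partial\mathcal S$ identically. In the hard case $p\in\partial\mathcal R\cap\partial\mathcal S$, the paper argues by contradiction: assuming $B_\epsilon(p)\cap(\partial\mathcal R\setminus\partial\mathcal S)=\emptyset$, it invokes the auxiliary \cref{lem:subsetUnequalBound} on the restricted sets $B_\epsilon(p)\cap\mathcal R$ and $B_\epsilon(p)\cap\mathcal S$ (after arguing these inherit the regular-closed and locally path-connected interior properties) to produce a point of $\partial\mathcal S\setminus\partial\mathcal R$ in every ball around $p$, and then concludes this forces a pinch in $\interior{\mathcal S}$ at $p$, contradicting local path-connectedness. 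You instead construct the witness $q\in B_\epsilon(p)\cap(\partial\mathcal R\setminus\partial\mathcal S)$ directly: pick $y\in\interior{\mathcal R}$ and $z\in\interior{\mathcal S}\setminus\mathcal R$ near $p$, join them by a path $\gamma$ inside $B_\epsilon(p)\cap\interior{\mathcal S}$ (using the locally path-connected interior of $\mathcal S$), and reapply the last-exit argument on $\gamma$. Your approach is more self-contained (no need for \cref{lem:subsetUnequalBound} or the somewhat delicate verification that the ball-restricted sets retain the regularity hypotheses) and in fact never uses the locally path-connected interior assumption on $\mathcal R$, only on $\mathcal S$. The paper's decomposition has the minor advantage of isolating \cref{lem:subsetUnequalBound} as a standalone fact, but your argument is cleaner and more constructive. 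Two cosmetic remarks: the sub-case $p\in\interior{\mathcal S}$ in your step connecting $y$ and $z$ is vacuous since you are already in the case $p\in\partial\mathcal S$; and the phrase ``from the start'' for adjusting $\epsilon_0$ would read more smoothly if you fixed $\epsilon_0$ from the local path-connectedness hypothesis before choosing $y,z$.
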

\begin{proof}
Given $\exists s' \in [0,1]: \zeta(s') \in \partial \mathcal{R}$ and $\zeta(s)\notin \mathcal{R}, s\in (s', s'+\epsilon]$, where $\epsilon>0$ can be arbitrary small. We need to show $\zeta(s')\in \closure{\partial \mathcal{R}\backslash\partial\mathcal{S}}$.

\noindent Case 1: $\zeta(s') \notin \partial \mathcal{S}$ implies $\zeta(s') \in \partial \mathcal{R}\backslash\partial\mathcal{S} \subseteq \closure{\partial \mathcal{R}\backslash\partial\mathcal{S}}$.

\noindent Case 2: $\zeta(s') \in \partial \mathcal{S} \cap \partial \mathcal{R}$. Prove by contradiction. Let's assume $\zeta(s') \notin \closure{\partial \mathcal{R}\backslash\partial\mathcal{S}}$ and using closure definition, this implies $\exists$ sufficiently small $\epsilon>0$, such that 
\begin{align}
    B_{\epsilon}(\zeta(s')) \cap (\partial \mathcal{R}\backslash\partial\mathcal{S}) = \emptyset. \label{eq:assump}
\end{align}
For $\epsilon>0$, $s \in (s',s'+\epsilon], \zeta(s)\in \mathcal{S} \backslash\mathcal{R}$ and $\zeta(s') \in \partial \mathcal{S} \cap \partial \mathcal{R}$ implies $B_{\epsilon}(\zeta(s')) \cap \mathcal{R} \neq B_{\epsilon}(\zeta(s')) \cap \mathcal{S}$. Since the set $\mathcal{R}$ is regular close, a ball around $x \in \partial \mathcal{R}$ with sufficiently small radius $\epsilon>0$, intersected with $\mathcal{R}$ have a non-empty interior and preserves regular close. Additionally, for arbitrary small $\epsilon>0$ the set $B_{\epsilon}(x) \cap \interior{\mathcal{R}}, x \in \partial \mathcal{R}$ is path connected implies for even smaller $\delta>0, x' \in \partial (B_{\epsilon}(x) \cap \mathcal{R})$ the set $B_{\delta}(x') \cap \interior{(B_{\epsilon}(x) \cap \mathcal{R})}$ is also path connected. Same holds with set $\mathcal{S}$ and hence by using \cref{lem:subsetUnequalBound},
% Pick a small enough $\epsilon$ such that the sets $B_{\epsilon}(\zeta(s')) \cap \mathcal{R}$, $B_{\epsilon}(\zeta(s')) \cap \mathcal{S}$ are regular close and then by using \cref{lem:subsetUnequalBound},
\begin{align*}
   \partial (B_{\epsilon}(\zeta(s')) \cap  \mathcal{R}) \neq \partial (B_{\epsilon}(\zeta(s')) \cap  \mathcal{S}),
\end{align*}
and since $\zeta(s') \in \partial \mathcal{S} \cap \partial \mathcal{R}$, this implies,
\begin{align}
    B_{\epsilon}(\zeta(s')) \cap \partial \mathcal{R} \neq B_{\epsilon}(\zeta(s')) \cap \partial \mathcal{S}. \label{eq:leaving}
\end{align}
\looseness -1 From assumption \cref{eq:assump} and \cref{eq:leaving}, we get
% $$    B_{\epsilon}(\zeta(s')) \cap \partial \mathcal{R} \subset B_{\epsilon}(\zeta(s')) \cap \partial \mathcal{S}.$$
$$B_{\epsilon}(\zeta(s')) \cap (\partial \mathcal{S}\backslash\partial\mathcal{R}) \neq \emptyset.$$
\looseness -1 Hence, for a ball around $\zeta(s')$ of any $\epsilon>0$, $\partial \mathcal{S}$ is strict superset of $\partial \mathcal{R}$. Since $\mathcal{S}$ is a regular close-set, the above equation implies there exists a subset of $\mathcal{S}$ that is connected only through a point $\zeta(s')$ (see \Cref{fig: conti-locally-path-connected} left). However, this contradicts that $B_{\epsilon}(\zeta(s')) \cap \interior{\mathcal{S}}$ is path connected. %(assuming $\mathcal{S}$ is a regular close set) \cref{fig: conti-locally-path-connected}. That is, $\partial \mathcal{S}$ cannot intersect with itself.
\end{proof}
% let $\state_0 \in \mathcal{R}$ and $\state_1\in \mathcal{S} \backslash \mathcal{R}$ and
% \noindent Hi Johannes,
% I think yesterday while proving if $\mathcal{R} \neq \mathcal{S} \implies \partial \mathcal{R} \neq \partial \mathcal{S}, (\mathcal{R} \subset \mathcal{S})$. We were not using the information that this shall hold for all $\epsilon>0$, $B_{\epsilon}(\zeta(s')) \cap \mathcal{R} \neq B_{\epsilon}(\zeta(s')) \cap \mathcal{S}$. I was thinking of proving it with something like this? 
\begin{figure}
\setlength{\abovecaptionskip}{5pt}
    \centering
    \begin{subfigure}[b]{0.2\textwidth}
            \scalebox{0.5}{\tikzset{every picture/.style={line width=0.75pt}} %set default line width to 0.75pt        

\begin{tikzpicture}[x=0.75pt,y=0.75pt,yscale=-1,xscale=1]
%uncomment if require: \path (0,485); %set diagram left start at 0, and has height of 485

%Shape: Triangle [id:dp9067710170223962] 
\draw  [fill={rgb, 255:red, 177; green, 177; blue, 255 }  ,fill opacity=1 ] (457.96,261.1) -- (538.83,368.16) -- (374.75,366.35) -- cycle ;
%Shape: Polygon Curved [id:ds8291444761299018] 
\draw  [fill={rgb, 255:red, 177; green, 177; blue, 255 }  ,fill opacity=1 ] (290.5,314) .. controls (299,313) and (305,308) .. (329,330) .. controls (353,352) and (368.5,361.71) .. (374.75,366.35) .. controls (381,371) and (400,403) .. (380.78,446) .. controls (361.56,489) and (287.9,465.8) .. (270.2,452.4) .. controls (252.5,439) and (241.1,349.4) .. (257.5,326.5) .. controls (273.9,303.6) and (282,315) .. (290.5,314) -- cycle ;
%Shape: Polygon Curved [id:ds6250653327682998] 
\draw  [fill={rgb, 255:red, 108; green, 215; blue, 108 }  ,fill opacity=0.6 ] (374.75,366.35) .. controls (356.6,353) and (325.6,327.4) .. (312.4,328.2) .. controls (299.2,329) and (295,371.5) .. (290,391.5) .. controls (285,411.5) and (309.7,471.4) .. (354.2,457.4) .. controls (398.7,443.4) and (392.9,379.71) .. (374.75,366.35) -- cycle ;
%Shape: Circle [id:dp7821091387354739] 
\draw  [line width=3.75]  (312.5,415) .. controls (312.5,414.29) and (313.07,413.72) .. (313.78,413.72) .. controls (314.48,413.72) and (315.06,414.29) .. (315.06,415) .. controls (315.06,415.71) and (314.48,416.28) .. (313.78,416.28) .. controls (313.07,416.28) and (312.5,415.71) .. (312.5,415) -- cycle ;
%Curve Lines [id:da26850222842939164] 
\draw [line width=1.5]  [dash pattern={on 5.63pt off 4.5pt}]  (312.5,415) .. controls (334,420) and (355.5,382.71) .. (374.75,366.35) .. controls (393.62,350.33) and (423.45,335.56) .. (481.89,350.54) ;
\draw [shift={(485.5,351.5)}, rotate = 195.22] [fill={rgb, 255:red, 0; green, 0; blue, 0 }  ][line width=0.08]  [draw opacity=0] (13.4,-6.43) -- (0,0) -- (13.4,6.44) -- (8.9,0) -- cycle    ;
\draw [shift={(351.64,390.95)}, rotate = 132.29] [fill={rgb, 255:red, 0; green, 0; blue, 0 }  ][line width=0.08]  [draw opacity=0] (13.4,-6.43) -- (0,0) -- (13.4,6.44) -- (8.9,0) -- cycle    ;
\draw [shift={(433.37,344.38)}, rotate = 176.69] [fill={rgb, 255:red, 0; green, 0; blue, 0 }  ][line width=0.08]  [draw opacity=0] (13.4,-6.43) -- (0,0) -- (13.4,6.44) -- (8.9,0) -- cycle    ;
%Shape: Circle [id:dp003487459378777258] 
\draw  [line width=3.75]  (482.94,351.5) .. controls (482.94,350.79) and (483.52,350.22) .. (484.22,350.22) .. controls (484.93,350.22) and (485.5,350.79) .. (485.5,351.5) .. controls (485.5,352.21) and (484.93,352.78) .. (484.22,352.78) .. controls (483.52,352.78) and (482.94,352.21) .. (482.94,351.5) -- cycle ;
%Shape: Circle [id:dp7097551564533244] 
\draw   (340.05,366.35) .. controls (340.05,347.19) and (355.59,331.65) .. (374.75,331.65) .. controls (393.91,331.65) and (409.45,347.19) .. (409.45,366.35) .. controls (409.45,385.52) and (393.91,401.06) .. (374.75,401.06) .. controls (355.59,401.06) and (340.05,385.52) .. (340.05,366.35) -- cycle ;
%Shape: Circle [id:dp8421516601421277] 
\draw  [line width=3.75]  (374.75,366.35) .. controls (374.75,365.65) and (375.32,365.08) .. (376.03,365.08) .. controls (376.73,365.08) and (377.31,365.65) .. (377.31,366.35) .. controls (377.31,367.06) and (376.73,367.63) .. (376.03,367.63) .. controls (375.32,367.63) and (374.75,367.06) .. (374.75,366.35) -- cycle ;

% Text Node
\draw (314.7,362.23) node [anchor=north west][inner sep=0.75pt]  [font=\LARGE]  {$\mathcal{R}$};
% Text Node
\draw (322.12,419.9) node [anchor=north west][inner sep=0.75pt]  [font=\LARGE]  {$x_{0}$};
% Text Node
\draw (265.7,352.23) node [anchor=north west][inner sep=0.75pt]  [font=\LARGE]  {$\mathcal{S}$};
% Text Node
\draw (397.19,389.68) node [anchor=north west][inner sep=0.75pt]  [font=\LARGE]  {$B_{\epsilon }\left( \zeta \left( s^{'}\right)\right)$};
% Text Node
\draw (490.12,327.9) node [anchor=north west][inner sep=0.75pt]  [font=\LARGE]  {$x_{1}$};
% Text Node
\draw (361.33,333.13) node [anchor=north west][inner sep=0.75pt]  [font=\LARGE]  {$\zeta \left( s^{'}\right)$};

\end{tikzpicture}}
    % \caption{}
    \end{subfigure}\hfill
    \begin{subfigure}[b]{0.2\textwidth}
     \hspace{-2.5em}       \scalebox{0.5}{\tikzset{every picture/.style={line width=0.75pt}} %set default line width to 0.75pt        

\begin{tikzpicture}[x=0.75pt,y=0.75pt,yscale=-1,xscale=1]
%uncomment if require: \path (0,485); %set diagram left start at 0, and has height of 485

%Shape: Polygon Curved [id:ds8291444761299018] 
\draw  [fill={rgb, 255:red, 177; green, 177; blue, 255 }  ,fill opacity=1 ] (319.33,66) .. controls (356.33,94) and (256.25,109.65) .. (264,148) .. controls (271.75,186.35) and (264,203) .. (253.78,239) .. controls (243.56,275) and (148.7,292.4) .. (131,279) .. controls (113.3,265.6) and (72.33,134.33) .. (102.33,98.33) .. controls (132.33,62.33) and (282.33,38) .. (319.33,66) -- cycle ;
%Shape: Polygon Curved [id:ds9185420622925642] 
\draw  [fill={rgb, 255:red, 108; green, 215; blue, 108 }  ,fill opacity=0.6 ] (136.67,109.67) .. controls (145.17,108.67) and (256.25,109.65) .. (264,148) .. controls (271.75,186.35) and (264,203) .. (253.78,239) .. controls (243.56,275) and (154.7,278.4) .. (137,265) .. controls (119.3,251.6) and (112.6,194.9) .. (129,172) .. controls (145.4,149.1) and (128.17,110.67) .. (136.67,109.67) -- cycle ;
%Shape: Circle [id:dp7097551564533244] 
\draw   (176.66,148) .. controls (176.66,99.77) and (215.77,60.66) .. (264,60.66) .. controls (312.23,60.66) and (351.34,99.77) .. (351.34,148) .. controls (351.34,196.23) and (312.23,235.34) .. (264,235.34) .. controls (215.77,235.34) and (176.66,196.23) .. (176.66,148) -- cycle ;
%Shape: Circle [id:dp38467396051405833] 
\draw   (227.95,100.29) .. controls (227.95,84.72) and (240.57,72.1) .. (256.14,72.1) .. controls (271.71,72.1) and (284.33,84.72) .. (284.33,100.29) .. controls (284.33,115.86) and (271.71,128.48) .. (256.14,128.48) .. controls (240.57,128.48) and (227.95,115.86) .. (227.95,100.29) -- cycle ;
%Shape: Circle [id:dp10536673822182019] 
\draw   (223.36,126.83) .. controls (223.36,114.98) and (232.97,105.37) .. (244.83,105.37) .. controls (256.68,105.37) and (266.29,114.98) .. (266.29,126.83) .. controls (266.29,138.69) and (256.68,148.3) .. (244.83,148.3) .. controls (232.97,148.3) and (223.36,138.69) .. (223.36,126.83) -- cycle ;
%Shape: Circle [id:dp851865782327961] 
\draw  [line width=3.75]  (262.72,146.72) .. controls (262.72,146.02) and (263.29,145.44) .. (264,145.44) .. controls (264.71,145.44) and (265.28,146.02) .. (265.28,146.72) .. controls (265.28,147.43) and (264.71,148) .. (264,148) .. controls (263.29,148) and (262.72,147.43) .. (262.72,146.72) -- cycle ;
%Shape: Circle [id:dp01406396669245047] 
\draw  [line width=3.75]  (254.86,100.29) .. controls (254.86,99.58) and (255.43,99.01) .. (256.14,99.01) .. controls (256.85,99.01) and (257.42,99.58) .. (257.42,100.29) .. controls (257.42,101) and (256.85,101.57) .. (256.14,101.57) .. controls (255.43,101.57) and (254.86,101) .. (254.86,100.29) -- cycle ;
%Shape: Circle [id:dp7117931789130632] 
\draw  [line width=3.75]  (243.55,126.83) .. controls (243.55,126.13) and (244.12,125.55) .. (244.83,125.55) .. controls (245.53,125.55) and (246.11,126.13) .. (246.11,126.83) .. controls (246.11,127.54) and (245.53,128.11) .. (244.83,128.11) .. controls (244.12,128.11) and (243.55,127.54) .. (243.55,126.83) -- cycle ;

% Text Node
\draw (137.7,223.23) node [anchor=north west][inner sep=0.75pt]  [font=\LARGE]  {$\mathcal{R}$};
% Text Node
\draw (101.7,135.23) node [anchor=north west][inner sep=0.75pt]  [font=\LARGE]  {$\mathcal{S}$};
% Text Node
\draw (286.91,120.3) node [anchor=north west][inner sep=0.75pt]  [font=\LARGE]  {$B_{\epsilon }( x)$};
% Text Node
\draw (261.86,78.01) node [anchor=north west][inner sep=0.75pt]  [font=\LARGE]  {$x$};
% Text Node
\draw (184.57,154.3) node [anchor=north west][inner sep=0.75pt]  [font=\LARGE]  {$B_{\delta }( x')$};
% Text Node
\draw (235.19,131.68) node [anchor=north west][inner sep=0.75pt]  [font=\large]  {$P1$};
% Text Node
\draw (239.19,109.68) node [anchor=north west][inner sep=0.75pt]  [font=\large]  {$P2$};

\end{tikzpicture}}
     % \caption{}
    \end{subfigure}
    \caption{On the left, we present a contradictory case for \cref{lem: boundaryDiff}, i.e., the set $B_{\epsilon}(\zeta(s')) \cap \interior{\mathcal{S}}$ is not a path connected set. On the right, we illustrate proof steps for \cref{lem:subsetUnequalBound}. For regular close and locally path-connected interior sets $\mathcal{R}$ and $\mathcal{S}$, if $\mathcal{R}$ is a strict subset of $\mathcal{S}$, then their boundaries are unequal.}
    \label{fig: conti-locally-path-connected}
    % \label{fig: conti_subset_unequal_boundaries} 
    \vspace{-1.5em}
\end{figure}
\begin{lemma} \label{lem:subsetUnequalBound} Let the regular close sets $\calP = \{\mathcal{R}, \mathcal{S} \} \subseteq \R^\sdim$ additionally satisfy $\forall p \in \partial \calP,$ there exists an $\epsilon>0$ the set $B_{\epsilon}(p) \cap \interior{\calP}$ is path connected. Then $\mathcal{S} \backslash \mathcal{R} \neq \emptyset \implies \partial \mathcal{R} \neq \partial \mathcal{S}$. %For all $\epsilon>0$, $s \in (s',s'+\epsilon], \zeta(s)\in \mathcal{S} \backslash\mathcal{R}$ and $\zeta(s') \in \partial \mathcal{S} \cap \partial \mathcal{R}$ then, $B_{\epsilon}(\zeta(s')) \cap \mathcal{R} \neq B_{\epsilon}(\zeta(s')) \cap \mathcal{S} \implies B_{\epsilon}(\zeta(s')) \cap \partial \mathcal{R} \neq B_{\epsilon}(\zeta(s')) \cap \partial \mathcal{S}$.
\end{lemma}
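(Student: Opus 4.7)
The plan is to argue by contradiction: assuming $\partial\mathcal{R}=\partial\mathcal{S}$, I will construct a continuous path inside $\interior{\mathcal{S}}$ that must cross $\partial\mathcal{R}$, and then use the locally path-connected interior hypothesis for $\mathcal{S}$ to put that crossing point into $\interior{\mathcal{S}}$, contradicting $\partial\mathcal{R}=\partial\mathcal{S}$. I will also use the implicit contextual hypothesis $\mathcal{R}\subseteq\mathcal{S}$ coming from \cref{lem: boundaryDiff} (since $\mathcal{R}\cap\mathcal{S}=\mathcal{R}$ there, and the figure caption explicitly states ``if $\mathcal{R}$ is a strict subset of $\mathcal{S}$''); without this one can build trivial counterexamples with $\mathcal{R},\mathcal{S}$ sharing the same boundary but lying on opposite sides of it.

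First I would pick $x\in\mathcal{S}\setminus\mathcal{R}$. Because $\mathcal{R}$ is closed, $x$ has an open ball $B_r(x)$ disjoint from $\mathcal{R}$, hence disjoint from $\partial\mathcal{R}=\partial\mathcal{S}$; connectedness of $B_r(x)$ together with $x\in\mathcal{S}$ and $B_r(x)\cap\partial\mathcal{S}=\emptyset$ forces $B_r(x)\subseteq\interior{\mathcal{S}}$, so $x\in\interior{\mathcal{S}}$. Let $C$ be the path-connected component of the open set $\interior{\mathcal{S}}$ containing $x$. Since $\R^\sdim$ is locally path-connected, $C$ is open, and a standard argument shows $\partial C\subseteq\partial\interior{\mathcal{S}}=\partial\mathcal{S}$: any $p\in\partial C$ that lay in $\interior{\mathcal{S}}$ would belong to a different open path component of $\interior{\mathcal{S}}$, which would then contain an open neighborhood of $p$ disjoint from $C$, contradicting $p\in\closure{C}$.

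Next, the set $C$ cannot be all of $\R^\sdim$ under the non-degenerate assumptions of the application (both sets bounded, $\mathcal{R}$ with nonempty interior), so I can pick $p\in\partial C\subseteq\partial\mathcal{S}=\partial\mathcal{R}$. The locally path-connected interior hypothesis for $\mathcal{S}$ gives some $\epsilon>0$ with $B_\epsilon(p)\cap\interior{\mathcal{S}}$ path connected. Since $p\in\closure{C}$, this set meets $C$, and a path-connected subset of $\interior{\mathcal{S}}$ meeting $C$ must lie in $C$. Thus $B_\epsilon(p)\cap\interior{\mathcal{S}}\subseteq C$. On the other hand $p\in\partial\mathcal{R}=\partial\interior{\mathcal{R}}$ (using regular closedness of $\mathcal{R}$), so $B_\epsilon(p)\cap\interior{\mathcal{R}}\neq\emptyset$; combined with $\interior{\mathcal{R}}\subseteq\interior{\mathcal{S}}$ this gives a point $w\in C\cap\interior{\mathcal{R}}$.

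Finally, take a continuous path $\gamma:[0,1]\to C$ from $x$ to $w$. As $x\notin\mathcal{R}$ and $w\in\interior{\mathcal{R}}$, and $\mathcal{R}$ is closed, the set $\{s:\gamma(s)\in\mathcal{R}\}$ is closed and nonempty, so it has a least element $s^\star\in(0,1]$; by minimality and closedness $\gamma(s^\star)\in\partial\mathcal{R}$. But $\gamma(s^\star)\in C\subseteq\interior{\mathcal{S}}$, contradicting $\partial\mathcal{R}=\partial\mathcal{S}$.

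The main obstacle is the third step, ensuring $B_\epsilon(p)\cap\interior{\mathcal{S}}\subseteq C$ and simultaneously extracting a point of $\interior{\mathcal{R}}$ inside that ball: this is exactly where both regular-closedness of $\mathcal{R}$ and the locally-path-connected interior of $\mathcal{S}$ are used crucially. The trivial degenerate cases ($\mathcal{R}=\emptyset$ or $\mathcal{S}=\R^\sdim$) should be noted but are either ruled out by context or make $\partial\mathcal{R}\neq\partial\mathcal{S}$ directly.
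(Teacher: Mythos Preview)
Your proof is correct and follows essentially the same strategy as the paper: assume $\partial\mathcal{R}=\partial\mathcal{S}$, locate a boundary point of $\mathcal{R}$ reachable from $x\in\interior{\mathcal{S}}\setminus\mathcal{R}$, use the locally path-connected interior hypothesis for $\mathcal{S}$ at that point to connect $x$ to a point of $\interior{\mathcal{R}}$ by a path lying in $\interior{\mathcal{S}}$, and derive a contradiction from the fact that this path must cross $\partial\mathcal{R}=\partial\mathcal{S}$. The only cosmetic difference is that you globalize via the path-component $C\subseteq\interior{\mathcal{S}}$ and pick $p\in\partial C$, whereas the paper finds the boundary point by growing a ball around $x$ until it first touches $\partial\mathcal{R}$ and then works entirely inside a small ball $B_\delta(x')$; both arguments (and the paper's) tacitly use $\mathcal{R}\subseteq\mathcal{S}$, which you rightly flag.
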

\begin{proof} %Define $B_{\epsilon}(x)\coloneqq\{z\in\mathbb{R}^p|~\|x-z\| < r\}$ (open ball). 
Proof by contradiction. Assume $\partial \mathcal{R} = \partial \mathcal{S}$. Since $\mathcal{S} \backslash \mathcal{R} \neq \emptyset$, $\exists x \in \mathcal{S} \backslash \mathcal{R} $. Consider the following two cases:\\
i) $x \in \partial \mathcal{S} \implies x \in \partial \mathcal{R}$, $x\notin \mathcal{S} \backslash \mathcal{R}$. (contradiction)\\
ii) $x \in \interior{\mathcal{S}} \implies \exists \epsilon: B_{\epsilon}(x) \cap \partial R \neq \emptyset$ and $\interior{B_{\epsilon}(x)} \subseteq \interior{\mathcal{S}} \backslash \interior{\mathcal{R}}$. 

Suppose the closed ball touches $\partial \mathcal{R}$ or $\partial \mathcal{S}$ at location $x'$ (see \Cref{fig: conti-locally-path-connected} right). With an arbitrary small $\delta>0$, consider a ball $B_{\delta}(x')$ at $x'$. The boundary $\partial \mathcal{R}$ divides the ball $B_{\delta}(x')$ in at least two parts.\\  $P_1: B_{\delta}(x') \cap \interior{\mathcal{R}} \subseteq \interior{\mathcal{S}}$ ($P_1 \neq \emptyset$ since $x'\in \partial \mathcal{R}$ and $\mathcal{R}$ is regular close. Subset hold since $\mathcal{R} \subseteq \mathcal{S}$).\\  $P_2: B_{\delta}(x') \cap \interior{B_{\epsilon}(x)} \subseteq \interior{\mathcal{S}}$ (since $\interior{B_{\epsilon}(x)} \subseteq \interior{\mathcal{S}}$). %and $P_2 \cap \interior{\mathcal{R}} = \emptyset$ (since $ B_{\epsilon}(x)\cap \interior{\mathcal{R}} = \emptyset $). 

\looseness -1 Since $x' \in \partial \mathcal{S}$, we know $B_{\delta}(x') \cap \interior{\mathcal{S}}$ is path connected. This implies $\exists \zeta: [0,1] \to B_{\delta}(x') \cap \interior{\mathcal{S}}, \zeta(0)\in P_1 , \zeta(1) \in P_2$ and $\zeta(s)\!\notin \partial \mathcal{S}, s\in [0,1]$. However $P_1\!\subseteq\!\interior{R}$ but $P_2 \cap \interior{\mathcal{R}} = \emptyset$ (since $ \interior{B_{\epsilon}(x)} \cap \interior{\mathcal{R}} = \emptyset $). This implies $\exists s\in [0,1]: \zeta(s) \in 
\partial \mathcal{R}$. Since $\partial \mathcal{S} = \partial \mathcal{R}$, this contradicts that $\zeta(s)\notin \partial \mathcal{S}, s\in [0,1]$.
% For an arbitrary small $\delta>0$, we know $B_{\delta}(x') \cap \interior{\mathcal{S}}$ is path connected. Hence at $x'$, $\partial \mathcal{R}$ divides $B_{\delta}(x')$ in two connected parts. $\mathcal{R}$ covers one part and $\mathcal{S}$ at least covers the another part.  
% i) $P_1$ : Since $x' \in \partial \mathcal{R}$ and  $\mathcal{R}$ is regular close, at least a part is covered by $\mathcal{R}$. However, $\interior{B_{\epsilon}(x)} \cap \mathcal{R} = \emptyset$, so $P_1 : B_{\delta}(x') \cap \mathcal{R}$ is the only region of the ball covered by $\mathcal{R}$.
% ii) P2: Since $x' \in \partial \mathcal{S}$ and  $\mathcal{S}$ is regular close, at least a part of the ball covered by $\mathcal{S}$. Since $\interior{B_{\epsilon}(x)} \cap \mathcal{S} \neq \emptyset$, %(and $\partial \mathcal{R}$ divides $B_{\delta}(x')$, i.e., there is no more volume of $\mathcal{S}$ attached at point $x'$)
% at least P2: $B_{\delta}(x') \cap\interior{\mathcal{S}}$ is covered by $\mathcal{S}$.
% Since $\mathcal{R} \subseteq \mathcal{S}$, the first part is also contained in $\mathcal{S}$ which implies $B_{\delta}(x') \subseteq \mathcal{S}$. Hence, $x' \notin \partial \mathcal{S}$, which is a contradiction.  
\end{proof}

\begin{lemma}
    \label{lem: subtract_domain_boundary} \looseness -1 Let \cref{assump: regularconnected} holds. Consider a path 
 $\zeta:[0,1] \to \Rcontoper[\safeInit{\n}]{\optiSet[]{\n}}$ such that $\zeta(0) \in \safeInit{\n}$, $\zeta(1) \in \safeInit{\n}$ and $\zeta(b^\star) \notin \Rcontoper[\safeInit{\n}]{\pessiSet[]{\n}}$ for some $b^\star \in [0,1]$. There exists $b' \in [0,b^{\star}): \zeta(b')\in \partial \Rcontoper[\safeInit{\n}]{\pessiSet[]{\n}}$. Then $\zeta(b') \in \closure{\partial \Rcontoper[\safeInit{\n}]{\pessiSet[]{\n}} \backslash \partial \Rcontoper[\safeInit{\n}]{\X}}$.
\end{lemma}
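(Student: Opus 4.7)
The plan is to reduce the statement to a direct application of the auxiliary Lemma~\ref{lem: boundaryDiff} with the identification $\mathcal{R} \coloneqq \Rcontoper[\safeInit{\n}]{\pessiSet[]{\n}}$ and $\mathcal{S} \coloneqq \Rcontoper[\safeInit{\n}]{\X}$. Both sets are regular closed with locally path‑connected interior by Assumption~\ref{assump: regularconnected}, and the inclusion $\mathcal{R} \subseteq \mathcal{S}$ follows trivially from $\pessiSet[]{\n}\subseteq \X$ inside the definition \eqref{eqn: pessi-reach-ret-cont} of the reachable returnable set. Hence the structural hypotheses of Lemma~\ref{lem: boundaryDiff} are available without further work.

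Next, I would construct the required sub-path to feed into Lemma~\ref{lem: boundaryDiff}. Since $\zeta(0)\in \safeInit{\n}$ and $\safeInit{\n}\subseteq \Rcontoper[\safeInit{\n}]{\pessiSet[]{\n}} = \mathcal{R}$ (the terminal set is pessimistically safe and control invariant, so its elements can be steered to themselves by the feedback $\kappa_\n$, cf.\ Assumption~\ref{assump: safe-init}), the starting point lies in $\mathcal{R}$. The hypothesis says $\zeta(b^\star) \notin \mathcal{R}$, while the image of $\zeta$ is contained in $\Rcontoper[\safeInit{\n}]{\optiSet[]{\n}} \subseteq \mathcal{S}$, so $\zeta(b^\star) \in \mathcal{S}\backslash\mathcal{R}$. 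Restricting and reparameterizing $\zeta|_{[0,b^\star]}$ to the unit interval therefore yields a continuous path in $\mathcal{S}$ from a point of $\mathcal{R}$ to a point of $\mathcal{S}\backslash\mathcal{R}$, exactly the setup of Lemma~\ref{lem: boundaryDiff}.

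Applying Lemma~\ref{lem: boundaryDiff} then produces some $s'\in[0,1]$ with $\zeta(s')\in \closure{\partial \mathcal{R}\backslash \partial \mathcal{S}} = \closure{\partial \Rcontoper[\safeInit{\n}]{\pessiSet[]{\n}} \backslash \partial \Rcontoper[\safeInit{\n}]{\X}}$. Mapping this back to the original parameterization gives a value $b'\in[0,b^\star)$ with the desired property (the strict inequality is automatic because $\zeta(s')\in \mathcal{R}$, whereas $\zeta(b^\star)\notin \mathcal{R}$). This $b'$ simultaneously satisfies $\zeta(b')\in \partial \Rcontoper[\safeInit{\n}]{\pessiSet[]{\n}}$, which completes the proof.

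The only mild subtlety I anticipate is bookkeeping around the reparameterization and the inclusion $\safeInit{\n}\subseteq \mathcal{R}$: one must be careful that the $b'$ produced by Lemma~\ref{lem: boundaryDiff} indeed sits strictly before $b^\star$ and satisfies both membership claims at once. Everything else is a direct invocation of already established facts, so there is no substantial technical obstacle beyond ensuring that the hypotheses of Lemma~\ref{lem: boundaryDiff} are met verbatim by the two reachable-returnable sets at hand.
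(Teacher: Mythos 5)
Your proposal is correct and follows essentially the same route as the paper: set $\mathcal{R}=\Rcontoper[\safeInit{\n}]{\pessiSet[]{\n}}$, $\mathcal{S}=\Rcontoper[\safeInit{\n}]{\X}$, note $\zeta$ maps into $\Rcontoper[\safeInit{\n}]{\optiSet[]{\n}}\subseteq\mathcal{S}$ with $\zeta(0)\in\mathcal{R}$ and $\zeta(b^\star)\in\mathcal{S}\backslash\mathcal{R}$, invoke \cref{assump: regularconnected} for the regular-closed and locally path-connected-interior hypotheses, and apply \cref{lem: boundaryDiff}. The extra bookkeeping you flag (reparameterizing $\zeta|_{[0,b^\star]}$ and justifying $\safeInit{\n}\subseteq\mathcal{R}$ via \cref{assump: safe-init}) is harmless and consistent with the paper's argument.
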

\begin{proof}
%Since $\zeta(b^\star) \notin \Rcontoper[\safeInit{\n}]{\pessiSet[]{\n}}$, $\exists~ b'<b^{\star}: \zeta(b')\in \partial \Rcontoper[\safeInit{\n}]{\pessiSet[]{\n}}$. 
Note that $\Rcontoper[\safeInit{\n}]{\pessiSet[]{\n}} \subseteq \Rcontoper[\safeInit{\n}]{\X}$ and $\forall b \in [0,1], \zeta(b) \in \Rcontoper[\safeInit{\n}]{\X}$ with $\zeta(0) \in \Rcontoper[\safeInit{\n}]{\pessiSet[]{\n}}, \zeta(b^\star) \in \Rcontoper[\safeInit{\n}]{\X} \backslash \Rcontoper[\safeInit{\n}]{\pessiSet[]{\n}}$. 
Moreover due to \cref{assump: regularconnected}, $\Rcontoper[\safeInit{\n}]{\pessiSet[]{\n}}$ and $ \Rcontoper[\safeInit{\n}]{\X}$ are regular closed sets and satisfy the local path connected interior property, thus, by \cref{lem: boundaryDiff}, we get, 
$\zeta(b') \in \closure{\partial \Rcontoper[\safeInit{\n}]{\pessiSet[]{\n}} \backslash \partial \Rcontoper[\safeInit{\n}]{\X}}$.
Intuitively, this represents that the path $\zeta$ can cross the boundary of the reachable returnable pessimistic set only from the locations that lead it into the domain's reachable returnable set.
\end{proof}

\begin{lemma} \label{lem: boundary-pessi-zero}
$\forall x' \in ~ \closure{\partial \pessiSet[]{\n} \backslash \partial \Domain}, \lbconst[\n](x')=0$.
% If $\pessiSet[]{\n} \neq \Domain$ and $\partial \pessiSet[]{\n} \cap \partial \Domain \neq \emptyset$, $\forall x' \in \partial (\partial \pessiSet[]{\n} \cap \partial \Domain) \cup \partial \pessiSet[]{\n} \backslash \partial \Domain, \lbconst[\n](x')=0$.
\end{lemma}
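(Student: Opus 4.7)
The plan is to first establish the claim for the set $\partial \pessiSet[]{\n} \backslash \partial \Domain$ itself via a direct continuity argument, and then extend to its closure by taking limits. Recall that $\lbconst[\n]$ is continuous on $\Domain$ (by continuity of the kernel, as noted after \cref{assump:q_RKHS}), and that $\pessiSet[]{\n} = \{x \in \Domain \mid \lbconst[\n](x) \geq 0\}$ is therefore closed in $\Domain$.

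\textbf{Step 1: Points in $\partial \pessiSet[]{\n} \backslash \partial \Domain$.} Fix any $x' \in \partial \pessiSet[]{\n} \backslash \partial \Domain$. Since $x' \notin \partial \Domain$ and $x' \in \Domain$, there exists $\epsilon_1 > 0$ with $\ball{\epsilon_1}(x') \subseteq \Domain$. Closedness of $\pessiSet[]{\n}$ implies $x' \in \pessiSet[]{\n}$, so $\lbconst[\n](x') \geq 0$. Suppose, for contradiction, $\lbconst[\n](x') > 0$. By continuity of $\lbconst[\n]$, there exists $\epsilon_2 > 0$ with $\lbconst[\n](z) > 0$ for every $z \in \ball{\epsilon_2}(x')$. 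Setting $r = \min(\epsilon_1, \epsilon_2) > 0$, we obtain $\ball{r}(x') \subseteq \pessiSet[]{\n}$, contradicting $x' \in \partial \pessiSet[]{\n}$ (which requires $\ball{r}(x') \backslash \pessiSet[]{\n} \neq \emptyset$). Hence $\lbconst[\n](x') = 0$.

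\textbf{Step 2: Passage to the closure.} Now take any $x' \in \closure{\partial \pessiSet[]{\n} \backslash \partial \Domain}$. By definition of closure, there is a sequence $\{x_k\}_{k \geq 1} \subseteq \partial \pessiSet[]{\n} \backslash \partial \Domain$ with $x_k \to x'$. By Step~1, $\lbconst[\n](x_k) = 0$ for all $k$, and by continuity of $\lbconst[\n]$, $\lbconst[\n](x') = \lim_{k \to \infty} \lbconst[\n](x_k) = 0$, which completes the proof.

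\textbf{Anticipated obstacles.} The argument is essentially routine once continuity of $\lbconst[\n]$ and closedness of $\pessiSet[]{\n}$ are in hand; these are available from the paper's standing assumptions. The only subtlety is the role of ``$\backslash \partial \Domain$'': without it, a boundary point could sit on $\partial \Domain$ with $\lbconst[\n] > 0$ (the boundary coming purely from the restriction to $\Domain$), and the contradiction in Step~1 would fail. Including the exclusion $x' \notin \partial \Domain$ gives an open ball in $\Domain$ around $x'$ and removes this issue.
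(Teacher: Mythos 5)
Your proof is correct and follows essentially the same route as the paper, which states in one line that $\partial \pessiSet[]{\n} \subseteq \{\state \in \Domain \mid \lbconst[\n](\state) = 0\} \cup \partial \Domain$ and then passes to the closure; your Step~1 is the contrapositive continuity argument behind that inclusion and your Step~2 makes the closure step explicit via continuity of $\lbconst[\n]$. The only (harmless) extra ingredient you invoke is closedness of $\pessiSet[]{\n}$, i.e.\ of $\Domain$, which could be avoided by noting that $x' \notin \partial\Domain$ together with $x' \in \closure{\pessiSet[]{\n}}$ already places $x'$ in the interior of $\Domain$ with $\lbconst[\n](x') \geq 0$ by continuity.
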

\begin{proof} By defi. $\pessiSet[]{\n} = \{\state \in \Domain | \lbconst[\n](\state) \geq 0 \}$. Hence $\partial \pessiSet[]{\n} \subseteq \{\state \in \Domain | \lbconst[\n](\state) = 0\} \cup \partial \Domain \!\implies \! \closure{\partial \pessiSet[]{\n} \backslash \partial \Domain} \subseteq \{\state \in \Domain | \lbconst[\n](\state) = 0\}$.\!\!\!\!\!\!\!
% \manish{May be we don't need "If $\pessiSet[]{\n} \neq \Domain$ and $\partial \pessiSet[]{\n} \cap \partial \Domain \neq \emptyset$", since $\partial (\partial \Domain)$ has no points? }
\end{proof}

\revdisp{
\begin{proof}[\textbf{Proof for \cref{thm:SE_exp}}]
%Similar to \cref{thm:sample_complexity}, we upper bound the number of samples required until the infeasibility of the \eqref{eqn:sampling_strategy_expander}. The sample complexity bound $n^\star$ in \cref{thm:sample_complexity} holds for the Domain $\X$. Hence it is also valid for sampling in any subset of $\X$ as well, e.g., $\expansion[\n](\safeInit{\n}, \LpessiSet[]{\n})$. So, $\exists \n \leq \n^\star : \ubconst[\n](\state) - \lbconst[\n](\state) < \epsconst \forall \state \in \expansion[\n](\safeInit{\n}, \LpessiSet[]{\n}) $. 
\looseness -1 Analogous to \cref{coro:sample_complexity}, the sample complexity result from \cref{thm:sample_complexity} upper bounds the number of samples required until the infeasibility of the sampling rule \eqref{eqn:sampling_strategy_expander}. Thus $\exists \nfin \leq \n^\star : \ubconst[\nfin](\state) - \lbconst[\nfin](\state) < \epsconst \forall \state \in \expansion[\nfin](\safeInit{\nfin}, \LpessiSet[]{\nfin})$. Now it suffice to show if $\ubconst[\nfin](\state) - \lbconst[\nfin](\state) < \epsconst \forall \state \in \expansion[\nfin](\safeInit{\nfin}, \LpessiSet[]{\nfin}) $ then $\Rcontoper[\safeInit{\nfin}]{\constSet[,\epsconst]{}} \subseteq \Rcontoper[\safeInit{\nfin}]{\optiSet[]{\nfin}}  \subseteq \Rcontoper[\safeInit{\nfin}]{\LpessiSet[]{\nfin}} \subseteq \Rcontoper[\safeInit{\nfin}]{\constSet[]{}}$.

\looseness -1 We first prove $\Rcontoper[\safeInit{\nfin}]{\optiSet[]{\nfin}}\subseteq \Rcontoper[\safeInit{\nfin}]{\LpessiSet[]{\nfin}}$ by contradiction. Let's assume $\exists \ x^\star \in \Rcontoper[\safeInit{\nfin}]{\optiSet[]{\nfin}} \backslash \Rcontoper[\safeInit{\nfin}]{\LpessiSet[]{\nfin}}$. This implies there exists a continuously path $\zeta(b) \in \Rcontoper[\safeInit{\nfin}]{\optiSet[]{\nfin}} \forall b \in [0,1]$ such that $\zeta(0) \in \safeInit{\nfin}, \zeta(b^{\star}) =\state^\star$ and $\zeta(1) \in \safeInit{\nfin}$. Please see \cref{fig: conti-opti-in-pessi} for a visual description. 

\looseness -1 Since $\zeta(b^\star) \notin \Rcontoper[\safeInit{\nfin}]{\LpessiSet[]{\nfin}}$, $\exists~ b'<b^{\star}: \zeta(b')\in \partial \Rcontoper[\safeInit{\nfin}]{\LpessiSet[]{\nfin}}$ which using \cref{lem: subtract_domain_boundary} for $\LpessiSet[]{\nfin}$ analogous to $\pessiSet[]{\nfin}$ implies $\zeta(b') \in \closure{\partial \Rcontoper[\safeInit{\nfin}]{\LpessiSet[]{\nfin}} \backslash \partial \Rcontoper[\safeInit{\nfin}]{\X}}$. Furthermore, \cref{lem:boundary_inclusion} implies $\zeta(b') \in \closure{\partial \LpessiSet[]{\nfin} \backslash \partial \X}$ and finally using \cref{lem: boundary-pessi-zero} we get  $\max_{x\in\X} \lbconst[\nfin](x) - \LipConst \| x - \zeta(b') \|=0 \implies \lbconst[\nfin](\zeta(b')) \leq 0$. Notably \cref{lem:boundary_inclusion,lem: subtract_domain_boundary,lem: boundary-pessi-zero} considers $\pessiSet[]{\nfin}$ but analogously follows for $\LpessiSet[]{\nfin}$.

\begin{comment}
Note that $\forall b \in [0,1], \zeta(b) \in \Rcontoper[\safeInit{\n}]{\X}$ and $\Rcontoper[\safeInit{\n}]{\LpessiSet[]{\n}} \subseteq \Rcontoper[\safeInit{\n}]{\X}$ $\stackrel{\cref{lem: boundaryDiff}}{\implies}$ $\zeta(b') \in \closure{\partial \Rcontoper[\safeInit{\n}]{\LpessiSet[]{\n}} \backslash \partial \Rcontoper[\safeInit{\n}]{\X}}$.

% Let's assume there exists a continuously differentiable path $\zeta(b) \in \Rcontoper[\safeInit{\n}]{\optiSet[]{\n}} \forall b \in [0,1]$ such that $\zeta(0) \in \safeInit{\n}, \zeta(1) \in \safeInit{\n}$, but $\zeta(b) \notin \Rcontoper[\safeInit{\n}]{\LpessiSet[]{\n}} \forall b \in [0,1] \implies \exists \zeta(b^{\star}) =\state^\star \in \Rcontoper[\safeInit{\n}]{\optiSet[]{\n}} \backslash \Rcontoper[\safeInit{\n}]{\LpessiSet[]{\n}}$. Please see \cref{fig: conti-opti-in-pessi} for a visual description. 
% The assumption implies $\exists~ b'<b^{\star}: \zeta(b') \in \closure{\partial \Rcontoper[\safeInit{\n}]{\LpessiSet[]{\n}} \backslash \partial \Rcontoper[\safeInit{\n}]{\X}}$, i.e., the path $\zeta$ will definitely cross the boundary of the reachable returnable pessimistic set that leads it to the reachable returnable optimistic set. (Please see \cref{fig: conti-opti-in-pessi-extreme} for detail on why we need closure.)
% \partial \Rcontoper[\safeInit{0}]{\LpessiSet[]{\n}} \backslash \partial \X$ \red{or $\zeta(b') \in \partial \X: \lbconst[\n](\zeta(b'))=0$,} 
With \cref{assump: equal_boundary} for $\LpessiSet[]{\n}$ analogous to $\pessiSet[]{\n}$ we have $\zeta(b') \in \closure{\partial \Rcontoper[\safeInit{0}]{\LpessiSet[]{\n}} \backslash \partial \Rcontoper[\safeInit{\n}]{\X}} \stackrel{\cref{lem:boundary_inclusion}}{\implies} \zeta(b') \in \closure{\partial \LpessiSet[]{\n} \backslash \partial \X} \stackrel{\cref{lem: boundary-pessi-zero}}{\implies} 
 \max_{x\in\X} \lbconst[\n](x) - \LipConst \| x - \zeta(b') \|=0 \implies \lbconst[\n](\zeta(b')) \leq 0$. 
% \lbconst[\n](\zeta(b'))=0$.
\end{comment}

\looseness -1 Since, $\zeta(b') \in \optiSet[]{\nfin} \implies \ubconst[\nfin](\zeta(b')) - \epsconst \geq 0$. Hence from the above two equations, $\ubconst[\nfin](\zeta(b')) - \lbconst[\nfin](\zeta(b'))  \geq \epsconst$. Using definition \eqref{eq:expander},   
note that $\zeta(b') \in \expansion[\nfin](\safeInit{\nfin}, \LpessiSet[]{\nfin}) $. 
% In \cref{eqn:sampling_strategy_expander}, pick $z = x = \zeta(b')$ and note that $\zeta(b') \in \expansion[\n](\safeInit{\n}, \LpessiSet[]{\n}) $. \red{Currently in \cref{eqn:sampling_strategy_expander}, $z\in\pessiSet[]{\n}$, however can be changed to $z\in\LpessiSet[]{\n}$, since $ \max_{x\in\X} \lbconst[\n](x) - \LipConst \| x - \zeta(b') \|=0$}. 
This implies $\ubconst[\nfin](\zeta(b')) - \lbconst[\nfin](\zeta(b')) < \epsconst$, which is a contradiction. Hence $\exists \nfin \leq \n^\star: \Rcontoper[\safeInit{\nfin}]{\optiSet[]{\nfin}}\subseteq \Rcontoper[\safeInit{\nfin}]{\LpessiSet[]{\nfin}}$, which yields the second set inclusion in \cref{eqn:objective} with enlarged pessimistic set $\LpessiSet[]{\nfin}$. Moreover, note that using \cref{coro:hp_bounds}, $\forall n \geq 0, \constSet[,\epsconst]{} \subseteq \optiSet[]{\n} ~\mathrm{and}~ \LpessiSet[]{\n} \subseteq \constSet[]{}$, which yields the other two set inclusions in \cref{eqn:objective} with enlarged pessimistic set $\LpessiSet[]{\nfin}$.
% which yields . Moreover, $\forall n\geq0, x\in\X, \lbconst[n](\state) \leq \constrain(\state) \leq \ubconst[n](\state)$ and \cref{lem:Lip_pessi_relation} implies,
% \begin{gather*}
%  \constSet[,\epsconst]{} \subseteq \optiSet[]{\nfin} ~\mathrm{and}~ \LpessiSet[]{\nfin} \subseteq \constSet[]{} \\
%     \!\!\implies\!\!\!\Rcontoper[\safeInit{\nfin}\!]{\!\constSet[,\epsconst]{}} \!\subseteq\! \Rcontoper[\safeInit{\nfin}\!]{\!\optiSet[]{\nfin}},\!\Rcontoper[\safeInit{\nfin}\!]{\!\LpessiSet[]{\nfin}} \!\subseteq\! \Rcontoper[\safeInit{\nfin}\!]{\!\constSet[]{}}.
% \end{gather*}
% Since $\forall A \subseteq B \subseteq \X \implies \Rcontoper[\safeInit{\nfin}]{A} \subseteq \Rcontoper[\safeInit{\nfin}]{B}$.
\end{proof}}

\begin{comment}
    \begin{lemma}[Sample complexity bound in discrete domain]\label{lem:sample_complexity_discrete}
\end{lemma}
\end{comment}
% \vspace{-1em}
\subsection{Mutual Information}
\label{apx:mutual_info}
% Let us denote the set of sampled locations by $X_\n \coloneqq \{\state_1, \state_2, \hdots, \state_\n\}$ and $Y_{X_\n} \coloneqq \{\constrain(\state_n) + \eta_{\n}\}_{\n=1}^{\n}$ be the set of measurements collected at locations in $X_\n$. The mutual information between $Y_{X_\n}$ and $\constrain_{X_\n}$ as in \cite{beta-srinivas} is given by,
% \begin{align*}
%   I(\rev{Y}_{X_\n};\constrain_{X_\n}) = H(\rev{Y}_{X_\n}) - H(Y_{X_\n}| \constrain_{X_\n}),  \label{eqn: mutual-info-definition} \numberthis
% \end{align*}
% where $H$ denotes the Shannon entropy. 

The mutual information between $Y_{X_\n}$ and $\constrain_{X_\n}$ as in \cite{beta-srinivas} is given in \cref{sec:backgroundGP}.
%by \cref{eqn: mutual-info-definition}. 
The Shannon entropy for a Gaussian, \mbox{$H(\N(\mu, \Sigma)) = \frac{1}{2} \log |2 \pi e \Sigma|$}. Using that \mbox{$x_1, \hdots, x_\n$} are deterministic conditioned on \mbox{$Y_{X_{\n-1}}$} and the predictive distribution for \mbox{$y_{x_\n}$} conditioned on \mbox{$Y_{X_{\n-1}}$} is a Gaussian \mbox{$\N(0, \noisevar^2 \! +\! \sigconst[\n-1]^2(\state_\n))$}, we get,
% and the conditional variance $\sigconst[\n-1]^2(\state(t_\n))$ is independent of $y_{\n-1}$, we get,
% Let's define $Y_{1:N} \coloneqq \{ \constrain(\state_\n)\}_{n=1}^{N}$, then,
\begin{align*}
    H(Y_{X_\n}\!)\!&=\!H(y_{x_\n}|Y_{X_{\n-1}}) + H(Y_{X_{\n-1}}) \\
    &= \!\frac{1}{2}\! \log 2\pi e (\noisevar^2 \! +\! \sigconst[\n-1]^2(\state_\n)) \!+\! H(y_{x_{\n-1}}\!|Y_{X_{\n-2}}) +\! ... \\ %\numberthis \label{eqn: telescopic}
    &= \frac{1}{2} \log(2\pi e \noisevar^2) +  \frac{1}{2} \log (1  + \noiseconst \sigconst[\n-1]^2(\state_\n)) \\ 
    &  \qquad \qquad + H(y_{x_{\n-1}}|Y_{X_{1:\n-2}}) + ... \numberthis \label{eqn: refactoring-det} \\
    &= \underbrace{\frac{1}{2} \sum_{i=1}^\n \log(2\pi e \noisevar^2)}_{H(Y_{X_\n}| \constrain_{X_\n})} +  \frac{1}{2} \sum_{i=1}^\n \log (1  + \noiseconst \sigconst[i-1]^2(\state_i)).
\end{align*}
$H(Y_{X_\n}|\constrain_{X_\n})$ is entropy due to Gaussian noise $\N(0,\noisevar^2)$ and finally using mutual information definition \eqref{eqn: mutual-info-definition} 
we get,
\begin{align*}
I(Y_{X_\n};\constrain_{X_\n}) &= \frac{1}{2} \sum_{i=1}^\n \log (1  + \noiseconst \sigconst[i-1]^2(\state_i)). \numberthis \label{eqn: mutual-info}
\end{align*}
% Notably, mutual information $I(Y_{X_\n};\constrain_\n)$ does not depend on the noisy realizations $y$ and only on the locations $\state$. 
% \rev{Hence the derived $I(Y_{X_\n};\constrain_{X_\n})$ with measurements corrupted with i.i.d. Gaussian noise is equal to the one with sub-Gaussian noise observed at the same locations $X_\n$ \cite{beta-srinivas,beta_chowdhury17a}.}
\looseness -1 Notably, Mutual information $I(Y_{X_\n};\constrain_\n)$ is directly tied to the GP used to model the constraint $\constrain$ and depends only on the locations sampling location $\state$. It does not depend on the true constraint function $\constrain$ or the noisy realizations $y$.  Therefore, we derive mutual information for the Gaussian process model using i.i.d. Gaussian noise with a standard deviation of $\noisevar$ and use it to bound the GP's variance in \cref{eqn:non-dec-beta}.
\vspace{-2em}
\begin{IEEEbiography}[{\includegraphics[width=1in,height=1.25in,clip,keepaspectratio]{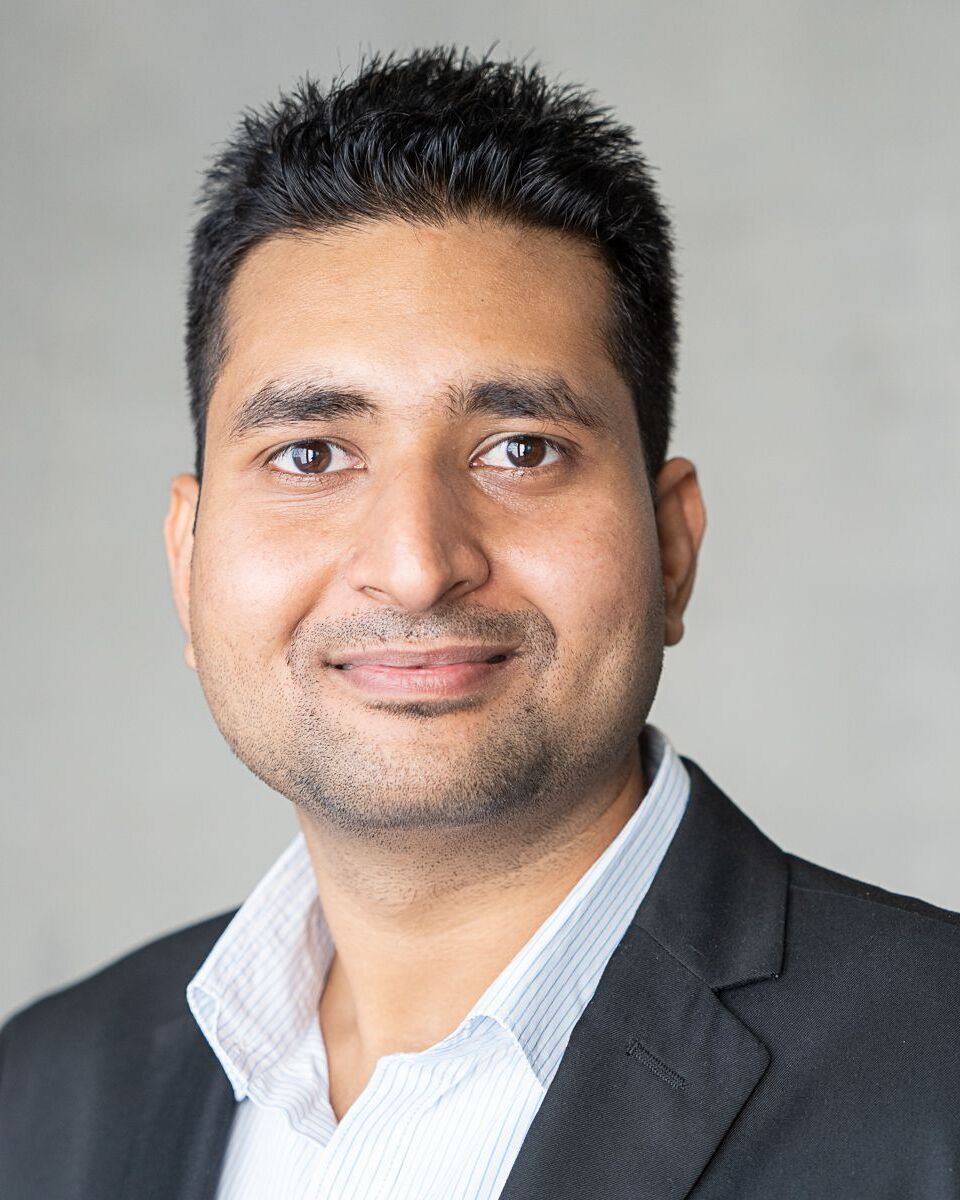}}]{Manish Prajapat}  \looseness -1 is a Doctoral Fellow at ETH AI Center. He earned his Master's degree in robotics, systems, and control from ETH Zurich in 2020, and his Bachelor's degree in Mechanical Engineering from the Indian Institute of Technology (IIT), Madras in 2017. At IIT Madras, he was recognized as the best graduating student co-curricular 2017 and also received the Sivasailam Merit Prize for the best bachelor thesis.
He was a visiting scholar at Caltech in 2020 and later was a research engineer at Fixposition AG, Zurich in 2021. His research interests are sequential decision-making under complex scenarios, e.g., non-Markovian objectives, unknown constraints or unknown dynamics of non-linear systems.
\end{IEEEbiography}\vspace{-3em}
\begin{IEEEbiography}[{\includegraphics[width=1in,height=1.25in,clip,keepaspectratio]{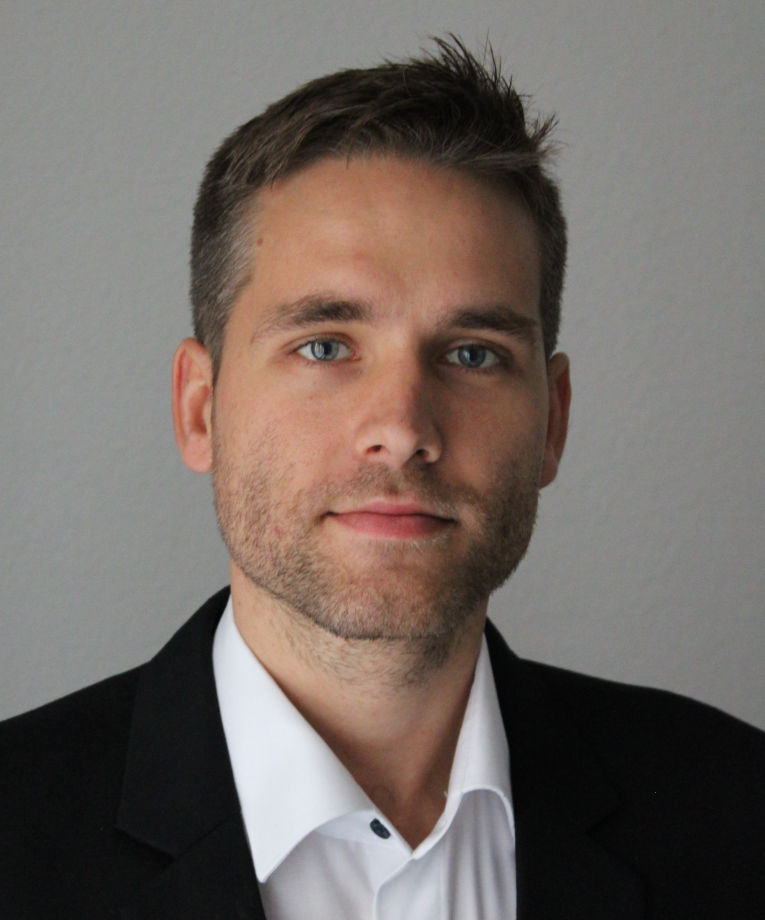}}]{Johannes K\"ohler} \looseness -1 received his Master degree in Engineering Cybernetics from the University of Stuttgart, Germany, in 2017. 
In 2021, he obtained a Ph.D. in mechanical engineering, also from the University of Stuttgart,
Germany, for which he received the 2021 European Systems \& Control PhD award.
He is currently a postdoctoral researcher at the Institute for Dynamic Systems and Control (IDSC) at ETH Zürich. 
His research interests are in the area of model predictive control, as well as control and estimation for nonlinear uncertain systems.
\end{IEEEbiography}\vspace{-3em}
\begin{IEEEbiography}[{\includegraphics[width=1in,height=1.25in,clip,keepaspectratio]{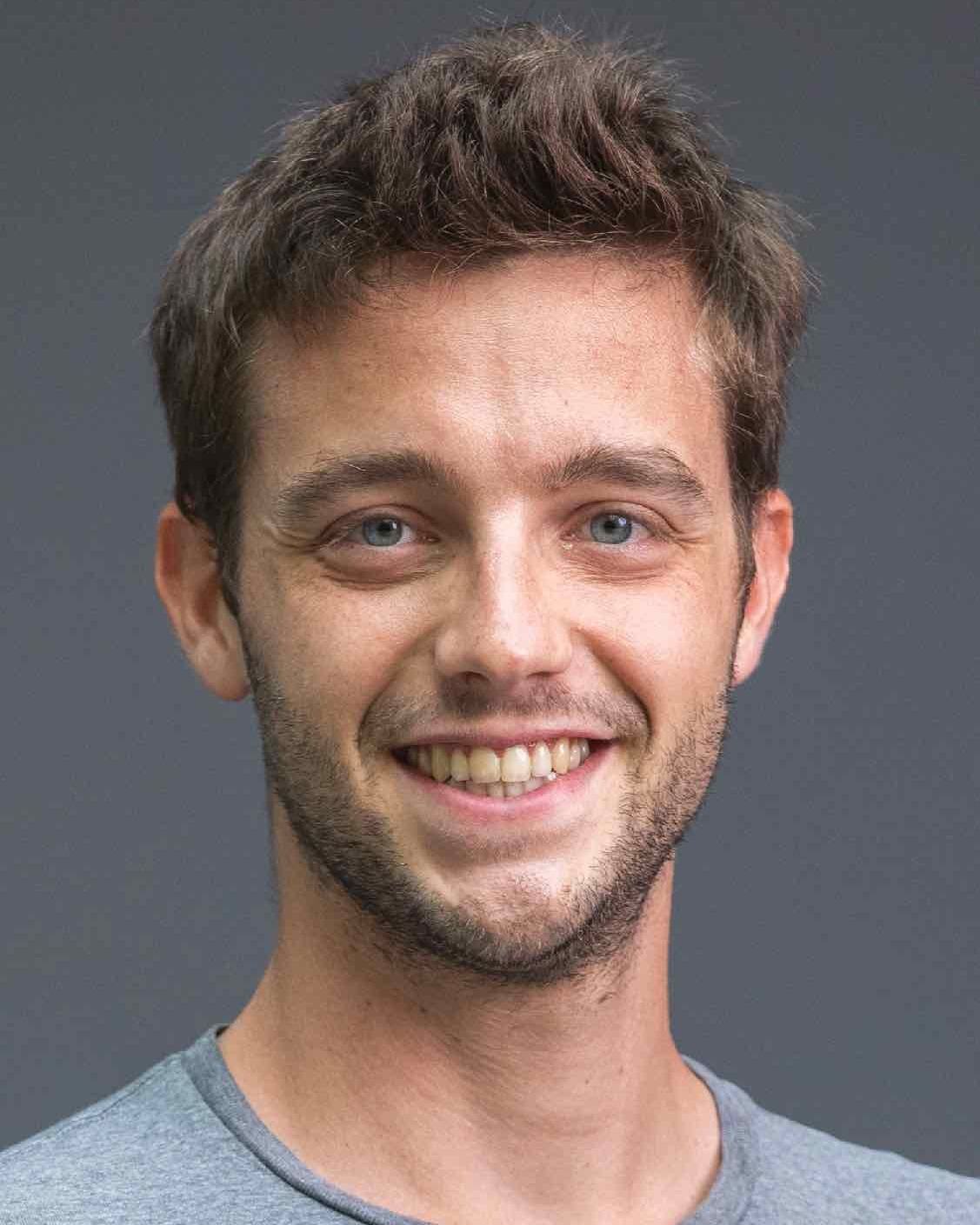}}]{Matteo Turchetta} \looseness -1 earned his Master's degree in Robotics Systems and Control from ETH Zürich in 2016. In 2021, he successfully completed his Ph.D. in Computer Science, also at ETH Zürich. Currently, Matteo serves as a postdoctoral researcher at the Learning and Adaptive Systems (LAS) group at ETH Zürich. His research focus lies in the domain of learning-based control and decision-making under uncertainty, with a particular emphasis on safety-constrained reinforcement learning.
\end{IEEEbiography}\vspace{-3em}
\begin{IEEEbiography}[{\includegraphics[width=1in,height=1.25in,clip,keepaspectratio]{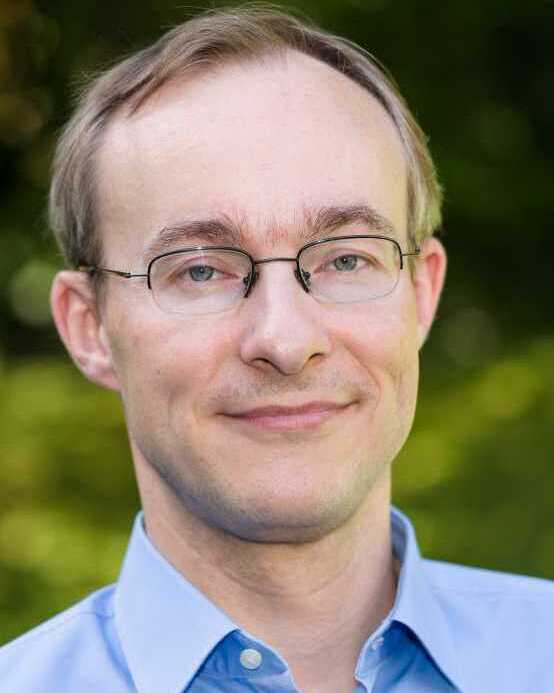}}]{Andreas Krause} 
%  \looseness -1 is a Professor of Computer Science at ETH Zurich, where he leads the Learning
% & Adaptive Systems Group. He also serves as Academic Co-Director of the Swiss Data Science
% Center and Chair of the ETH AI Center. Before that he was an Assistant Professor of Computer
% Science at Caltech. He received his Ph.D. in Computer Science from Carnegie Mellon
% University (2008). He is a Max Planck Fellow at the Max Planck Institute for Intelligent Systems,
% an ELLIS Fellow, a Microsoft Research Faculty Fellow and a Kavli Frontiers Fellow of the US
% National Academy of Sciences. In 2023, he was appointed to the United Nations’ High-level
% Advisory Body on AI.
\looseness -1 is a Professor of Computer Science at ETH Zurich, where he also serves as Chair of the ETH AI Center. Before that, he was an Assistant Professor of Computer Science at Caltech. He received his Ph.D. in Computer Science from Carnegie Mellon University (2008) and his Diploma in Computer Science and Mathematics from the Technical University of Munich, Germany (2004). %He is a Max Planck Fellow at the Max Planck Institute for Intelligent Systems, an ELLIS Fellow, a Microsoft Research Faculty Fellow and a Kavli Frontiers Fellow of the US National Academy of Sciences. 
He is an ACM Fellow, ELLIS Fellow, and his research on machine learning and adaptive systems has received multiple awards, including the ACM SIGKDD Test of Time award 2019 and the ICML Test of Time award 2020. He served as Program Co-Chair for ICML 2018, General Chair for ICML 2023 and Action Editor for the Journal of Machine Learning Research. In 2023, he was appointed to the United Nations’ High-level Advisory Body on AI.
\end{IEEEbiography}\vspace{-3em}
\begin{IEEEbiography}[{\includegraphics[width=1in,height=1.25in,clip,keepaspectratio]{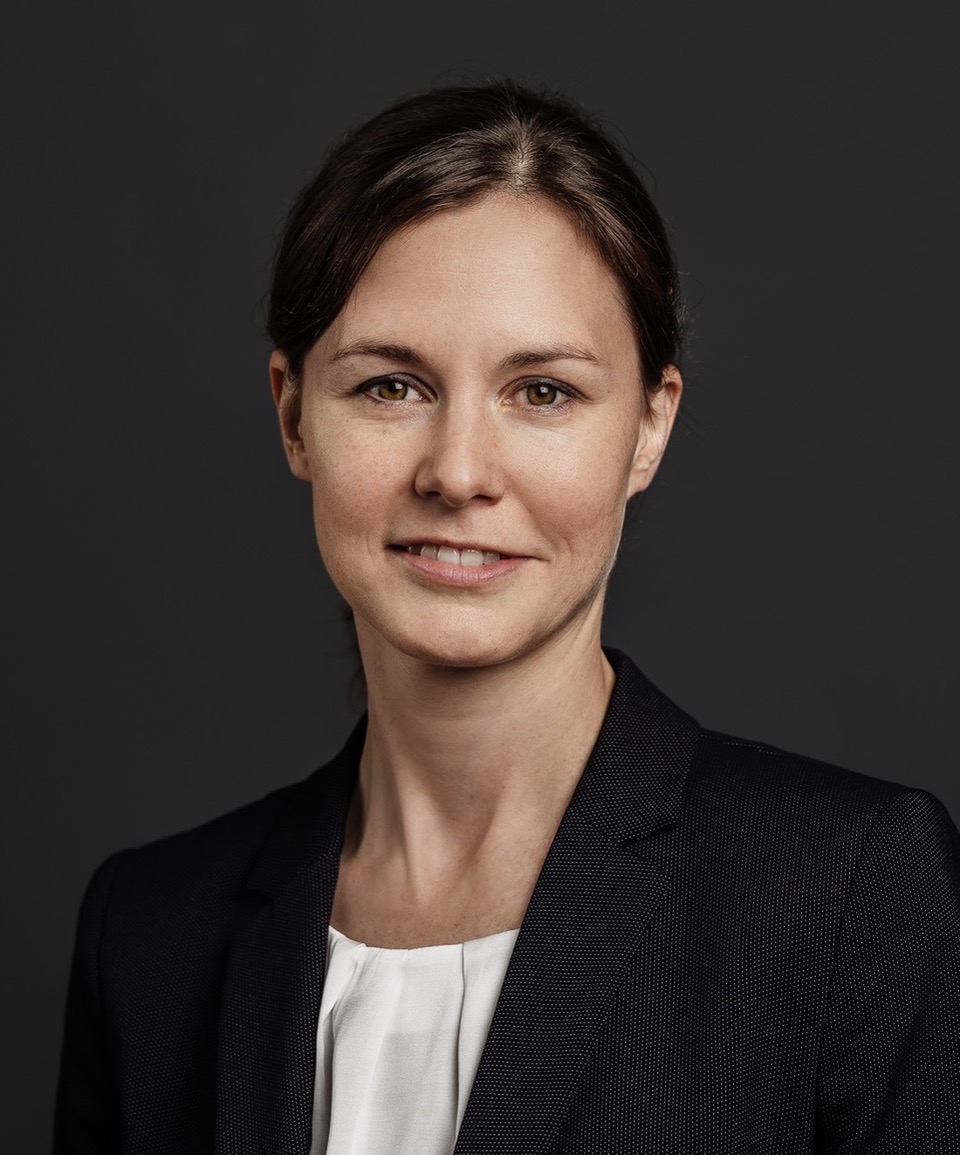}}]{Melanie N. Zeilinger}  \looseness -1 is an Associate Professor at ETH Zurich, Switzerland. She received the Diploma degree in engineering cybernetics from the University of Stuttgart, Germany, in 2006, and the Ph.D. degree with honors in electrical engineering from ETH Zurich, Switzerland, in 2011. From 2011 to 2012 she was a Postdoctoral Fellow with the Ecole Polytechnique Federale de Lausanne (EPFL), Switzerland. She was a Marie Curie Fellow and Postdoctoral Researcher with the Max Planck Institute for Intelligent Systems, Tübingen, Germany until 2015 and with the Department of Electrical Engineering and Computer Sciences at the University of California at Berkeley, CA, USA, from 2012 to 2014. From 2018 to 2019 she was a professor at the University of Freiburg, Germany. She was awarded the ETH medal for her PhD thesis, an SNF Professorship, the ETH Golden Owl for exceptional teaching in 2022 and the European Control Award in 2023. Her research interests include learning-based control with applications to robotics and human-in-the-loop control.
%Her current research interests include safe learning-based control, as well as distributed control and optimization, with applications to robotics and human-in-the-loop control.
\end{IEEEbiography}
\end{document}